\newcommand*\circled[1]{\tikz[baseline=(char.base)]{
\node[shape=circle,draw,inner sep=2pt] (char) {#1};}}
\newtheorem{theorem}{Theorem}
\newtheorem{proposition}{Proposition}
\newtheorem{corollary}{Corollary}
\newtheorem{lemma}[theorem]{Lemma}
\theoremstyle{definition}
\newtheorem{definition}{Definition}
\DeclareMathOperator*{\argmin}{arg,min}
\DeclareMathOperator*{\argmax}{arg,max}
\def\doublesetminus{\setminus \!\! \setminus}
\newcommand{\qinc}{Quantum Innovation Centre (Q.InC), Agency for Science, Technology and Research (A*STAR), 2 Fusionopolis Way, Innovis \#08-03, Singapore 138634, Republic of Singapore\looseness=-1}
\newcommand{\sutd}{Science, Mathematics and Technology Cluster, Singapore University of Technology and Design, 8 Somapah Road, Singapore 487372, Republic of Singapore\looseness=-1}
\newcommand{\ihpc}{Institute of High Performance Computing (IHPC), Agency for Science, Technology and Research (A*STAR), 1 Fusionopolis Way, \#16-16 Connexis, Singapore 138632, Republic of Singapore\looseness=-1}
\newcommand{\cqct}{Centre for Quantum Computation and Communication Technologies (CQC2T), Department of Quantum Science and Technology, Research School of Physics, Australian National University, Acton 2601, Australia\looseness=-1}
\newcommand{\cqt}{Centre for Quantum Technologies (CQT), National University of Singapore, Singapore 117543, Republic of Singapore\looseness=-1}
\newcommand{\sit}{Engineering Cluster, Singapore Institute of Technology, 1 Punggol Coast Road, Singapore 828608, Republic of Singapore\looseness=-1}
\renewcommand{\@printtitletextwithappropriatefontsize}{%
  \@titleatfontsize{\fontsize{19pt}{21pt}\selectfont}%
}
\begin{document}

\title{Near-Optimal Parameter Tuning of Level-1 QAOA for Ising Models}

%%%%%%%%%%%%%%%%%%%%%%%%%%%%%%%%%%%%%%%%%
%%%   Please fill your information.   %%%
%%%%%%%%%%%%%%%%%%%%%%%%%%%%%%%%%%%%%%%%%
\author{V Vijendran\,\orcidlink{0000-0003-3398-1821}}
\email{vjqntm@gmail.com}
\affiliation{\cqt}
\affiliation{\qinc}
\affiliation{\cqct}

\author{Dax Enshan Koh\,\orcidlink{0000-0002-8968-591X}}
%\email{dax.koh@singaporetech.edu.sg}
\affiliation{\qinc}
\affiliation{\ihpc}
\affiliation{\sutd}
\affiliation{\sit}

\author{Eunok Bae\,\orcidlink{0000-0001-7531-0992}}
%\email{eobae@kias.re.kr}
\affiliation{School of Computational Sciences, Korea Institute for Advanced Study (KIAS), Seoul 02455, Korea\looseness=-1}
\affiliation{Electronics and Telecommunications Research Institute (ETRI), Daejeon 34129, Korea
\looseness=-1}

\author{Hyukjoon Kwon\,\orcidlink{0000-0001-5520-0905}}
%\email{hjkwon@kias.re.kr}
\affiliation{School of Computational Sciences, Korea Institute for Advanced Study (KIAS), Seoul 02455, Korea\looseness=-1}

\author{Ping Koy Lam\,\orcidlink{0000-0002-4421-601X}}
%\email{pingkoy@imre.a-star.edu.sg}
\affiliation{\qinc}
\affiliation{\cqt}
\affiliation{\cqct}

\author{Syed M Assad\,\orcidlink{0000-0002-5416-7098}}
\email{cqtsma@gmail.com}
\affiliation{\qinc}
\affiliation{\cqct}

\begin{abstract}
    \boldmath
    The Quantum Approximate Optimisation Algorithm (QAOA) tackles combinatorial optimisation problems by encoding their solutions into the ground state of an Ising Hamiltonian prepared by a $p$-level parameterised circuit, with the angles tuned classically. Parameter optimisation is widely regarded as a central bottleneck, even for the shallowest circuits. Focusing on QAOA at $p=1$ (QAOA$_1$), we show that tuning the two angles $(\gamma, \beta)$ for weighted Ising models is not a black-box search but a structured signal-processing problem. We prove that the QAOA$_1$ expectation value is a partial Fourier series in $\gamma$ whose frequencies are determined explicitly by the problem's couplings and fields, giving instance-wise bandwidth bounds and, via the Nyquist--Shannon theorem, the sampling resolution needed to avoid the aliasing that causes coarse-grid searches to return spurious optima. We then eliminate the mixer angle analytically, computing $\beta^*(\gamma)$ in closed form to reduce the search to one dimension, and apply a subdivision algorithm that locates the globally optimal $\gamma$ in polynomial time with a certificate of optimality when the weights are commensurable and bounded. For regular weighted graphs, we further prove the conventional wisdom that the globally optimal $\gamma^* \in \mathbb{R}^+$ concentrates near zero and coincides with the first local optimum, giving a rigorous account of the empirical success of small-angle initialisation and allowing gradient descent to replace exhaustive line searches. Validated within Recursive QAOA (RQAOA) on weighted instances of 128 and 256 qubits, our method consistently outperforms both coarsely optimised RQAOA and semidefinite programming.
\end{abstract}

\maketitle

%%%%%%%%%%%%%%%%%%%%%%%%%%%%%%%%%%%%%%%%%%%%%%%%
%%    Feel free to update the main script!   %%%
%%%%%%%%%%%%%%%%%%%%%%%%%%%%%%%%%%%%%%%%%%%%%%%%
\section{Introduction}

\begin{figure*}[!t]
    \centering
    \includegraphics[width=\textwidth]{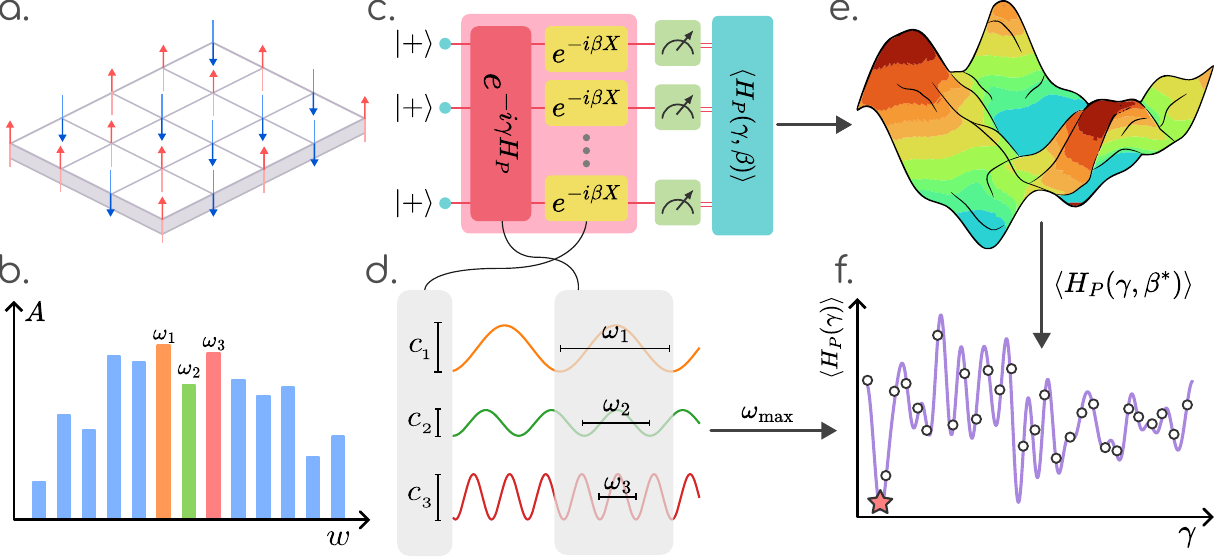}
    \caption{\textbf{Illustration of the Main Results.} \justifying This paper focuses on optimising the variational parameters of QAOA$_1$ for solving QUBO problems formulated as Ising models. (a) The Ising model is represented by the problem Hamiltonian $H_P$, which is diagonal in the Pauli-$Z$ basis. Its eigenvalues define (b) the spectrum of frequencies $\omega$ for the given problem instance, where each $\omega_i$ corresponds to the difference between eigenvalues. To solve the problem exactly, the quantum circuit must express the full spectrum of frequencies. However, (c) QAOA$_1$ is limited to representing only a small subset of the spectrum. The expectation value of the cost function in QAOA$_1$ can be expressed as (d) a truncated Fourier series whose frequencies are determined by the problem unitary, while its coefficients depend on the mixing unitary. At $p=1$, most coefficients vanish, significantly constraining the circuit’s expressivity. We provide an analytical method to compute the maximum frequency of the cost function along $\gamma$, enabling an appropriate sampling rate for optimisation. Previous studies derived analytical expressions for classically computing the expectation value $\langle H_P(\gamma, \beta) \rangle$, yielding (e) the cost landscape. In this work, we show how the two-dimensional optimisation over $(\gamma, \beta)$ can be reduced to (f) a univariate optimisation over $\gamma$, with $\beta^*$ determined analytically as a function of $\gamma$. By leveraging the maximum frequency, the univariate cost function can be efficiently sampled, enabling accurate estimation of the optimal $\gamma^*$ without overshooting the global optimum. Finally, we prove that, on average, the global optimum $\gamma^* \in \mathbb{R}^{+}$ is concentrated near 0 and that this global optimum coincides with the first local optimum.}
    \label{main_results_illustration}
\end{figure*}

Quantum computing is poised to revolutionise problem-solving in fields ranging from optimisation and machine learning to quantum chemistry and materials science. While fault-tolerant quantum computers hold the promise of implementing groundbreaking algorithms like Shor’s for factoring~\cite{shor1999polynomial} and Grover’s for unstructured search~\cite{grover1997quantum}, current quantum hardware falls short of the error correction required to implement these algorithms at scale. Instead, the era of Noisy Intermediate-Scale Quantum (NISQ) devices~\cite{preskill2018quantum,cheng2023noisy} has given rise to variational quantum algorithms (VQAs), which blend parameterised quantum circuits (PQCs) with classical optimisation to tackle complex problems~\cite{cerezo2021variational, bharti2022noisy}. By iteratively refining circuit parameters, VQAs navigate vast quantum state spaces, leveraging their adaptability to address diverse computational challenges. However, their success depends on efficient parameter optimisation, which is complicated by the intricate optimisation landscapes of PQCs~\cite{bittel2021training, mcclean2018barren, anschuetz2022quantum, wang2021noise, cerezo2021cost, ortiz2021entanglement, nemkov2024barren}.

Among VQAs, the Quantum Approximate Optimisation Algorithm (QAOA)~\cite{farhi2014quantum} has emerged as a leading contender for solving combinatorial optimisation problems. QAOA translates these problems into Hamiltonians and approximates their ground states using circuits parameterised by $2p$ variables, where $p$ represents the number of alternating layers of problem and mixing unitaries. While increasing $p$ improves solution quality, it also amplifies the computational burden of parameter optimisation and is further hindered by physical constraints such as gate noise, limited qubit connectivity, and state-preparation-and-measurement errors, effectively limiting QAOA implementations to shallow depths on NISQ devices~\cite{weidenfeller2022scaling}. Even in this regime, optimising QAOA parameters remains a significant challenge, particularly for large, dense, asymmetric, or weighted problem instances. Addressing these challenges has inspired a growing body of research focused on heuristic methods to identify near-optimal parameters efficiently, with the goal of unlocking QAOA's full potential in shallow-depth applications.

The search for effective initialisation and optimisation strategies for QAOA, aimed at minimising the number of optimisation steps required for solving combinatorial problems, remains an active area of research. Effective initialisation ensures the algorithm begins close to a local or global optimum in the parameter space, while efficient optimisation enables smooth and rapid traversal through the landscape. Existing strategies to address this challenge range from physics-based approaches to advanced machine learning techniques, including transfer learning and data-driven models. Early work focused on deriving analytical expressions for optimal parameters at $p=1$ providing performance bounds for specific graph structures such as unweighted $D$-regular triangle-free graphs and large-girth regular graphs; however, these derivations are less effective for complex, weighted graphs encountered in practical settings~\cite{hadfield2018quantum, wang2018quantum, ozaeta2022expectation, vijendran2024expressive, farhi2014quantum, basso2021quantum,ng2024analytical, fixed_angle_dreg}. Parameter transfer methods, inspired by transfer learning, leverage optimised parameters from similar problem instances, particularly for uniform-structure problems like MaxCut on unweighted $D$-regular graphs; however, they are less reliable for graphs without well-defined distributions or regular structure~\cite{brandao2018fixed, akshay2021parameter, katial2024instance, galda2021transferability, galda2023similarity, brundin2024symmetry, langfitt2023parameter, sakai2024linearly}. Quantum annealing insights also guide parameter initialisation by setting parameters based on a fraction of the optimal annealing time, with additional refinements from control theory techniques like the bang-bang protocol and Lyapunov control, yet these are primarily suited to deeper circuits and struggle with capturing optimal transitions in shallow QAOA circuits~\cite{sack2021quantum, liang2020investigating, magann2022lyapunov}. Another approach leverages an optimised level-$p$ circuit to initialise a level-$(p+1)$ circuit, using methods like linear interpolation or Fourier transformations to reduce optimisation steps, although full optimisation of the level-$p$ circuit is typically required~\cite{zhou2020quantum}. Efficiency-oriented techniques such as multilevel leapfrogging~\cite{ni2023multilevel} and depth-progressive initialisation~\cite{lee2023depth} reduce computational costs by incrementally optimising parameters, but these methods often yield suboptimal solutions and generally lack performance guarantees, particularly for real-world weighted problems where structure varies widely~\cite{shaydulin2023parameter, stkechly2023connecting}. Finally, various machine learning approaches~\cite{alam2020accelerating, xie2023quantum, liang2024graph, meng2024parameter, amosy2022iterativefreequantumapproximateoptimization, khairy2019reinforcement, montanez2024transfer, falla2024graph, cheng2024quantum} have been proposed to predict and generate near-optimal initial guesses for QAOA parameter optimisation. However, these methods often incur high computational costs and also lack performance guarantees.

This work addresses the challenge of optimising QAOA$_1$ for QUBO problems, formulated as Ising models. Despite QAOA$_1$ involving only two parameters, $(\gamma, \beta)$, the widely held belief that coarse grid searches followed by local minimisation suffice is misleading. We reveal that the optimisation landscape of QAOA$_1$ is highly oscillatory, with oscillation rates increasing with problem size, density, and weight variability. These oscillations necessitate high-resolution grid searches to avoid landscape distortion and suboptimal parameter estimates. To mitigate the computational cost of grid searches at fine resolutions, we propose an efficient parameter-tuning strategy that reduces the two-dimensional search over $(\gamma, \beta)$ to a one-dimensional search over $\gamma$, with the optimal $\beta^*$ derived analytically. We establish the minimum sampling frequency required to capture the oscillatory landscape accurately and provide an algorithm to estimate the optimal parameters in polynomial time. For sufficiently large instances of regular graphs, we rigorously prove the globally optimal $\gamma^* \in \mathbb{R}^+$ concentrate very close to 0 and that this global optimum coincides with the first local optimum. This insight enables the replacement of exhaustive line searches with gradient descent, dramatically simplifying the optimisation process. Our approach is provably near-optimal---if not optimal---problem-agnostic and effectively addresses the challenges associated with parameter tuning in QAOA$_1$ for QUBO problems.

The utility of our method extends beyond QAOA$_1$ itself, as it can be applied to quantum algorithms that use QAOA$_1$ as a subroutine for solving QUBO problems~\cite{bravyi2020obstacles, kim2024distributed, yue2023local, chen2024noise, finvzgar2024quantum, brady2023iterative, bach2024mlqaoa, zhou2023qaoa, ponce2023graph, dupont2023quantum, dupont2024extending, caha2022twisted, fischer2024role}. Additionally, the optimised parameters for $p=1$ can serve as warm-start values for deeper QAOA circuits ($p>1$). Instead of initialising all parameters randomly, one can utilise the near-optimal $p=1$ parameters and only randomly initialise the additional parameters required for higher depths. This approach provides a better starting point for the optimiser, enhancing convergence efficiency compared to traditional random initialisation methods.

We validate our strategy by applying it to the Recursive QAOA (RQAOA)~\cite{bravyi2020obstacles}, benchmarking its performance on dense and weighted QUBO instances. Our results show that, unlike coarsely optimised RQAOA$_1$, which struggles to outperform semidefinite programming (SDP), our method consistently enables RQAOA$_1$ to surpass SDP, coarsely optimised RQAOA$_1$, and QAOA$_1$ across all tested instances. These findings highlight the critical importance of parameter optimisation in improving variational quantum algorithm performance, demonstrating the potential of quantum methods to outperform state-of-the-art classical techniques in producing high-quality approximations.

The paper is structured as follows: `Preliminary Material' introduces the foundational concepts, followed by `Theoretical Results' and `Numerical Results', which present our primary findings. The `Discussion' section delves into the broader implications of these results. Comprehensive details on the benchmarked QAOA variants are available in the `Methods' section, and all proofs and additional supporting information are included in the supplementary materials.

%\section{Results}

%\subsection{Preliminary Material}

\section{Preliminary Material}

\subsection{Ising Models}

The Ising model was introduced to describe ferromagnetic interactions in materials. In this model, binary variables, called spins, take values of $\pm 1$ to represent two possible states of a particle. Spins interact pairwise through coupling strengths and may also experience external influences, such as a magnetic field. The goal is to find the ground state configuration of spins that minimises the system's total energy.

Mathematically, the Ising model is defined by a set of spin variables $s_{j} \in \{-1 , +1\}$ for $j = 1,2, \ldots, n$, which interact via a symmetric coupling matrix $\mathbf{J} \in \mathbb{R}^{n \times n}$. External magnetic fields are represented by a vector $\mathbf{h} \in \mathbb{R}^n$. The energy of a configuration $\mathbf{s} = (s_1, s_2, \ldots, s_n)$ is given by the Hamiltonian:
\begin{equation}
\begin{aligned}
H(\mathbf{s})
&= \sum_{j=1}^n \sum_{k=1}^n J_{jk}s_js_k
+ \sum_{j=1}^n h_js_j \\
&= \mathbf{s}^\mathrm{T}\mathbf{J}\mathbf{s}
+ \mathbf{h}^\mathrm{T}\mathbf{s}.
\end{aligned}
\label{ising_eqn}
\end{equation}
The goal is to find a configuration $\mathbf{s}^* \in \{-1,+1\}^n$ that minimises $H(\mathbf{s})$. This optimisation task parallels the QUBO~\cite{punnen2022quadratic} problem, which involves minimising a quadratic function over binary variables.

At its core, a QUBO problem seeks a binary vector $\mathbf{x} \in \{0, 1\}^n$ that minimises the quadratic objective: 
\begin{equation} 
f(\mathbf{x}) = \mathbf{x}^\mathrm{T} \mathbf{A} \mathbf{x} + \mathbf{b}^\mathrm{T} \mathbf{x}, \end{equation} 
where $\mathbf{A} \in \mathbb{R}^{n \times n}$ is a symmetric matrix capturing the quadratic interactions between variables, and $\mathbf{b} \in \mathbb{R}^n$ represents the linear coefficients. For the problem to remain non-trivial, $\mathbf{A}$ must contain negative eigenvalues; otherwise, setting all binary variables to zero yields a trivial optimal solution of zero. The binary nature of the variables implies that $x_j^2 = x_j$, which allows the linear term $\mathbf{b}^\mathrm{T} \mathbf{x}$ to be absorbed into the quadratic form by adjusting the diagonal elements of $\mathbf{A}$. Specifically, the quadratic objective function $f(\mathbf{x})$ can be rewritten as
\begin{equation}
\begin{aligned}
f(\mathbf{x})
&= \sum_{j=1}^n \sum_{k=1}^n A_{jk}x_jx_k
+ \sum_{j=1}^n b_jx_j \\
&= \sum_{j=1}^n \sum_{k=1}^n A'_{jk}x_jx_k .
\end{aligned}
\end{equation}
where $A'_{jj}=A_{jj}+b_j$ for $j = 1, 2, \dots, n$, effectively incorporating the linear coefficients into the quadratic matrix.

The equivalence between QUBO and Ising models arises from a simple variable substitution, $x_j = \frac{1}{2}(s_j + 1)$, mapping QUBO variables to Ising spins. Substituting this into the QUBO formulation gives:
\begin{align}
\mathbf{J} &= \frac{1}{4}\mathbf{A}, \\
h_j &= \frac{1}{4}
\left(
2b_j + \sum_{k=1}^n (A_{jk}+A_{kj})
\right),
\end{align}
up to an insignificant additive constant. 

Linear terms in the Ising model can similarly be removed by extending the coupling matrix. Introducing an additional spin variable $s_{n+1}$ and an extended coupling matrix\footnote{In contrast to the QUBO model, where linear terms are absorbed into the matrix diagonal, the Ising model sets diagonal entries to zero, as they add only a constant to any solution. Instead, an additional spin variable is introduced, coupling to other spins with strengths proportional to the external field.} $\overline{\mathbf{J}} \in \mathbb{R}^{(n+1) \times (n+1)}$, the reformulated matrix is:
\begin{equation}
    \bar{J}_{j k}= 
    \begin{cases}J_{j k}, & \text { if } 1 \leq j, k \leq n \\
    \frac{1}{2} h_j & \text { for } k=n+1, \forall j \in [n] \\
    \frac{1}{2} h_k & \text { for } j=n+1, \forall k \in [n] \\
    0 & \text { for } j=k=n+1 .
    \end{cases}
    \label{ising_linearise_eqn}
\end{equation}
This reformulation produces an Ising model equivalent to the original but without explicit linear terms.

\subsection{Quantum Approximate Optimisation Algorithm}

The Quantum Approximate Optimisation Algorithm (QAOA) is a hybrid quantum-classical algorithm for approximately solving combinatorial optimisation problems, including QUBO, by encoding solutions into the ground state of a Hamiltonian, approximated by a PQC. QAOA relies on two primary operators: the problem unitary and the mixing unitary, derived from the problem Hamiltonian $H_P$ and the mixing Hamiltonian $H_M$, defined as:
\begin{align}
    H_P &= \sum_{j=1}^n \sum_{k=1}^n J_{j k} Z_j Z_k+\sum_{j=1}^n h_j Z_j, \label{qising_eqn} \\
    H_M &= \sum\limits^n_{j=1} X_j,
\end{align}
where $Z_j$ and $X_j$ are the Pauli-$Z$ and Pauli-$X$ operators acting on the $j$-th qubit. The problem Hamiltonian $H_P$, a quantum version of \cref{ising_eqn}, encodes the objective function to be minimised, while the mixing Hamiltonian $H_M$ facilitates transitions between states.

The problem unitary is parameterised by a real-valued angle $\gamma \in \mathbb{R}$ and is given by:
\begin{equation}
\begin{aligned}
U(H_P,\gamma)
&= e^{-i\gamma H_P} \\
&= \prod_{j,k}^n e^{-i\gamma J_{jk}Z_jZ_k}
\prod_j^n e^{-i\gamma h_jZ_j}.
\end{aligned}
\end{equation}

while the mixing unitary, parameterised by $\beta \in [0, \pi]$, is expressed as:
\begin{equation}
U (H_M, \beta) = e^{-i\beta H_M} = \prod\limits^n_{j=1}\, e^{-i \beta  X_j}.
\end{equation}
For any positive integer $p$, the QAOA ansatz generates a parameterised quantum state using $2 p$ angles, $(\boldsymbol{\gamma}, \boldsymbol{\beta})=\left(\gamma_1, \ldots, \gamma_p, \beta_1, \ldots, \beta_p\right)$, by alternating the application of $U\left(H_P, \gamma\right)$ and $U\left(H_M, \beta\right)$, resulting in the quantum state:
\begin{equation}
\ket{\boldsymbol{\gamma}, \boldsymbol{\beta}} = \prod_{i=1}^p U (H_M, \beta_i) \, U (H_P, \gamma_i)\, \ket{s},
\label{QAOA_Var_State}
\end{equation}
where $\ket{s}$ denotes the uniform superposition over all $n$-bit strings:
\begin{equation}
\ket{s} = \frac{1}{\sqrt{2^n}} \! \! \! \sum_{z\in \{0,1\}^n} \! \! \! \! \ket{z} .
\end{equation}
The goal is to optimise the parameters $\boldsymbol{\gamma}$ and $\boldsymbol{\beta}$ by minimising the expectation value of $H_P$ with respect to the generated quantum state:
\begin{equation}
\langle H_P (\boldsymbol{\gamma}, \boldsymbol{\beta}) \rangle  = \bra{\boldsymbol{\gamma}, \boldsymbol{\beta}} H_P \ket{\boldsymbol{\gamma}, \boldsymbol{\beta}},
\label{FarhiEq8}
\end{equation}
which is computed through repeated measurements of the quantum system. The performance of QAOA is often quantified by the approximation ratio, which compares the objective value achieved at the optimised parameters, $\left\langle H_P\left(\gamma^*, \beta^*\right)\right\rangle$, with the optimal value of the underlying problem.

%The effectiveness of QAOA is measured by the approximation ratio, which compares $\langle H_P \rangle$---the value obtained with QAOA’s optimal parameters---to the true optimal value. 

To identify the optimal parameters, $\left(\boldsymbol{\gamma}^*, \boldsymbol{\beta}^*\right)$, a classical optimiser iteratively adjusts $\boldsymbol{\gamma}$ and $\boldsymbol{\beta}$ to minimise $\left\langle H_P\right\rangle$:
\begin{equation}
(\boldsymbol{\gamma}^*, \boldsymbol{\beta}^*)=
\argmin_{\boldsymbol{\gamma}, \boldsymbol{\beta}} \ 
\langle H_P (\boldsymbol{\gamma}, \boldsymbol{\beta}) \rangle  .
\end{equation}
However, in general finding these parameters is $\mathsf{NP}$-Hard~\cite{bittel2021training}, and practical implementations rely on heuristic methods such as gradient descent and grid search. Gradient descent can suffer from poor initialisation, local minima, and barren plateaus, while grid search becomes computationally infeasible as the number of parameters grows.

%Gradient descent can suffer from poor initialisation and barren plateaus, while grid search becomes computationally infeasible as the number of parameters grows. 

In this work, we focus on QAOA at depth $p=1$ (QAOA$_1$), with classical parameters $(\gamma, \beta)$ applied to an Ising model, where closed-form expressions are available for evaluating \cref{FarhiEq8}. Before exploring these expressions, we introduce a few definitions that will be used throughout the paper.

The Ising model is depicted by an undirected graph $G=(V, E)$, where each vertex $v \in V$ signifies a spin, and each edge $\{u, v\} \in E$ signifies an interaction between spins $u$ and $v$. These interactions are defined by the symmetric coupling matrix $\mathbf{J}$, with each edge $\{u, v\}$ assigned a coupling strength $J_{u v}$. Furthermore, the external magnetic fields are represented by node weights, where each vertex $u \in V$ carries an external field strength $h_u$, collectively denoted by the vector $\mathbf{h}$. For a vertex $w \in V$, let $\mathcal{N}(w)=\{x \in V \mid\{x, w\} \in E\}$ represent the set of neighbours of $w$, i.e., vertices adjacent to $w$. For an edge $\{u, v\} \in E$, we define:
\begin{itemize} %\setlength\itemsep{0.25em}
    \item $e_{uv} = \mathcal{N}(v)\setminus \{u\}$ is the set of vertices other than $u$ that are connected to $v$.
    \item $d_{uv} = \mathcal{N}(u)\setminus \{v\}$ is the set of vertices other than $v$ that are connected to $u$.
    \item $F_{uv} =\mathcal{N}(u) \cap \mathcal{N}(v)$ is the set of vertices that form a triangle with the edge $\{u,v\}$. In other words, $F_{uv}$ is the set of vertices that are neighbours of both $u$ and $v$.
\end{itemize}
For compactness, we omit subscripts and use $e$, $d$, and $F$ directly when their meanings are clear from context. 

The following theorem can be used to compute the expectation value of QAOA$_1$ for an arbitrary Ising model with fields, thereby allowing us to assess its performance.

\begin{theorem}
    Consider the QAOA$_1$ state $|\gamma, \beta\rangle$ for an arbitrary Ising model with external fields. The expectation value of the Hamiltonian $H_P$, as defined in \cref{qising_eqn}, in this state is given by:
    \begin{equation}
        \langle\gamma, \beta|H_P| \gamma, \beta\rangle= \sum\limits_{\{u, v\} \in E}\left\langle C_{u v}\right\rangle + \sum\limits_{i \in V} \left\langle C_{i}\right\rangle, 
    \end{equation}
    where
    \begin{equation}
        \left\langle C_i \right\rangle = h_i \sin 2 \beta \sin \gamma'_i \prod_{k \in \mathcal{N}(i)} \cos \gamma'_{ik}
        \label{qaoa_ozaeta_thm_eqn1}
    \end{equation}
    \begin{widetext}
        \begin{equation}
            \begin{aligned}
                \left\langle C_{uv} \right\rangle &= \frac{J_{uv}}{2} \sin 4\beta \sin \gamma'_{uv} 
                \left( 
                    \cos \gamma'_v \prod_{w \in e} \cos \gamma'_{wv} 
                    + \cos \gamma'_u \prod_{w \in d} \cos \gamma'_{uw} 
                \right) \\[0.25cm]
                & \quad - \frac{J_{u v}}{2} \sin ^2 2 \beta \! \prod_{\substack{w \in e \\ w \notin F}} \! \cos \gamma_{w v}^{\prime} \! \prod_{\substack{w \in d \\ w \notin F}} \! \cos \gamma_{u w}^{\prime}\left(\sum_{\chi= \pm 1} \! \chi \cos \left(\gamma_u^{\prime}+\chi \gamma_v^{\prime}\right) \prod_{f \in F} \! \cos \left(\gamma_{u f}^{\prime}+\chi \gamma_{v f}^{\prime}\right)\!\right)
            \end{aligned}
            \label{qaoa_ozaeta_thm_eqn2}
        \end{equation}
    \end{widetext}
    and $\gamma'_{uv} = 2J_{uv}\gamma$ and $\gamma'_i = 2 h_i \gamma$.
    \label{qaoa_ozaeta_thm}
\end{theorem}
In the absence of local fields (i.e., $h_i=0$ for all $i \in V)$, the Hamiltonian $H_P$ of the Ising model reduces to
\begin{equation}
    H_P = \sum_{j=1}^n \sum_{k=1}^n J_{j k} Z_j Z_k, \label{qising_eqn2} 
\end{equation}
and the expression for the expectation value simplifies, as presented in the following corollary.
\begin{corollary}
Consider the QAOA$_1$ state $|\gamma, \beta\rangle$ for an arbitrary Ising model without external fields. The expectation value of the Hamiltonian $H_P$, as defined in \cref{qising_eqn2}, in this state is given by:
    \begin{equation}
        \langle\gamma, \beta|H_P| \gamma, \beta\rangle = \sum_{\{u, v\} \in E}\left\langle C_{u v}\right\rangle,
    \end{equation}
    where
    \begin{widetext}
    \begin{equation}
        \begin{split}
            \left\langle C_{u v}\right\rangle &= \frac{J_{uv}}{2} \sin 4 \beta \sin\gamma_{uv}'\! \left( \prod_{w \in e} \! \cos\gamma_{wv}' \! + \! \prod_{w \in d} \! \cos \gamma_{uw}' \! \right) \\[0.25cm]
            & \quad  - \frac{J_{uv}}{2} \sin^2 2 \beta \! \prod_{\substack{w \in e \\ w \notin F}} \! \cos\gamma_{wv}' 
            \! \prod_{\substack{w \in d \\ w \notin F}} \! \cos \gamma_{uw}' \Bigg(\! \prod_{f \in F} \! \cos(\gamma_{uf}' \! + \! \gamma_{vf}') \! - \! \prod_{f \in F} \! \cos(\gamma_{uf}' \! - \! \gamma_{vf}') \! \! \Bigg)
        \end{split}
        \label{qaoa_ozaeta_col_eqn}
    \end{equation}
    \end{widetext}
    and $\gamma'_{uv} = 2J_{uv} \gamma$.
    \label{qaoa_ozaeta_col}
\end{corollary}
The proof of \cref{qaoa_ozaeta_thm}, which concerns arbitrary Ising models with external fields, follows Ozaeta et al.~\cite{ozaeta2022expectation}. In the absence of external fields, related expectation-value formulas were previously derived for unweighted Ising/MaxCut-type instances by Wang et al.~\cite{wang2018quantum}, and for weighted field-free graphs by Bravyi et al.~\cite{bravyi2020obstacles} and Vijendran et al.~\cite{vijendran2024expressive}. Corollary 1 is the corresponding field-free specialisation in our notation.
%The proofs of \cref{qaoa_ozaeta_thm} and \cref{qaoa_ozaeta_col} can be found in Ozaeta et al.~\cite{ozaeta2022expectation}. Notably, similar expressions for MaxCut on arbitrary weighted graphs have been derived by Vijendran et al.~\cite{vijendran2024expressive}.

From \cref{qaoa_ozaeta_thm} and \cref{qaoa_ozaeta_col}, it is clear that, for $p=1$, the expectation value $\left\langle C_{u v}\right\rangle$ of any edge in a graph depends only on the nodes and edges adjacent to it. Similarly, the single-spin term $\left\langle C_i\right\rangle$ for any node depends only on the node's external field and its adjacent edges. The total expectation value for QAOA$_1$ is obtained by summing over all nodes and edges. For a system with $n$ nodes, the analytical expressions for each edge and node can be computed in linear time, $\mathcal{O}(n)$. Since the graph has at most $\binom{n}{2}=\mathcal{O}\left(n^2\right)$ edges, the overall time complexity for computing the full expectation value for QAOA$_1$ is $\mathcal{O}\left(n^3\right)$. To obtain an approximate solution in the form of a bit string for an arbitrary problem instance, however, requires executing the QAOA circuit on a quantum computer to generate the quantum state, followed by measurements on that state.

\subsection{Fourier Analysis of VQAs}

Fourier analysis studies how general functions can be represented or approximated by sums of simpler trigonometric functions, allowing complex functions to be analysed in terms of their frequency components and providing insight into their structure. In VQAs, Fourier analysis is a powerful tool for examining the cost landscape~\cite{schuld2021effect, gil2020input, fontana2022spectral, fontana2022efficient, nemkov2023fourier}. By expressing the cost function as a Fourier series, we can analyse the frequency spectrum determined by the problem Hamiltonian and circuit parameters. This approach reveals the accessible frequencies and structural features of the optimisation landscape.

For a PQC with a cost function defined by $\mathcal{C}(\boldsymbol{\theta})=\langle 0|U^{\dagger}(\boldsymbol{\theta}) H_P U(\boldsymbol{\theta})| 0\rangle$, where $H_P$ is the problem Hamiltonian and $\boldsymbol{\theta} \in \mathbb{R}^M$ represents $M$ independent continuous parameters, the Fourier series representation is given by:
\begin{equation}
    \mathcal{C}(\boldsymbol{\theta})=\sum_{\boldsymbol{\omega} \in \boldsymbol{\Omega}} c_{\boldsymbol{\omega}} e^{i \boldsymbol{\omega} \cdot \boldsymbol{\theta}},
\end{equation}
where $c_{\boldsymbol{\omega}}$ are the Fourier coefficients corresponding to the frequencies $\boldsymbol{\omega}$, and $\boldsymbol{\omega} \cdot \boldsymbol{\theta}$ denotes their inner product. The frequency spectrum $\boldsymbol{\Omega} \subset \mathbb{R}^M$ is determined entirely by the eigenvalues of the problem Hamiltonian, while the circuit design influences the coefficients $c_{\boldsymbol{\omega}}$ that the quantum model can realise. 

By representing quantum models as partial Fourier series---where only a subset of Fourier coefficients are non-zero---we can characterise the functional families that a quantum model can learn based on two key properties. The first is the frequency spectrum $\boldsymbol{\Omega}$, which specifies the set of Fourier basis functions $e^{i \boldsymbol{\omega} \cdot \boldsymbol{\theta}}$ that may appear in the model. The second is the expressivity of the coefficients $\left\{c_{\boldsymbol{\omega}}\right\}$, which determines which linear combinations of these basis functions can be realised. Together, $\boldsymbol{\Omega}$ and $\left\{c_{\boldsymbol{\omega}}\right\}$ delineate the functional families that the quantum model can approximate and provide insight into cost-landscape features such as smoothness and the distribution of local minima.

%The first is the frequency spectrum $\boldsymbol{\Omega}$, which defines the accessible functions $e^{i \boldsymbol{\omega} \cdot \boldsymbol{\theta}}$ available to the model. The second is the expressivity of the coefficients $\left\{c_{\boldsymbol{\omega}}\right\}$, which determines which of these functions are accessible and how they can be combined. Together, $\boldsymbol{\Omega}$ and $\left\{c_{\boldsymbol{\omega}}\right\}$ delineate the functional families that the quantum model can approximate and provide insights into cost landscape features, such as the distribution of local minima and smoothness.

\section{Theoretical Results}

\subsection{QAOA as Partial Fourier Series}

We show that the QAOA expectation value can be expressed as a partial Fourier series, revealing the frequency components accessible to the circuit. This decomposition links circuit depth to optimisation performance and highlights the expressivity limits of shallow QAOA (as discussed in \cref{qaoa_fourier_app}). Identifying the maximum frequency also determines the grid resolution needed to characterise the QAOA$_1$ cost landscape effectively.

Consider the QAOA$_1$ with the quantum state
\begin{equation}
|\psi(\gamma, \beta)\rangle=e^{-i \beta H_M} e^{-i \gamma H_P}|s\rangle
\end{equation}
for an arbitrary Ising model described by the Hamiltonian $H_P$. The expectation value of $H_P$ with respect to the state $|\psi(\gamma, \beta)\rangle$ is given by:
\begin{equation}
f(\gamma, \beta)=\left\langle\psi(\gamma, \beta)\left|H_P\right| \psi(\gamma, \beta)\right\rangle.
\label{qaoa1_exp_eqn}
\end{equation}
Our goal is to express $f(\gamma, \beta)$ as a partial Fourier series of the form:
\begin{equation}
f(\gamma, \beta)=\sum_{n \in \Omega} c_n(\beta) e^{i n \gamma}
\label{fourier_series_eqn}
\end{equation}
with integer-valued frequencies (if $\Omega=\{-K, \ldots, K\}$, then we call \cref{fourier_series_eqn} a truncated Fourier series). To proceed, let us consider the eigenstates $|j\rangle$ of the cost Hamiltonian $H_P$, with corresponding eigenvalues $\lambda_j$ such that:
\begin{equation}
H_P|j\rangle=\lambda_j|j\rangle .
\end{equation}
The unitary evolution under $H_P$, given by $e^{-i \gamma H_P}$, can be expanded in the eigenbasis of $H_P$ as:
\begin{equation}
e^{-i \gamma H_P}=\sum_{j = 1}^n e^{-i \gamma \lambda_j}|j\rangle\langle j|.
\end{equation}
Applying the problem unitary $e^{-i \gamma H_P}$ to $\ket{s}$ yields:
\begin{equation}
e^{-i \gamma H_P}|s\rangle=\sum_{j = 1}^n e^{-i \gamma \lambda_j}\langle j |s\rangle|j\rangle .
\end{equation}
Thus, the expectation value $f(\gamma, \beta)$ becomes:
\begin{equation}
f(\gamma, \beta) = \sum_{j = 1}^n \sum_{k = 1}^n c_{jk}(\beta) e^{i \gamma (\lambda_j - \lambda_k)},
\label{qaoa_fourier_exp}
\end{equation}
where $c_{jk}(\beta)$ are the Fourier coefficients encapsulating the contributions from the overlaps of the eigenstates of $H_P$ after evolution under the mixer Hamiltonian $H_M$. These coefficients are given by:
\begin{equation}
\begin{aligned}
c_{jk}(\beta)
&= \langle s|j\rangle \langle j| e^{i\beta H_M} H_P e^{-i\beta H_M} |k\rangle \langle k|s\rangle \\
&= 2^{-n}\,\langle j| e^{i\beta H_M} H_P e^{-i\beta H_M} |k\rangle .
\end{aligned}
\end{equation}
Since the problem Hamiltonian $H_P$ is diagonal in the Pauli-$Z$ basis, with diagonal entries representing cost values for each bitstring $z \in\{1,-1\}^n$, QAOA aims to identify the eigenvector corresponding to the minimum eigenvalue to solve the optimisation problem. The largest possible frequency of the expectation value $f(\gamma, \beta)$ is determined by the difference between the costs of the best and worst solutions, $\lambda_{\max}-\lambda_{\min}$. To fully capture this frequency range, all coefficients in \cref{qaoa_fourier_exp} must be non-zero for each pair of eigenvalues in $H_P$.

At shallow depths, QAOA is restricted in the frequencies it can access because many Fourier coefficients $c_{j k}(\beta)$ vanish, particularly when $\left|\lambda_j-\lambda_k\right|$ is large. This limitation arises from the dependence of accessible frequencies on the circuit depth $p$, which governs the range of Fourier terms representable in \cref{qaoa_fourier_exp}. As $p$ increases, more frequencies become available, enhancing algorithm performance. Identifying the largest non-zero frequency in QAOA$_1$ provides valuable insights and helps determine the optimal resolution for grid searches over $(\gamma, \beta)$. This avoids under- or overestimating the resolution, which could compromise search accuracy or efficiency. Although finding these non-zero coefficients can be computationally challenging, we derive the maximum frequency for QAOA$_1$ on Ising models analytically using \cref{qaoa_ozaeta_thm} and \cref{qaoa_ozaeta_col}. The following subsection examines the results of this derivation, shedding light on the expressivity of QAOA$_1$ and its optimisation landscape.

\subsection{Maximum Frequency of Level-1 QAOA} \label{max_freq_subsub}

To understand the maximum frequency for QAOA$_1$ on Ising models, it is useful to reinterpret the Fourier series in \cref{qaoa_fourier_exp}. This representation frames the expectation value of QAOA$_1$, as described in \cref{qaoa_ozaeta_thm} and \cref{qaoa_ozaeta_col}, as a bandlimited two-dimensional signal. For clarity in the subsequent analysis, we begin by formally defining the concept of a bandlimited function.
\begin{definition} \label{bandlimited_f_def}
Let $T > 0$ be the period and $K \in \mathbb{N}$ a positive integer. A periodic function $f(t)$ with period $T$ is called \emph{bandlimited} to bandwidth $K \omega_0$, where $\omega_0 = 2\pi/T$, if it can be expressed as
\begin{equation} \label{bw_per_fun_eqn}
    f(t) = \sum_{k=-K}^{K} c_k e^{i k \omega_0 t},
\end{equation}
where $c_k = 0$ for all integers $k$ with $|k| > K$.
\end{definition}

For any fixed value of $\beta$, we treat the $\beta$-dependent terms in \cref{qaoa_ozaeta_thm,qaoa_ozaeta_col} as prefactors and focus on the $\gamma$-dependence. This is because the instance-dependent oscillatory structure enters through $\gamma$, which is weighted by the couplings and external fields in the closed-form expressions, and equivalently appears through the phase factors $e^{i \gamma\left(\lambda_j-\lambda_k\right)}$ in the Fourier representation of \cref{qaoa_fourier_exp}. The signal along the $\gamma$ dimension---representing the cost landscape---is periodic only if the weights of the problem instance are commensurable\footnote{Two values are said to be commensurable if their ratio is a rational number, meaning they share a common measure or can be expressed as integer multiples of a common unit.}. 

%Since the $\beta$ parameter in \cref{qaoa_ozaeta_thm} and \cref{qaoa_ozaeta_col} is independent of the problem Hamiltonian $H_P$, we treat it as a constant prefactor for any fixed value of $\beta$ and focus our analysis on the $\gamma$ parameter. The signal along the $\gamma$ dimension---representing the cost landscape---is periodic only if the weights of the problem instance are commensurable\footnote{Two values are said to be commensurable if their ratio is a rational number, meaning they share a common measure or can be expressed as integer multiples of a common unit.}. 

For a bandlimited function $f(t)$ as defined in \cref{bandlimited_f_def}, we denote its maximum angular frequency by $\omega_{\max }[f]$, where $\omega = 2\pi \nu$ relates angular frequency to the ordinary frequency $\nu$. The following theorem leverages the closed-form expressions in \cref{qaoa_ozaeta_thm} to compute the maximum angular frequency of the expectation value in QAOA$_1$, specifically for the $\gamma$-dependent terms in Ising models with fields.
\begin{theorem}\label{qaoa_2_field_max_freq_thm}
    Consider the QAOA$_1$ state $|\gamma, \beta\rangle$ for an arbitrary Ising model with fields. For any fixed $\beta$, the maximum angular frequency of the $\gamma$-dependent terms in the expectation value of the Hamiltonian $H_P$, as defined in \cref{qising_eqn}, is given by:
    \begin{multline}
  \omega_{\max}\!\left[\langle H_P \rangle_\gamma\right] = \max\Big\{ 
    \omega_{\max}\!\left[\langle C_i \rangle_\gamma\right], \\
    \omega_{\max}\!\left[\langle C_{uv} \rangle_\gamma\right] 
    \mid i \in V, \{u,v\} \in E \Big\}
\end{multline}
    where
    \begin{equation}
        \omega_{\max} \left[ \langle C_i \rangle_\gamma \right] = 2 \times \left( |h_i| + \sum_{k \in \mathcal{N}(i)} |J_{ik}| \right)
    \end{equation} and
    \begin{widetext}
    \begin{equation}
        \begin{split}
            \omega_{\max} \left[ \langle C_{uv} \rangle_\gamma \right] &= 2 \times  \max \left\{ |J_{uv}| + \max \left\{ |h_v| + \sum_{w \in e} |J_{wv}|, |h_u| + \sum_{w \in d} |J_{uw}| \right\}, \right. \\[0.25cm]
        & \qquad\left. \sum_{\substack{w \in e \\ w \notin F}} |J_{wv}| + \sum_{\substack{w \in d \\ w \notin F}} |J_{uw}| + \max \left\{ |h_u \pm h_v| + \sum_{f \in F} |J_{uf} \pm J_{vf}| \right\} \right\}. 
        \end{split}
    \end{equation}
    \end{widetext}
\end{theorem}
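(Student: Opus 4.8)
The plan is to exploit the fact that, for any fixed $\beta$, every summand in \cref{qaoa_ozaeta_thm} is---up to a $\beta$-dependent constant prefactor such as $\sin 2\beta$, $\sin 4\beta$ or $\sin^2 2\beta$, which carries no $\gamma$-dependence---a finite linear combination of products of elementary sinusoids in $\gamma$, each of the form $\sin(2h_i\gamma)$, $\cos(2h_i\gamma)$, $\sin(2J_{uv}\gamma)$, $\cos(2J_{uw}\gamma)$, or the combined factors $\cos(2(h_u\pm h_v)\gamma)$ and $\cos(2(J_{uf}\pm J_{vf})\gamma)$ arising from the $\chi=\pm1$ sum. The whole calculation then reduces to a single elementary lemma about the spectrum of such a product, applied term by term and assembled through the additivity of frequency supports under summation.

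First I would establish the key lemma: writing $\cos(a\gamma)=\tfrac12(e^{ia\gamma}+e^{-ia\gamma})$ and $\sin(a\gamma)=\tfrac1{2i}(e^{ia\gamma}-e^{-ia\gamma})$ and expanding a product $\sin(a_0\gamma)\prod_{j}\cos(a_j\gamma)$ (or a pure product of cosines) yields a generalised trigonometric polynomial $\sum_{\boldsymbol\epsilon} c_{\boldsymbol\epsilon}\, e^{i(\sum_j \epsilon_j a_j)\gamma}$ with $\epsilon_j\in\{+1,-1\}$. Its frequency set is exactly $\{\sum_j \epsilon_j a_j\}$, so its largest angular frequency equals $\sum_j |a_j|$, attained by the sign choice $\epsilon_j=\operatorname{sign}(a_j)$ with a coefficient that is a nonzero multiple of the prefactor. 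I would then apply this to each piece. The node term $\langle C_i\rangle$ is a single such product, $\sin(2h_i\gamma)\prod_{k\in\mathcal N(i)}\cos(2J_{ik}\gamma)$, giving $\omega_{\max}=2(|h_i|+\sum_{k\in\mathcal N(i)}|J_{ik}|)$ directly. The first line of $\langle C_{uv}\rangle$ is a sum of two products sharing the factor $\sin(2J_{uv}\gamma)$; their individual maxima are $2(|J_{uv}|+|h_v|+\sum_{w\in e}|J_{wv}|)$ and $2(|J_{uv}|+|h_u|+\sum_{w\in d}|J_{uw}|)$, whose larger value is the first entry of the outer maximum. The second line is the $\chi=\pm1$ sum; for each fixed $\chi$ the product has maximum frequency $2\big(\sum_{w\in e\setminus F}|J_{wv}|+\sum_{w\in d\setminus F}|J_{uw}|+|h_u+\chi h_v|+\sum_{f\in F}|J_{uf}+\chi J_{vf}|\big)$, and maximising over $\chi$ reproduces the second entry. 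Taking the maximum of the two lines gives $\omega_{\max}[\langle C_{uv}\rangle_\gamma]$, and since $\langle H_P\rangle$ is a finite sum over all nodes and edges, its maximum frequency is the maximum of the individual ones, as claimed.

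The main obstacle is the passage from a single product to sums of products: the frequency support of a sum is contained in the union of the supports, so $\omega_{\max}$ of a sum is at most the maximum of the summands' $\omega_{\max}$, but equality can fail if two products reach the same extreme frequency with coefficients that cancel. I would argue that for the additive combinations above this cannot lower the stated value generically: the two products in the first line enter with the same sign and hence reinforce rather than cancel, while the two $\chi$ terms reach distinct extreme frequencies whenever $|h_u+h_v|+\sum_{f\in F}|J_{uf}+J_{vf}|\neq |h_u-h_v|+\sum_{f\in F}|J_{uf}-J_{vf}|$, which holds outside a measure-zero set of degenerate weight configurations. In those non-generic cases the formula remains a valid upper bound on the true bandwidth, which is exactly what is required for the Nyquist-rate sampling application; I would flag this caveat together with the fact that the $\gamma$-landscape is strictly periodic only when the weights are commensurable, in which case the generalised trigonometric polynomial collapses to the truncated Fourier series of \cref{bandlimited_f_def}.
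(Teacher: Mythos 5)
Your proposal is correct and takes essentially the same route as the paper's own proof: your complex-exponential expansion lemma is exactly the paper's \cref{cos_freq_lemma} and \cref{sincos_freq_lemma} (proved there via product-to-sum identities rather than $e^{\pm ia\gamma}$ expansions), and both arguments then apply it term by term to $\langle C_i\rangle$, to the two products in the $\sin 4\beta$ line, and to the two $\chi=\pm 1$ products, assembling the result by taking maxima over summands, nodes, and edges. The only substantive difference is that you explicitly flag the possibility of top-frequency coefficient cancellation in the $\chi=\pm1$ difference (and across summands generally), noting that the stated formula is then still an upper bound---which is all the Nyquist-sampling application needs---whereas the paper's proof takes the maximum of the summands' frequencies without addressing this degenerate, measure-zero case, so on this point your write-up is slightly more careful than the original.
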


The proof of this theorem is provided in \cref{qaoa_2_field_max_freq_proof}. Notably, this theorem simplifies in the absence of external fields. The following corollary presents the maximum angular frequency of the expectation value for QAOA$_1$ in Ising models without fields, which can also be derived directly from \cref{qaoa_ozaeta_col}.

\begin{corollary}\label{qaoa_2_max_freq_col}
     Consider the QAOA$_1$ state $|\gamma, \beta\rangle$ for an arbitrary Ising model without fields. For any fixed $\beta$, the maximum angular frequency of the $\gamma$-dependent terms in the expectation value of the Hamiltonian $H_P$, as defined in \cref{qising_eqn2}, is given by:
    \begin{equation}
    \omega_{\max} \left[ \langle H_P \rangle_\gamma \right] \! = \! \max \left\{ \omega_{\max} \left[ \langle C_{uv} \rangle_\gamma \right] \Big| \{u,v\} \! \in \! E \right\} \!,
    \end{equation}
    where
    \begin{align}
    \omega_{\max}\!\left[\langle C_{uv} \rangle_\gamma\right]
    &= 2 \! \times \! \max\left\{ |J_{uv}| \! + \! \max\left\{
        \sum_{w \in e}|J_{wv}|,\right.\right. \notag\\[0.25cm]
    & \left.\left. \sum_{w \in d}|J_{uw}|
      \right\}, \sum_{\substack{w \in e \\ w \notin F}}|J_{wv}|
      + \sum_{\substack{w \in d \\ w \notin F}}|J_{uw}| \right.\notag\\[0.25cm]
    & \left. + \max\left\{
        \sum_{f \in F}|J_{uf} \pm J_{vf}|
      \right\} \right\}.
  \label{qaoa_2_max_freq_col_eq2}
\end{align}
\end{corollary}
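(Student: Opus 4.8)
The plan is to derive the corollary directly from the closed-form expression for $\langle C_{uv}\rangle$ in \cref{qaoa_ozaeta_col}, setting all fields to zero and treating the $\beta$-dependent prefactors $\sin 4\beta$ and $\sin^2 2\beta$ as constants so that only the $\gamma$-dependence matters. The central observation is that every elementary factor appearing in $\langle C_{uv}\rangle$ is a single sinusoid in $\gamma$: since $\gamma'_{xy} = 2J_{xy}\gamma$, we have $\cos\gamma'_{xy} = \cos(2J_{xy}\gamma)$ and $\sin\gamma'_{uv} = \sin(2J_{uv}\gamma)$, each of angular frequency $2|J_{xy}|$, while $\cos(\gamma'_{uf}\pm\gamma'_{vf}) = \cos\!\big(2(J_{uf}\pm J_{vf})\gamma\big)$ is itself a single sinusoid of angular frequency $2|J_{uf}\pm J_{vf}|$. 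Thus each of the two summands of $\langle C_{uv}\rangle$ is a finite product of sinusoids, hence bandlimited in the sense of \cref{bandlimited_f_def}.

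First I would record the two elementary bandwidth rules that drive the argument. For bandlimited functions the product-to-sum identities give the product rule $\omega_{\max}[fg] = \omega_{\max}[f] + \omega_{\max}[g]$ and the sum rule $\omega_{\max}[f+g] \le \max\{\omega_{\max}[f],\omega_{\max}[g]\}$, with equality in the latter whenever the two top frequencies differ. The product rule is the workhorse: expanding a product of $m$ cosines repeatedly via $\cos a\cos b = \tfrac12[\cos(a+b)+\cos(a-b)]$, the highest-frequency contribution is the all-plus combination, whose coefficient $2^{-(m-1)}$ is nonzero, so the top frequency is genuinely realised and the individual bandwidths add exactly.

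Then I would apply these rules term by term. In the first summand, $\sin\gamma'_{uv}\prod_{w\in e}\cos\gamma'_{wv}$ has bandwidth $2|J_{uv}| + 2\sum_{w\in e}|J_{wv}|$ and $\sin\gamma'_{uv}\prod_{w\in d}\cos\gamma'_{uw}$ has bandwidth $2|J_{uv}| + 2\sum_{w\in d}|J_{uw}|$; being added, the summand has bandwidth $2|J_{uv}| + 2\max\{\sum_{w\in e}|J_{wv}|,\sum_{w\in d}|J_{uw}|\}$, which is the first entry of the outer maximum in \cref{qaoa_2_max_freq_col_eq2}. For the second summand, the two prefactor products $\prod_{\substack{w\in e\\ w\notin F}}\cos\gamma'_{wv}$ and $\prod_{\substack{w\in d\\ w\notin F}}\cos\gamma'_{uw}$ contribute bandwidths $2\sum_{\substack{w\in e\\ w\notin F}}|J_{wv}|$ and $2\sum_{\substack{w\in d\\ w\notin F}}|J_{uw}|$, while the bracketed difference $\prod_{f\in F}\cos(\gamma'_{uf}+\gamma'_{vf}) - \prod_{f\in F}\cos(\gamma'_{uf}-\gamma'_{vf})$ has bandwidth $2\max\{\sum_{f\in F}|J_{uf}+J_{vf}|,\sum_{f\in F}|J_{uf}-J_{vf}|\}$; multiplying these three bandlimited factors adds their bandwidths, reproducing the second entry of the outer maximum. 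Taking the maximum over the two summands gives $\omega_{\max}[\langle C_{uv}\rangle_\gamma]$, and since $\langle H_P\rangle_\gamma = \sum_{\{u,v\}\in E}\langle C_{uv}\rangle_\gamma$ is a sum, the sum rule yields the bandwidth as the maximum of the per-edge bandwidths, which is exactly the stated formula.

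The main obstacle will be establishing tightness rather than a mere upper bound, i.e.\ confirming that the claimed top frequency actually survives and is not annihilated by cancellation. For the pure products this follows from the nonvanishing all-plus coefficient noted above. The delicate case is the difference $\prod_{f}\cos(\gamma'_{uf}+\gamma'_{vf}) - \prod_{f}\cos(\gamma'_{uf}-\gamma'_{vf})$: I must verify that the leading frequency of whichever of $\sum_{f\in F}|J_{uf}+J_{vf}|$ and $\sum_{f\in F}|J_{uf}-J_{vf}|$ is the larger is not cancelled by the other product, which it cannot be since only the larger sum carries that frequency, and that when the two sums coincide the leading contributions do not conspire to vanish for generic couplings. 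I would resolve this by interpreting $\omega_{\max}$ as the bandlimit guaranteed by \cref{bandlimited_f_def}, which is precisely the quantity the Nyquist-rate sampling argument of \cref{max_freq_subsub} requires, so that the formula stands as the smallest bandwidth containing the $\gamma$-spectrum, with equality holding generically in the couplings $\mathbf{J}$.
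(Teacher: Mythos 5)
Your proposal is correct, and its engine is identical to the paper's: the paper proves the corollary in one line by setting $h_i = 0$ in \cref{qaoa_2_field_max_freq_thm}, and the appendix proof of that theorem uses precisely the two bandwidth rules you state --- the product-to-sum expansion for a product of cosines and its extension to one sine times cosines (the paper's lemmas) --- applied factor by factor, with sums handled by the max rule. The only organizational difference is that you work directly from the field-free closed form in \cref{qaoa_ozaeta_col} rather than specializing the with-fields theorem, a route the paper itself remarks is available, so this is essentially the same approach reconstructed from scratch. Where you genuinely add something is on tightness. The paper's lemmas do rule out cancellation \emph{within} a single product (every product-to-sum coefficient is positive), but neither the theorem's proof nor the corollary addresses cancellation in the difference $\prod_{f\in F}\cos(\gamma'_{uf}+\gamma'_{vf})-\prod_{f\in F}\cos(\gamma'_{uf}-\gamma'_{vf})$, or across edges in the sum. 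Your caution is warranted: if the multisets $\{\,|J_{uf}+J_{vf}|\,\}_{f\in F}$ and $\{\,|J_{uf}-J_{vf}|\,\}_{f\in F}$ coincide --- e.g.\ $F=\{f_1,f_2\}$ with $(J_{uf_1},J_{vf_1})=(1,2)$ and $(J_{uf_2},J_{vf_2})=(2,-1)$ --- the difference vanishes identically, so the stated $\omega_{\max}$ is attained only generically, not always. Your resolution, reading the formula as the guaranteed bandlimit in the sense of \cref{bandlimited_f_def} (which is all the Nyquist-rate argument of \cref{qaoa_min_samples_thm} requires, since oversampling a function whose true bandwidth is smaller is harmless), with exact equality for generic couplings, is the right fix and is in fact more careful than the paper on this point.
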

\begin{proof}
This result follows directly from \cref{qaoa_2_field_max_freq_thm} by setting $h_i = 0$ for all $i \in V$.
\end{proof}

These results characterise the maximum frequency of QAOA$_1$ for arbitrary Ising models, including cases with external fields, allowing us to efficiently analyse the cost landscape and identify optimal parameters.

\subsection{Characterising the Energy Landscape of Level-1 QAOA} \label{char_ener_subsub}

Building on the frequency characterisation of QAOA$_1$, we employ the Nyquist-Shannon Sampling Theorem~\cite{shannon1949communication} to determine the maximum permissible sampling period required to efficiently reconstruct its cost landscape. Unlike prior studies that have utilised signal processing techniques to analyse the optimisation landscapes of PQCs~\cite{fontana2022efficient, hao2023enabling}, our method is specifically tailored to QAOA$_1$ applied to Ising models. It relies exclusively on analytical expressions, eliminating the need to execute the quantum algorithm on actual hardware. Moreover, our approach captures the entire QAOA$_1$ landscape, including challenging features such as barren plateaus and narrow gorges, which earlier methods do not address. 

Before proceeding, we formally state the Nyquist-Shannon Sampling Theorem~\cite{shannon1949communication}:
\begin{theorem}[Nyquist-Shannon Sampling Theorem] \label{nyquist_shannon_thm}
Let $x(t)$ be a bandlimited function with maximum frequency $B$. If $x(t)$ is sampled at uniform intervals $t_n = nT_s$ for all integers $n \in \mathbb{Z}$, and the sampling period $T_s$ satisfies
\begin{equation}
    T_s \leq \frac{1}{2B + 1},
\end{equation}
then $x(t)$ is uniquely determined by its samples $\{ x(t_n) \}$, and can be exactly reconstructed from these samples.
\end{theorem}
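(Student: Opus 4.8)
The plan is to exploit the finite-dimensional structure that \cref{bandlimited_f_def} imposes: a function bandlimited to maximum frequency $B$ is a trigonometric polynomial determined by only $2B+1$ Fourier coefficients $c_{-B}, \ldots, c_{B}$. Reconstruction therefore reduces to recovering these finitely many unknowns from the samples, turning an apparent analysis problem into a linear-algebra one. First I would fix one period $T$ of $x(t)$ (treating the commensurable case, where $x$ is genuinely periodic; the almost-periodic case is handled by the same spectral argument in a limiting sense) and write $x(t_n) = \sum_{k=-B}^{B} c_k\, e^{i k \omega_0 t_n}$ for the $N$ equispaced sample points $t_n = n T_s$ lying in one period, where $\omega_0 = 2\pi/T$.

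Second, I would recognise this as a linear system $\mathbf{x}_{\mathrm{samp}} = \mathbf{V}\mathbf{c}$, where the $N \times (2B+1)$ matrix $\mathbf{V}$ has entries $V_{nk} = e^{i k \omega_0 n T_s} = \zeta_n^{\,k}$ with nodes $\zeta_n = e^{i \omega_0 n T_s}$ on the unit circle. This is a Vandermonde matrix. At the critical rate, $N = 2B+1$, so $\mathbf{V}$ is square and its nodes are distinct $N$-th roots of unity; hence $\mathbf{V}$ is nonsingular and $\mathbf{c} = \mathbf{V}^{-1}\mathbf{x}_{\mathrm{samp}}$ recovers every coefficient uniquely, thereby determining $x(t)$ everywhere. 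Equivalently the recovery is the inverse discrete Fourier transform $c_k = \tfrac{1}{N}\sum_{n=0}^{N-1} x(t_n)\, e^{-i k \omega_0 n T_s}$, which I would then fold into a closed-form Dirichlet-kernel interpolation formula to exhibit the explicit reconstruction.

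Third, I would explain why the threshold is $1/(2B+1)$ rather than the textbook $1/(2B)$. The hypothesis $T_s \le 1/(2B+1)$ is equivalent to a sampling rate $f_s = 1/T_s \ge 2B+1$, i.e. at least as many samples per period as there are \emph{distinct integer frequencies} in the spectrum $\{-B, \ldots, B\}$. In frequency space, sampling replicates the discrete spectrum at spacing $f_s$, and the extra $+1$ is precisely what is needed to pack all $2B+1$ integer spectral lines into a single replication cell without collision, so that no two frequencies alias onto one another. For the strict inequality (oversampling) the system $\mathbf{x}_{\mathrm{samp}} = \mathbf{V}\mathbf{c}$ becomes overdetermined with $N > 2B+1$; since $\mathbf{V}$ retains full column rank, $\mathbf{c}$ is still uniquely recovered via the pseudoinverse and the surplus samples are automatically consistent.

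I expect the \textbf{no-aliasing / full-rank step}, together with the oversampling case, to be the main obstacle. Nonsingularity of $\mathbf{V}$ at the exact Nyquist rate is the standard distinct-nodes Vandermonde argument, but handling $T_s < 1/(2B+1)$ when $T_s$ does not evenly divide the period requires more care, since the samples over one period no longer sit on a clean root-of-unity grid. There I would fall back on the classical Poisson-summation viewpoint: sampling periodises the spectrum, the rate condition guarantees the shifted copies of the (compactly supported, discrete) spectrum do not overlap, and an ideal band-selecting filter then returns the original coefficients exactly. Translating that spectral statement back into the interpolation identity completes the reconstruction.
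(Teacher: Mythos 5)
The paper never proves this statement: it is Shannon's classical sampling theorem, imported by citation to~\cite{shannon1949communication} and used as a black box in \cref{qaoa_min_samples_thm}, so there is no internal proof to compare yours against; it must be judged on its own merits. On those merits, your finite-dimensional route is essentially sound, and it is arguably better matched to the paper's setting than the textbook sinc-interpolation/Poisson-summation proof, because under \cref{bandlimited_f_def} a bandlimited function is a trigonometric polynomial with finitely many unknown coefficients, and uniqueness from samples is exactly injectivity of a Vandermonde-type map with nodes $z_k = e^{ik\omega_0 T_s}$ on the unit circle.

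Two steps need tightening. First, you conflate the integer harmonic index $K$ with the ordinary frequency $B = K/T$: the number of unknowns is $2K+1 = 2BT+1$, which equals $2B+1$ only under the normalisation $T=1$. This is not purely cosmetic, because the hypothesis $T_s \le 1/(2B+1)$ guarantees only $T/T_s \ge 2K+T$ samples per period, which is \emph{fewer} than $2K+1$ when $T<1$; so the ``enough samples in one period'' counting does not close the argument by itself. Second, the incommensurable case ($T_s$ not dividing $T$), which you defer to a Poisson-summation/ideal-filter argument (the least rigorous part of the sketch, since it involves distributional spectra), does not need that machinery at all: your own Vandermonde tool suffices. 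The hypothesis gives $T_s \le 1/(2B+1) < 1/(2B) = T/(2K)$, hence $|k-k'|\,\omega_0 T_s < 2\pi$ for all $|k|,|k'| \le K$ with $k \neq k'$, so the $2K+1$ nodes $z_k$ are pairwise distinct; evaluating the difference of two candidate reconstructions at any $2K+1$ \emph{consecutive} samples $t_n, t_{n+1}, \dots, t_{n+2K}$ (which need not lie in a single period) yields a square Vandermonde system in the distinct nodes, forcing all coefficients to agree. This single argument covers the critical rate, oversampling, and incommensurable $T_s$ uniformly, and it gives explicit reconstruction by solving one $(2K+1)\times(2K+1)$ linear system, making your second and fourth steps redundant.
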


The Nyquist-Shannon Sampling Theorem ensures that a bandlimited function can be perfectly reconstructed provided the sampling rate is at least twice its highest frequency. Furthermore, for a periodic bandlimited function with period $T$, it suffices to sample the function within the interval $[0, T]$ at this minimum rate. In the context of QAOA$_1$, the theorem establishes the maximum allowable distance between consecutive samples (i.e., the maximum permissible sampling period) necessary to accurately reconstruct the optimisation landscape along $\gamma$. This result is formalised in the following theorem.
\begin{theorem}\label{qaoa_min_samples_thm}
Let $|\gamma, \beta \rangle$ denote the QAOA$_1$ state applied to an arbitrary Ising model. Suppose the expectation value $\langle C(\gamma) \rangle = \langle \gamma, \beta^* | H_P | \gamma, \beta^* \rangle$ is a bandlimited function of $\gamma$ with maximum frequency $\nu_{\max}\left[ \langle C(\gamma) \rangle \right]$. Then if $\langle C(\gamma) \rangle$ is sampled at uniformly spaced points with spacing $\Delta_\gamma$ satisfying
\begin{equation}
    \Delta_\gamma \leq \frac{1}{2 \nu_{\max}\left[ \langle C(\gamma) \rangle \right] + 1},
\end{equation}
then $\langle C(\gamma) \rangle$ can be uniquely determined from these samples.
\end{theorem}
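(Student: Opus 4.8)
The plan is to treat this statement as a direct specialisation of the Nyquist-Shannon Sampling Theorem (\cref{nyquist_shannon_thm}) to the QAOA$_1$ cost function along $\gamma$. The genuine content lies not in any new sampling argument but in verifying that $\langle C(\gamma) \rangle$ satisfies the bandlimited hypothesis of that theorem with the correct maximum frequency, and in matching the angular/ordinary frequency conventions; once those are in place the conclusion follows by substitution.

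First I would make explicit why $\langle C(\gamma) \rangle$ is bandlimited with a well-defined maximum frequency. Starting from the closed-form expression in \cref{qaoa_ozaeta_thm}, each edge term $\langle C_{uv} \rangle$ and node term $\langle C_i \rangle$ is a finite product of factors of the form $\cos \gamma'_{\bullet}$ and $\sin \gamma'_{\bullet}$, where each $\gamma'_{\bullet}$ is a fixed real multiple of $\gamma$ (namely $2 J_{uv}\gamma$ or $2 h_i \gamma$), with $\beta = \beta^*$ fixed so that the $\beta$-dependent prefactors are constants. Expanding every such product via product-to-sum identities rewrites each term as a finite linear combination of complex exponentials $e^{i\omega\gamma}$ whose frequencies $\omega$ are signed integer combinations of the weights $2J_{uv}$ and $2h_i$. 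Summing over all nodes and edges preserves finiteness, so $\langle C(\gamma) \rangle$ is a generalised trigonometric polynomial whose Fourier transform is a finite collection of Dirac masses supported on $[-\omega_{\max}, \omega_{\max}]$. By \cref{qaoa_2_field_max_freq_thm} the largest frequency present equals $\omega_{\max}[\langle H_P \rangle_\gamma]$, and converting to ordinary frequency via $\omega = 2\pi\nu$ gives $\nu_{\max}[\langle C(\gamma) \rangle] = \omega_{\max}/(2\pi)$, confirming the bandlimited hypothesis.

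Next I would invoke \cref{nyquist_shannon_thm} under the identifications $x(t) \leftrightarrow \langle C(\gamma) \rangle$, $t \leftrightarrow \gamma$, $B \leftrightarrow \nu_{\max}[\langle C(\gamma) \rangle]$, and $T_s \leftrightarrow \Delta_\gamma$. The stated hypothesis $\Delta_\gamma \le 1/(2\nu_{\max} + 1)$ is then exactly the sampling condition $T_s \le 1/(2B+1)$ of that theorem, so unique determination of $\langle C(\gamma) \rangle$ from its uniform samples is immediate. For the practically relevant case of commensurable weights, where the signal is genuinely periodic with some period $T$, I would additionally note that the exponentials collapse onto integer multiples of a fundamental $\omega_0 = 2\pi/T$, so the function matches \cref{bandlimited_f_def} with $K = \omega_{\max}/\omega_0$; reconstruction then requires only the finitely many samples taken within a single period $[0,T]$ rather than over all of $\mathbb{Z}$.

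The main obstacle I anticipate is not the algebra but the bandlimited-versus-periodic subtlety: when the weights are incommensurable the function is almost periodic rather than strictly periodic, so one cannot directly appeal to the periodic Fourier-series notion of \cref{bandlimited_f_def} and must instead use the continuous-Fourier (Dirac-supported) notion of bandlimitedness so that \cref{nyquist_shannon_thm} applies uniformly in both cases. I would therefore phrase the bandlimited property in terms of the support of the Fourier transform, and I would flag that the conservative $+1$ in the denominator guarantees strictly sub-Nyquist spacing, sidestepping the boundary edge case in which spectral mass sits exactly at $\pm B$.
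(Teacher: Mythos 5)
Your proposal is correct and follows essentially the same route as the paper: the paper's proof is a one-line specialisation of \cref{nyquist_shannon_thm} with $x(t) = \langle C(\gamma) \rangle$ and $B = \omega_{\max}/2\pi$, where $\omega_{\max}$ comes from \cref{qaoa_2_field_max_freq_thm} and \cref{qaoa_2_max_freq_col}, exactly matching your identifications. Your additional verification of the bandlimited hypothesis via product-to-sum expansion and your handling of the commensurable/incommensurable distinction make the argument more self-contained than the paper's terse proof, but they do not constitute a different approach.
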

\begin{proof}
    The proof follows from \cref{nyquist_shannon_thm} by setting $x(t) = \langle C(\gamma) \rangle$ and $B = \mu_{\max}/2\pi$, where $\mu_{\max}$ is determined from \cref{qaoa_2_field_max_freq_thm} and \cref{qaoa_2_max_freq_col}.
\end{proof}
If the problem Hamiltonian $H_P$ consists of incommensurable weights, the expectation value of QAOA $_1$ along $\gamma$ becomes non-periodic, necessitating sampling at uniform intervals $n \Delta_\gamma$ for all integers $n \in \mathbb{Z}$. However, if the optimal $\gamma^*$ is known \emph{a priori} to lie within a specific region, it suffices to sample only within that interval (which we will discuss in \cref{first_local_subsubsec}). In contrast, when $H_P$ has commensurable weights, the expectation value of QAOA$_1$ along $\gamma$ exhibits periodicity with period $T_\gamma$, making it sufficient to sample within the interval $\left[0, T_\gamma\right]$. Specifically, if $H_P$ contains only integer eigenvalues, the maximum possible period $T_\gamma$ is $\pi$. Therefore, for Hamiltonians with fractional weights, it is advantageous to convert them to integer weights by multiplying by an appropriate constant factor, allowing sampling within $[0, \pi]$ and enabling exact reconstruction with evaluations at $\left\lceil\pi / \Delta_\gamma\right\rceil$ equally spaced points.

Beyond providing sampling requirements, \cref{qaoa_min_samples_thm} highlights that the shape of the QAOA$_1$ optimisation landscape is closely tied to graph-dependent parameters governing the maximum frequency. In particular, higher frequencies in $\langle C (\gamma) \rangle$ can lead to more rapid changes or \emph{sharper} features across $\gamma$, whereas lower-frequency regimes yield \emph{smoother} variations. Empirical observations in, for example, Figure 9 of Vijendran et al.~\cite{vijendran2024expressive} and related works~\cite{stkechly2023connecting,mussig2024connecting} illustrate how increasing graph degree or density compresses regions of large gradients while expanding flatter regions of the landscape. In the remainder of this section, we apply \cref{qaoa_min_samples_thm} to two graph families to elucidate how these graph properties influence the QAOA$_1$ optimisation landscape.

We begin with the Sherrington-Kirkpatrick (SK) model, a paradigmatic example from condensed matter physics characterised by its fully connected nature. The following corollary provides an upper bound on the sampling period required for QAOA$_1$ applied to this model, ensuring exact reconstruction of the expectation value.
\begin{corollary}\label{sk_gregion_cor}
Consider the QAOA$_1$ state $|\gamma, \beta^* \rangle$ applied to the Sherrington-Kirkpatrick model with $n$ spins, where the coupling and field strengths are $\pm 1$. If the samples are spaced by $\Delta_{\gamma}$ satisfying
\begin{equation}
    \Delta_\gamma \leq \frac{\pi}{\mathcal{O}(n) + \pi},
\end{equation}
then the expectation value $\langle \gamma, \beta^* | H_P | \gamma, \beta^* \rangle$ can be exactly reconstructed for all $\gamma \in [0, \pi]$.
\end{corollary}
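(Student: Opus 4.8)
The plan is to specialise the general maximum-frequency formula of \cref{qaoa_2_field_max_freq_thm} to the complete graph underlying the SK model, show that the resulting angular frequency grows only linearly in $n$, and then feed that bound into the sampling criterion of \cref{qaoa_min_samples_thm}. Since the SK model carries nonzero fields, the full field version of the theorem is the right starting point, rather than \cref{qaoa_2_max_freq_col}.

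First I would pin down the graph structure. For the fully connected SK model on $n$ spins every vertex has degree $n-1$, so for any edge $\{u,v\}$ we have $|\mathcal{N}(i)| = n-1$ and $|e_{uv}| = |d_{uv}| = n-2$, while the triangle set $F_{uv} = \mathcal{N}(u)\cap\mathcal{N}(v)$ consists of all remaining $n-2$ vertices. The structural point I would emphasise is that in a complete graph $e_{uv}\setminus F_{uv} = d_{uv}\setminus F_{uv} = \emptyset$, so the two sums over $w\notin F$ appearing in the second branch of \cref{qaoa_2_field_max_freq_thm} vanish identically, and the dominant scaling must come from the triangle sum over $F$.

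Next I would substitute $|J_{ik}| = |h_i| = 1$ into both pieces of the theorem. The single-spin term gives $\omega_{\max}[\langle C_i\rangle_\gamma] = 2\bigl(1 + (n-1)\bigr) = 2n$. For the edge term, the first branch evaluates to $2\bigl(|J_{uv}| + |h_v| + \sum_{w\in e}|J_{wv}|\bigr) = 2\bigl(1+1+(n-2)\bigr) = 2n$, and the second branch, using $e\setminus F = d\setminus F = \emptyset$, collapses to $2\max_{\pm}\bigl\{|h_u\pm h_v| + \sum_{f\in F}|J_{uf}\pm J_{vf}|\bigr\} \le 2\bigl(2 + 2(n-2)\bigr) = 4n-4$. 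Taking the maximum over all vertices and edges yields $\omega_{\max}\bigl[\langle H_P\rangle_\gamma\bigr] \le 4n-4 = \mathcal{O}(n)$. Converting to the ordinary frequency via $\nu_{\max} = \omega_{\max}/(2\pi)$ and inserting into \cref{qaoa_min_samples_thm} gives $\Delta_\gamma \le (2\nu_{\max}+1)^{-1} = \pi/(\omega_{\max}+\pi) = \pi/(\mathcal{O}(n)+\pi)$, which is precisely the claimed bound.

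The remaining issue, and the one I expect to be the only non-routine step, is justifying reconstruction over the finite window $[0,\pi]$ rather than the whole real line. Here I would invoke periodicity: with $\pm 1$ couplings and fields, every trigonometric argument in \cref{qaoa_ozaeta_thm} has the form $2J_{uv}\gamma$ or $2h_i\gamma$ with integer weights, so every frequency in the expansion \cref{qaoa_fourier_exp} is an even integer (equivalently, all eigenvalue gaps $\lambda_j-\lambda_k$ are even). Consequently $\langle C(\gamma)\rangle$ is invariant under $\gamma\mapsto\gamma+\pi$, so its period divides $\pi$; sampling the single period $[0,\pi]$ at the Nyquist spacing therefore suffices for exact reconstruction. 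The care needed is entirely in the combinatorial accounting of $e$, $d$, and $F$ on the complete graph—verifying that the off-triangle product terms drop out so the $4n-4$ scaling arises solely from $F$—and in confirming that integer weights force even frequencies and hence the period-$\pi$ claim.
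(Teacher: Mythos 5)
Your proposal is correct and follows essentially the same route as the paper's proof: specialise the maximum-frequency theorem to the complete graph (where the off-triangle sums vanish and the triangle sum supplies the dominant $\mathcal{O}(n)$ term), convert to ordinary frequency via $\nu_{\max} = \omega_{\max}/2\pi$, and apply \cref{qaoa_min_samples_thm} over the period $T_\gamma = \pi$ induced by the $\pm 1$ integer weights. The only difference is that you evaluate the field-inclusive \cref{qaoa_2_field_max_freq_thm} explicitly (obtaining the bound $4n-4$), whereas the paper computes the field-free case and asserts that $\pm 1$ fields alter the frequency only by a constant factor---your version is, if anything, slightly more complete, as is your justification of the period-$\pi$ claim via even eigenvalue gaps.
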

The proof of this corollary is provided in \cref{sk_narrow_gorge_proof}. It demonstrates that the sampling interval $\Delta_\gamma$ must decrease inversely with the system size $n$ to maintain exact reconstruction. This behaviour is intuitive, as the degree of the SK model scales with the number of spins, $D = n - 1$, and the complexity of interactions increases with $n$. Consequently, the optimisation landscape exhibits increasingly rapid variations as $n$ grows, necessitating finer sampling to accurately capture its features.

Next, we consider triangle-free $D$-regular graphs, which are graphs where each vertex has exactly $D$ neighbours and no three vertices form a triangle. The following corollary establishes a precise upper bound on the sampling period for QAOA$_1$ applied to such graphs with $\pm 1$ edge weights.
\begin{corollary}\label{d_reg_tri_free_conc}
    Let $|\gamma, \beta^* \rangle$ denote the QAOA$_1$ state applied to a triangle-free $D$-regular graph with edge weights $\pm 1$. If the samples are spaced by $\Delta_{\gamma}$ satisfying
    \begin{equation}
        \Delta_\gamma \leq \frac{\pi}{2D + \pi},
    \end{equation}
    then the expectation value $\langle \gamma, \beta^* | H_P | \gamma, \beta^* \rangle$ can be exactly reconstructed for all $\gamma \in [0, \pi]$. For triangle-free graphs with bounded degrees, this bound holds with $D$ replaced by the maximum degree $D_{\max}$.
\end{corollary}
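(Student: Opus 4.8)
The plan is to bypass the generic frequency formula of \cref{qaoa_2_max_freq_col} and instead work directly from the closed-form edge contribution in \cref{qaoa_ozaeta_col}, because the triangle-free hypothesis produces an exact cancellation that the generic bound does not detect. First I would note that a graph is triangle-free precisely when the common-neighbour set $F_{uv} = \mathcal{N}(u) \cap \mathcal{N}(v)$ is empty for every edge $\{u,v\} \in E$. Substituting $F = \emptyset$ into $\langle C_{uv}\rangle$, the two products $\prod_{f \in F} \cos(\gamma'_{uf} + \gamma'_{vf})$ and $\prod_{f \in F} \cos(\gamma'_{uf} - \gamma'_{vf})$ both reduce to empty products equal to $1$, so their difference vanishes. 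Hence the entire $\sin^2 2\beta$ term is identically zero and the edge contribution collapses to
\begin{equation}
\langle C_{uv}\rangle = \frac{J_{uv}}{2}\sin 4\beta \, \sin\gamma'_{uv}\left(\prod_{w\in e}\cos\gamma'_{wv} + \prod_{w\in d}\cos\gamma'_{uw}\right).
\end{equation}

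Next I would read off the maximum angular frequency along $\gamma$ of this surviving term. Since $\gamma'_{xy} = 2 J_{xy}\gamma$, each factor $\cos\gamma'_{xy}$ and the factor $\sin\gamma'_{uv}$ is a single tone of angular frequency $2|J_{xy}|$, and the maximum frequency of a product of tones is the sum of the individual frequencies. The term is a sum of two products, so its maximum frequency is $2|J_{uv}| + \max\{\sum_{w\in e} 2|J_{wv}|, \sum_{w\in d}2|J_{uw}|\}$, which is exactly the first branch of the maximum appearing in \cref{qaoa_2_max_freq_col}. For $\pm 1$ weights on a $D$-regular graph we have $|J| = 1$ and $|e_{uv}| = |d_{uv}| = D-1$, giving $2(1 + (D-1)) = 2D$ for every edge; taking the maximum over edges yields $\omega_{\max}[\langle H_P\rangle_\gamma] = 2D$, i.e.\ ordinary frequency $\nu_{\max} = \omega_{\max}/(2\pi) = D/\pi$.

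The conclusion then follows by invoking \cref{qaoa_min_samples_thm} with $\nu_{\max} = D/\pi$, which gives $\Delta_\gamma \leq 1/(2\nu_{\max} + 1) = \pi/(2D + \pi)$; because the $\pm 1$ (integer) weights make every angular frequency an even integer, the landscape is $\pi$-periodic in $\gamma$, so sampling on $[0,\pi]$ at this spacing reconstructs it exactly. For the bounded-degree case I would replace the regularity counts by the bounds $|e_{uv}|, |d_{uv}| \leq D_{\max} - 1$, so that the per-edge frequency is at most $2D_{\max}$ and the identical argument yields $\Delta_\gamma \leq \pi/(2D_{\max} + \pi)$.

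The main obstacle---and the reason this merits a separate corollary---is the cancellation in the first step: one must recognise that triangle-freeness eliminates the $\sin^2 2\beta$ term \emph{identically} rather than merely bounding its frequency. Mechanically inserting $F = \emptyset$ into the frequency expression of \cref{qaoa_2_max_freq_col} would retain the second branch $\sum_{w\in e}|J_{wv}| + \sum_{w\in d}|J_{uw}| = 2(D-1)$, producing the more conservative (and for $D \geq 2$ strictly smaller) bound $\Delta_\gamma \leq \pi/(4D - 4 + \pi)$; obtaining the advertised $\pi/(2D+\pi)$ requires the sharper observation that this branch corresponds to a term that is simply not present.
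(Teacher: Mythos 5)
Your proposal is correct and follows essentially the same route as the paper: the paper's proof likewise takes the maximum angular frequency of a triangle-free edge to be $2|J_{uv}| + 2\max\left\{\sum_{w\in e}|J_{wv}|,\, \sum_{w\in d}|J_{uw}|\right\} = 2D$, applies \cref{qaoa_min_samples_thm} with $\nu_{\max} = D/\pi$ together with the $\pi$-periodicity coming from the $\pm 1$ weights, and handles the bounded-degree case via the edge containing a vertex of degree $D_{\max}$. The only difference is one of explicitness: the paper writes down this first-branch-only frequency formula without comment, whereas you supply its justification---the identical vanishing of the $\sin^2 2\beta$ term when $F_{uv} = \emptyset$---which is precisely the observation that makes the stated bound sharper than a mechanical application of \cref{qaoa_2_max_freq_col}.
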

The proof of this corollary is presented in \cref{unw_trif_dreg_proof}. Similar to the SK model, the required sampling interval $\Delta_\gamma$ decreases inversely with the graph's degree $D$. In triangle-free regular graphs, the absence of triangles simplifies interactions, yet higher degrees still induce sharper features in the optimisation landscape. For triangle-free graphs with bounded degrees, $\Delta_\gamma$ inversely depends on the maximum degree $D_{\max}$. Furthermore, even in generally sparse graphs with localised dense regions, the maximum frequency—and consequently the required sampling interval—remains high, making the landscape highly oscillatory. Therefore, maintaining exact reconstruction necessitates finer sampling as the degree increases, ensuring accurate capture of the optimisation landscape and preventing the loss of critical information due to under-sampling.

By applying the Nyquist-Shannon Sampling Theorem, we derived the maximum permissible spacing between consecutive samples required to accurately reconstruct the optimisation landscape along $\gamma$. These results indicate that the necessary sampling density increases inversely with the size, weights, and density of the problem. While the $\beta$ frequencies are independent of the problem Hamiltonian $H_P$, the $\gamma$ frequencies depend directly on it. Consequently, performing a full grid search over $(\beta, \gamma)$ becomes computationally expensive as the problem’s size and complexity grow, particularly for weighted instances. To address this challenge, the following section introduces a method to streamline the process by reducing the grid search to a line search over $\gamma$ while solving analytically for $\beta^*$.

\subsection{Univariate Representation of the Level-1 QAOA's Objective Function}\label{linearizing_qaoa_sec}

\begin{figure*}[!t]
    \centering
    \subfloat[Optimal Path for an Unweighted Graph]{
        \includegraphics[height=0.2425\textheight]{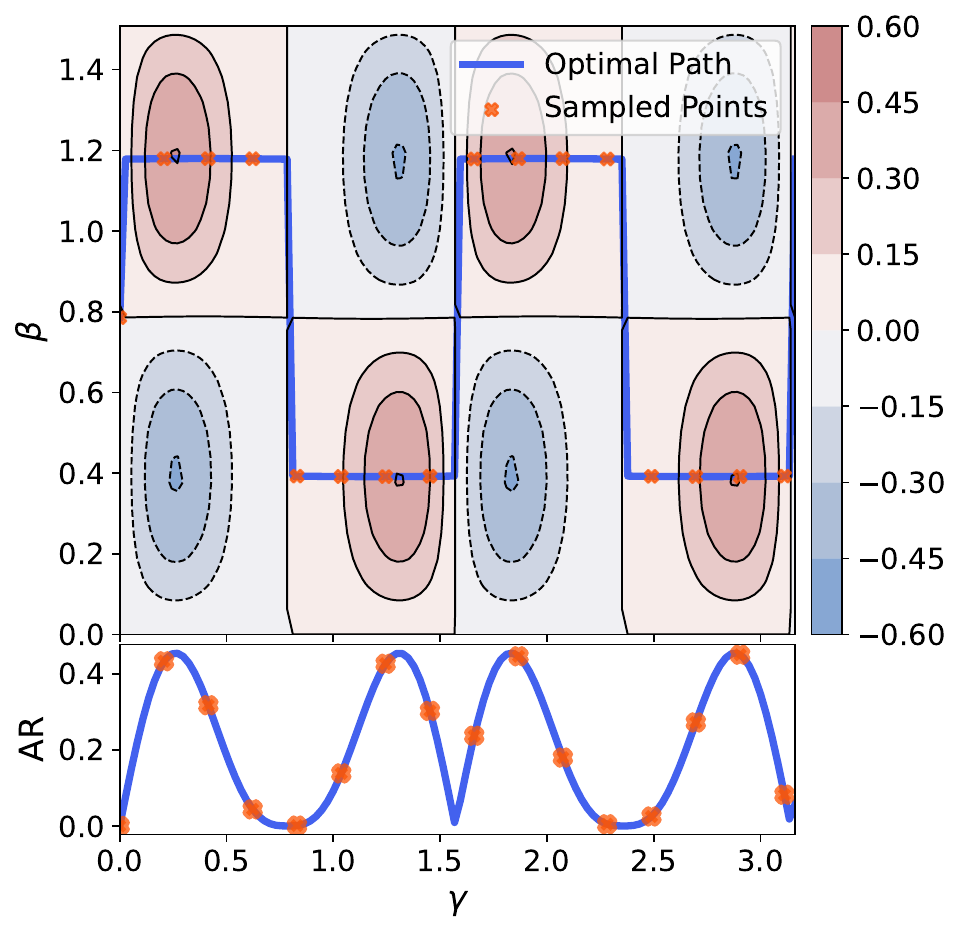}
        \label{optimal_path_2local_diagram_a}
    }
    \subfloat[Optimal Path for a Weighted Graph]{
        \includegraphics[height=0.2425\textheight]{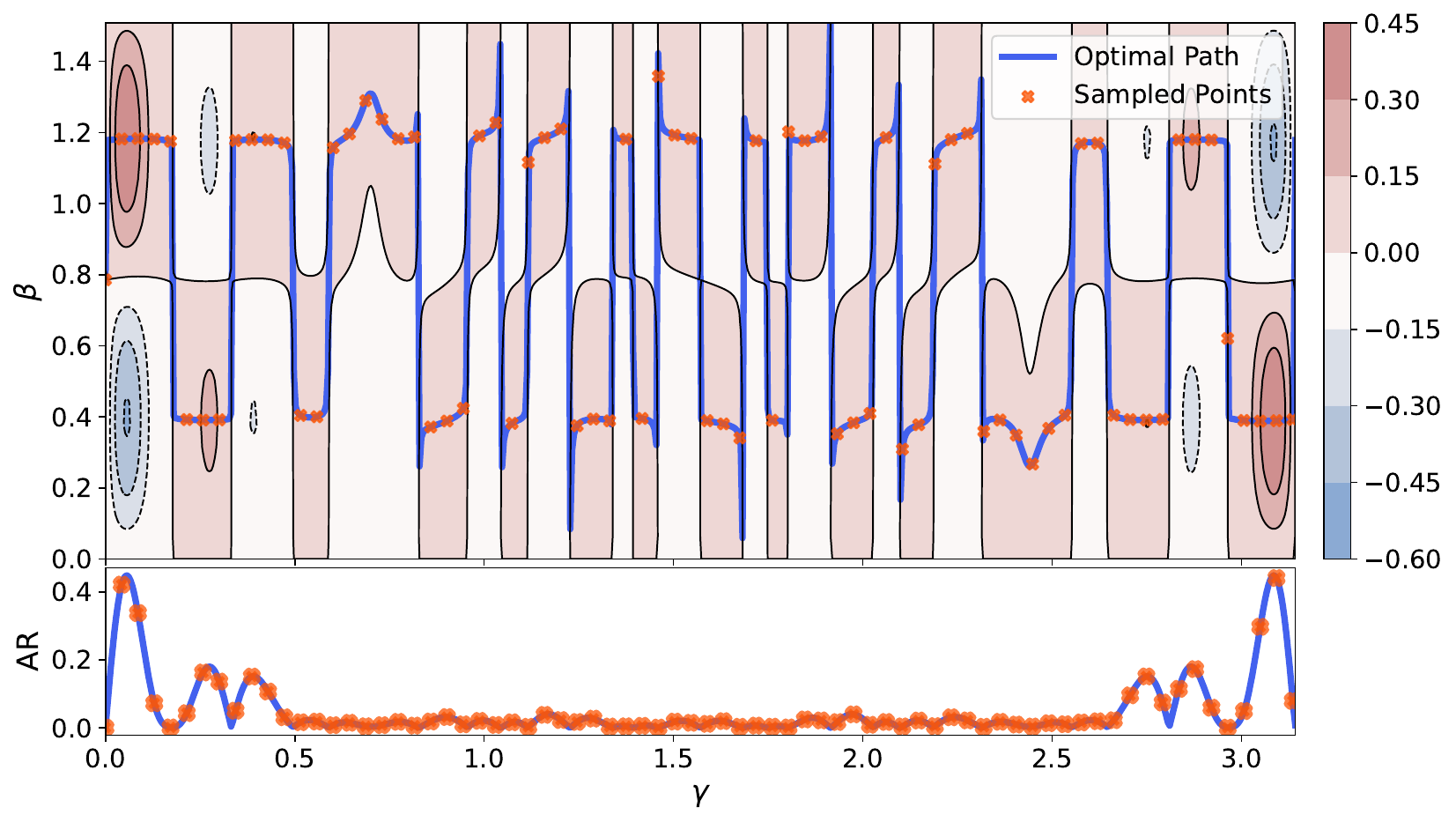}
        \label{optimal_path_2local_diagram_b}
    }
    \caption{\textbf{Optimal $(\gamma, \beta)$ Paths for QAOA$_1$ on a 4-Regular Graph.} \justifying These plots depict the optimal $(\gamma, \beta)$ paths for QAOA$_1$ on a 4-regular graph with 128 vertices, calculated using \cref{qaoa_2_local_opt_params_thm}. The top contour plots show the cost landscape, with colours indicating approximation ratios (higher is better). The line plots below display the approximation ratios computed along $\gamma$ using \cref{QAOA_Gamma_Linear_Eqn}, where the approximation ratio is defined as the objective value obtained by the algorithm divided by the optimal objective value for the same instance.  
    %where the approximation ratio measures the quality of the obtained solution relative to the optimal solution. 
    Solid blue lines trace the optimal trajectory, $(\gamma, \beta^*_{\gamma})$, where $\beta^*_{\gamma}$ is the optimal $\beta$ for each fixed $\gamma$. Orange `x' markers indicate sampled points, spaced according to \cref{qaoa_min_samples_thm}. The left plot (unweighted graph) reveals symmetric extrema in the cost landscape, while the right plot (weighted graph with positive integer edge weights drawn from a Gaussian distribution with mean 5 and variance 1) 
    %Gaussian-distributed integer edge weights, mean 5, and variance 1) 
    shows extrema clustered near $\gamma \approx 0$ and $\pi$. Discontinuities in the optimal path, evident in both cases, result from abrupt shifts in $\beta$ for specific $\gamma$ values---an effect more pronounced in weighted graphs.}
    \label{optimal_path_2local_diagrams}
\end{figure*}

We now present one of our main results: the closed-form expressions from \cref{qaoa_ozaeta_thm} and \cref{qaoa_ozaeta_col} allow the expectation value of QAOA$_1$ on Ising models to be expressed solely in terms of the $\gamma$ parameter, with the optimal $\beta^*$ determined analytically for any $\gamma \in \mathbb{R}$. We begin by considering models without external fields, where the following theorem demonstrates that this reduction holds, enabling an analytical solution for $\beta^*$ based on the optimal $\gamma^*$.

\begin{theorem}[Ising Model without Fields]\label{qaoa_2_local_opt_params_thm}
    Consider the QAOA$_1$ state $|\gamma, \beta\rangle$ for an arbitrary Ising model without fields. The expectation value of the Hamiltonian $H_P$, as defined in \cref{qising_eqn2}, is a function of $\gamma$ given by:
    \begin{equation}
       \langle\gamma|H_P| \gamma\rangle = -  \sqrt{A^2(\gamma)+ \frac{B^2(\gamma)}{4}} - \frac{B(\gamma)}{2}.
       \label{QAOA_Gamma_Linear_Eqn}
    \end{equation}
    The optimal angles $(\gamma^*, \beta^*)$ that minimise $\langle\gamma, \beta|H_P| \gamma, \beta\rangle$ are given by:
    \begin{align}
        \gamma^* &= \argmin_{\gamma \in \mathbb{R}} \ \langle\gamma|H_P| \gamma\rangle, \\
        \beta^* &= \frac{1}{4}\left[ \arctan \Big( 2A(\gamma^*),B(\gamma^*) \Big) + \pi \right]. \label{qaoa_2_local_opt_beta}
    \end{align}
    The functions $A(\gamma)$ and $B(\gamma)$, defined over the graph $G = (V, E)$ corresponding to the Hamiltonian $H_P$, are given by:
    \begin{align*}
    A(\gamma) &=  \hspace{-0.4cm}\sum\limits_{\{u, v\} \in E} \hspace{-0.25cm} \frac{J_{uv}}{2} \! \sin \! \gamma_{uv}' \hspace{-0.15cm} \left(\prod_{w \in e} \! \cos\gamma_{wv}' \! + \! \prod_{w \in d} \! \cos \gamma_{uw}'\! \right)\!, \\[0.35cm]
    B(\gamma) &= \hspace{-0.4cm}\sum\limits_{\{u, v\} \in E} \hspace{-0.25cm} \frac{J_{uv}}{2} \! \prod_{\substack{w \in e \\ w \notin F}} \! \cos\gamma_{wv}' \! \prod_{\substack{w \in d \\ w \notin F}} \!\cos \gamma_{uw}' \\[0.25cm]
    &  \times \! \Bigg(\! \prod_{f \in F} \! \cos(\gamma_{uf}' \! + \! \gamma_{vf}')\! -\! \prod_{f \in F}\! \cos(\gamma_{uf}' \! - \! \gamma_{vf}') \! \Bigg),
\end{align*}
with $\gamma_{uv}^{\prime} = 2J_{uv}\gamma$.
\end{theorem}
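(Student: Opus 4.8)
Starting from Corollary 6 (the field-free expectation value), I can see the structure I need to exploit.The plan is to start from \cref{qaoa_ozaeta_col}, which already expresses $\langle \gamma, \beta | H_P | \gamma, \beta \rangle$ as a sum over edges. The crucial observation is that the entire $\beta$-dependence factors out in a very simple way: every $\langle C_{uv} \rangle$ term is a linear combination of exactly two trigonometric functions of $\beta$, namely $\sin 4\beta$ and $\sin^2 2\beta$. First I would collect all the $\gamma$-dependent factors multiplying $\sin 4\beta$ into a single function $A(\gamma)$, and all the $\gamma$-dependent factors multiplying $\sin^2 2\beta$ into a single function $B(\gamma)$. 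Reading off \cref{qaoa_ozaeta_col} directly, the first bracket (carrying the $\frac{J_{uv}}{2}\sin 4\beta \sin\gamma'_{uv}$ prefactor) sums to $A(\gamma)$, and the second bracket (carrying the $-\frac{J_{uv}}{2}\sin^2 2\beta$ prefactor) sums to $B(\gamma)$, so that
\begin{equation}
    \langle \gamma, \beta | H_P | \gamma, \beta \rangle = A(\gamma)\sin 4\beta - B(\gamma)\sin^2 2\beta.
\end{equation}
This matches the stated definitions of $A(\gamma)$ and $B(\gamma)$ up to sign bookkeeping, which I would verify carefully.

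Next I would minimise this expression over $\beta$ for fixed $\gamma$. The plan is to use the half-angle identity $\sin^2 2\beta = \tfrac{1}{2}(1 - \cos 4\beta)$, which converts the objective into a pure sinusoid in the single variable $4\beta$:
\begin{equation}
    A(\gamma)\sin 4\beta + \tfrac{1}{2}B(\gamma)\cos 4\beta - \tfrac{1}{2}B(\gamma).
\end{equation}
A linear combination $a\sin\theta + b\cos\theta$ has amplitude $\sqrt{a^2+b^2}$, so with $a = A(\gamma)$ and $b = \tfrac{1}{2}B(\gamma)$ the expression oscillates between $\pm\sqrt{A^2(\gamma) + \tfrac{1}{4}B^2(\gamma)}$, shifted down by $\tfrac{1}{2}B(\gamma)$. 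The minimum over $\beta$ is therefore exactly
\begin{equation}
    -\sqrt{A^2(\gamma) + \tfrac{1}{4}B^2(\gamma)} - \tfrac{1}{2}B(\gamma),
\end{equation}
which reproduces \cref{QAOA_Gamma_Linear_Eqn}. The minimising phase $4\beta$ is the one where $a\sin\theta + b\cos\theta$ attains its minimum $-\sqrt{a^2+b^2}$; writing this as $R\sin(\theta + \phi)$ with $\tan\phi = b/a$, the minimum occurs at $\theta + \phi = -\pi/2$, i.e. $4\beta = -\pi/2 - \arctan(b/a)$, equivalently $4\beta = \arctan(a,b) + \pi$ in the two-argument $\operatorname{atan2}$ convention. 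Solving for $\beta$ and substituting $a = A(\gamma^*)$, $b$ proportional to $B(\gamma^*)$ should yield \cref{qaoa_2_local_opt_beta}; here I must be attentive to the $\operatorname{atan2}$ branch and the factor-of-two scaling between $b$ and $B$ so that the argument comes out as $\arctan(2A,B)$ exactly as stated.

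The main obstacle I anticipate is not the optimisation itself---which is an elementary single-sinusoid minimisation---but the careful sign and branch tracking needed to land on the precise closed form in \cref{qaoa_2_local_opt_beta}. In particular, since $\langle\gamma|H_P|\gamma\rangle$ is defined to be the $\beta$-minimised value, I must confirm that the branch of $\operatorname{atan2}$ selected genuinely gives the global minimum over $\beta \in [0,\pi]$ rather than a maximum or a saddle, and that the periodicity of $\sin 4\beta$ and $\sin^2 2\beta$ over $[0,\pi]$ does not introduce spurious extra optima that would change the stated formula. Once $\beta^*(\gamma)$ is pinned down, the statement $\gamma^* = \argmin_\gamma \langle\gamma|H_P|\gamma\rangle$ is immediate, since substituting $\beta^*(\gamma)$ back reduces the two-dimensional minimisation to the univariate one by construction. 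The reduction to $A(\gamma)$ and $B(\gamma)$ is purely a relabelling of the edge sums in \cref{qaoa_ozaeta_col}, so the only genuine content is the sinusoidal minimisation and its branch analysis.
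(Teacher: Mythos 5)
Your proposal is correct and follows essentially the same route as the paper's own proof: collect the edge sums of \cref{qaoa_ozaeta_col} into $A(\gamma)\sin 4\beta - B(\gamma)\sin^2 2\beta$, apply the half-angle identity to obtain a single sinusoid in $4\beta$, and read off the minimum $-\sqrt{A^2(\gamma)+B^2(\gamma)/4}-B(\gamma)/2$ together with the minimising phase $\beta^* = \tfrac{1}{4}[\arctan(2A,B)+\pi]$. The only cosmetic difference is that you phrase the phase-shift step as $R\sin(\theta+\phi)$ while the paper uses $R\cos(4\beta-\alpha)$, which is the same identity up to a shift of $\pi/2$ and does not change the argument.
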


The proof of this theorem is provided in \cref{qaoa_2_local_fields_opt_params_proof}. This approach reduces the grid search over $(\gamma, \beta)$ into an efficient line search on $\gamma$ using \cref{QAOA_Gamma_Linear_Eqn}, with sampled points spaced $\Delta_\gamma$ apart, and $\beta^*$ determined using \cref{qaoa_2_local_opt_beta}. The effectiveness of \cref{qaoa_2_local_opt_params_thm} is demonstrated in \cref{optimal_path_2local_diagrams}, which presents contour plots and optimal paths on the cost landscape for both unweighted and lightly weighted 4-regular graphs with 128 vertices.

In \cref{optimal_path_2local_diagram_a}, the unweighted graph exhibits clear symmetry, with large intervals along $\gamma$ containing extrema. In contrast, the lightly weighted graph in \cref{optimal_path_2local_diagram_b} presents a more intricate landscape. Although some symmetry remains due to the graph's regularity and small weight variance, the extrema clusters near $\gamma \approx 0$ and $\pi$, leaving the intermediate regions filled with spurious local minima and barren plateaus. A naive grid search in such scenarios would be computationally expensive, requiring extensive sampling to navigate these local traps. While exploiting symmetry can reduce computational effort, this strategy is only viable for highly structured problems, which are rare in practical applications. The method in \cref{qaoa_2_local_opt_params_thm} overcomes these challenges by leveraging the line search on $\gamma$ to compute $\beta^*$ directly. This efficiency is evident in \cref{optimal_path_2local_diagrams}, where the paths traced by \cref{QAOA_Gamma_Linear_Eqn} show $\beta$ alternating between lower and upper regions to maximise the approximation ratio, with the line plot capturing the optimal trajectory along $\gamma$.

\begin{figure*}[htpb]
    \centering
    \subfloat[Grid Resolution of $20 \times 20$]{
        \includegraphics[height=0.335\textheight]{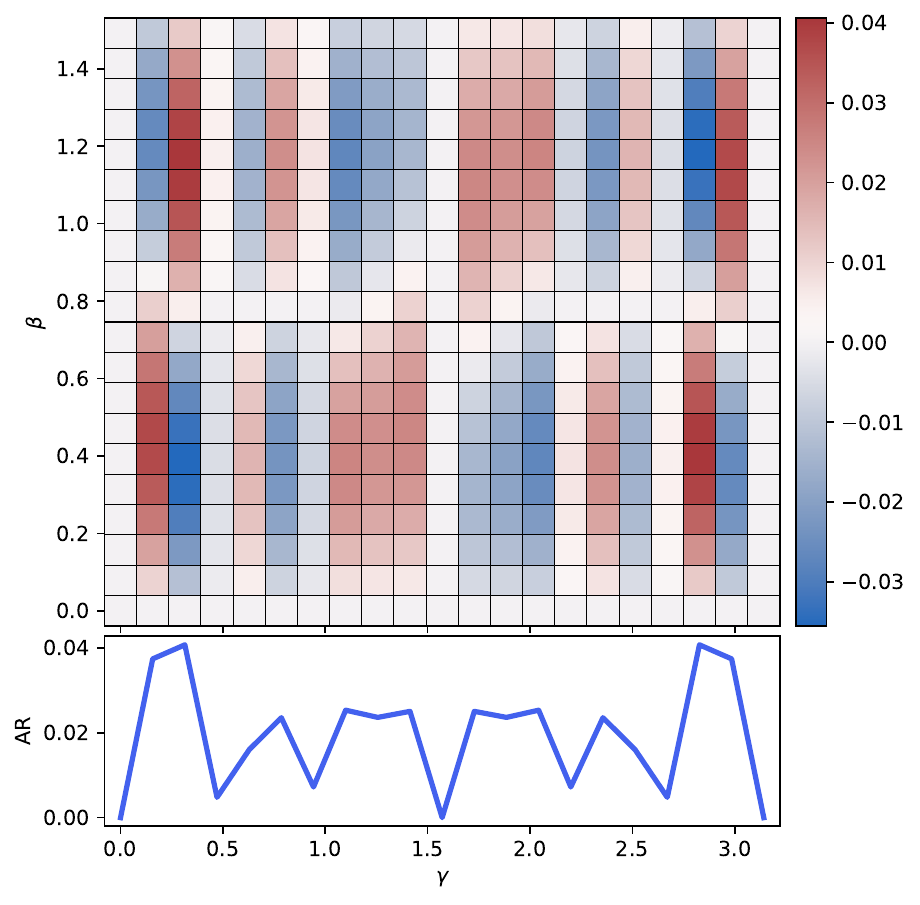}
        \label{Conc_2Local_Plot_a}        
    }%\hfill
    \subfloat[Grid Resolution of $40 \times 20$]{
        \includegraphics[height=0.335\textheight]{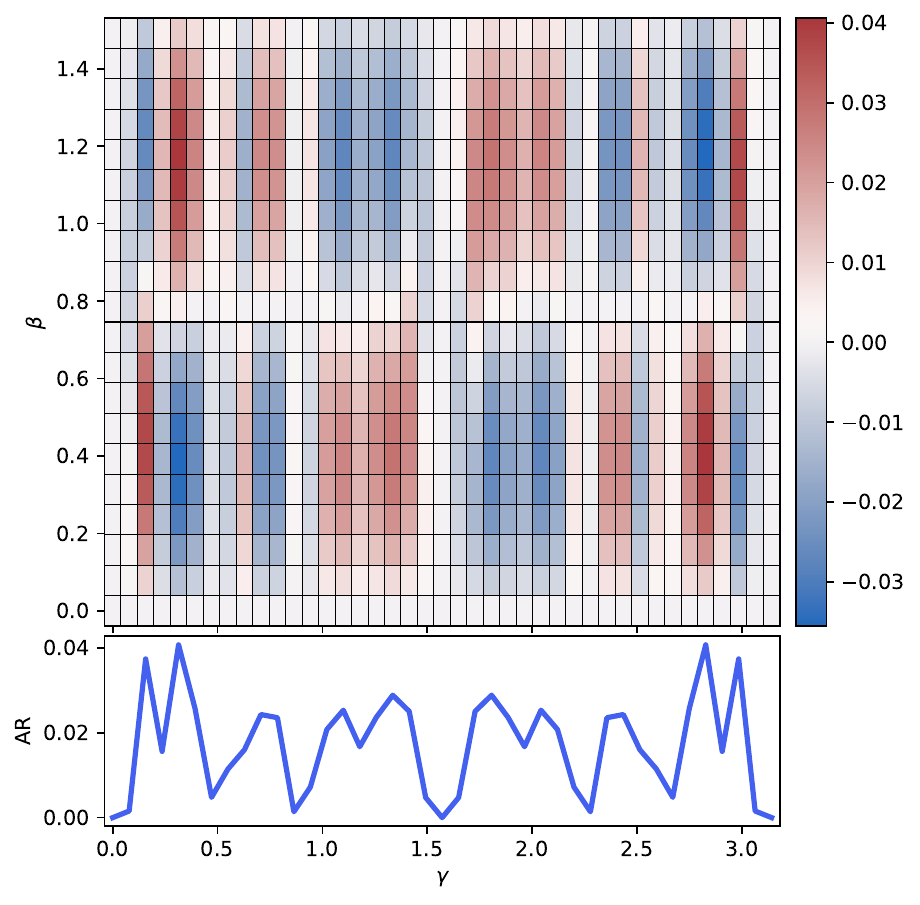}
        \label{Conc_2Local_Plot_b}
    }\\
     \subfloat[Grid Resolution of $60 \times 20$]{
        \includegraphics[height=0.335\textheight]{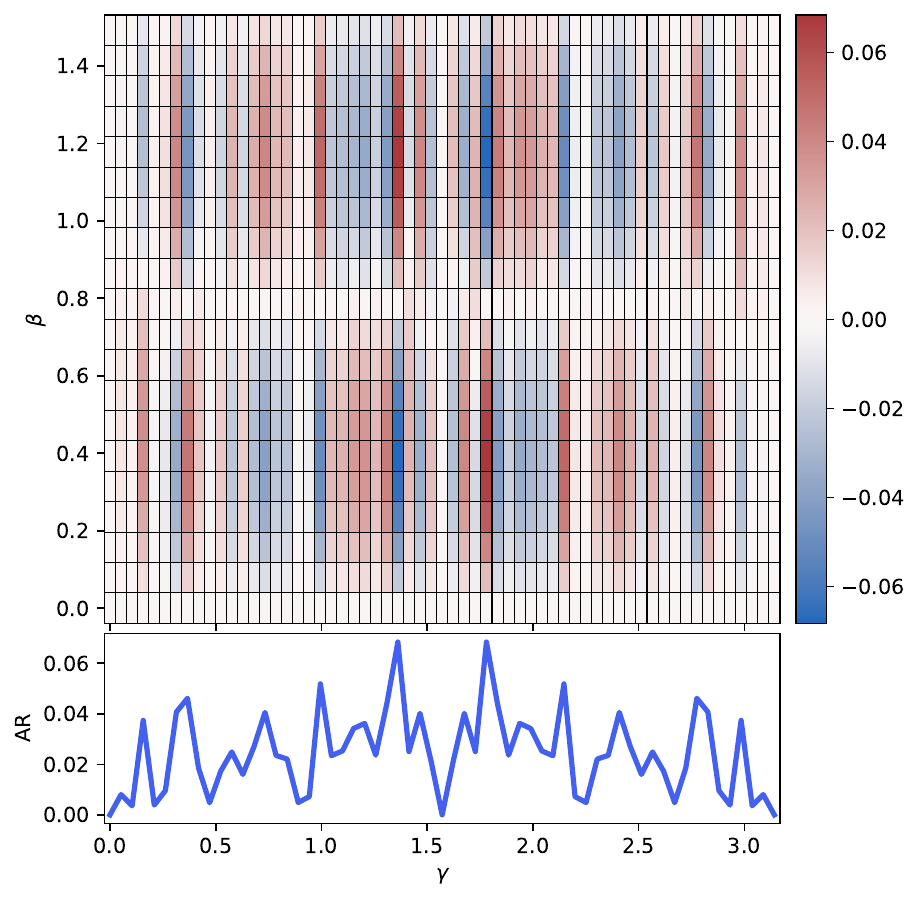}
        \label{Conc_2Local_Plot_c}
    }%\hfill
    \subfloat[Grid Resolution of $560 \times 20$]{
        \includegraphics[height=0.335\textheight]{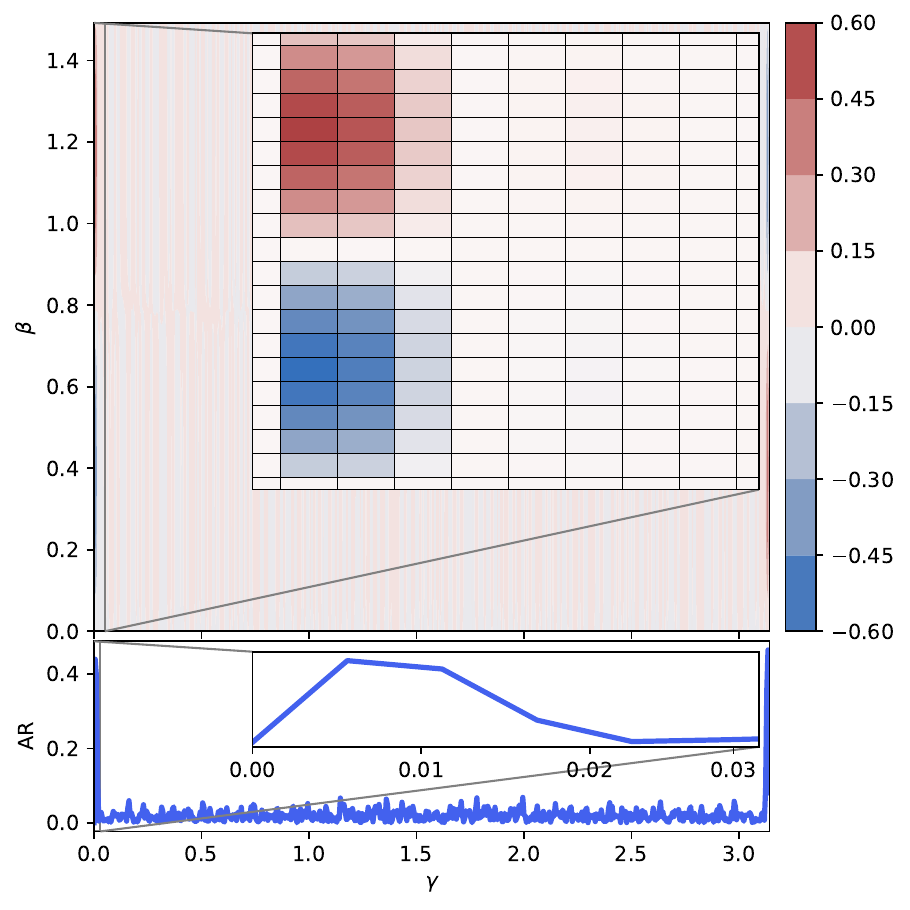}
        \label{Conc_2Local_Plot_d}
    }
    \caption{\textbf{QAOA$_1$ Cost Landscape for a Weighted Regular Graph with Varying $\gamma$ Resolutions.} \justifying These plots illustrate the cost landscape of QAOA$_1$ for a weighted $4$-regular graph with $64$ vertices, where edge weights are sampled from a Gaussian distribution with a mean and variance of $25$ and rounded to the nearest integer. The $\gamma$ dimension is sampled at resolutions of $20$, $40$, $60$, and $560$, while the $\beta$ dimension is fixed at 20 samples. The final plot's resolution of $560$ was determined using \cref{qaoa_min_samples_thm}. Below each mesh grid, a line plot displays the cost landscape along $\gamma$, calculated using the expression in \cref{QAOA_Gamma_Linear_Eqn}. In plots (a) and (b), the maximum approximation ratio is $0.04$, with the first optimum around $\beta < \pi/4$ and $\gamma \approx 0.15$. Plot (c) shows a higher ratio of $0.06$ with $\gamma$ near $1.4$. In plot (d), sampling $\gamma$ at the appropriate resolution yields a ratio of $0.48$, with the optimum at $\beta > \pi/4$ and $\gamma \approx 0.01$. These results highlight the critical importance of appropriately sampling the $\gamma$ dimension to avoid aliasing and distortion, which can lead to misleading local optima.}
    \label{Conc_2Local_Plots}
\end{figure*}

While \cref{qaoa_2_local_opt_params_thm} reduces optimisation to a line search, determining an appropriate sampling resolution for the $\gamma$ dimension is crucial for balancing efficiency and accuracy when optimising the univariate cost function. Excessively fine resolutions increase computational cost, while overly coarse resolutions risk failure due to local minima and barren plateaus. To illustrate, we use a weighted 4-regular graph with 64 vertices. \Cref{Conc_2Local_Plots} compares the two-dimensional cost landscape to the one-dimensional univariate landscape at different $\gamma$ resolutions. Unlike traditional contour plots, these visualisations (excluding \cref{Conc_2Local_Plot_d} without its inset) are generated using a mesh grid to avoid interpolation, which might further distort the already inaccurate landscape. As shown in \cref{Conc_2Local_Plot_a}, \cref{Conc_2Local_Plot_b}, and \cref{Conc_2Local_Plot_c}, both grid and line searches achieve only a marginal approximation ratio of 0.04 to 0.06, barely better than random guessing. In contrast, \cref{Conc_2Local_Plot_d}, with $\gamma$ sampled at the rate prescribed by \cref{qaoa_min_samples_thm}, identifies the global extremum, achieving the maximum possible approximation ratio of approximately 0.48.

Building on \cref{qaoa_2_local_opt_params_thm}, we extend the univariate representation of the QAOA$_1$'s objective function to Ising models with external fields. The presence of external fields introduces additional complexity to the optimisation landscape, requiring a more general treatment. Nevertheless, the following theorem demonstrates that the QAOA$_1$'s objective function can still be simplified to a univariate form, enabling the reduction of the two-dimensional grid search over $(\gamma, \beta)$ to an efficient line search over $\gamma$.

\begin{theorem}[Ising Model with Fields]\label{qaoa_2_local_fields_opt_params_thm}
    Consider the QAOA$_1$ state $|\gamma, \beta\rangle$ for an arbitrary Ising model with fields, defined by the Hamiltonian $H_P$ as in \eqref{qising_eqn}. The optimal angles $\gamma^*$ and $\beta^*$ that minimise the expectation value $\langle\gamma, \beta| H_P|\gamma, \beta\rangle$ are given by:
    \begin{equation}\label{qaoa_field_opt_par_linear_eqn}
        \begin{aligned}
        \left(\gamma^*, \beta^*\right) &= \argmin_{(\gamma,\beta)\in \mathbb{R} \times \mathcal{B}_\gamma} \Big[ \sin (2 \beta) A(\gamma)\\ &  +\sin (4 \beta) B(\gamma) %\\%[-0.5cm] 
        +\sin ^2(2 \beta) C(\gamma) \Big],
        \end{aligned}
    \end{equation}
    where the set $\mathcal{B}_\gamma$ is defined as:
    \begin{equation}
    \mathcal{B}_\gamma=\left\{\left.\beta= \pm \frac{1}{2} \arccos \left(x_i\right) \right\rvert\, i=1,2,3,4\right\}.
    \end{equation}
    and $x_i$ are the real roots (within $[-1,1]$) of the quartic polynomial:
    \begin{equation}\label{qaoa_quartic_eqn}
    \begin{aligned}
        p(x) &=  \left[16 B^2(\gamma) + 4 C^2(\gamma)\right] x^4 \\
        & \quad + 8 A(\gamma) B(\gamma) x^3 \\ 
        & \quad + \left[A^2(\gamma) - 16 B^2(\gamma) - 4 C^2(\gamma)\right] x^2 \\
        & \quad - 4 A(\gamma) B(\gamma) x + 4B^2(\gamma).
    \end{aligned}
    \end{equation}
    The functions $A(\gamma)$, $B(\gamma)$, and $C(\gamma)$, defined over the graph $G = (V, E)$ corresponding to the Hamiltonian $H_P$, are as follows:
    \begin{align*}
            A(\gamma) &= \sum_{i \in V} h_i \sin \gamma'_i \prod_{k \in \mathcal{N}(i)} \cos \gamma'_{ik},\\[0.25cm]
            B(\gamma)\! &= \hspace{-0.275cm} \sum\limits_{\{u, v\} \in E} \! \frac{J_{uv}}{2} \! \sin \! \gamma'_{uv} \! 
            \left( \! 
                \cos \! \gamma'_v \! \prod_{w \in e} \! \cos \! \gamma'_{wv} \! \right. \\
                & \left. \hspace{2.85cm} + \! \cos \! \gamma'_u \! \prod_{w \in d} \! \cos \! \gamma'_{uw} \! 
            \right) \!,\\[0.25cm]
            C(\gamma) &= - \hspace{-0.4cm} \sum\limits_{\{u,v\}\in E}\hspace{-0.15cm}\frac{J_{uv}}{2}\!\prod_{\substack{w\in e\\w\notin F}}\!\cos\gamma_{wv}^{\prime}\!\prod_{\substack{w \in d \\ w\notin F}}\!\cos\gamma_{uw}^{\prime}\\[-0.2cm]
            & \hspace{-0.1cm} \times \hspace{-0.1cm}  \left(\sum_{\chi=\pm1}\!\chi\cos\!\left(\gamma_u^{\prime}\!+\!\chi\gamma_v^{\prime}\right)\hspace{-0.15cm}\prod_{f\in F}\!\cos\!\left(\gamma_{uf}^{\prime}\!+\!\chi\gamma_{vf}^{\prime}\right)\!\right)\hspace{-0.15cm},
    \end{align*}
where $\gamma_{uv}^{\prime} = 2J_{uv}\gamma$ and $\gamma_i^{\prime} = 2h_i\gamma$.
\end{theorem}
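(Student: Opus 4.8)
The plan is to first collect the expectation value of \cref{qaoa_ozaeta_thm} according to its dependence on $\beta$, and then, for each fixed $\gamma$, minimise the resulting expression analytically over $\beta$. Inspecting the closed form in \cref{qaoa_ozaeta_thm}, each single-spin term $\langle C_i\rangle$ carries only a $\sin 2\beta$ factor, the first line of each edge term $\langle C_{uv}\rangle$ carries a $\sin 4\beta$ factor, and the second line carries a $\sin^2 2\beta$ factor. Summing over all nodes and edges and identifying the three $\gamma$-dependent coefficients with $A(\gamma)$, $B(\gamma)$, and $C(\gamma)$ respectively, the expectation value collapses to the compact form
\begin{equation}
    f(\gamma,\beta) = \sin(2\beta)\,A(\gamma) + \sin(4\beta)\,B(\gamma) + \sin^2(2\beta)\,C(\gamma),
\end{equation}
which is exactly the objective appearing inside the $\argmin$ of \cref{qaoa_field_opt_par_linear_eqn}. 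It then remains to characterise the minimising $\beta$ at each fixed $\gamma$.

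Next I would fix $\gamma$ and treat $f$ as a smooth function of $\beta$ that is $\pi$-periodic, so that its global minimum over a full period is attained at an interior stationary point and it suffices to solve $\partial_\beta f = 0$. Differentiating yields
\begin{equation}
    \tfrac{1}{2}\,\partial_\beta f = \cos(2\beta)\,A(\gamma) + 2\cos(4\beta)\,B(\gamma) + \sin(4\beta)\,C(\gamma) = 0.
\end{equation}
Substituting $x = \cos 2\beta$ and using the identities $\cos 4\beta = 2x^2 - 1$ and $\sin 4\beta = \pm 2x\sqrt{1-x^2}$ reduces this to a single equation containing one radical. Isolating the radical and squaring both sides clears it, and collecting powers of $x$ reproduces precisely the quartic $p(x)$ of \cref{qaoa_quartic_eqn}. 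Each real root $x_i \in [-1,1]$ then corresponds to $\beta = \pm\tfrac{1}{2}\arccos(x_i)$, where the two signs restore the branch of $\sin 2\beta$ eliminated by squaring; this is exactly the candidate set $\mathcal{B}_\gamma$.

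The main obstacle is the bookkeeping around the squaring step, which can introduce spurious roots: a root of $p(x)$ need not satisfy the original stationarity condition with the correct sign of $\sin 4\beta$. I would resolve this by observing that squaring never discards a genuine stationary point---it can only enlarge the candidate set---so $\mathcal{B}_\gamma$ is guaranteed to contain every true critical $\beta$ at fixed $\gamma$. Because the theorem defines $(\gamma^*,\beta^*)$ as the $\argmin$ of $f$ over $\mathbb{R}\times\mathcal{B}_\gamma$, any extraneous candidate is filtered out automatically by direct evaluation of $f$ and cannot be selected unless it happens to coincide with a genuine minimiser. A brief verification that $p$ always possesses at least one real root in $[-1,1]$, so that $\mathcal{B}_\gamma$ is nonempty, closes the argument; the remaining work is the routine algebraic expansion of the squared equation, which I would defer to the supplementary material.
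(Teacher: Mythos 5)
Your proposal is correct and follows essentially the same route as the paper's proof: reduce the expectation to $\sin(2\beta)\,A(\gamma) + \sin(4\beta)\,B(\gamma) + \sin^2(2\beta)\,C(\gamma)$, differentiate in $\beta$ at fixed $\gamma$, substitute $x = \cos 2\beta$, and square the stationarity condition to obtain the quartic $p(x)$ whose roots in $[-1,1]$ define $\mathcal{B}_\gamma$. Your explicit treatment of the spurious roots introduced by squaring (and of the non-emptiness of $\mathcal{B}_\gamma$) is a point the paper leaves implicit, but it is a refinement of the same argument rather than a different approach.
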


The proof of this theorem is provided in \cref{qaoa_2_local_fields_opt_params_proof}. Although the line search method in \cref{qaoa_2_local_fields_opt_params_thm} is slightly more computationally intensive than the approach in \cref{qaoa_2_local_opt_params_thm}, as it requires solving for $\beta^*$ at each $\gamma \in \mathbb{R}$, it remains significantly more efficient than a full grid search. Specifically, $\beta^*$ is determined by solving a quartic polynomial, which can be computed either using known analytical solutions~\cite{chavez2022complete} or efficiently with numerical solvers. Since the resulting optimal path and cost landscape closely resemble those shown in \cref{optimal_path_2local_diagrams} and \cref{Conc_2Local_Plots}, additional visualisations are omitted for brevity.

We note that any Ising model with fields can be transformed into an equivalent model without fields using the transformation in \cref{ising_linearise_eqn}, enabling the use of \cref{qaoa_2_local_opt_params_thm} for simpler optimisation. While this transformation is straightforward, it is not always efficient. In certain cases---especially for problems where the graph representation has a bounded degree, with the maximum degree much smaller than the number of vertices---transforming the model can actually increase the complexity of finding the optimal parameters. The following proposition illustrates this with an example.
\begin{proposition}\label{edge_only_is_bad_cor} 
Let $G$ be a triangle-free graph with $n$ vertices and a maximum degree $D_{\max }$, where $D_{\max } \ll n$. Assume that all edge and node weights are $\pm 1$. When the corresponding Ising model with external fields is transformed into an equivalent model without external fields, the sampling period $\Delta_\gamma$ for QAOA$_1$ is reduced by a factor of $\frac{n}{D_{\max }+1}$.
\end{proposition}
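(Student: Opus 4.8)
The plan is to express the sampling period through the maximum $\gamma$-frequency in each representation and then compare the two. By \cref{qaoa_min_samples_thm}, $\Delta_\gamma$ is controlled by $\omega_{\max}[\langle H_P\rangle_\gamma]$; after rescaling the couplings to integers (so that the landscape is $\pi$-periodic), the admissible spacing is $\Delta_\gamma \le \pi/(\omega_{\max}+\pi)$. Hence the factor by which $\Delta_\gamma$ shrinks under the transformation is, to leading order in the frequencies, the ratio $\omega_{\max}^{\text{field-free}}/\omega_{\max}^{\text{fielded}}$, and the whole argument reduces to evaluating $\omega_{\max}$ before and after applying \cref{ising_linearise_eqn}.

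First I would compute $\omega_{\max}$ for the original fielded model using \cref{qaoa_2_field_max_freq_thm}. Triangle-freeness gives $F_{uv}=\emptyset$ for every edge, so all products over $F$ collapse and the expressions simplify substantially. Imposing $|h_i|=|J_{uv}|=1$ and $\deg(i)\le D_{\max}$, the single-spin contribution is $\omega_{\max}[\langle C_i\rangle_\gamma]=2(|h_i|+\sum_{k}|J_{ik}|)=2(D_{\max}+1)$ for a maximum-degree vertex, while each edge term is likewise $\Theta(D_{\max})$. The governing frequency is therefore $\Theta(D_{\max})$, with the node-term value $2(D_{\max}+1)$ supplying the $(D_{\max}+1)$ that appears in the claimed factor; this is the quantity that sets $\Delta_\gamma$ in the fielded representation.

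Next I would apply \cref{ising_linearise_eqn} and re-evaluate $\omega_{\max}$ with \cref{qaoa_2_max_freq_col}, since the transformed model has no fields. The new spin $s_{n+1}$ couples (with strength $\tfrac12 h_j$) to every vertex carrying a nonzero field, so its degree is $\Theta(n)$ while $D_{\max}\ll n$. Evaluating the edge contribution for an edge incident to $s_{n+1}$, the dominant piece is the incident-weight sum $|J_{i,n+1}|+\sum_{w}|J_{w,n+1}|=\Theta(n)$, so $\omega_{\max}^{\text{field-free}}=\Theta(n)$ and dwarfs the $O(D_{\max})$ contributions of the original edges. Taking the ratio against the fielded frequency then yields a shrinkage of $\Delta_\gamma$ by $\Theta\!\big(n/(D_{\max}+1)\big)$, as claimed.

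The main obstacle is the bookkeeping on the transformed graph. Although $G$ is triangle-free, attaching $s_{n+1}$ to all field-bearing vertices creates a triangle $\{u,v,n+1\}$ for every original edge whose endpoints both carry fields, so the convenient $F=\emptyset$ simplification (and the direct use of \cref{d_reg_tri_free_conc}) is lost and one must work with the general \cref{qaoa_2_max_freq_col} with nonempty $F$. The key checks are then that the degree-$\Theta(n)$ vertex $s_{n+1}$, and not the original $O(D_{\max})$-degree structure, controls the maximum frequency, and that the factor-$\tfrac12$ field-to-edge couplings merely rescale—but do not change the order of—this leading term. Once these are verified, the ratio of the two $\omega_{\max}$ values produces the stated reduction factor.
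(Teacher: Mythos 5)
Your proposal is correct and follows essentially the same route as the paper's proof: compute $\omega_{\max}$ for the fielded model via \cref{qaoa_2_field_max_freq_thm} (getting the $2(D_{\max}+1)$-scale frequency for a triangle-free, $\pm 1$-weighted graph), recompute it for the transformed field-free model via \cref{qaoa_2_max_freq_col} (getting $\Theta(n)$ from edges incident to the new spin $s_{n+1}$), and take the ratio of the sampling periods supplied by \cref{qaoa_min_samples_thm}. If anything, you are more careful than the paper, which treats the transformed couplings $\tfrac{1}{2}h_j$ as unit weights and only remarks in passing that the triangles created by attaching $s_{n+1}$ leave the non-triangle terms dominant—points you verify explicitly.
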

The proof of this proposition is provided in \cref{edge_only_is_bad_proof}. \Cref{edge_only_is_bad_cor} shows that for bounded-degree graphs with $D_{\max } \ll n$, eliminating external fields can substantially worsen the sampling requirements for QAOA $_1$. In contrast, for models such as the Sherrington-Kirkpatrick model, the effect on optimisation complexity is minor because the maximum degree increases by only 1. Nevertheless, even Hamiltonian-equivalent formulations can yield different approximate $\mathrm{QAOA}_1$ solutions due to the algorithm's locality constraints.

%The increased difficulty in optimising QAOA$_1$ arises from \cref{edge_only_is_bad_cor}, where a shrinking interval necessitates a higher resolution, requiring more sampling points in the line search than in the original problem. This issue is particularly pronounced in large instances where the graph’s maximum degree is much smaller than the number of vertices. In contrast, for cases like the Sherrington-Kirkpatrick model with external fields, transforming the Ising model with fields into one without fields minimally impacts optimisation complexity, as the maximum degree increases by only 1. However, even for equivalent models, the approximate solutions produced by QAOA$_1$ can differ due to the algorithm's intrinsic locality constraints.

\subsection{Optimising the Univariate Function} \label{opt_univar_subsubsec}

In this section, we focus on optimising the variable $\gamma$ for the univariate functions defined in \cref{qaoa_2_local_opt_params_thm} and \cref{qaoa_2_local_fields_opt_params_thm}. Consider a univariate trigonometric polynomial of degree $K$ defined in \cref{bw_per_fun_eqn}. Polynomial-time algorithms exist for computing all roots of such trigonometric polynomials~\cite{boyd2006computing, boyd2013comparison, kalantari2022characterization}, enabling the efficient determination of their optimal values. However, these methods are not directly applicable to our problem for two main reasons. First, \cref{QAOA_Gamma_Linear_Eqn} includes a square root, which prevents it from being expressed in the standard trigonometric polynomial form of \cref{bw_per_fun_eqn}. Second, \cref{qaoa_field_opt_par_linear_eqn} involves two variables, $(\gamma, \beta)$, where the optimal $\beta^*_{\gamma}$ depends on $\gamma$. This dependency prevents us from applying univariate optimisation methods that assume a fixed parameter.

To address these challenges, we perform a grid search over the parameter space $(\gamma, \beta)$, focusing exclusively on pairs of the form $(\gamma, \beta^*_{\gamma})$. Specifically, for each $\gamma$, we consider only the optimal $\beta^*_{\gamma}$ that minimises the objective function. This approach effectively reduces the problem to a univariate optimisation task, enabling the application of established algorithms such as Hansen's subdivision algorithm based on interval analysis~\cite{hansen1979global, hansen2003global}.

The subdivision algorithm iteratively partitions the domain into progressively smaller intervals. In each iteration, the algorithm evaluates the midpoint of every interval to identify the presence of global minima using a rejection criterion. Intervals that are unlikely to contain the global minima are discarded, while those that may still contain them are further subdivided. This process continues until the size of each interval falls below a predefined tolerance value. At this point, the midpoints of all remaining intervals are computed, and the midpoint with the maximum value is returned. For trigonometric polynomials, analogous subdivision methods have been developed to estimate their maximum modulus in both univariate and multivariate settings, utilising Steckin's lemma as the rejection criterion~\cite{green1999calculating, de2009finding}.

\begin{algorithm}[!t]
\SetAlgoLined
\SetArgSty{}
\SetKwInput{KwData}{Input}
\SetKwInput{KwResult}{Output}

\KwData{Initial Interval Spacing $\Delta_{\gamma}^{(0)}$, Maximum Frequency $\nu_{\max}$, Tolerance $\epsilon$}
\KwResult{Near-Optimal Angle $\gamma^*$}

% Initialize parameters
$\Delta_{\gamma} \gets \Delta_{\gamma}^{(0)}$\;
Initialise list of intervals: 
$$\mathcal{I} \gets \{ [0, \Delta_{\gamma}], [\Delta_{\gamma}, 2\Delta_{\gamma}], \ldots, [\pi - \Delta_{\gamma}, \pi] \} ;$$

\Repeat{$\tilde{q} \left( \sec\left( \nu_{\max} \Delta_\gamma \right) - 1 \right) < \epsilon$}{
    \tcp{Initialise empty list for current subdivisions}
    $\mathcal{I}_{\text{current}} \gets \emptyset$\;
    $\tilde{q} \gets -\infty$\;

    \tcp{Evaluate the objective function at midpoints of all active intervals}
    \ForEach{$I \in \mathcal{I}$}{
        $t_I \gets \text{midpoint of } I$\;
        Compute $q(t_I)$\;
        \If{$q(t_I) > \tilde{q}$}{
            $\tilde{q} \gets q(t_I)$\;
        }
    }

    \tcp{Prune intervals not meeting the threshold and prepare for subdivision}
    \ForEach{$I \in \mathcal{I}$}{
        $t_I \gets \text{midpoint of } I$\;
        \If{$q(t_I) \geq \tilde{q} \cos \left( \nu_{\max} \Delta_\gamma \right)$}{
            Split $I$ into $I_{\text{left}}$ and $I_{\text{right}}$\;
            Add $I_{\text{left}}$ and $I_{\text{right}}$ to $\mathcal{I}_{\text{current}}$\;
        }
        % Else: Interval is rejected and not added to the current list
    }

    \tcp{Update intervals and spacing for the next iteration}
    $\mathcal{I} \gets \mathcal{I}_{\text{current}}$\;
    $\Delta_{\gamma} \gets \dfrac{\Delta_{\gamma}}{2}$\;
}

% Select the interval with the highest q(t) as the near-optimal angle
$\gamma^* \gets \displaystyle\argmax_{I \in \mathcal{I}} q(t_I)$\;

\Return{$\gamma^*$}

\caption{Subdivision Algorithm for Estimating the Optimal Angle $\gamma^*$}
\label{subdivision_algorithm}
\end{algorithm}

By analytically determining $\beta^*_{\gamma}$ for each $\gamma$, we effectively transform the original multivariate optimisation problem into a univariate one. This reduction allows us to apply efficient subdivision techniques tailored for univariate trigonometric functions. The specific subdivision algorithm used for computing the optimum modulus of univariate trigonometric functions, as attributed to Green~\cite{green1999calculating}, is detailed in algorithm \ref{subdivision_algorithm}. For a detailed explanation on using Steckin's lemma as a rejection criterion, refer to \cref{steck_sec_app}.

To apply algorithm \ref{subdivision_algorithm} to our expressions \cref{QAOA_Gamma_Linear_Eqn} and \cref{qaoa_field_opt_par_linear_eqn}, we square the expressions. Specifically, the function $q$ in algorithm \ref{subdivision_algorithm} is defined as $\langle\gamma, \beta^*_{\gamma}| H_P|\gamma, \beta^*_{\gamma}\rangle^2$. For the case of \cref{qaoa_field_opt_par_linear_eqn}, it is crucial that $\beta^*_{\gamma} \in \mathcal{B}_\gamma$ is the value that minimises the objective function before squaring it. Additionally, since all $\gamma$ terms have a constant prefactor of 2, the domain that the subdivision algorithm searches is effectively reduced from $[0, 2\pi]$ to $[0, \pi]$. Algorithm~\ref{subdivision_algorithm} maintains a worst-case time complexity of $\mathcal{O}(N/\sqrt{\varepsilon})$ for estimating the maximum modulus and consequently the optimum of \cref{bw_per_fun_eqn} (see \cref{subdiv_linear_thm}). 

We note that there exist other polynomial-time algorithms for computing the provably optimal $(\gamma^*, \beta^*)$ angles for QAOA$_1$ on Ising models.  Specifically, to determine the optimal $(\gamma^*, \beta^*)$ that minimise $\langle H_P (\gamma, \beta) \rangle$, we solve the system of equations:
\begin{equation}
    \frac{\partial \langle H_P (\gamma, \beta) \rangle}{\partial \gamma} = \frac{\partial \langle H_P (\gamma, \beta) \rangle}{\partial \beta} = 0. \label{bivar_poly_eqn}
\end{equation}
This involves expressing \cref{bivar_poly_eqn} as standard polynomials and applying algorithms designed to find all roots of bivariate polynomials in polynomial time~\cite{emeliyanenko2012complexity, plestenjak2016roots}. Although these methods are more computationally intensive than Algorithm~\ref{subdivision_algorithm}, they illustrate that optimising QAOA$_1$ parameters need not be $\mathsf{NP}$-Hard under certain conditions, such as when $\gamma$ is effectively periodic in $[0, \pi]$ and the distribution of problem weights does not scale with the problem size.

%This stands in contrast to the results of Bittel et al.~\cite{bittel2021training}, who showed that optimising the parameters of VQAs, including QAOA, can be $\mathsf{NP}$-Hard in general. Specifically for QAOA, they prove two main results: first, that optimising QAOA$_1$ is $\mathsf{NP}$-Hard for arbitrary Hamiltonians with real weights, and second, that optimising QAOA$_n$ is $\mathsf{NP}$-Hard when $H_P$ has $n$ variables and integer weights. However, the assertion that QAOA$_1$ is $\mathsf{NP}$-Hard applies only to the worst-case scenarios---namely, when the landscape is non-periodic or when the weights grow with the problem size. In the average case, such as for QAOA$_1$ on Ising models with commensurable weights that do not scale with system size, the optimal angles can be determined in polynomial time. Consequently, these particular instances of QAOA$_1$ do not fall under the $\mathsf{NP}$-Hard regime identified by Bittel et al.~\cite{bittel2021training}.

This stands in contrast to the results of Bittel et al.~\cite{bittel2021training}, who showed that optimising the parameters of VQAs, and QAOA in particular, is $\mathsf{NP}$-Hard. This hardness, however, is not a property of arbitrary Ising Hamiltonians; it is established through explicit reductions from MaxCut, in which carefully tailored problem instances are constructed so that an efficient parameter optimiser would necessarily solve an $\mathsf{NP}$-Hard problem. For QAOA$_1$, such instances fall into one of two regimes. The first is when the problem weights are
incommensurable, rendering the expectation value along $\gamma$ non-periodic, so that no finite period can be exploited to confine the search to a bounded interval. The second is when the weights grow with the problem size: their single-layer reduction employs a spectrum spanning exponentially many orders of magnitude, which forces the maximum
frequency of the cost landscape---and hence the sampling resolution required to reconstruct it---to scale exponentially, demanding exponential precision to resolve the global optimum. A related result shows that QAOA$_n$ remains $\mathsf{NP}$-Hard even for bounded integer weights, but only at the expense of a circuit depth that grows with the
number of variables. Crucially, none of these worst-case regimes apply to the setting considered here. For QAOA$_1$ on Ising models with commensurable weights that do not scale exponentially with the system size, the expectation value along $\gamma$ is periodic on $[0,\pi]$ and bandlimited with a polynomially bounded maximum frequency, so that the optimal angles $(\gamma^*,\beta^*)$ can be determined in polynomial time. Consequently, these instances of QAOA$_1$ do not fall under the $\mathsf{NP}$-Hard regime identified by Bittel et al.~\cite{bittel2021training}.

\subsection{Globally Optimal Parameters for Regular Graphs} \label{first_local_subsubsec}

We present our final theoretical result: in large, dense, and heavily weighted Ising model instances, the globally optimal parameter $\gamma^* \in \mathbb{R}^+$ tends to concentrate near zero, often coinciding with the first local optimum. Consequently, we propose simplifying the optimisation process by performing gradient descent in the vicinity of zero to determine $\gamma^*$. This approach is motivated by empirical observations showing that the first local optimum near zero typically yields globally optimal parameters~\cite{shaydulin2023parameter, boulebnane2023peptide, brandhofer2022benchmarking}. 

We rigorously prove that for the family of regular graphs—including triangle-free regular graphs, complete graphs, and all intermediate cases—the optimal $\gamma^*$ in the infinite size limit sharply concentrates near zero. This concentration depends on the graph's degree, the second moments of the weight distribution, and the variance of the weights. The following theorem formalises this result. By assuming that $\gamma \propto D^{-1/2}$, we derive closed-form expressions for numerically evaluating the scaled expectation cost of QAOA$_1$ and obtain expressions for the optimal $\gamma^*$ based on the leading-order terms of the expectation value. 
\begin{theorem} \label{fst_local_opt_thm}
  Consider a weighted Ising model defined on a $(D+1)$-regular graph $G = (V, E)$. Let the field strengths $\{h_i\}_{i \in V}$ and coupling strengths $\{J_{uv}\}_{\{u,v\} \in E}$ be i.i.d.\ random variables with finite second moments. For any distinct vertices $u, v \in V$, let $\left|\overline{F}_{u v}\right|=a D^\lambda$ and $\left|F_{u v}\right|=b D^\mu$, where $a, b \geq 0$ and $0 \leq$ $\lambda, \mu \leq 1$. Here, $\overline{F}_{u v}$ consists of neighbours of $\{u, v\}$ that do not form triangles, while $F_{u v}$ consists of those that do. Then, the scaled expected QAOA$_1$ cost function with respect to parameters $\beta \in \mathbb{R}$ and $\gamma = \alpha/\sqrt{D} \in \mathbb{R}^{+}$ can be expressed as follows:
  \begin{widetext}
      \begin{equation}
    \resizebox{0.935\textwidth}{!}{$
    \dfrac{\mathbb{E} \left[\xi_G(\alpha, \beta) \right]}{|E|} = 
        \begin{cases}
            \mathcal{C}_1(\alpha, \beta) + \mathcal{O}\left(D^{-1}\right), & b = 0, \, a = \lambda = 1, \\
            \mathcal{C}_1(\alpha, \beta) + \mathcal{C}_2(\alpha, \beta, 1, 1) + \mathcal{O}\left(D^{-1}\right), & a = 0, \, b = \mu = 1, \\
            \mathcal{C}_1(\alpha, \beta) + \mathcal{C}_2(\alpha, \beta, 1, b) + \mathcal{O}\left(D^{-1}\right), & a, b > 0, \, \lambda = \mu = 1, \\
            \mathcal{C}_1(\alpha, \beta) + \mathcal{C}_2(\alpha, \beta, b, b) + \mathcal{O}\left(D^{-1}\right), & a, b > 0, \, \mu = 1, \, \lambda < \frac{1}{2}, \\
            \mathcal{C}_1(\alpha, \beta) + \mathcal{C}_2(\alpha, \beta, b, b) \left(1 - 4a\alpha^2 \mathbb{E}[J^2] \frac{1}{\sqrt{D}}\right) + \mathcal{O}\left(D^{\max(2\lambda-2, -1)}\right), & a, b > 0, \, \mu = 1, \, \lambda = \frac{1}{2}, \\
            \mathcal{C}_2(\alpha, \beta, b, b) \left(1 - 4a\alpha^2 \mathbb{E}[J^2] D^{\lambda-1} \right) + \mathcal{O}\left(\frac{1}{\sqrt{D}}\right), & a, b > 0, \, \mu = 1, \, \lambda > \frac{1}{2}, \\
            4b\alpha^2 \sin^2(2\beta) \mathbb{E}[J]^3 e^{-4a\alpha^2 \mathbb{E}[J^2]} D^{\mu-1} + \mathcal{O}\left(D^{\max(-\frac{1}{2}, 2\mu-2)}\right), & a, b > 0, \, \lambda = 1, \, \mu > \frac{1}{2}, \\
            \mathcal{C}_1(\alpha, \beta) + 4b\alpha^2 \sin^2(2\beta) \mathbb{E}[J]^3 e^{-4a\alpha^2 \mathbb{E}[J^2]} \frac{1}{\sqrt{D}} + \mathcal{O}\left(D^{-1}\right), & a, b > 0, \, \lambda = 1, \, \mu = \frac{1}{2}, \\
            \mathcal{C}_1(\alpha, \beta) + \mathcal{O}\left(D^{\mu-1}\right), & a, b > 0, \, \lambda = 1, \, \mu < \frac{1}{2},
        \end{cases}$}
    \label{big_equation}
\end{equation}
  \end{widetext}
where
\begin{equation}
    \mathcal{C}_1(\alpha, \beta) \! = \! 2\alpha\sin(4\beta) \mathbb{E} \! \left[J^2\right] \! e^{-2 \alpha^2 \mathbb{E} \left[J^2\right]} \! \frac{1}{\sqrt{D}},
\end{equation}
\begin{equation}
    \begin{aligned}
    \mathcal{C}_2(\alpha, \beta, \theta_1, \theta_2) &= \sin^2(2 \beta) \mathbb{E}\left[J\right] e^{-4 \theta_1 \alpha^2 \mathbb{E}\left[J^2\right]} \\
    & \quad \times \sinh\left(4 \theta_2 \alpha^2 \mathbb{E}\left[J\right]^2  \right).
\end{aligned}
\end{equation}
For leading-order terms, the optimal value of $\gamma^* \in \mathbb{R}^+$ is:
\begin{equation}
    \gamma^* = 
        \begin{cases}
            \eta(4), & b = 0, \, a = \lambda = 1, \\
            \zeta(1, 1), & a = 0, \, b = \mu = 1, \\
            \zeta(1, b), & a, b > 0, \, \lambda = \mu = 1, \\
            \zeta(b, b), & a, b > 0, \, \mu = 1, \, \lambda < 1, \\
            \eta(2a), & a, b > 0, \, \lambda = 1, \, \mu > \frac{1}{2}, \\
            \eta(4), & a, b > 0, \, \lambda = 1, \, \mu < \frac{1}{2},
        \end{cases}
\end{equation}
where
\begin{align}
    \eta(\theta) & = \frac{1}{\sqrt{\theta D \mathbb{E}\left[J^2\right]}}, \\
    \zeta(\theta_1, \theta_2) &= \hspace{-0.15cm} \sqrt{\frac{1}{8\theta_2 D \mathbb{E}\left[J\right]^2}  \ln \hspace{-0.15cm} \left( \frac{\theta_1 \mathbb{E}\left[J^2\right] + \theta_2 \mathbb{E}\left[J\right]^2}{\theta_1 \mathbb{E}\left[J^2\right] - \theta_2 \mathbb{E}\left[J\right]^2} \right)}.
\end{align}
Finally, the globally optimal parameter $\gamma^*$ also corresponds to the first local optimum for $\gamma \in \mathbb{R}^+$.
\end{theorem}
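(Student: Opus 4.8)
The plan is to carry out the whole analysis in the rescaled variable $\alpha = \gamma\sqrt{D}$, under which ``concentration near $\gamma=0$'' becomes a statement about an $\mathcal{O}(1)$ window in $\alpha$ and ``first local optimum for $\gamma\in\mathbb{R}^+$'' becomes ``first local optimum for $\alpha\in(0,\infty)$.'' For each regime of \cref{big_equation} I would first eliminate $\beta$ analytically. Every leading-order expression there is of the form $P(\alpha)\sin 4\beta + Q(\alpha)\sin^2 2\beta$, so the same trigonometric reduction that produced \cref{QAOA_Gamma_Linear_Eqn} in \cref{qaoa_2_local_opt_params_thm} applies and gives $\min_\beta\!\big[P\sin 4\beta + Q\sin^2 2\beta\big] = \tfrac{Q}{2} - \sqrt{P^2 + Q^2/4}$. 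This collapses each regime to an effective univariate objective $g(\alpha)$, and reduces the theorem to a statement about the shape of $g$ on $(0,\infty)$.

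I would then show that, at leading order in $D$, each $g$ is \emph{unimodal}: $g(0)=0$, $g(\alpha)\to 0$ as $\alpha\to\infty$ because the Gaussian envelope $e^{-c\alpha^2}$ (with $c\propto\mathbb{E}[J^2]$) suppresses every term, and $g$ has exactly one interior critical point, at which $g<0$. The regimes split into two families according to which of $\mathcal{C}_1,\mathcal{C}_2$ dominates. In the $\mathcal{C}_1$-dominated cases the minimised objective is proportional to $-\alpha\,e^{-2\alpha^2\mathbb{E}[J^2]}$ (or $-\alpha^2 e^{-4a\alpha^2\mathbb{E}[J^2]}$ on the $\mu>1/2$ line); each is an elementary Gaussian-weighted monomial whose derivative has a single positive root, reproducing the stated $\eta$-value. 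In the $\mathcal{C}_2$-dominated cases the leading contribution to $g$ is $Q(\alpha)=\mathbb{E}[J]\,e^{-4\theta_1\alpha^2\mathbb{E}[J^2]}\sinh\!\big(4\theta_2\alpha^2\mathbb{E}[J]^2\big)$; writing $x=\alpha^2$ and setting $g'=0$ reduces stationarity to $\tanh\!\big(4\theta_2\mathbb{E}[J]^2 x\big)=\theta_2\mathbb{E}[J]^2/(\theta_1\mathbb{E}[J^2])$. Because $\mathbb{E}[J^2]\ge\mathbb{E}[J]^2$ (non-negative variance), and $\theta_1\mathbb{E}[J^2]>\theta_2\mathbb{E}[J]^2$ exactly in the regimes where $\zeta$ is real, the right-hand side lies in $(0,1)$; monotonicity of $\tanh$ then furnishes a unique positive root $x^\ast = D\,\zeta(\theta_1,\theta_2)^2$. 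Establishing this single-crossing property is the technical heart of the argument.

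With unimodality in hand the conclusion is immediate: a function that vanishes at both ends of $(0,\infty)$, is strictly negative on the interior, and has exactly one interior critical point attains its global minimum at that point, which is also the first stationary point encountered as $\alpha$ grows from $0$. Undoing the rescaling via $\gamma^\ast=\alpha^\ast/\sqrt{D}$ identifies this common point with the $\gamma^\ast$ tabulated in the theorem, so the global optimum coincides with the first local optimum for $\gamma\in\mathbb{R}^+$, and gradient descent started just above $\gamma=0$ provably lands on it.

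I expect two obstacles. The first is the combined regimes in which $\mathcal{C}_1$ and $\mathcal{C}_2$ are both present: here one must check that the subleading piece---of relative size $\mathcal{O}(D^{-1/2})$, since $P=\mathcal{O}(D^{-1/2})$ while $Q=\mathcal{O}(1)$---neither spawns a spurious earlier extremum nor shifts the minimiser off the dominant term's unimodal profile, and one must pin down the correct branch of the $\beta$-optimum (equivalently, the sign of $\mathbb{E}[J]$ and the choice $\sin^2 2\beta\to 1$) that actually drives the cost negative. The second, more fundamental obstacle is to certify that the true minimiser over all of $\mathbb{R}^+$ really lies inside the $\alpha=\mathcal{O}(1)$ window, so that the leading-order unimodal picture captures the genuine global optimum rather than a merely local one; this rests on the decay of the product-of-cosine envelope factors $\prod_{w}\cos\gamma'_{wv}$ at larger $\gamma$, the same mechanism that underlies the concentration result proved in the rest of the theorem.
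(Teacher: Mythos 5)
Your proposal addresses only the second half of \cref{fst_local_opt_thm}. The theorem's central assertion is the asymptotic expansion \cref{big_equation} itself, and your plan takes that expansion as given (``for each regime of \cref{big_equation} I would first eliminate $\beta$ analytically''). The paper's proof spends most of its effort precisely there: it decomposes $\mathbb{E}[\xi_G]$ into a vertex term $T_1$, the $\sin(4\beta)$ edge term $T_2$, and the $\sin^2(2\beta)$ edge term $T_3$; evaluates each expectation over the i.i.d.\ weights using Taylor-expansion lemmas for quantities such as $\mathbb{E}[x\sin(ax/\sqrt{D})]$ and $\left(\mathbb{E}[\cos(ax/\sqrt{D})]\right)^{cD^\lambda}$; exploits the regularity constraint $aD^\lambda + bD^\mu = D$ to rule out inconsistent $(a,b,\lambda,\mu)$ combinations and generate exactly the listed cases; and uses $|V|/|E| = 2/(D+1)$ to show the field contribution is $\mathcal{O}(D^{-3/2})$, hence absent from every regime. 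None of this appears in your proposal, so as written it proves consequences of \cref{big_equation} rather than the theorem. This is a genuine gap, not a presentational one: the case structure (e.g.\ why $\lambda<1$ forces $\mu=1$, or why the $\mu=\frac12$ and $\lambda=\frac12$ lines carry the extra correction terms) cannot be guessed from the final formulas and must come out of that computation.

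The part you do prove is correct, and in two places your route is cleaner than the paper's. First, you eliminate $\beta$ exactly via $\min_\beta[P\sin 4\beta + Q\sin^2 2\beta] = \tfrac{Q}{2}-\sqrt{P^2+Q^2/4}$ (the same reduction as \cref{qaoa_2_local_opt_params_thm}), whereas the paper simply fixes $\beta=3\pi/8$ for $\mathcal{C}_1$ and $\beta=\pi/4$ for $\mathcal{C}_2,\mathcal{C}_3$; your version makes the joint $(\alpha,\beta)$ optimality explicit rather than a choice of candidate. Second, your single-crossing argument---stationarity of $\mathcal{C}_2$ reduces to $\tanh\!\left(4\theta_2\mathbb{E}[J]^2\alpha^2\right)=\theta_2\mathbb{E}[J]^2/(\theta_1\mathbb{E}[J^2])$, with a unique positive root by strict monotonicity of $\tanh$ and $\theta_1\mathbb{E}[J^2]>\theta_2\mathbb{E}[J]^2$---is exact, while the paper's first-local-optimum argument for $\mathcal{C}_2$ patches together small-argument and large-argument hyperbolic approximations and leaves the intermediate range informal. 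Both approaches then conclude identically: unimodality on $(0,\infty)$ with vanishing endpoints forces the global minimum to be the first stationary point, and $\gamma^*=\alpha^*/\sqrt{D}$ recovers $\eta$ and $\zeta$. Your two flagged obstacles are handled in the paper only by fiat---the theorem restricts its optimality claims to ``leading-order terms,'' which is exactly how you would have to dispose of the mixed-regime and large-$\alpha$ issues as well---so they do not distinguish the two arguments; the missing derivation of \cref{big_equation} does.
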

The proof of this theorem is provided in \cref{fst_local_opt_proof}. The assumption that $\gamma \in \Theta(D^{-1/2})$ in \cref{fst_local_opt_thm} is motivated by numerical observations showing that the optimal $\gamma$ for the unweighted MaxCut problem scales as $\Theta(D^{-1/2})$ (see Figure 1b of Boulebnane et al.~\cite{boulebnane2021predicting}) and by analytical results demonstrating that, for finite-size triangle-free regular graphs with weights drawn from an exponential distribution, $\gamma$ also scales as $\Theta(D^{-1/2})$ (see theorem 1 of Sureshbabu et al.~\cite{sureshbabu2024parameter}). 

\begin{figure*}[t]
    \centering
    \subfloat[Ising Model without Fields]{
        \includegraphics[height=0.33\textheight]{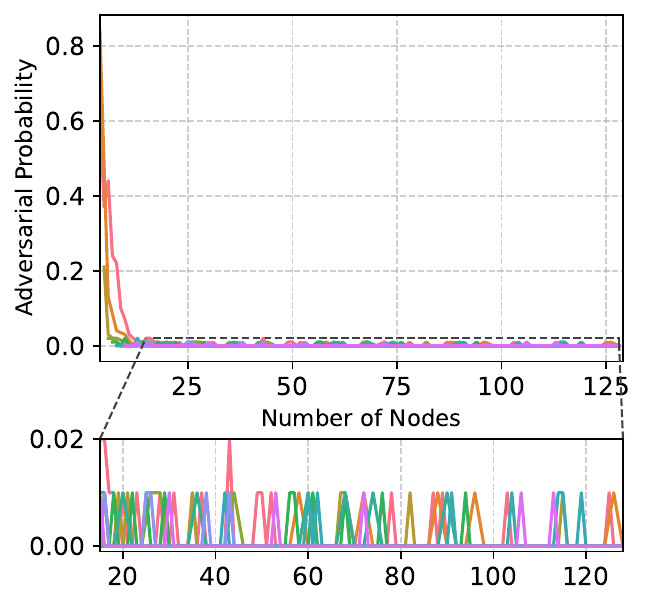}
        \label{prob_non_first_global_opt_plot_a}
    }
    \subfloat[Ising Model with Fields]{
    \hspace{-0.6cm}
        \includegraphics[height=0.33\textheight]{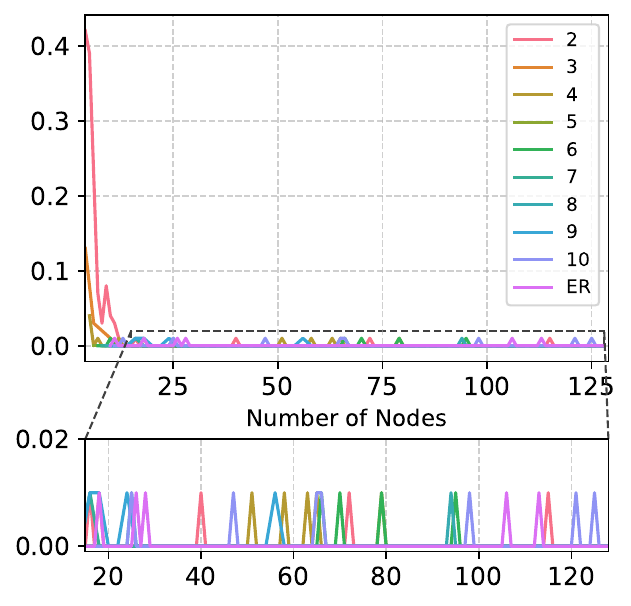}
        \label{prob_non_first_global_opt_plot_b}
    }
    \caption[Probability of Adversarial Graphs in QAOA$_1$]{\textbf{Probability of Adversarial Graphs in QAOA$_1$.} \justifying These plots illustrate the probability of encountering adversarial instances in Ising models for $D$-regular graphs ($2 \leq D \leq 10$) and Erdős-Rényi (ER) graphs with edge probability $1/2$. (a) presents results for Ising models without external fields, whereas (b) illustrates results for models with external fields. Graph sizes range from 4 to 128 nodes, with edge weights sampled as integers from a Gaussian distribution (mean 0, variance 100). For each configuration, 100 random graph instances were evaluated using two methods: (1) a full line search over $\gamma \in [0, \pi]$ using algorithm \ref{subdivision_algorithm}, and (2) gradient descent initialised near $\gamma = \Delta_{\gamma}/2$. Adversarial instances are defined as those where the optima obtained from the two methods differ. The probability of adversarial instances stabilises around 0.01 for graphs with $n > 14$, implying that only 1 in 100 graphs exhibits adversarial behaviour at larger sizes. Smaller and sparser graphs show a higher occurrence of adversarial instances, reflecting the increased likelihood of misalignment between local and global optima in these cases.}
    \label{prob_non_first_global_opt_plot}
\end{figure*}

We note that Sureshbabu et al.~\cite{sureshbabu2024parameter} recently proved a special case of our result for triangle-free regular graphs without external fields. The increased complexity of \cref{fst_local_opt_thm} compared to their result arises because, in general, the exact number of triangles an edge belongs to is not known, except in specific cases such as triangle-free graphs and complete graphs. To address this, we introduce parameters to quantify both the number of triangles an edge participates in and the number of its neighbouring edges that are not part of any triangle. This parametrisation naturally recovers triangle-free and complete graphs as limiting cases. The first two cases in \cref{big_equation} correspond to these two extremes, while the remaining cases handle intermediate scenarios where the graph is neither triangle-free nor complete. Notably, when edge weights are drawn from a distribution with zero mean, \cref{big_equation} simplifies significantly. Specifically, for $\mathbb{E}[J] = 0$, it reduces to $\mathcal{C}_1(\alpha, \beta)$ with the globally optimal $\gamma^* = \eta(4)$. Finally, since \cref{fst_local_opt_thm} does not assume the sampled weights are commensurable, our finding that the first local optimum often coincides with the global optimum remains valid even when the $\gamma$ landscape is non-periodic.

While \cref{fst_local_opt_thm} establishes that, in the infinite-size limit of regular graphs, the global optimum coincides with the first local optimum, empirical evidence from finite-size problems~\cite{shaydulin2023parameter, boulebnane2023peptide, brandhofer2022benchmarking} suggests that this alignment frequently holds in practice as well. We observe this phenomenon for finite-size $D$-regular graphs as well, as shown in \cref{optimal_path_2local_diagrams} and \cref{Conc_2Local_Plot_d}. This behaviour can be intuitively explained using the closed-form expressions in \cref{qaoa_2_local_opt_params_thm} and \cref{qaoa_2_local_fields_opt_params_thm}. When signals like $\cos(\alpha x)$ and $\cos(\beta x)$ with different frequencies interfere, they align constructively when in phase and destructively when out of phase, with the interference points depending on their oscillation rates. Similarly, the univariate cost function for QAOA$_1$ along $\gamma$ comprises cosine terms with frequencies determined by the problem instance. As the instance size, density, or weights increase, the range of frequencies broadens, creating a superposition of oscillating signals. At arbitrary $\gamma$, these signals are generally out of phase, leading to destructive interference and a reduced cost function amplitude. Near $\gamma = 0$, however, all cosine terms converge to $\cos(0) = 1$, aligning perfectly in phase and maximising constructive interference, which produces the highest amplitude. As $\gamma$ deviates from zero, the signals fall out of phase, causing destructive interference and diminishing the amplitude.

Nevertheless, it is important to recognise that the first local optimum need not always correspond to the global optimum, as adversarial graphs can be constructed where this alignment does not hold (see Figure 2 of~\cite{shaydulin2023parameter}). However, if the majority of finite instances exhibit the behaviour where the first local optimum aligns with the global optimum, the optimisation process over $\gamma \in [0, \pi]$ with interval size $\Delta_\gamma$ can be significantly simplified. To evaluate the prevalence of adversarial instances, we tested weighted $D$-regular graphs ($2 \leq D \leq 10$) and Erdős-Rényi graphs (with edge probability $1/2$) for graph sizes ranging from 4 to 128, generating 100 random instances per graph type. Edge weights were sampled from a Gaussian distribution with mean 0 and variance 100, rounded to the nearest integer. Two optimisation methods were employed: a line search over $\gamma \in [0, \pi]$ using algorithm \ref{subdivision_algorithm} with $\Delta_\gamma$ determined by \cref{qaoa_min_samples_thm}, and gradient descent initialised at $\gamma = \Delta_\gamma / 2$ to provide an initial guess that avoids overshooting the first local optimum. By comparing the $\gamma^*$ values obtained from both methods, we found that adversarial instances are rare. Their occurrence is concentrated in small, sparse graphs and decreases exponentially with increasing graph size or density, as illustrated in \cref{prob_non_first_global_opt_plot}.

These empirical findings lay the groundwork for streamlining the optimisation process of QAOA$_1$ on large Ising models. Specifically, the comprehensive line search over $\gamma \in [0,\pi]$ can be replaced with a gradient descent approach initialised near $\gamma \approx 0$, guided by the maximum sampling interval $\Delta_\gamma$ from \cref{qaoa_min_samples_thm}. This parameter serves a dual purpose: it provides an initial guess, $\gamma < \Delta_\gamma$, placing the algorithm close to the anticipated global optimum while preventing overshooting, and it establishes an upper bound on the gradient descent step size. By ensuring that updates remain smaller than $\Delta_\gamma$, the algorithm operates within a region where the cost function behaves predictably, thereby minimising the risk of overshooting the first local optima.

\section{Numerical Results} \label{num_subsec}

\begin{figure*}[htpb]
    \centering
    \subfloat[Benchmark Results for 128 Vertex Graphs]{
        \includegraphics[width=0.5\textwidth]{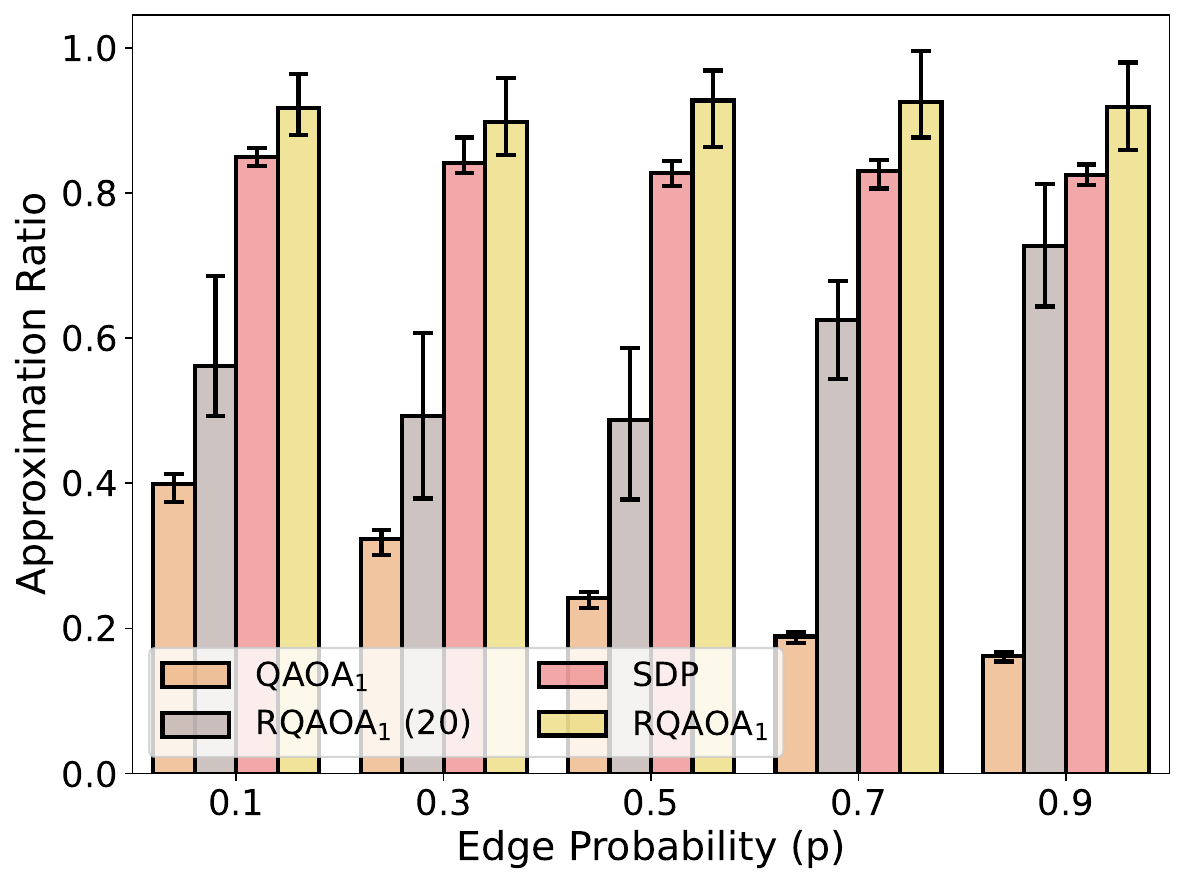}
        \label{rqaoa_benchmark_er_plot_a}
    }
    \subfloat[Benchmark Results for 256 Vertex Graphs]{
        \includegraphics[width=0.5\textwidth]{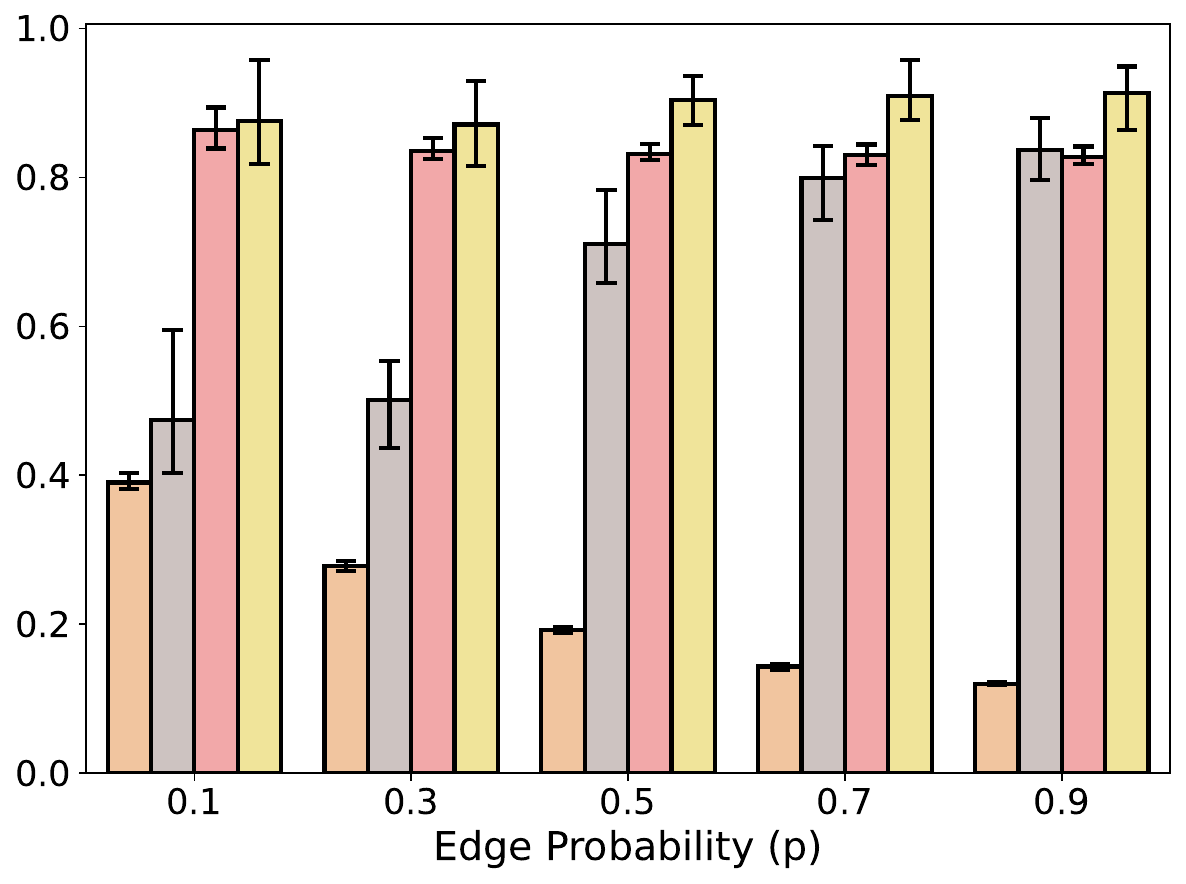}
        \label{rqaoa_benchmark_er_plot_b}
    }
    \caption{\textbf{Benchmark Results for Ising Models Without External Fields.} 
\justifying
These plots compare four algorithms---QAOA\textsubscript{1}, RQAOA\textsubscript{1}, RQAOA\textsubscript{1}(20), and a classical SDP---for approximating the ground state of the Ising model without external fields. We generate weighted Erdős--Rényi graphs with varying edge probabilities $p$ for (a) 128-vertex and (b) 256-vertex instances. Each bar indicates the \emph{mean} approximation ratio over 20 random graphs at a given $p$; the whiskers do not denote standard deviations, but instead span the \emph{minimum} and \emph{maximum ratios} observed across those instances.
%, while the whiskers span the \emph{best} and \emph{worst} ratios. 
QAOA\textsubscript{1} is optimised via gradient descent initialised near $\gamma \approx 0$; RQAOA\textsubscript{1}(20) employs a coarse line search of 20 samples in $[0,\pi]$ followed by gradient descent on the best found point; RQAOA\textsubscript{1} uses gradient descent from $\gamma \approx 0$ with $\eta=120$ recursive steps for 128-vertex graphs and $\eta=248$ for 256-vertex graphs. The SDP adapts the Goemans--Williamson relaxation by replacing the MaxCut cost function with the Ising Hamiltonian, and each solution is rounded from 1024 random hyperplanes.}

    \label{rqaoa_benchmark_er_plot}
\end{figure*}

To validate our parameter tuning strategy for QAOA$_1$ on Ising models, we applied it to Recursive QAOA$_1$ (RQAOA$_1$)~\cite{bravyi2020obstacles} (refer to \cref{RQAOA_Subsec} for an explanation of RQAOA). We focused on large, dense, weighted problem instances to assess the effectiveness of our approach. Since any QUBO problem can be mapped to finding the ground state of an arbitrary Ising model---represented as an undirected graph with nodes as spins and edges as interaction terms---demonstrating that RQAOA$_1$ performs well across a broad distribution of graphs implies its applicability to most QUBO problems. Given the impracticality of testing RQAOA$_1$ on all possible graph structures, we utilised Erdős-Rényi graphs~\cite{erdds1959random, gilbert1959random} as a representative approximation. Problem instances were generated using random Erdős-Rényi graphs with 128 and 256 vertices, varying the edge probability $p \in \{0.1, 0.3, 0.5, 0.7, 0.9\}$. The edge weights were sampled from a Gaussian distribution with a mean of 50 and a variance of 30, then rounded to the nearest integer. For Ising models that include external fields, the external field values were similarly sampled from a Gaussian distribution (mean 40, variance 20) and rounded to integers. For each combination of graph size and edge probability, we generated 20 random instances for both Ising models with and without external fields.

The numerical simulations of QAOA$_1$ and RQAOA$_1$ leveraged analytical solutions derived from \cref{qaoa_2_local_opt_params_thm} and \cref{qaoa_2_local_fields_opt_params_thm}. To assess performance, we computed approximation ratios by comparing the objective value achieved by each algorithm considered in our benchmark with the corresponding reference value returned by the GUROBI solver~\cite{gurobi}, an industry-standard optimisation tool renowned for its efficiency and reliability. Each problem instance was solved with GUROBI using a fixed wall-clock budget. In many cases, the incumbent objective stabilised before termination, with no further improvement in the best-found solution observed while the MIP gap\footnote{The MIP Gap refers to the relative difference between the best integer solution and the best possible bound on the optimal solution in mixed-integer programming. Smaller MIP Gaps indicate solutions closer to optimality.} continued to decrease slowly. When the final MIP gap was zero, the returned solution was certified optimal; otherwise, we used the best incumbent solution at termination and report the associated final MIP gap. Thus, the approximation ratios discussed in this section are computed relative to GUROBI reference values, with exact optimality certified whenever the final MIP gap is zero. In the remainder of this section, we present detailed benchmark results for the Ising models, highlighting the performance and effectiveness of our parameter tuning strategy in optimising QAOA$_1$ and RQAOA$_1$.

%We calculated approximation ratios using near-exact ground state values obtained with the GUROBI solver~\cite{gurobi}, an industry-standard optimisation tool renowned for its efficiency and reliability. Although proving optimality with GUROBI can be computationally intensive, it consistently produced near-optimal solutions with reasonable MIP gaps\footnote{The MIP Gap refers to the relative difference between the best integer solution and the best possible bound on the optimal solution in mixed-integer programming. Smaller MIP Gaps indicate solutions closer to optimality.}. This ensures that our approximation ratios are both accurate and meaningful. In the remainder of this section, we present detailed benchmark results for the Ising models, highlighting the performance and effectiveness of our parameter tuning strategy in optimising QAOA$_1$ and RQAOA$_1$.

\subsection{Ising Models without Fields} \label{rqaoa_ising_benchmark_subsec}

\begin{figure*}[!t]
    \centering
    \subfloat[Benchmark Results for 128 Vertex Graphs]{
        \includegraphics[width=\textwidth]{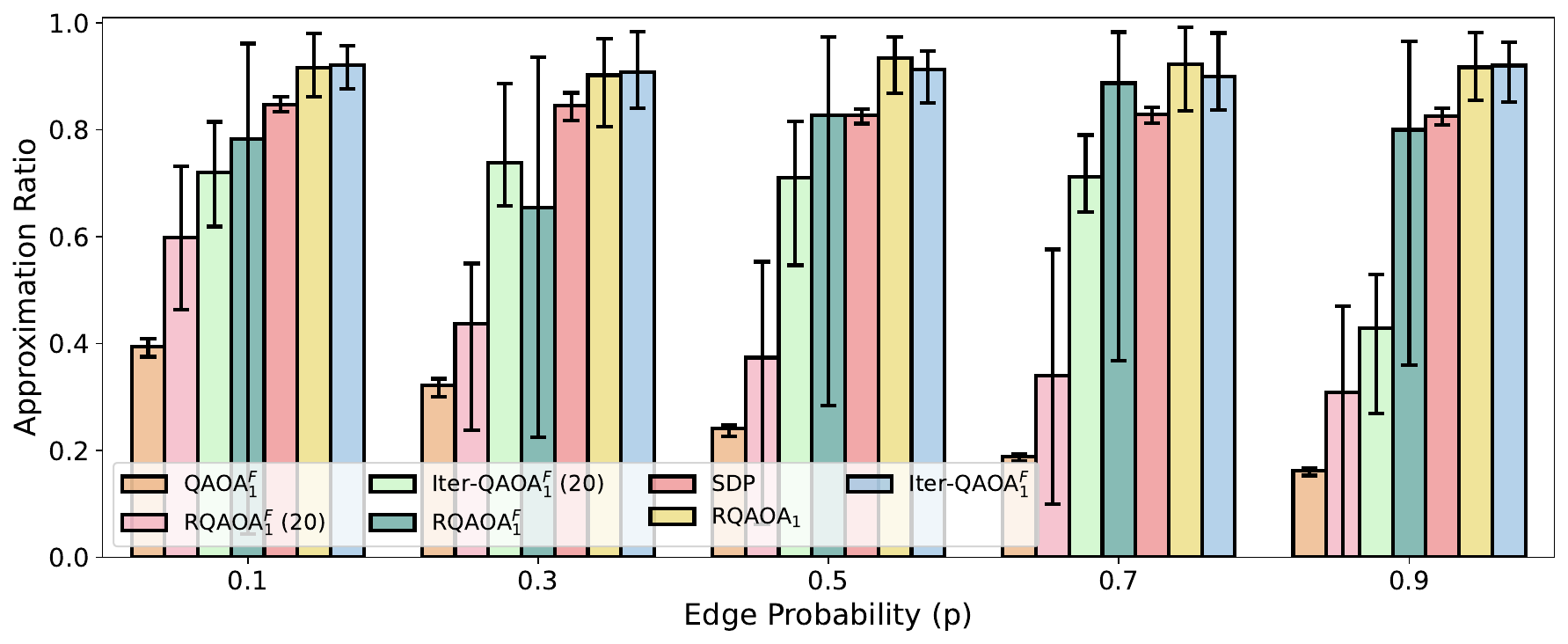}
        \label{fields_benchmark_plot_a}
    }\\
    \subfloat[Benchmark Results for 256 Vertex Graphs]{
        \includegraphics[width=\textwidth]{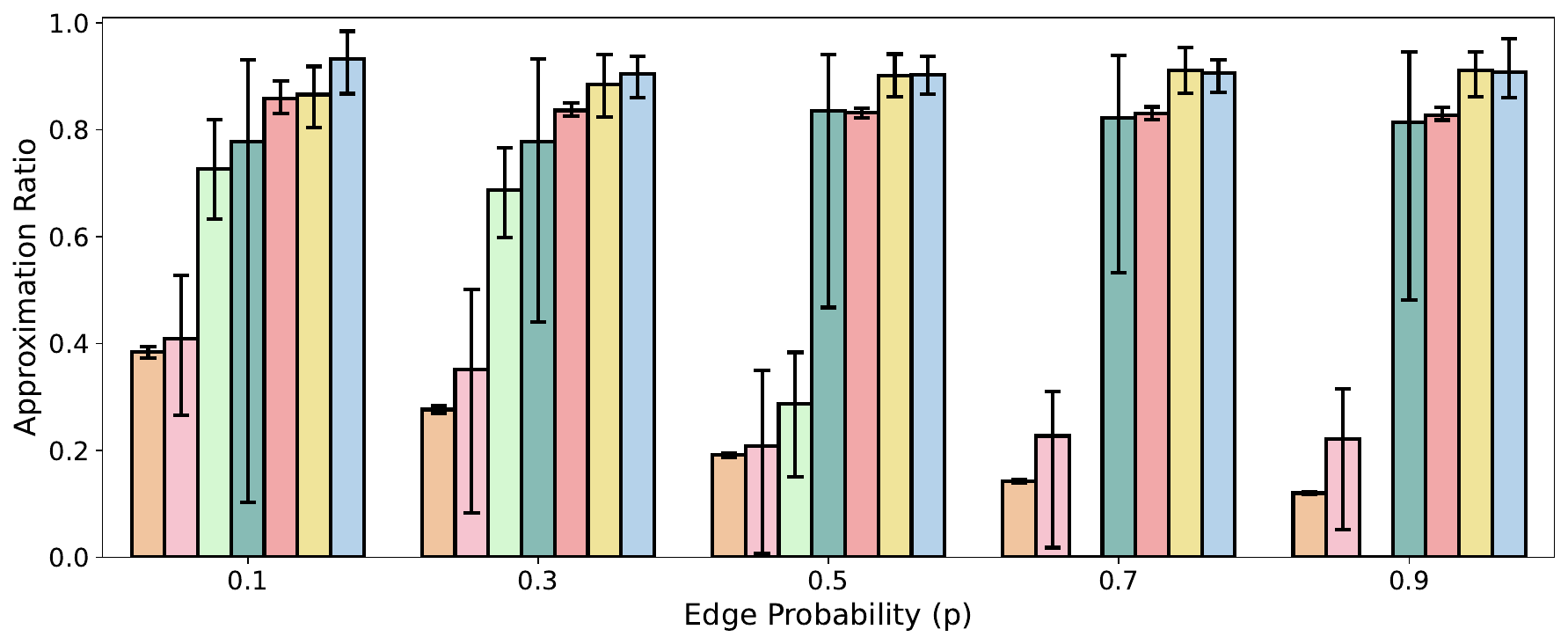}
        \label{fields_benchmark_plot_b}
    }
\caption{\textbf{Benchmark Results for Ising Models with External Fields.}
\justifying
These plots compare the performance of seven algorithms for approximating the ground state of the Ising model with external fields. We generate weighted Erdős–Rényi graphs with varying edge probabilities $p$, shown for (a) 128-vertex and (b) 256-vertex instances. Each bar indicates the \emph{mean} approximation ratio over 20 random graphs at a given $p$; the whiskers do not denote standard deviations, but instead span the \emph{minimum} and \emph{maximum ratios} observed across those instances.
%, while the whiskers span the \emph{best} and \emph{worst} ratios. 
Algorithms labelled with a superscript $F$ (e.g., QAOA\textsubscript{1}\textsuperscript{F}) employ parameter expressions tailored to Ising models with fields; those without $F$ (e.g., RQAOA\textsubscript{1}) convert the model to a field‐free form before parameter optimisation. 
QAOA\textsubscript{1}\textsuperscript{F} is initialised near $\gamma \approx 0$ for gradient descent; 
RQAOA\textsubscript{1}\textsuperscript{F}(20) and Iter-QAOA\textsubscript{1}\textsuperscript{F}(20) each use a coarse line search of 20 samples in $[0,\pi]$ prior to gradient‐descent refinement. RQAOA and Iter-QAOA variants are run for $\eta=120$ steps (128‐vertex) or $\eta=248$ steps (256‐vertex). 
The classical SDP adapts the Goemans–Williamson relaxation by replacing the MaxCut cost function with the Ising Hamiltonian, and each SDP solution is rounded from 1024 random hyperplanes.}

    \label{fields_benchmark_plots}
\end{figure*}

To benchmark RQAOA$_1$ on Ising models without fields, we compared its performance against four other algorithms: QAOA$_1$, two variants of RQAOA$_1$, and a classical semidefinite program (SDP). QAOA$_1$ parameters were optimised using gradient descent initialised near $\gamma \approx 0$ on \cref{QAOA_Gamma_Linear_Eqn} and is labelled as QAOA$_1$ in the results. RQAOA$_1$ was tested in two configurations: RQAOA$_1$ (20), which used a coarse line search of 20 samples for $\gamma \in [0, \pi]$ followed by gradient descent on the best found point, and RQAOA$_1$, which employed gradient descent initialised near $\gamma \approx 0$ on \cref{QAOA_Gamma_Linear_Eqn}. Both variants were tested with $\eta = 120$ and $\eta = 248$ recursive steps for 128- and 256-vertex graphs, respectively. The final comparison was with a classical SDP based on the Goemans--Williamson algorithm~\cite{goemans1995improved}, adapted for Ising models by replacing the MaxCut cost function with the Ising model Hamiltonian. After solving the relaxed SDP, the solution was rounded using 1024 random hyperplanes.

The benchmark results are summarised in \cref{rqaoa_benchmark_er_plot}. QAOA$_1$ consistently yields the lowest approximation ratios among the tested algorithms. Both RQAOA$_1$ (20) and RQAOA$_1$ outperform QAOA$_1$ across all instances, while the SDP generally outperforms RQAOA$_1$ (20), often by a significant margin. One exception occurs on 256-node graphs with $p = 0.9$, where RQAOA$_1$ (20) slightly exceeds the performance of the SDP (\cref{rqaoa_benchmark_er_plot_b}). In contrast, RQAOA$_1$ consistently outperforms the SDP across all values of $p$ for both graph sizes. These results underscore the effectiveness of the parameter tuning strategy outlined in \cref{qaoa_2_local_opt_params_thm} and show that, with optimal parameters, RQAOA$_1$ can outperform classical SDPs, highlighting the critical role of parameter optimisation in VQAs.

\subsection{Ising Models with Fields}

Benchmarking RQAOA$_1$ for Ising models with external fields revealed inconsistent performance, the underlying cause of which remains uncertain. To address this, we introduced a simplified iterative rounding procedure that demonstrated greater robustness and consistently produced better results. This modified approach, named Iterative-QAOA (Iter-QAOA), is detailed in \cref{IterQAOA_Subsec}.

We benchmarked RQAOA$_1$ and Iter-QAOA$_1$ on Ising models with external fields, comparing their performance to seven algorithms: QAOA$_1$, three RQAOA$_1$ variants, two Iter-QAOA$_1$ variants, and a classical SDP. Algorithms labelled with a superscript $F$ used the parameter expressions from \cref{qaoa_2_local_fields_opt_params_thm} for Ising models with fields, while those without $F$ used \cref{qaoa_2_local_opt_params_thm} for models without fields. For QAOA$_1^F$, parameters were optimised via gradient descent initialised near $\gamma \approx 0$ on \cref{qaoa_field_opt_par_linear_eqn}. RQAOA$_1^F$ was tested in two configurations: RQAOA$_1^F$ (20), which employed a coarse line search with 20 samples for $\gamma \in [0, \pi]$ followed by gradient descent on the best found point, and RQAOA$_1^F$, which directly used gradient descent initialised near $\gamma \approx 0$. RQAOA$_1$, by contrast, converted the Ising model with fields to one without fields and optimised parameters using QAOA$_1$. Similarly, Iter-QAOA$_1^F$ was tested in two configurations: Iter-QAOA$_1^F$ (20), which followed the same coarse line search strategy as RQAOA$_1^F$ (20), and Iter-QAOA$_1^F$, which used gradient descent initialised near $\gamma \approx 0$. All RQAOA and Iter-QAOA variants were run for $\eta = 120$ and $\eta = 248$ recursive steps on 128- and 256-vertex graphs, respectively. Finally, the SDP was benchmarked on equivalent Ising models without fields. The results, shown in \cref{fields_benchmark_plots}, highlight the relative performance of each algorithm.

As shown in \cref{fields_benchmark_plots}, Iter-QAOA$_1^F$ consistently outperforms RQAOA$_1^F$ across all instances. This is reflected not only in the higher mean approximation ratios, but also in the generally narrower min-max whiskers, which indicate more stable performance across all the random instances and are consistent with reduced sensitivity to parameter tuning errors, particularly on sparser graphs. %demonstrating greater robustness with a narrower performance gap and reduced sensitivity to parameter tuning errors, particularly on sparser graphs. 
In contrast, RQAOA$_1^F$ shows higher variability, with worst-case results occasionally falling below both its coarsely optimised counterpart and QAOA$_1^F$. Iter-QAOA$_1^F$ also outperforms the SDP by a significant margin and is competitive with RQAOA$_1$, occasionally surpassing it. Similar trends are observed for the coarsely optimised variants: Iter-QAOA$_1^F$ (20) generally performs better than RQAOA$_1^F$ (20), except in 256-vertex cases with edge probabilities $p \in \{0.7, 0.9\}$. In those instances, Iter-QAOA$_1^F$ (20) produces negative approximation ratios---worse than random guessing---likely due to the use of partial information in these cases.

These results highlight limitations in RQAOA’s performance on Ising models with fields, raising questions about its applicability to more general pseudo-Boolean optimisation problems with higher-degree terms. This motivates further exploration of whether RQAOA and related quantum algorithms~\cite{bravyi2020obstacles, bravyi2022hybrid, finvzgar2024quantum, dupont2023quantum, patel2024reinforcement} can effectively generalise beyond QUBO. Nonetheless, Iter-QAOA$_1^F$ consistently outperformed the SDP across all benchmarks, underscoring the robustness and efficacy of our parameter-setting strategy outlined in \cref{qaoa_2_local_fields_opt_params_thm}, which was the central focus of this particular study.

\section{Discussion}

In this work, we focused on optimising QAOA$_1$ for arbitrary Ising models, specifically addressing the challenges and misconceptions surrounding its parameter optimisation. Although QAOA$_1$ is governed by only two parameters, $(\gamma, \beta)$, it is commonly assumed that a simple coarse grid search followed by local minimisation is sufficient for finding optimal values. Contrary to this belief, our analysis revealed that the expectation value landscape is highly oscillatory. By expressing it as a partial Fourier series, we demonstrated that the oscillations along $\gamma$ are heavily dependent on the specific problem instance. We derived the maximum frequency of these oscillations for Ising models both with and without external fields and employed the Nyquist-Shannon sampling theorem to determine the minimum sampling resolution necessary for accurately reconstructing the landscape.

To overcome the inefficiency of traditional grid searches over $(\gamma, \beta)$, we streamlined the two-dimensional optimisation to a one-dimensional line search over $\gamma$ and analytically determined the optimal $\beta^*$, thereby simplifying the search process. We then introduced Green's subdivision algorithm~\cite{green1999calculating} for computing the maximum modulus of univariate trigonometric polynomials, which aids in estimating the optimal $\gamma^*$ angle. Furthermore, we proved that in the infinite-size limit for Ising models with fields on regular graphs, the global optimum for $\gamma \in \mathbb{R}^+$ tends to concentrate near zero and coincides with the first local optimum. Supporting this theoretical result, our empirical evidence demonstrates that the global optimum typically aligns with the first local optimum along $\gamma \in \mathbb{R}^+$ even for finite-size instances.  Leveraging this insight, we refined the optimisation process further by initiating gradient descent near zero, thereby eliminating the need for a comprehensive line search over $\gamma$.

We validated our parameter tuning strategies by applying them to RQAOA$_1$ and benchmarking its performance on large, dense, and weighted problem instances—scenarios where naive grid searches and traditional heuristics often falter. Our benchmarks encompassed Erdős-Rényi graphs with 128 and 256 vertices, varying densities, and weight configurations. For Ising models without external fields, RQAOA$_1$ optimised using a naive coarse grid search followed by local minimisation did not surpass the performance of the SDP. However, when optimised using our proposed parameter tuning strategy, RQAOA$_1$ consistently outperformed both the coarse search approach and the SDP across all tested graph instances. In contrast, for Ising models with external fields, RQAOA$_1$ exhibited inconsistent performance, the underlying causes of which remain unclear\footnote{This instability may not be unique to the present setting: recent results on the Binary Paint Shop Problem likewise reported that RQAOA$_1$, despite using provably optimal QAOA$_1$ parameters at each recursive step, showed diminishing performance as the problem size increased~\cite{vijendran2025classical}.}. To overcome this limitation, we developed Iter-QAOA, which demonstrated greater robustness and consistently outperformed the SDP for Ising models with fields. Although the diminished performance of RQAOA$_1$ in the presence of fields raises questions about its applicability to more general binary optimisation problems, addressing these concerns is beyond the scope of the current work and will be explored in future research.

In summary, our work presents near-optimal---if not optimal---parameter tuning strategies for QAOA$_1$ applicable to arbitrary Ising models, achievable in polynomial time. We have rigorously validated these strategies on large, dense, and heavily weighted problem instances, demonstrating their effectiveness and robustness. Unlike previously proposed heuristics, our approach does not rely on assumptions about the problem's structure, symmetry, or weight distribution. Moreover, it achieves near-optimal performance with minimal computational overhead. Consequently, we assert that our work effectively resolves the challenge of efficiently finding globally near-optimal QAOA$_1$ parameters for arbitrary QUBO problems.

\section{Methods}

\subsection{Recursive QAOA (RQAOA)} \label{RQAOA_Subsec}

Recursive QAOA (RQAOA) is a non-local variant of QAOA that uses shallow-depth QAOA as a subroutine to recursively reduce the problem size until it becomes trivial to solve via brute force. RQAOA$_1$, which specifically employs QAOA$_1$ as its subroutine, has demonstrated superior performance compared to standalone QAOA$_1$ and is competitive with state-of-the-art classical algorithms~\cite{bravyi2020obstacles, bravyi2022hybrid, bravyi2021classical}. Additionally, RQAOA$_1$ provides performance guarantees for specific graph structures~\cite{bravyi2020obstacles, bae2024recursive, bae2024improvedr} and has been explored in various contexts~\cite{moussa2022unsupervised, fischer2024role, thelen2024approximating, patel2024reinforcement, egger2021warm}.

\begin{algorithm}[!t]
\fontsize{10.25pt}{12pt}\selectfont
\SetAlgoLined
\SetArgSty{}
\SetKwInput{KwData}{Input}
\SetKwInput{KwResult}{Output}

\KwData{Problem Hamiltonian $C_0$ with $n$ variables $V_0 = \{1, \ldots, n\}$, recursion depth $\eta$}
\KwResult{Approximate ground state of $C_0$}

\For{$t \gets 0$ \KwTo $n - \eta$}{
    Prepare QAOA state $\left|\psi_t(\boldsymbol{\gamma}, \boldsymbol{\beta})\right\rangle$ for $C_t$\;
    
    Optimise parameters:
    \vspace{-0.1cm}
    \[
    (\boldsymbol{\gamma}^*, \boldsymbol{\beta}^*) \gets 
    \argmin_{\boldsymbol{\gamma}, \boldsymbol{\beta}}
    \langle\psi_t (\boldsymbol{\gamma}, \boldsymbol{\beta})| C_t | \psi_t (\boldsymbol{\gamma}, \boldsymbol{\beta})\rangle;
    \]
    \vspace{-0.25cm}
    
    Compute $M_i$ for $i \in V_t$:
    $$
    M_{i} \gets \langle\psi_t (\boldsymbol{\gamma}^*, \boldsymbol{\beta}^*)| Z_i | \psi_t (\boldsymbol{\gamma}^*, \boldsymbol{\beta}^*)\rangle;
    $$
    
    Identify the variable $u \gets \displaystyle \argmax_{i \in V_t} \left|M_i\right|$\;
    
    Compute $M_{ij}$ for all $\{i, j\} \in V_t$:
    $$
    M_{ij} \gets \langle\psi_t (\boldsymbol{\gamma}^*, \boldsymbol{\beta}^*)| Z_i Z_j | \psi_t (\boldsymbol{\gamma}^*, \boldsymbol{\beta}^*)\rangle;
    $$
    
    Identify the pair $\{u, v\} \gets \displaystyle\argmax_{\{i,j\} \in V_t}
    \left|M_{ij}\right|$\;
    
    \eIf{$\left|M_u\right| \geq \left|M_{uv}\right|$}{
        Assign Value $u: Z_{u} = \operatorname{sign}(M_u)$\;
        
        Update Variable Set: $V_{t+1} \gets V_t \setminus \{u\}$\;

        Define $h'_i \gets h_i + \operatorname{sign}(M_{u}) J_{u i}$\;

        Update Hamiltonian:
        \vspace{-0.15cm}
        \begin{equation*}
           \hspace{-0.4cm}  C_{t+1} \gets \hspace{-0.5cm} \sum_{\{i, j\} \in V_{t+1}} \hspace{-0.3cm} J_{ij} Z_i Z_j + \hspace{-0.2cm} \sum_{i \in V_{t+1}} \hspace{-0.2cm}  h'_i Z_i + \operatorname{sign}(M_{u}) h_{u};
        \end{equation*}
        \vspace{-0.3cm}
    }{
        Impose Constraint: 
        $$
        Z_{u} = \operatorname{sign}(M_{uv}) Z_{v};
        $$
        
        Update Variable Set: $V_{t+1} \gets V_t \setminus \{u\}$\;

        Define $J_{u j}^{\prime} \gets J_{u j} + \operatorname{sign}(M_{u v}) J_{v j}$\;

        Define $h_{u}^{\prime \prime} \gets h_{u} + \operatorname{sign}(M_{u v}) h_{v}$\;

        Update Hamiltonian:
        \vspace{-0.2cm}
        \begin{align*}
            C_{t+1} &\gets \hspace{-0.3cm} \sum_{\{i, j\} \in V_{t+1}} \hspace{-0.3cm} J_{i j} Z_i Z_j+\! \sum_{j \in V_{t+1}} \! J_{u j}^{\prime} Z_{u} Z_j\\
            &  + \hspace{-0.65em} \!\sum_{i \in V_{t+1}} \! h_i Z_i+h_{u}^{\prime \prime} Z_{u}+ \operatorname{sign}(M_{u v}) J_{u v};
        \end{align*}
        \vspace{-0.5cm}
    }
}

Solve $C_t$ via brute-force when $|V_t| \leq \eta$\;

Solve $C_0$ by backtracking\;

\Return{Approximate ground state of $C_0$}

\caption{RQAOA for Ising Models}
\label{rqaoa_procedure}
\end{algorithm}

At the core of RQAOA is correlation rounding, a technique that identifies the qubit or pair of qubits with the largest absolute expectation or correlation to impose constraints (see algorithm \ref{rqaoa_procedure}). This approach shares similarities with iterative rounding~\cite{williamson2011design, lau2011iterative}, a method commonly used in integer linear programming (ILP). In iterative rounding, the linear relaxation of an ILP is solved, fractional variables are rounded, and the problem is updated iteratively. Typically, the variable with the largest absolute fractional value is selected, rounded to $\pm 1$, and the reduced problem is solved.

In contrast, correlation rounding shifts attention from single variables to the correlations between variable pairs, which are derived from the QAOA state. At each iteration, the algorithm identifies the variable $u^*$ with the largest absolute single-qubit expectation $\left|M_{u^*}\right|$ and the pair $\left\{u^*, v^*\right\}$ with the largest absolute pairwise correlation $\left|M_{u^* v^*}\right|$. If the single-variable expectation $\left|M_{u^*}\right|$ is greater, $u^*$ is fixed to $\operatorname{sign}\left(M_{u^*}\right)$. Otherwise, a correlation constraint is imposed: $v^*$ is set relative to $u^*$ based on the sign of $M_{u^* v^*}$. This process ensures that decisions are guided by the strongest available information, whether from individual variables or their correlations. Unlike iterative rounding, where variables are deterministically set immediately, correlation rounding delays assigning specific values to some variables until the reduced problem is small enough to solve via brute force. By focusing on pairwise interactions, it captures the problem's structure more effectively, especially in cases where interactions play a dominant role. This distinction is crucial for problems with strong quadratic terms, where correlations between variables carry more information than individual biases.

\subsection{Iterative QAOA (Iter-QAOA)} \label{IterQAOA_Subsec}

The key distinction between Iter-QAOA and RQAOA lies in their treatment of linear and quadratic terms within the optimisation process. As outlined in algorithm~\ref{rqaoa_procedure}, RQAOA evaluates the expectation values of both linear terms ($\langle Z_u \rangle$) and quadratic terms ($\langle Z_u Z_v \rangle$) at each recursive step for every node $u$ and edge $\{u, v\}$. It selects the term with the largest absolute value and either assigns $Z_u$ a value of $\pm 1$ or enforces a correlation or anti-correlation constraint between $Z_u$ and $Z_v$. In contrast, Iter-QAOA streamlines this process by utilising only the linear terms. It assigns values based solely on the signs of these linear terms, thereby eliminating the need to impose additional constraints.

At first glance, Iter-QAOA appears counterintuitive and should, in theory, underperform compared to RQAOA since it relies on partial information for decision-making. For instance, consider a scenario involving a node $u$ and an edge $\{u, v\}$ where the magnitudes of the linear term $\langle Z_u \rangle$ and the quadratic term $\langle Z_u Z_v \rangle$ are the largest, with the condition $0 < h_u < J_{uv}$. Here, the field strength at node $u$ is weaker than the coupling between nodes $u$ and $v$. Ideally, the optimal angle should satisfy $|\langle Z_u \rangle| < |\langle Z_u Z_v \rangle|$. In such cases, RQAOA would impose the anti-correlation constraint $Z_u = -Z_v$, thereby reducing the energy by $-J_{uv}$. Conversely, Iter-QAOA, relying solely on the linear term, would assign $Z_u = -1$, achieving an energy reduction of $-h_u$ and missing the more substantial energy decrease from the coupling term. 

However, our benchmark results indicate that Iter-QAOA$_1$ consistently outperforms RQAOA$_1$ across most instances where most nodes $u$ and edges $\{u, v\}$ satisfy $h_u < J_{uv}$. In our tests, edge weights were sampled from a Gaussian distribution with a mean of 50 and a variance of 30, while node weights were sampled from a Gaussian distribution with a mean of 40 and a variance of 20, both rounded to the nearest integer. It is important to note that these problem instances were generated \emph{a priori} and not specifically tailored for this benchmark. However, when $h_u > J_{uv}$, both RQAOA and Iter-QAOA would correctly prioritise the linear term, resulting in appropriate rounding decisions.

Finally, we note that Iter-QAOA closely resembles the iterative algorithms described in~\cite{finvzgar2024quantum, brady2023iterative}, which are themselves extensions of RQAOA. We do not claim novelty for Iter-QAOA over these prior works; rather, it was developed to address performance issues and to validate our optimal parameter-setting strategy.

\section*{Code and Data Availability}

The code used to generate the numerical results, together with the datasets, is publicly available in the accompanying \href{https://github.com/vijeycreative/NOpt_QAOA1_Tuning}{GitHub repository}.

\section*{Acknowledgements}

V.V.\ is thankful to Ahad N Zehmakan and Joshua Dai for the stimulating discussions and insightful comments. This research is supported by A*STAR Start Up Fund (KIMR220701aIMRSEF) and Q.InC Strategic Research and Translational Thrust. D.E.K.\ is supported by the National Research Foundation, Singapore, and the Agency for Science, Technology and Research (A*STAR), Singapore, under its Quantum Engineering Programme (NRF2021-QEP2-02-P03); A*STAR C230917003; and A*STAR under the Central Research Fund (CRF) Award for Use-Inspired Basic Research (UIBR) and the Quantum Innovation Centre (Q.InC) Strategic Research and Translational Thrust. E.B.\ is supported by the National Research Foundation of Korea (NRF) of Korea grant funded by the Ministry of Science and ICT (MSIT) (Grant No.\ NRF-2022R1C1C2006396). H.K.\ is supported by KIAS individual grant (No.\ CG085302) at the Korea Institute for Advanced Study.
E.B.\ and H.K.\ acknowledge support from the NRF of Korea (Grants No. 2023M3K5A109480511 and No. 2023M3K5A1094813). V.V. is supported by the National Quantum Scholarship Scheme.

\section*{Author Contributions}

All authors conceived the project. V.V., D.E.K., and S.M.A.\ contributed to the theoretical results. V.V.\ contributed to the numerical results. V.V.\ wrote the manuscript with input from all authors. All authors contributed to discussions regarding the results in this paper. D.E.K., P.K.L., and S.M.A. supervised the project.

\section*{Competing Interests}
All authors declare no financial or non-financial competing interests.

\bibliographystyle{quantum}
\bibliography{refs}

\begin{thebibliography}{100}

\bibitem{shor1999polynomial}
Peter~W Shor.
\newblock ``Polynomial-time algorithms for prime factorization and discrete logarithms on a quantum computer''.
\newblock \href{https://dx.doi.org/10.1137/S0036144598347011}{SIAM Review {\bf 41}, 303--332}~(1999).

\bibitem{grover1997quantum}
Lov~K Grover.
\newblock ``Quantum mechanics helps in searching for a needle in a haystack''.
\newblock \href{https://dx.doi.org/10.1103/PhysRevLett.79.325}{Physical Review Letters {\bf 79}, 325}~(1997).

\bibitem{preskill2018quantum}
John Preskill.
\newblock ``Quantum computing in the {NISQ} era and beyond''.
\newblock \href{https://dx.doi.org/10.22331/q-2018-08-06-79}{Quantum {\bf 2}, 79}~(2018).

\bibitem{cheng2023noisy}
Bin Cheng, Xiu-Hao Deng, Xiu Gu, Yu~He, Guangchong Hu, Peihao Huang, Jun Li, Ben-Chuan Lin, Dawei Lu, Yao Lu, Chudan Qiu, Hui Wang, Tao Xin, Shi Yu, Man-Hong Yung, Junkai Zeng, Song Zhang, Youpeng Zhong, Xinhua Peng, Franco Nori, and Dapeng Yu.
\newblock ``Noisy intermediate-scale quantum computers''.
\newblock \href{https://dx.doi.org/10.1007/s11467-022-1249-z}{Frontiers of Physics {\bf 18}, 21308}~(2023).

\bibitem{cerezo2021variational}
Marco Cerezo, Andrew Arrasmith, Ryan Babbush, Simon~C Benjamin, Suguru Endo, Keisuke Fujii, Jarrod~R McClean, Kosuke Mitarai, Xiao Yuan, Lukasz Cincio, et~al.
\newblock ``Variational quantum algorithms''.
\newblock \href{https://dx.doi.org/10.1038/s42254-021-00348-9}{Nature Reviews Physics {\bf 3}, 625--644}~(2021).

\bibitem{bharti2022noisy}
Kishor Bharti, Alba Cervera-Lierta, Thi~Ha Kyaw, Tobias Haug, Sumner Alperin-Lea, Abhinav Anand, Matthias Degroote, Hermanni Heimonen, Jakob~S Kottmann, Tim Menke, et~al.
\newblock ``Noisy intermediate-scale quantum algorithms''.
\newblock \href{https://dx.doi.org/10.1103/RevModPhys.94.015004}{Reviews of Modern Physics {\bf 94}, 015004}~(2022).

\bibitem{bittel2021training}
Lennart Bittel and Martin Kliesch.
\newblock ``Training variational quantum algorithms is $\text{NP-Hard}$''.
\newblock \href{https://dx.doi.org/10.1103/PhysRevLett.127.120502}{Physical Review Letters {\bf 127}, 120502}~(2021).

\bibitem{mcclean2018barren}
Jarrod~R McClean, Sergio Boixo, Vadim~N Smelyanskiy, Ryan Babbush, and Hartmut Neven.
\newblock ``Barren plateaus in quantum neural network training landscapes''.
\newblock \href{https://dx.doi.org/10.1038/s41467-018-07090-4}{Nature communications {\bf 9}, 4812}~(2018).

\bibitem{anschuetz2022quantum}
Eric~R Anschuetz and Bobak~T Kiani.
\newblock ``Quantum variational algorithms are swamped with traps''.
\newblock \href{https://dx.doi.org/10.1038/s41467-022-35364-5}{Nature Communications {\bf 13}, 7760}~(2022).

\bibitem{wang2021noise}
Samson Wang, Enrico Fontana, Marco Cerezo, Kunal Sharma, Akira Sone, Lukasz Cincio, and Patrick~J Coles.
\newblock ``Noise-induced barren plateaus in variational quantum algorithms''.
\newblock \href{https://dx.doi.org/10.1038/s41467-021-27045-6}{Nature communications {\bf 12}, 6961}~(2021).

\bibitem{cerezo2021cost}
Marco Cerezo, Akira Sone, Tyler Volkoff, Lukasz Cincio, and Patrick~J Coles.
\newblock ``Cost function dependent barren plateaus in shallow parametrized quantum circuits''.
\newblock \href{https://dx.doi.org/10.1038/s41467-021-21728-w}{Nature communications {\bf 12}, 1791}~(2021).

\bibitem{ortiz2021entanglement}
Carlos Ortiz~Marrero, M{\'a}ria Kieferov{\'a}, and Nathan Wiebe.
\newblock ``Entanglement-induced barren plateaus''.
\newblock \href{https://dx.doi.org/10.1103/PRXQuantum.2.040316}{PRX Quantum {\bf 2}, 040316}~(2021).

\bibitem{nemkov2024barren}
Nikita~A. Nemkov, Evgeniy~O. Kiktenko, and Aleksey~K. Fedorov.
\newblock ``Barren plateaus swamped with traps''.
\newblock \href{https://dx.doi.org/10.1103/PhysRevA.111.012441}{Physical Review A {\bf 111}, 012441}~(2025).

\bibitem{farhi2014quantum}
Edward Farhi, Jeffrey Goldstone, and Sam Gutmann.
\newblock ``A quantum approximate optimization algorithm''~(2014).
\newblock  \href{http://arxiv.org/abs/1411.4028}{arXiv:1411.4028}.

\bibitem{weidenfeller2022scaling}
Johannes Weidenfeller, Lucia~C. Valor, Julien Gacon, Caroline Tornow, Luciano Bello, Stefan Woerner, and Daniel~J. Egger.
\newblock ``Scaling of the quantum approximate optimization algorithm on superconducting qubit based hardware''.
\newblock \href{https://dx.doi.org/10.22331/q-2022-12-07-870}{{Quantum} {\bf 6}, 870}~(2022).

\bibitem{hadfield2018quantum}
Stuart~Andrew Hadfield.
\newblock ``Quantum algorithms for scientific computing and approximate optimization''.
\newblock \href{https://dx.doi.org/10.7916/D8X650C9}{PhD thesis}.
\newblock Columbia University.
\newblock ~(2018).

\bibitem{wang2018quantum}
Zhihui Wang, Stuart Hadfield, Zhang Jiang, and Eleanor~G Rieffel.
\newblock ``Quantum approximate optimization algorithm for {MaxCut}: A fermionic view''.
\newblock \href{https://dx.doi.org/10.1103/PhysRevA.97.022304}{Physical Review A {\bf 97}, 022304}~(2018).

\bibitem{ozaeta2022expectation}
Asier Ozaeta, Wim van Dam, and Peter~L McMahon.
\newblock ``Expectation values from the single-layer quantum approximate optimization algorithm on ising problems''.
\newblock \href{https://dx.doi.org/10.1088/2058-9565/ac9013}{Quantum Science and Technology {\bf 7}, 045036}~(2022).

\bibitem{vijendran2024expressive}
V~Vijendran, Aritra Das, Dax~Enshan Koh, Syed~M Assad, and Ping~Koy Lam.
\newblock ``An expressive ansatz for low-depth quantum approximate optimisation''.
\newblock \href{https://dx.doi.org/10.1088/2058-9565/ad200a}{Quantum Science and Technology {\bf 9}, 025010}~(2024).

\bibitem{basso2021quantum}
Joao Basso, Edward Farhi, Kunal Marwaha, Benjamin Villalonga, and Leo Zhou.
\newblock ``{The Quantum Approximate Optimization Algorithm at High Depth for MaxCut on Large-Girth Regular Graphs and the Sherrington-Kirkpatrick Model}''.
\newblock In Fran\c{c}ois Le~Gall and Tomoyuki Morimae, editors, 17th Conference on the Theory of Quantum Computation, Communication and Cryptography (TQC 2022).
\newblock \href{https://dx.doi.org/10.4230/LIPIcs.TQC.2022.7}{Volume 232 of Leibniz International Proceedings in Informatics (LIPIcs), pages 7:1--7:21}.
\newblock Dagstuhl, Germany~(2022). Schloss Dagstuhl -- Leibniz-Zentrum f{\"u}r Informatik.

\bibitem{ng2024analytical}
Truman~Yu Ng, Jin~Ming Koh, and Dax~Enshan Koh.
\newblock ``Analytical expressions for the quantum approximate optimization algorithm and its variants''~(2024).
\newblock  \href{http://arxiv.org/abs/2411.09745}{arXiv:2411.09745}.

\bibitem{fixed_angle_dreg}
Jonathan Wurtz and Danylo Lykov.
\newblock ``Fixed-angle conjectures for the quantum approximate optimization algorithm on regular {MaxCut} graphs''.
\newblock \href{https://dx.doi.org/10.1103/PhysRevA.104.052419}{Physical Review A {\bf 104}, 052419}~(2021).

\bibitem{brandao2018fixed}
Fernando G. S.~L. Brandao, Michael Broughton, Edward Farhi, Sam Gutmann, and Hartmut Neven.
\newblock ``For fixed control parameters the quantum approximate optimization algorithm's objective function value concentrates for typical instances''~(2018).
\newblock  \href{http://arxiv.org/abs/1812.04170}{arXiv:1812.04170}.

\bibitem{akshay2021parameter}
Vishwanathan Akshay, Daniil Rabinovich, Ernesto Campos, and Jacob Biamonte.
\newblock ``Parameter concentrations in quantum approximate optimization''.
\newblock \href{https://dx.doi.org/10.1103/PhysRevA.104.L010401}{Physical Review A {\bf 104}, L010401}~(2021).

\bibitem{katial2024instance}
Vivek Katial, Kate Smith-Miles, Charles Hill, and Lloyd Hollenberg.
\newblock ``On the instance dependence of parameter initialization for the quantum approximate optimization algorithm: insights via instance space analysis''.
\newblock \href{https://dx.doi.org/10.1287/ijoc.2024.0564}{INFORMS Journal on Computing {\bf 37}, 146--171}~(2025).

\bibitem{galda2021transferability}
Alexey Galda, Xiaoyuan Liu, Danylo Lykov, Yuri Alexeev, and Ilya Safro.
\newblock ``Transferability of optimal {QAOA} parameters between random graphs''.
\newblock In 2021 IEEE International Conference on Quantum Computing and Engineering (QCE).
\newblock \href{https://dx.doi.org/10.1109/QCE52317.2021.00034}{Pages 171--180}.
\newblock ~(2021).

\bibitem{galda2023similarity}
Alexey Galda, Eesh Gupta, Jose Falla, Xiaoyuan Liu, Danylo Lykov, Yuri Alexeev, and Ilya Safro.
\newblock ``Similarity-based parameter transferability in the quantum approximate optimization algorithm''.
\newblock \href{https://dx.doi.org/10.3389/frqst.2023.1200975}{Frontiers in Quantum Science and Technology {\bf 2}, 1200975}~(2023).

\bibitem{brundin2024symmetry}
Isak Lyngfelt and Laura Garc\'{\i}a-\'Alvarez.
\newblock ``Symmetry-informed transferability of optimal parameters in the quantum approximate optimization algorithm''.
\newblock \href{https://dx.doi.org/10.1103/PhysRevA.111.022418}{Physical Review A {\bf 111}, 022418}~(2025).

\bibitem{langfitt2023parameter}
Quinn Langfitt, Jose Falla, Ilya Safro, and Yuri Alexeev.
\newblock ``Parameter transferability in {QAOA} under noisy conditions''.
\newblock In 2023 IEEE International Conference on Quantum Computing and Engineering (QCE).
\newblock \href{https://dx.doi.org/10.1109/QCE57702.2023.10252}{Volume~02, pages 300--301}.
\newblock ~(2023).

\bibitem{sakai2024linearly}
Ryo Sakai, Hiromichi Matsuyama, Wai-Hong Tam, Yu~Yamashiro, and Keisuke Fujii.
\newblock ``Linearly simplified {QAOA} parameters and transferability''~(2024).
\newblock  \href{http://arxiv.org/abs/2405.00655}{arXiv:2405.00655}.

\bibitem{sack2021quantum}
Stefan~H Sack and Maksym Serbyn.
\newblock ``Quantum annealing initialization of the quantum approximate optimization algorithm''.
\newblock \href{https://dx.doi.org/10.22331/q-2021-07-01-491}{Quantum {\bf 5}, 491}~(2021).

\bibitem{liang2020investigating}
Daniel Liang, Li~Li, and Stefan Leichenauer.
\newblock ``Investigating quantum approximate optimization algorithms under bang-bang protocols''.
\newblock \href{https://dx.doi.org/10.1103/PhysRevResearch.2.033402}{Physical Review Research {\bf 2}, 033402}~(2020).

\bibitem{magann2022lyapunov}
Alicia~B Magann, Kenneth~M Rudinger, Matthew~D Grace, and Mohan Sarovar.
\newblock ``Lyapunov-control-inspired strategies for quantum combinatorial optimization''.
\newblock \href{https://dx.doi.org/10.1103/PhysRevA.106.062414}{Physical Review A {\bf 106}, 062414}~(2022).

\bibitem{zhou2020quantum}
Leo Zhou, Sheng-Tao Wang, Soonwon Choi, Hannes Pichler, and Mikhail~D Lukin.
\newblock ``Quantum approximate optimization algorithm: Performance, mechanism, and implementation on near-term devices''.
\newblock \href{https://dx.doi.org/10.1103/PhysRevX.10.021067}{Physical Review X {\bf 10}, 021067}~(2020).

\bibitem{ni2023multilevel}
Xiao-Hui Ni, Bin-Bin Cai, Hai-Ling Liu, Su-Juan Qin, Fei Gao, and Qiao-Yan Wen.
\newblock ``Multilevel leapfrogging initialization strategy for quantum approximate optimization algorithm''.
\newblock \href{https://dx.doi.org/https://doi.org/10.1002/qute.202300419}{Advanced Quantum Technologies {\bf 7}, 2300419}~(2024).

\bibitem{lee2023depth}
Xinwei Lee, Ningyi Xie, Dongsheng Cai, Yoshiyuki Saito, and Nobuyoshi Asai.
\newblock ``A depth-progressive initialization strategy for quantum approximate optimization algorithm''.
\newblock \href{https://dx.doi.org/10.3390/math11092176}{Mathematics {\bf 11}, 2176}~(2023).

\bibitem{shaydulin2023parameter}
Ruslan Shaydulin, Phillip~C Lotshaw, Jeffrey Larson, James Ostrowski, and Travis~S Humble.
\newblock ``Parameter transfer for quantum approximate optimization of weighted {MaxCut}''.
\newblock \href{https://dx.doi.org/10.1145/3584706}{ACM Transactions on Quantum Computing {\bf 4}, 1--15}~(2023).

\bibitem{stkechly2023connecting}
Micha{\l} St{\k{e}}ch{\l}y, Lanruo Gao, Boniface Yogendran, Enrico Fontana, and Manuel Rudolph.
\newblock ``Connecting the {H}amiltonian structure to the {QAOA} energy and {F}ourier landscape structure''~(2024).
\newblock  \href{http://arxiv.org/abs/2305.13594}{arXiv:2305.13594}.

\bibitem{alam2020accelerating}
Mahabubul Alam, Abdullah Ash-Saki, and Swaroop Ghosh.
\newblock ``Accelerating quantum approximate optimization algorithm using machine learning''.
\newblock In 2020 Design, Automation \& Test in Europe Conference \& Exhibition (DATE).
\newblock \href{https://dx.doi.org/10.23919/DATE48585.2020.9116348}{Pages 686--689}.
\newblock IEEE~(2020).

\bibitem{xie2023quantum}
Ningyi Xie, Xinwei Lee, Dongsheng Cai, Yoshiyuki Saito, and Nobuyoshi Asai.
\newblock ``Quantum approximate optimization algorithm parameter prediction using a convolutional neural network''.
\newblock \href{https://dx.doi.org/10.1088/1742-6596/2595/1/012001}{Journal of Physics: Conference Series {\bf 2595}, 012001}~(2023).

\bibitem{liang2024graph}
Zhiding Liang, Gang Liu, Zheyuan Liu, Jinglei Cheng, Tianyi Hao, Kecheng Liu, Hang Ren, Zhixin Song, Ji~Liu, Fanny Ye, and Yiyu Shi.
\newblock ``Invited: Graph learning for parameter prediction of quantum approximate optimization algorithm''.
\newblock In Proceedings of the 61st ACM/IEEE Design Automation Conference.
\newblock \href{https://dx.doi.org/10.1145/3649329.3663523}{DAC '24}New York, NY, USA~(2024). Association for Computing Machinery.

\bibitem{meng2024parameter}
Fanxu Meng, Xiangzhen Zhou, Pengcheng Zhu, and Yu~Luo.
\newblock ``Conditional diffusion-based parameter generation for quantum approximate optimization algorithm''.
\newblock \href{https://dx.doi.org/10.1140/epjqt/s40507-025-00397-4}{EPJ Quantum Technology {\bf 12}, 100}~(2025).

\bibitem{amosy2022iterativefreequantumapproximateoptimization}
Ohad Amosy, Tamuz Danzig, Ohad Lev, Ely Porat, Gal Chechik, and Adi Makmal.
\newblock ``Iteration-free quantum approximate optimization algorithm using neural networks''.
\newblock \href{https://dx.doi.org/10.1007/s42484-024-00159-y}{Quantum Machine Intelligence {\bf 6}, 38}~(2024).

\bibitem{khairy2019reinforcement}
Sami Khairy, Ruslan Shaydulin, Lukasz Cincio, Yuri Alexeev, and Prasanna Balaprakash.
\newblock ``Reinforcement-learning-based variational quantum circuits optimization for combinatorial problems''~(2019).
\newblock  \href{http://arxiv.org/abs/1911.04574}{arXiv:1911.04574}.

\bibitem{montanez2024transfer}
J.~A. Monta{\~{n}}ez-Barrera, Dennis Willsch, and Kristel Michielsen.
\newblock ``Transfer learning of optimal {QAOA} parameters in combinatorial optimization''.
\newblock \href{https://dx.doi.org/10.1007/s11128-025-04743-4}{Quantum Information Processing {\bf 24}, 129}~(2025).

\bibitem{falla2024graph}
Jose Falla, Quinn Langfitt, Yuri Alexeev, and Ilya Safro.
\newblock ``Graph representation learning for parameter transferability in quantum approximate optimization algorithm''.
\newblock \href{https://dx.doi.org/10.1007/s42484-024-00178-9}{Quantum Machine Intelligence {\bf 6}, 46}~(2024).

\bibitem{cheng2024quantum}
Lixue Cheng, Yu-Qin Chen, Shi-Xin Zhang, and Shengyu Zhang.
\newblock ``Quantum approximate optimization via learning-based adaptive optimization''.
\newblock \href{https://dx.doi.org/https://doi.org/10.1038/s42005-024-01577-x}{Communications Physics {\bf 7}, 83}~(2024).

\bibitem{bravyi2020obstacles}
Sergey Bravyi, Alexander Kliesch, Robert Koenig, and Eugene Tang.
\newblock ``Obstacles to variational quantum optimization from symmetry protection''.
\newblock \href{https://dx.doi.org/10.1103/PhysRevLett.125.260505}{Physical Review Letters {\bf 125}, 260505}~(2020).

\bibitem{kim2024distributed}
Seongmin Kim, Vincent~R. Pascuzzi, Zhihao Xu, Tengfei Luo, Eungkyu Lee, and In-Saeng Suh.
\newblock ``Distributed quantum approximate optimization algorithm on a quantum-centric supercomputing architecture''.
\newblock \href{https://dx.doi.org/10.1038/s41534-026-01283-2}{npj Quantum Information}~(2026).

\bibitem{yue2023local}
Bo~Yue, Shibei Xue, Yu~Pan, Min Jiang, and Daoyi Dong.
\newblock ``Local to global: A distributed quantum approximate optimization algorithm for pseudo-{B}oolean optimization problems''~(2023).
\newblock  \href{http://arxiv.org/abs/2310.05062}{arXiv:2310.05062}.

\bibitem{chen2024noise}
Kuan-Cheng Chen, Xiaotian Xu, Felix Burt, Chen-Yu Liu, Shang Yu, and Kin~K. Leung.
\newblock ``Noise-aware distributed quantum approximate optimization algorithm on near-term quantum hardware''.
\newblock In 2024 IEEE International Conference on Quantum Computing and Engineering (QCE).
\newblock \href{https://dx.doi.org/10.1109/QCE60285.2024.10268}{Volume~02, pages 144--149}.
\newblock ~(2024).

\bibitem{finvzgar2024quantum}
Jernej~Rudi Fin{\v{z}}gar, Aron Kerschbaumer, Martin~JA Schuetz, Christian~B Mendl, and Helmut~G Katzgraber.
\newblock ``Quantum-informed recursive optimization algorithms''.
\newblock \href{https://dx.doi.org/10.1103/PRXQuantum.5.020327}{PRX Quantum {\bf 5}, 020327}~(2024).

\bibitem{brady2023iterative}
Lucas~T Brady and Stuart Hadfield.
\newblock ``Iterative quantum algorithms for maximum independent set''.
\newblock \href{https://dx.doi.org/10.1103/PhysRevA.110.052435}{Physical Review A {\bf 110}, 052435}~(2024).

\bibitem{bach2024mlqaoa}
Bao Bach, Jose Falla, and Ilya Safro.
\newblock ``{MLQAOA: Graph Learning Accelerated Hybrid Quantum-Classical Multilevel QAOA}''.
\newblock In 2024 IEEE International Conference on Quantum Computing and Engineering (QCE).
\newblock \href{https://dx.doi.org/10.1109/QCE60285.2024.00072}{Volume~01, pages 1--12}.
\newblock ~(2024).

\bibitem{zhou2023qaoa}
Zeqiao Zhou, Yuxuan Du, Xinmei Tian, and Dacheng Tao.
\newblock ``{QAOA-in-QAOA: solving large-scale MaxCut problems on small quantum machines}''.
\newblock \href{https://dx.doi.org/10.1103/PhysRevApplied.19.024027}{Physical Review Applied {\bf 19}, 024027}~(2023).

\bibitem{ponce2023graph}
Moises Ponce, Rebekah Herrman, Phillip~C. Lotshaw, Sarah Powers, George Siopsis, Travis Humble, and James Ostrowski.
\newblock ``Graph decomposition techniques for solving combinatorial optimization problems with variational quantum algorithms''.
\newblock \href{https://dx.doi.org/10.1007/s11128-025-04675-z}{Quantum Information Processing {\bf 24}, 60}~(2025).

\bibitem{dupont2023quantum}
Maxime Dupont, Bram Evert, Mark~J Hodson, Bhuvanesh Sundar, Stephen Jeffrey, Yuki Yamaguchi, Dennis Feng, Filip~B Maciejewski, Stuart Hadfield, M~Sohaib Alam, et~al.
\newblock ``Quantum-enhanced greedy combinatorial optimization solver''.
\newblock \href{https://dx.doi.org/10.1126/sciadv.adi0487}{Science Advances {\bf 9}, eadi0487}~(2023).

\bibitem{dupont2024extending}
Maxime Dupont and Bhuvanesh Sundar.
\newblock ``Extending relax-and-round combinatorial optimization solvers with quantum correlations''.
\newblock \href{https://dx.doi.org/10.1103/PhysRevA.109.012429}{Physical Review A {\bf 109}, 012429}~(2024).

\bibitem{caha2022twisted}
Libor Caha, Alexander Kliesch, and Robert Koenig.
\newblock ``Twisted hybrid algorithms for combinatorial optimization''.
\newblock \href{https://dx.doi.org/10.1088/2058-9565/ac7f4f}{Quantum Science and Technology {\bf 7}, 045013}~(2022).

\bibitem{fischer2024role}
Victor Fischer, Maximilian Passek, Friedrich Wagner, Jernej~Rudi Finžgar, Lilly Palackal, and Christian~B. Mendl.
\newblock ``Quantum and classical correlations in shrinking algorithms for optimization''~(2024).
\newblock  \href{http://arxiv.org/abs/2404.17242}{arXiv:2404.17242}.

\bibitem{punnen2022quadratic}
Abraham~P Punnen.
\newblock ``The quadratic unconstrained binary optimization problem''.
\newblock \href{https://dx.doi.org/10.1007/978-3-031-04520-2}{Springer International Publishing {\bf 10}, 978--3}~(2022).

\bibitem{schuld2021effect}
Maria Schuld, Ryan Sweke, and Johannes~Jakob Meyer.
\newblock ``Effect of data encoding on the expressive power of variational quantum-machine-learning models''.
\newblock \href{https://dx.doi.org/10.1103/PhysRevA.103.032430}{Physical Review A {\bf 103}, 032430}~(2021).

\bibitem{gil2020input}
Francisco~Javier Gil~Vidal and Dirk~Oliver Theis.
\newblock ``Input redundancy for parameterized quantum circuits''.
\newblock \href{https://dx.doi.org/10.3389/fphy.2020.00297}{Frontiers in Physics {\bf 8}, 297}~(2020).

\bibitem{fontana2022spectral}
Enrico Fontana, Ivan Rungger, Ross Duncan, and Cristina Cîrstoiu.
\newblock ``Spectral analysis for noise diagnostics and filter-based digital error mitigation''~(2022).
\newblock  \href{http://arxiv.org/abs/2206.08811}{arXiv:2206.08811}.

\bibitem{fontana2022efficient}
Enrico Fontana, Ivan Rungger, Ross Duncan, and Cristina Cîrstoiu.
\newblock ``Efficient recovery of variational quantum algorithms landscapes using classical signal processing''~(2022).
\newblock  \href{http://arxiv.org/abs/2208.05958}{arXiv:2208.05958}.

\bibitem{nemkov2023fourier}
Nikita~A. Nemkov, Evgeniy~O. Kiktenko, and Aleksey~K. Fedorov.
\newblock ``Fourier expansion in variational quantum algorithms''.
\newblock \href{https://dx.doi.org/10.1103/PhysRevA.108.032406}{Physical Review A {\bf 108}, 032406}~(2023).

\bibitem{shannon1949communication}
Claude~Elwood Shannon.
\newblock ``Communication in the presence of noise''.
\newblock \href{https://dx.doi.org/10.1109/JRPROC.1949.232969}{Proceedings of the IRE {\bf 37}, 10--21}~(1949).

\bibitem{hao2023enabling}
Tianyi Hao, Kun Liu, and Swamit Tannu.
\newblock ``Enabling high performance debugging for variational quantum algorithms using compressed sensing''.
\newblock In Proceedings of the 50th Annual International Symposium on Computer Architecture.
\newblock \href{https://dx.doi.org/10.1145/3579371.3589044}{Pages 1--13}.
\newblock ~(2023).

\bibitem{mussig2024connecting}
Daniel Müssig, Markus Wappler, Steve Lenk, and Jörg Lässig.
\newblock ``Connecting the {H}amiltonian structure to the {QAOA} performance and energy landscape''.
\newblock In INFORMATIK 2024.
\newblock \href{https://dx.doi.org/10.18420/inf2024_48}{Pages 595--604}.
\newblock Gesellschaft für Informatik e.V., Bonn~(2024).

\bibitem{chavez2022complete}
Mauricio Ch{\'a}vez-Pichardo, Miguel~A Mart{\'\i}nez-Cruz, Alfredo Trejo-Mart{\'\i}nez, Daniel Mart{\'\i}nez-Carbajal, and Tanya Arenas-Resendiz.
\newblock ``A complete review of the general quartic equation with real coefficients and multiple roots''.
\newblock \href{https://dx.doi.org/10.3390/math10142377}{Mathematics {\bf 10}, 2377}~(2022).

\bibitem{boyd2006computing}
John~P Boyd.
\newblock ``Computing the zeros, maxima and inflection points of {C}hebyshev, {L}egendre and {F}ourier series: solving transcendental equations by spectral interpolation and polynomial rootfinding''.
\newblock \href{https://dx.doi.org/10.1007/s10665-006-9087-5}{Journal of Engineering Mathematics {\bf 56}, 203--219}~(2006).

\bibitem{boyd2013comparison}
John~P. Boyd.
\newblock ``A comparison of companion matrix methods to find roots of a trigonometric polynomial''.
\newblock \href{https://dx.doi.org/https://doi.org/10.1016/j.jcp.2013.03.022}{Journal of Computational Physics {\bf 246}, 96--112}~(2013).

\bibitem{kalantari2022characterization}
Bahman Kalantari, Fedor Andreev, and Chun Lau.
\newblock ``Characterization of local optima of polynomial modulus over a disc''.
\newblock \href{https://dx.doi.org/10.1007/s11075-021-01208-4}{Numerical Algorithms {\bf 90}, 773--787}~(2022).

\bibitem{hansen1979global}
Eldon~R Hansen.
\newblock ``Global optimization using interval analysis: the one-dimensional case''.
\newblock \href{https://dx.doi.org/10.1007/BF00933139}{Journal of Optimization Theory and Applications {\bf 29}, 331--344}~(1979).

\bibitem{hansen2003global}
Eldon Hansen and G~William Walster.
\newblock ``Global optimization using interval analysis: revised and expanded''.
\newblock \href{https://dx.doi.org/10.1201/9780203026922}{Volume 264}.
\newblock CRC press. ~(2003).

\bibitem{green1999calculating}
J.~J. Green.
\newblock ``Calculating the maximum modulus of a polynomial using {S}teckin's lemma''.
\newblock \href{https://dx.doi.org/10.1137/S0036142997331335}{SIAM Journal on Numerical Analysis {\bf 36}, 1022--1029}~(1999).

\bibitem{de2009finding}
Gabriel De~La~Chevrotiere.
\newblock ``Finding the maximum modulus of a polynomial on the polydisk using a generalization of {S}teckin's lemma''.
\newblock \href{https://dx.doi.org/10.1137/09S010460}{SIAM Undergraduate Research Online {\bf 2}, 2--2}~(2009).

\bibitem{emeliyanenko2012complexity}
Pavel Emeliyanenko and Michael Sagraloff.
\newblock ``On the complexity of solving a bivariate polynomial system''.
\newblock In Proceedings of the 37th International Symposium on Symbolic and Algebraic Computation.
\newblock \href{https://dx.doi.org/10.1145/2442829.2442854}{Pages 154--161}.
\newblock ~(2012).

\bibitem{plestenjak2016roots}
Bor Plestenjak and Michiel~E Hochstenbach.
\newblock ``Roots of bivariate polynomial systems via determinantal representations''.
\newblock \href{https://dx.doi.org/10.1137/140983847}{SIAM Journal on Scientific Computing {\bf 38}, A765--A788}~(2016).

\bibitem{boulebnane2023peptide}
Sami Boulebnane, Xavier Lucas, Agnes Meyder, Stanislaw Adaszewski, and Ashley Montanaro.
\newblock ``Peptide conformational sampling using the quantum approximate optimization algorithm''.
\newblock \href{https://dx.doi.org/10.1038/s41534-023-00733-5}{npj Quantum Information {\bf 9}, 70}~(2023).

\bibitem{brandhofer2022benchmarking}
Sebastian Brandhofer, Daniel Braun, Vanessa Dehn, Gerhard Hellstern, Matthias H{\"u}ls, Yanjun Ji, Ilia Polian, Amandeep~Singh Bhatia, and Thomas Wellens.
\newblock ``Benchmarking the performance of portfolio optimization with {QAOA}''.
\newblock \href{https://dx.doi.org/10.1007/s11128-022-03766-5}{Quantum Information Processing {\bf 22}, 25}~(2022).

\bibitem{boulebnane2021predicting}
Sami Boulebnane and Ashley Montanaro.
\newblock ``Predicting parameters for the quantum approximate optimization algorithm for max-cut from the infinite-size limit''~(2021).
\newblock  \href{http://arxiv.org/abs/2110.10685}{arXiv:2110.10685}.

\bibitem{sureshbabu2024parameter}
Shree~Hari Sureshbabu, Dylan Herman, Ruslan Shaydulin, Joao Basso, Shouvanik Chakrabarti, Yue Sun, and Marco Pistoia.
\newblock ``Parameter setting in quantum approximate optimization of weighted problems''.
\newblock \href{https://dx.doi.org/10.22331/q-2024-01-18-1231}{Quantum {\bf 8}, 1231}~(2024).

\bibitem{erdds1959random}
P.~Erd\"{o}s and A.~R\'{e}nyi.
\newblock ``On random graphs {I}''.
\newblock \href{https://dx.doi.org/10.5486/PMD.1959.6.3-4.12}{Publ. math. debrecen {\bf 6}, 290--297}~(1959).

\bibitem{gilbert1959random}
E.~N. Gilbert.
\newblock ``{Random Graphs}''.
\newblock \href{https://dx.doi.org/10.1214/aoms/1177706098}{The Annals of Mathematical Statistics {\bf 30}, 1141 -- 1144}~(1959).

\bibitem{gurobi}
{Gurobi Optimization, LLC}.
\newblock ``{Gurobi Optimizer Reference Manual}''.
\newblock \url{https://www.gurobi.com}~(2022).

\bibitem{goemans1995improved}
Michel~X Goemans and David~P Williamson.
\newblock ``Improved approximation algorithms for maximum cut and satisfiability problems using semidefinite programming''.
\newblock \href{https://dx.doi.org/10.1145/227683.227684}{Journal of the ACM (JACM) {\bf 42}, 1115--1145}~(1995).

\bibitem{bravyi2022hybrid}
Sergey Bravyi, Alexander Kliesch, Robert Koenig, and Eugene Tang.
\newblock ``Hybrid quantum-classical algorithms for approximate graph coloring''.
\newblock \href{https://dx.doi.org/10.22331/q-2022-03-30-678}{Quantum {\bf 6}, 678}~(2022).

\bibitem{patel2024reinforcement}
Yash~J Patel, Sofiene Jerbi, Thomas B{\"a}ck, and Vedran Dunjko.
\newblock ``Reinforcement learning assisted recursive {QAOA}''.
\newblock \href{https://dx.doi.org/10.1140/epjqt/s40507-023-00214-w}{EPJ Quantum Technology {\bf 11}, 6}~(2024).

\bibitem{vijendran2025classical}
V~Vijendran, Dax~Enshan Koh, Ping~Koy Lam, and Syed~M Assad.
\newblock ``Classical and quantum heuristics for the binary paint shop problem''~(2025).
\newblock  \href{http://arxiv.org/abs/2509.15294}{arXiv:2509.15294}.

\bibitem{bravyi2021classical}
Sergey Bravyi, David Gosset, Daniel Grier, and Luke Schaeffer.
\newblock ``Classical algorithms for forrelation''~(2021).
\newblock  \href{http://arxiv.org/abs/2102.06963}{arXiv:2102.06963}.

\bibitem{bae2024recursive}
Eunok Bae and Soojoon Lee.
\newblock ``{Recursive QAOA outperforms the original QAOA for the MAX-CUT problem on complete graphs}''.
\newblock \href{https://dx.doi.org/10.1007/s11128-024-04286-0}{Quantum Information Processing {\bf 23}, 78}~(2024).

\bibitem{bae2024improvedr}
Eunok Bae, Hyukjoon Kwon, V~Vijendran, and Soojoon Lee.
\newblock ``{Modified recursive QAOA for exact MAX-CUT solutions on bipartite graphs: closing the gap beyond QAOA limit}''.
\newblock \href{https://dx.doi.org/10.1088/1751-8121/ae20d7}{Journal of Physics A: Mathematical and Theoretical {\bf 58}, 485304}~(2025).

\bibitem{moussa2022unsupervised}
Charles Moussa, Hao Wang, Thomas B{\"a}ck, and Vedran Dunjko.
\newblock ``Unsupervised strategies for identifying optimal parameters in quantum approximate optimization algorithm''.
\newblock \href{https://dx.doi.org/10.1140/epjqt/s40507-022-00131-4}{EPJ Quantum Technology {\bf 9}, 11}~(2022).

\bibitem{thelen2024approximating}
Simon Thelen, Hila Safi, and Wolfgang Mauerer.
\newblock ``{Approximating under the Influence of Quantum Noise and Compute Power}''.
\newblock In 2024 IEEE International Conference on Quantum Computing and Engineering (QCE).
\newblock \href{https://dx.doi.org/10.1109/QCE60285.2024.10291}{Pages 274--279}.
\newblock Los Alamitos, CA, USA~(2024). IEEE Computer Society.

\bibitem{egger2021warm}
Daniel~J Egger, Jakub Mare{\v{c}}ek, and Stefan Woerner.
\newblock ``Warm-starting quantum optimization''.
\newblock \href{https://dx.doi.org/10.22331/q-2021-06-17-479}{Quantum {\bf 5}, 479}~(2021).

\bibitem{williamson2011design}
David~P Williamson and David~B Shmoys.
\newblock ``The design of approximation algorithms''.
\newblock \href{https://dx.doi.org/10.1017/CBO9780511921735}{Cambridge University Press}. ~(2011).

\bibitem{lau2011iterative}
Lap~Chi Lau, Ramamoorthi Ravi, and Mohit Singh.
\newblock ``Iterative methods in combinatorial optimization''.
\newblock \href{https://dx.doi.org/10.1017/CBO9780511977152}{Volume~46}.
\newblock Cambridge University Press. ~(2011).

\bibitem{freedman2013quantum}
Michael~H Freedman and Matthew~B Hastings.
\newblock ``Quantum systems on non-$k$-hyperfinite complexes: A generalization of classical statistical mechanics on expander graphs''.
\newblock \href{https://dx.doi.org/10.26421/QIC14.1-2-9}{Quantum Information \& Computation {\bf 14}, 144--180}~(2014).

\bibitem{farhi2020quantum1}
Edward Farhi, David Gamarnik, and Sam Gutmann.
\newblock ``The quantum approximate optimization algorithm needs to see the whole graph: A typical case''~(2020).
\newblock  \href{http://arxiv.org/abs/2004.09002}{arXiv:2004.09002}.

\bibitem{farhi2020quantum2}
Edward Farhi, David Gamarnik, and Sam Gutmann.
\newblock ``The quantum approximate optimization algorithm needs to see the whole graph: Worst case examples''~(2020).
\newblock  \href{http://arxiv.org/abs/2005.08747}{arXiv:2005.08747}.

\bibitem{basso2022performance}
Joao Basso, David Gamarnik, Song Mei, and Leo Zhou.
\newblock ``Performance and limitations of the {QAOA} at constant levels on large sparse hypergraphs and spin glass models''.
\newblock In 2022 IEEE 63rd Annual Symposium on Foundations of Computer Science (FOCS).
\newblock \href{https://dx.doi.org/10.1109/FOCS54457.2022.00039}{Pages 335--343}.
\newblock IEEE~(2022).

\bibitem{anshu2023concentration}
Anurag Anshu and Tony Metger.
\newblock ``Concentration bounds for quantum states and limitations on the {QAOA} from polynomial approximations''.
\newblock \href{https://dx.doi.org/10.22331/q-2023-05-11-999}{Quantum {\bf 7}, 999}~(2023).

\bibitem{barak2021classical}
Boaz Barak and Kunal Marwaha.
\newblock ``{Classical Algorithms and Quantum Limitations for Maximum Cut on High-Girth Graphs}''.
\newblock In Mark Braverman, editor, 13th Innovations in Theoretical Computer Science Conference (ITCS 2022).
\newblock \href{https://dx.doi.org/10.4230/LIPIcs.ITCS.2022.14}{Volume 215 of Leibniz International Proceedings in Informatics (LIPIcs), pages 14:1--14:21}.
\newblock Dagstuhl, Germany~(2022). Schloss Dagstuhl -- Leibniz-Zentrum f{\"u}r Informatik.

\bibitem{goh2024overlap}
Mark Goh.
\newblock ``The overlap gap property limits limit swapping in the {QAOA}''.
\newblock \href{https://dx.doi.org/10.2478/qic-2025-0018}{Quantum Information \& Computation {\bf 25}, 329--343}~(2025).

\bibitem{akshay2020reachability}
Vishwanathan Akshay, Hariphan Philathong, Mauro~ES Morales, and Jacob~D Biamonte.
\newblock ``Reachability deficits in quantum approximate optimization''.
\newblock \href{https://dx.doi.org/10.1103/PhysRevLett.124.090504}{Physical Review Letters {\bf 124}, 090504}~(2020).

\bibitem{akshay2021reachability}
Vishwanathan Akshay, H~Philathong, Igor Zacharov, and J~Biamonte.
\newblock ``Reachability deficits in quantum approximate optimization of graph problems''.
\newblock \href{https://dx.doi.org/10.22331/q-2021-08-30-532}{Quantum {\bf 5}, 532}~(2021).

\bibitem{chen2023local}
Antares Chen, Neng Huang, and Kunal Marwaha.
\newblock ``{Local algorithms and the failure of log-depth quantum advantage on sparse random CSPs}''~(2023).
\newblock  \href{http://arxiv.org/abs/2310.01563}{arXiv:2310.01563}.

\bibitem{chou2021limitations}
Chi-Ning Chou, Peter~J. Love, Juspreet~Singh Sandhu, and Jonathan Shi.
\newblock ``{Limitations of Local Quantum Algorithms on Random MAX-k-XOR and Beyond}''.
\newblock In Miko{\l}aj Boja\'{n}czyk, Emanuela Merelli, and David~P. Woodruff, editors, 49th International Colloquium on Automata, Languages, and Programming (ICALP 2022).
\newblock \href{https://dx.doi.org/10.4230/LIPIcs.ICALP.2022.41}{Volume 229 of Leibniz International Proceedings in Informatics (LIPIcs), pages 41:1--41:20}.
\newblock Dagstuhl, Germany~(2022). Schloss Dagstuhl -- Leibniz-Zentrum f{\"u}r Informatik.

\bibitem{marwaha2022bounds}
Kunal Marwaha and Stuart Hadfield.
\newblock ``Bounds on approximating {M}ax $k${XOR} with quantum and classical local algorithms''.
\newblock \href{https://dx.doi.org/10.22331/q-2022-07-07-757}{Quantum {\bf 6}, 757}~(2022).

\bibitem{Edwards1979}
R.~E. Edwards.
\newblock ``Fourier series: A modern introduction volume 1''.
\newblock \href{https://dx.doi.org/10.1007/978-1-4612-6208-4}{Volume~2 of Graduate Texts in Mathematics, pages XII, 228}.
\newblock Springer-Verlag New York, NY. ~(1979).
\newblock 2 edition.

\end{thebibliography}

\clearpage
\onecolumn
\appendix

\begin{center}
{\Large{\sc{Supplementary Material}}} 
\end{center}

\section{Fourier Insights into the Expressivity and Landscape Limitations of \texorpdfstring{QAOA$_1$}{QAOA1}} \label{qaoa_fourier_app}

In \cref{max_freq_subsub}, we derived analytical expressions to compute the maximum frequencies achievable by QAOA$_1$ for arbitrary Ising models. These expressions were utilised in \cref{char_ener_subsub} to characterise the energy landscape and estimate the minimum sampling resolution required for accurately reconstructing the optimisation landscape. In this section, we compare the maximum frequencies of the Hamiltonian $H_P$ with those accessible via QAOA$_1$ for two sets of field-free Ising models: (i) 4-regular graphs with node counts ranging from 5 to 20 and (ii) $D$-regular graphs with 20 nodes, where $D$ varies from 2 to 19. For each case, 100 random graphs were generated with edge weights drawn from a Gaussian distribution (mean 50, variance 25). To facilitate this comparison, we employed specific computational methods to determine the maximum frequencies of both QAOA$_1$ and $H_P$.

The maximum frequency of QAOA$_1$ was determined analytically using \cref{qaoa_2_max_freq_col}. In contrast, the maximum frequency of $H_P$ was obtained by enumerating all possible solutions and calculating the difference between the highest and lowest costs. This is because constructing the matrix representation of $H_P$ for $n$ variables (qubits) becomes impractical for large $n$ due to its exponential scaling as $2^n \times 2^n$. Specifically, for $n = 20$, we leverage the diagonal structure of $H_P$ in the computational basis, where the diagonal entries correspond to the costs of all possible configurations in $\{0,1\}^n$. Consequently, we enumerate all configurations in $\{0,1\}^{20}$, compute their respective costs, and determine the maximum frequency as the difference between the minimum and maximum costs.

\begin{figure}[hptb]
    \centering
    \subfloat[Max Frequency vs. \# of Nodes]{
        \includegraphics[height=0.285\textheight]{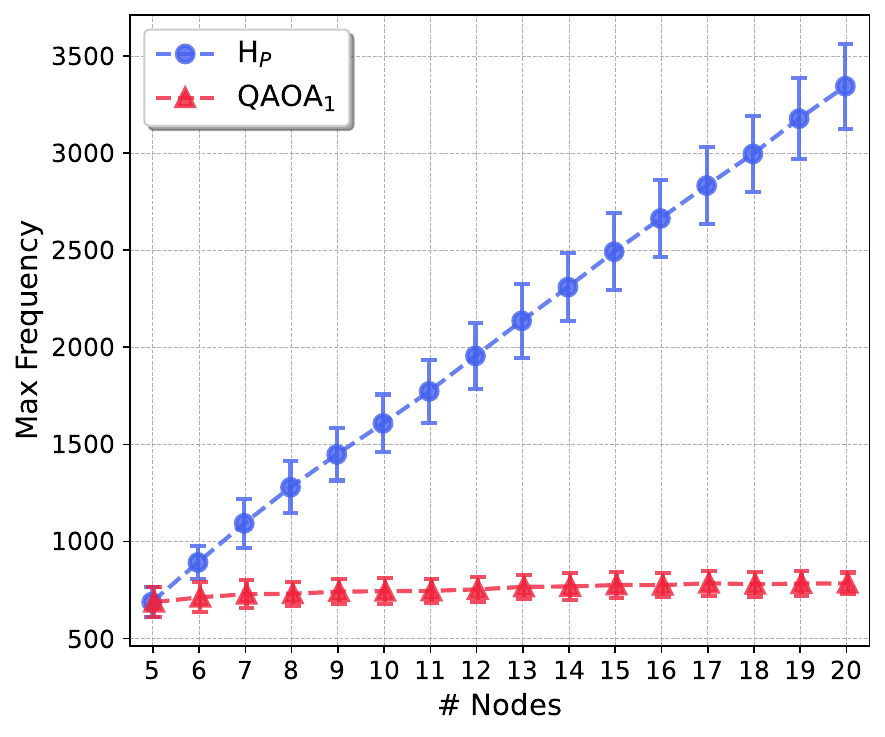}
        \label{max_freq_comp_fig_plot_a}
    }
    \subfloat[Max Frequency vs. Degree ($D$)]{
        \includegraphics[height=0.285\textheight]{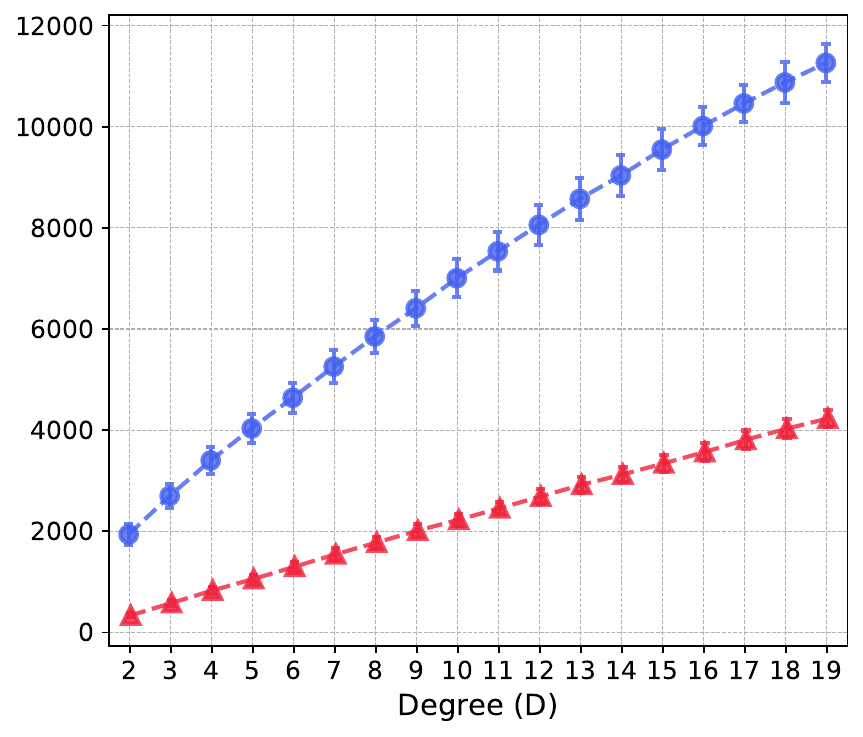}
        \label{max_freq_comp_fig_plot_b}
    }
    \caption[Maximum Frequency Comparison between Problem Hamiltonian and QAOA$_1$]{\textbf{Maximum Frequency Comparison between Problem Hamiltonian and QAOA$_1$.} \justifying The plots show the average maximum frequencies for 100 random graphs. \textbf{(a)} Maximum frequencies for 4-regular graphs with varying node counts from 5 to 20. \textbf{(b)} Maximum frequencies for 20-node $D$-regular graphs with degrees $D$ ranging from 2 to 19. Error bars represent the standard deviation across the random instances.}
    \label{max_freq_comp_fig}
\end{figure}

As shown in \cref{max_freq_comp_fig}, the maximum frequency attainable by QAOA$_1$ is significantly lower than that of $H_P$, primarily due to the locality constraints inherent in QAOA. In a $p$-level implementation, each qubit interacts only with others within $p$ edges; for $p = 1$, this means interactions are limited to immediate neighbours. This limitation is evident in \cref{max_freq_comp_fig_plot_a}, where the maximum reachable frequency for 4-regular graphs remains nearly constant as the number of nodes increases. In smaller graphs, qubits can effectively access the entire graph, but in larger graphs with a fixed degree $d = 4$, qubits become increasingly isolated, restricting their influence. Conversely, \cref{max_freq_comp_fig_plot_b} demonstrates that for 20-node $D$-regular graphs, the maximum frequency increases linearly with $D$. As the degree grows, each qubit interacts with more neighbours, increasing the range of local information incorporated into the QAOA$_1$ expectation value and thereby yielding higher maximum frequencies. %enabling better-informed decisions and resulting in higher maximum frequencies.

Despite these improvements, even in complete graphs ($d = 19$), where every qubit is fully connected to every other qubit, a gap persists between the maximum frequencies achievable by QAOA$_1$ and those of $H_P$. This gap arises not only from the locality constraints but also from fundamental limitations in QAOA's Fourier series representation at level 1, where the finite number of terms restricts the circuit's ability to capture complex, high-frequency features of the energy landscape. Additionally, concepts such as the No Low Trivial Energy States (NLTS)~\cite{freedman2013quantum} provide further insights into this persistent frequency gap. It is important to note that the performance limitations of QAOA have been extensively explored in prior studies, which have focused on factors like locality constraints, the Overlap Gap Property (OGP)~\cite{farhi2020quantum1, farhi2020quantum2, basso2022performance, anshu2023concentration, barak2021classical, goh2024overlap}, reachability deficits related to constraint-to-variable ratios~\cite{akshay2020reachability, akshay2021reachability, chen2023local}, and symmetry protection~\cite{bravyi2020obstacles, chou2021limitations, marwaha2022bounds}. The novelty of this analysis, as illustrated in \cref{max_freq_comp_fig}, lies in interpreting QAOA’s expressivity and optimisation-landscape limitations through the lens of Fourier theory.

\section{Some Useful Lemmas}

Before deriving the analytical formulas, we first state and prove some key identities that will be used in \cref{qaoa_2_field_max_freq_proof}. In particular, for functions that involve a product of cosine terms, each with its own fundamental frequency, the following lemma provides a method to compute the maximum frequency of the entire function.
\begin{lemma}
    Let $h(\mathbf{w})$ be a product of cosine functions defined by
    \begin{equation}
    h(\mathbf{w}) = \prod_{i = 1}^n \cos (w_i \theta),
    \end{equation}
    where $\mathbf{w}=\left(w_1, \ldots, w_n\right)$ is a vector of real coefficients. The maximum angular frequency of $h(\mathbf{w})$ is given by
    \begin{equation}
    \omega_{\max} \left[ h(\mathbf{w}) \right] = \sum_{i = 1}^{n} |w_i|.
    \end{equation}
    \label{cos_freq_lemma}
\end{lemma}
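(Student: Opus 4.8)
The plan is to prove this by expanding the product of cosines using the product-to-sum identities, which converts the product into a sum of cosine terms whose frequencies are signed combinations of the individual $w_i$. The maximum frequency then corresponds to the term where all signs align to maximize the total frequency magnitude.

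First I would recall the fundamental product-to-sum identity for two cosines, $\cos(a)\cos(b) = \tfrac{1}{2}[\cos(a+b) + \cos(a-b)]$, and then apply it inductively. The key observation is that when we fully expand
\begin{equation}
    h(\mathbf{w}) = \prod_{i=1}^n \cos(w_i \theta),
\end{equation}
we obtain a sum over all sign choices $\boldsymbol{\epsilon} = (\epsilon_1, \ldots, \epsilon_n) \in \{-1, +1\}^n$:
\begin{equation}
    h(\mathbf{w}) = \frac{1}{2^n} \sum_{\boldsymbol{\epsilon} \in \{-1,+1\}^n} \cos\left( \left( \sum_{i=1}^n \epsilon_i w_i \right) \theta \right).
\end{equation}
This is a finite Fourier cosine series in $\theta$, where each term has angular frequency $\left| \sum_{i=1}^n \epsilon_i w_i \right|$.

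Next I would argue that the maximum frequency present is $\max_{\boldsymbol{\epsilon}} \left| \sum_{i=1}^n \epsilon_i w_i \right|$, and that this maximum equals $\sum_{i=1}^n |w_i|$. The inequality $\left| \sum_i \epsilon_i w_i \right| \leq \sum_i |w_i|$ follows from the triangle inequality for any sign vector, and equality is achieved by choosing $\epsilon_i = \operatorname{sign}(w_i)$ (with $\epsilon_i$ arbitrary when $w_i = 0$), which makes every term $\epsilon_i w_i = |w_i|$ nonnegative. Thus the frequency $\sum_i |w_i|$ is attained, establishing that $\omega_{\max}[h(\mathbf{w})] = \sum_{i=1}^n |w_i|$.

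The one subtlety I would address carefully is that the \emph{maximal-frequency} coefficient does not accidentally vanish due to cancellation with another sign vector producing the same frequency. Since $\cos$ is even, the sign vector $\boldsymbol{\epsilon} = \operatorname{sign}(\mathbf{w})$ and its negation both yield the frequency $\sum_i |w_i|$, and their coefficients add (rather than cancel) because both contribute $+\tfrac{1}{2^n}\cos\!\big((\sum_i |w_i|)\theta\big)$; hence the coefficient of the top frequency is at least $2/2^n > 0$ (assuming not all $w_i$ vanish), so it genuinely appears in the spectrum. This confirms that the stated maximum is indeed realized and is the main (though mild) obstacle in making the argument rigorous rather than merely heuristic. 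When all $w_i = 0$ the function is constant with maximum frequency $0 = \sum_i |w_i|$, consistent with the claim.
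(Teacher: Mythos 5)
Your proof is correct and follows essentially the same route as the paper's: expand the product via the product-to-sum identity into a sum over sign vectors $\boldsymbol{\epsilon} \in \{-1,+1\}^n$, then observe that the frequency $\sum_i |w_i|$ is attained by choosing $\epsilon_i = \operatorname{sign}(w_i)$. Your additional check that the top-frequency coefficient cannot cancel (since $\boldsymbol{\epsilon}$ and $-\boldsymbol{\epsilon}$ contribute with the same sign because cosine is even) is a small rigor improvement over the paper's argument, which asserts attainment without addressing possible cancellation.
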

\begin{proof}
We start by expressing the product of cosine functions $h(\mathbf{w})=\prod_{i=1}^n \cos \left(w_i \theta\right)$ using the product-to-sum identity for cosine functions. The product-to-sum identity for $n$ cosine terms is given by:
\begin{equation}
    \prod_{k=1}^n \cos \theta_k=\frac{1}{2^n} \sum_{e \in\{-1,1\}^n} \cos \left(\sum_{i=1}^n e_i \theta_i\right),
\end{equation}
where $\mathbf{e}=\left(e_1, \ldots, e_n\right)$ is an $n$-tuple with each $e_i$ taking values in $\{-1,1\}$.
Applying this identity to $h(\mathbf{w})$, we have:
\begin{equation}
    h(\mathbf{w})=\prod_{i=1}^n \cos \left(w_i \theta\right)=\frac{1}{2^n} \sum_{e \in\{-1,1\}^n} \cos \left(\sum_{i=1}^n e_i w_i \theta\right) .
\end{equation}
The maximum angular frequency component of $h(\mathbf{w})$ corresponds to the term in the sum where the argument of the cosine function is maximised. This maximum occurs when each $e_i$ is chosen to match the sign of the corresponding $w_i$, i.e., $e_i=\operatorname{sgn}\left(w_i\right)$. Therefore, the term with the maximum angular frequency is:
\begin{equation}
    \cos \left(\sum_{i=1}^n \operatorname{sgn}\left(w_i\right) w_i \theta\right)=\cos \left(\sum_{i=1}^n\left|w_i\right| \theta\right) .
\end{equation}
Thus, the maximum angular frequency of $h(\mathbf{w})$ is given by $\sum_{i=1}^n\left|w_i\right|$.
\end{proof}
The following lemma extends lemma \ref{cos_freq_lemma} to compute the maximum angular frequency of the product of a single sine function and $n$ cosine functions, each with its own fundamental angular frequency.
\begin{lemma}
    Let $g(\mathbf{w})$ be a product of one sine function and $n$ cosine functions defined by
    \begin{equation}
    g(\mathbf{w}) = \sin(w_0 \theta)\prod_{i = 1}^n \cos (w_i \theta),
    \end{equation}
    where $\mathbf{w}=\left(w_0,w_1, \ldots, w_n\right)$ is a vector of real coefficients. The maximum angular frequency of $g(\mathbf{w})$ is given by
    \begin{equation}
    \omega_{\max} \left[ g(\mathbf{w}) \right] = |w_0| + \sum_{i = 1}^{n} |w_i|.
    \end{equation}
    \label{sincos_freq_lemma}
\end{lemma}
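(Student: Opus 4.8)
The plan is to follow the same strategy as the proof of \cref{cos_freq_lemma}, extending the product-to-sum expansion to accommodate the additional sine factor. First I would expand the cosine product exactly as before, writing
\[
\prod_{i=1}^n \cos(w_i\theta) = \frac{1}{2^n}\sum_{\mathbf{e}\in\{-1,1\}^n}\cos\!\left(\sum_{i=1}^n e_i w_i\,\theta\right),
\]
and then multiply through by $\sin(w_0\theta)$. Applying the elementary identity $\sin\alpha\cos\beta = \tfrac12[\sin(\alpha+\beta)+\sin(\alpha-\beta)]$ to each summand converts $g(\mathbf{w})$ into a finite sum of pure sine terms,
\[
g(\mathbf{w}) = \frac{1}{2^{n+1}}\sum_{\mathbf{e}\in\{-1,1\}^n}\left[\sin\!\Big(\big(w_0+\sum_i e_i w_i\big)\theta\Big)+\sin\!\Big(\big(w_0-\sum_i e_i w_i\big)\theta\Big)\right],
\]
each carrying the nonzero coefficient $2^{-(n+1)}$.

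The second step is to read off the largest angular frequency, which is $\max_{\mathbf{e}}\max_{\pm}\big|w_0 \pm \sum_i e_i w_i\big|$. Since the inner sum $\sum_i e_i w_i$ attains maximal modulus $\sum_i |w_i|$ precisely when $e_i = \operatorname{sgn}(w_i)$, and since the outer sign can be chosen to agree with the sign of $w_0$, the maximum evaluates to $|w_0| + \sum_{i=1}^n|w_i|$, as claimed. Concretely, if $w_0 \ge 0$ one takes $e_i = \operatorname{sgn}(w_i)$ with the $+$ branch, while if $w_0 < 0$ one takes $e_i = -\operatorname{sgn}(w_i)$; in either case the argument has modulus $|w_0| + \sum_i |w_i|$.

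The only point requiring care---and the closest thing to an obstacle---is verifying that this extreme frequency does not cancel in the sum. This is immediate: the two sign configurations that realise the top frequency (the $+$ branch with $e_i=\operatorname{sgn}(w_i)$ and the $-$ branch with $e_i=-\operatorname{sgn}(w_i)$, assuming $w_0 \ge 0$) both produce the argument $(|w_0|+\sum_i|w_i|)\theta$ with the common positive coefficient $2^{-(n+1)}$, so they reinforce rather than annihilate one another, and a short sign-bookkeeping check shows no term at the negated frequency appears when $w_0 \neq 0$. Hence the coefficient of $\sin\big((|w_0|+\sum_i|w_i|)\theta\big)$ is strictly positive, establishing $\omega_{\max}[g(\mathbf{w})] = |w_0| + \sum_{i=1}^n |w_i|$. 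I would expect the entire argument to be routine once the product-to-sum reduction is in place, with the degenerate case $w_0=0$ (where $g\equiv 0$) noted but set aside.
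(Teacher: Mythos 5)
Your proof is correct and follows essentially the same route as the paper's: a product-to-sum expansion of the cosine product followed by the $\sin\alpha\cos\beta$ identity, with the extreme frequency realised by $e_i=\operatorname{sgn}(w_i)$ and the choice of branch matching the sign of $w_0$. If anything, your version is slightly more careful, since you expand the full product into $2^{n+1}$ sine terms and explicitly verify that the two configurations attaining the top frequency reinforce rather than cancel — a point the paper's proof (which only multiplies $\sin(w_0\theta)$ against the single dominant cosine term supplied by the preceding lemma) leaves implicit.
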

\begin{proof}
    From lemma \ref{cos_freq_lemma}, we know that the maximum angular frequency of the product $\prod_{i=1}^n \cos \left(w_i \theta\right)$ is $\sum_{i=1}^n\left|w_i\right|$, corresponding to the term $\cos \left(\left(\sum_{i=1}^n\left|w_i\right|\right) \theta\right)$.

    Next, consider the product of this cosine term with $\sin \left(w_0 \theta\right)$. To analyse the frequency content of this product, we use the product-to-sum identity for sine and cosine functions, which states:
    \begin{equation}
        \sin \theta \cos \varphi=\frac{1}{2}[\sin (\theta+\varphi)+\sin (\theta-\varphi)] .
    \end{equation}
    Applying this identity with $\theta=w_0 \theta$ and $\varphi=\sum_{i=1}^n\left|w_i\right| \theta$, we obtain:
    \begin{equation}
        \sin(w_0 \theta) \cos \! \left(\! \left(\! \sum_{i = 1}^n |w_i|\! \right)\! \theta\! \right)\! = \!\frac{1}{2} \! \left[\! \sin \! \left(\! \left( \! w_0 + \sum_{i = 1}^n |w_i| \! \right) \! \theta \!\right)\! + \sin \! \left( \! \left( \! w_0 - \sum_{i = 1}^n |w_i| \! \right) \! \theta \! \right) \! \right].
    \end{equation}
    From the above expression, we observe that the maximum angular frequency of the product $g(\mathbf{w})=\sin \left(w_0 \theta\right) \prod_{i=1}^n \cos \left(w_i \theta\right)$ depends on the sign of $w_0$. If $w_0>0$, the first sine term on the RHS dominates with the maximum angular frequency, while for $w_0<0$, the second sine term attains this maximum angular frequency. In both cases, the maximum angular frequency of $g(\mathbf{w})$ is given by $\left|w_0\right|+$ $\sum_{i=1}^n\left|w_i\right|$.
\end{proof}

\section{Maximum Angular Frequency of Level-1 QAOA for Ising Models} \label{qaoa_2_field_max_freq_proof}

We now present the proof of \cref{qaoa_2_field_max_freq_thm}, which follows directly from the application of lemmas \ref{cos_freq_lemma} and \ref{sincos_freq_lemma} to the analytical expressions in \cref{qaoa_ozaeta_thm}. This proof demonstrates how to compute the maximum angular frequency exhibited by QAOA$_1$ for an arbitrary Ising model with external fields.
\begin{proof}[Proof of \Cref{qaoa_2_field_max_freq_thm}]
For the single spin variable $i$, the terms involving $\gamma$ are given by \cref{qaoa_ozaeta_thm_eqn1}. From lemma~\ref{sincos_freq_lemma}, the maximum frequency is:
\begin{equation}
\omega_{\max} \left[ \left\langle C_i \right\rangle_\gamma \right] = 2 \left(|h_i| + \sum\limits_{k \in \mathcal{N}(i)} |J_{ik}| \right).
\end{equation}
For the two spin variables $u$ and $v$, the terms involving $\gamma$ are:
\begin{equation}
\left\langle C_{uv} \right\rangle_\gamma = \frac{J_{uv}}{2} \sin(4\beta) f_{uv}(\mathbf{J}, \mathbf{h}, \gamma) - \frac{J_{uv}}{2} \sin^2(2\beta) g_{uv}(\mathbf{J}, \mathbf{h}, \gamma),
\end{equation}
where
\begin{align}
f_{uv}(\mathbf{J}, \mathbf{h}, \gamma) &= \sin (2 J_{uv} \gamma)\Biggl( \underbrace{\cos (2 h_v \gamma) \prod_{w \in e} \cos (2 J_{wv} \gamma)}_{\circled{1}} + \underbrace{\cos (2 h_u \gamma) \prod_{w \in d} \cos (2 J_{uw} \gamma)}_{\circled{2}} \Biggr) , \\
g_{uv}(\mathbf{J}, \mathbf{h}, \gamma) &=  \underbrace{\prod_{\substack{w \in e \\ w \notin F}} \cos (2 J_{wv} \gamma) \prod_{\substack{w \in d \\ w \notin F}} \cos (2 J_{uw} \gamma)}_{\circled{3}} \Biggl( \underbrace{\cos \left(2\gamma(h_u + h_v)\right) \prod_{f \in F} \cos \left(2\gamma (J_{uf} + J_{vf})\right)}_{\circled{4}}  \nonumber \\
& \quad  - \underbrace{\cos \left(2\gamma (h_u - h_v)\right) \prod_{f \in F} \cos \left(2\gamma (J_{uf} - J_{vf})\right)}_{\circled{5}} \Biggr).
\end{align}
Using lemma~\ref{sincos_freq_lemma}, we have:
\begin{align}
    \omega_{\max} \left[\sin (2 J_{uv} \gamma) \times \circled{1} \right] &= 2 \left(|J_{uv}| + |h_v| + \sum_{w \in e} |J_{wv}| \right), \\
    \omega_{\max} \left[\sin (2 J_{uv} \gamma) \times \circled{2} \right] &= 2 \left(|J_{uv}| + |h_u| + \sum_{w \in d} |J_{uw}| \right).
\end{align}
Using lemma~\ref{cos_freq_lemma}, we have:
\begin{align}
    \omega_{\max} \left[ \circled{3} \times \circled{4} \right] &= 2 \left( \sum_{\substack{w \in e \\ w \notin F}} |J_{wv}| + \sum_{\substack{w \in d \\ w \notin F}} |J_{uw}| + |h_u + h_v| + \sum_{f \in F} |J_{uf} + J_{vf}| \right),\\
    \omega_{\max} \left[ \circled{3} \times \circled{5} \right] &= 2 \left( \sum_{\substack{w \in e \\ w \notin F}} |J_{wv}| + \sum_{\substack{w \in d \\ w \notin F}} |J_{uw}| + |h_u - h_v| + \sum_{f \in F} |J_{uf} - J_{vf}| \right).
\end{align}
Thus, the maximum frequencies of $f_{uv}(\mathbf{h}, \gamma)$ and $g_{uv}(\mathbf{h}, \gamma)$ are:
\begin{align}
     \omega_{\max} \left[ f_{uv}(\mathbf{J}, \mathbf{h}, \gamma) \right] &= 2 \left( |J_{uv}| + \max \left\{ |h_v| + \sum_{w \in e} |J_{wv}| , |h_u| + \sum_{w \in d} |J_{uw}| \right\} \right), \\
     \omega_{\max} \left[ g_{uv}(\mathbf{J}, \mathbf{h}, \gamma) \right]\! &= \! 2 \! \left(\! \sum_{\substack{w \in e \\ w \notin F}} \! |J_{wv}| \!+\! \sum_{\substack{w \in d \\ w \notin F}} \! |J_{uw}| \! + \! \max \! \left\{ \! |h_u \! \pm \! h_v| \! + \! \sum_{f \in F} \! |J_{uf}  \! \pm \! J_{vf}| \!\right\}\! \right).
\end{align}
Therefore, the maximum angular frequency of $\left\langle C_{uv} \right\rangle_\gamma$ is the maximum of these two frequencies.
\end{proof}

\section{Upper Bounds on the Sampling Periods}

In this section, we establish upper bounds on the sampling period necessary to reconstruct the optimisation landscape of QAOA$_1$ for the SK model, $D$-regular triangle-free graphs, and triangle-free graphs with bounded maximum degrees. These bounds are proven by integrating \cref{qaoa_min_samples_thm} and \cref{qaoa_2_field_max_freq_thm} and leveraging the specific properties of each case.
\subsection{Sherrington-Kirkpatrick Model} \label{sk_narrow_gorge_proof}

\begin{proof}[Proof of \Cref{sk_gregion_cor}]
    Consider the case without external fields. The maximum angular frequency of the $\gamma$ terms in the expectation value is
    \begin{equation}
        \omega_{\max} \! \left[\! \left\langle C_{uv}  \right\rangle_\gamma \right] \! = \! 2 \times \max \! \left\{\! |J_{uv}| \! + \! \max \! \left\{ \! \sum_{w \in e}\! |J_{wv}| ,\! \sum_{w \in d}\! |J_{uw}|\!  \right\}\! ,\!  \max\! \left\{\! \sum_{f \in F}\! |J_{uf}\! \pm \! J_{vf}| \!\right\} \! \right\}.
    \end{equation}
    Since the graph is complete, the number of neighbours excluding vertices $u$ and $v$ is $|e| = |d| = n-2$, and the number of triangles is $|F| = n-2$. Therefore, we have
    \begin{align}
        \max \left\{ \sum_{w \in e} |J_{wv}| , \sum_{w \in d} |J_{uw}|  \right\} &= n - 2, \\
        \max \left\{ \sum_{f \in F} |J_{uf} \pm J_{vf}|\right\} & = 2n -4 \label{sk_tri_max}.
    \end{align}
    The maximum in the summation of the triangle terms in \cref{sk_tri_max} is achieved when all edges forming a triangle with edge $\{u, v\}$ have the same weights. Specifically, this occurs when $\operatorname{sgn}(J_{uf}) = -\operatorname{sgn}(J_{vf})$ for all $f \in F$. Under this condition, the maximum angular frequency of the $\gamma$ terms scales as $\mathcal{O}(n)$.

    This scaling remains valid for models with external fields of strength $\pm 1$, where the maximum angular frequency may differ by a constant factor compared to the case without external fields. Consequently, the maximum frequency is given by 
    \begin{equation} \nu_{\max} = \frac{\mathcal{O}(n)}{2\pi}. 
    \end{equation} 
    Since all weights are $\pm1$ and each $\gamma$ term is prefactored by 2, the period of the optimisation function is $T_{\gamma} = \pi$. Utilising the maximum frequency and the period of the function, sampling period is determined from \cref{qaoa_min_samples_thm}.
\end{proof}

\subsection{Triangle-Free Graphs} \label{unw_trif_dreg_proof}

\begin{proof}[Proof of \Cref{d_reg_tri_free_conc}]
    For a triangle-free graph, the maximum angular frequency of the $\gamma$ terms in the expectation value is given by:
    \begin{equation}
        \omega_{\max} \left[ \left\langle C_{uv} \right\rangle_\gamma \right] = 2  |J_{uv}| + 2\times \max \left\{ \sum_{w \in e} |J_{wv}| , \sum_{w \in d} |J_{uw}|  \right\} .
        \label{triangle_free_max_freq}
    \end{equation}
    Since the graph is $D$-regular, each vertex has $D-1$ neighbours excluding $u$ and $v$, so $|e| = |d| = D-1$. Consequently,
    \begin{equation}
        \max \left\{ \sum_{w \in e} |J_{wv}| , \sum_{w \in d} |J_{uw}|  \right\} = D-1,
    \end{equation}
    which gives the maximum angular frequency as 
    \begin{equation}
       \omega_{\max} \left[ \left\langle C_{uv} \right\rangle_\gamma \right] = 2 + 2(D -1) = 2D.
    \end{equation}
    The corresponding maximum frequency is $\omega_{\max}/2\pi = D /\pi$. Since each $\gamma$ term is prefactored by 2 and the weights are $\pm 1$, the optimisation function has a period of $T_{\gamma} = \pi$. Using this period and the maximum frequency, the sampling period is derived from \cref{qaoa_min_samples_thm}.

For triangle-free graphs with varying degrees, the maximum angular frequency is determined by the edge ${u, v}$ where at least one vertex has the maximum degree $D_{\max}$.
\end{proof}

\section{Univariate Representation of Level-1 QAOA's Objective Function}

In this section, we present the proofs of the linearised closed-form expressions for computing the expectation value of QAOA$_1$, specifically for Ising models with and without external fields, as described in \cref{qaoa_2_local_opt_params_thm} and \cref{qaoa_2_field_max_freq_thm}. These proofs rely on identities involving linear combinations of trigonometric functions and simple algebraic techniques.

\subsection{Ising Models without Fields} \label{qaoa_2_local_opt_params_proof}

\begin{proof}[Proof of \Cref{qaoa_2_local_opt_params_thm}]
    Using the definitions of the coefficients $A(\gamma)$ and $B(\gamma)$, the expectation value for an arbitrary Ising model without external fields can be rewritten as:
    \begin{align}
        \langle\gamma, \beta|H_P| \gamma, \beta\rangle &= A(\gamma) \sin 4\beta - B(\gamma) \sin^2 2\beta \\
        &= A(\gamma) \sin 4\beta - B(\gamma) \left(\frac{1 - \cos 4\beta}{2} \right) \\
        &= A(\gamma) \sin 4\beta + \frac{B(\gamma)}{2} \cos 4\beta - \frac{B(\gamma)}{2}.
    \end{align}
    Using the trigonometric identity for a linear combination of sine and cosine, specifically:
    \begin{equation}
        R \cos (x-\alpha)=A \sin x+B \cos x,
    \end{equation}
    the expression simplifies to:
    \begin{equation}
        \langle\gamma, \beta|H_P| \gamma, \beta\rangle=R(\gamma) \cos (4 \beta-\alpha(\gamma))-\frac{B(\gamma)}{2},
    \end{equation}
    where
    \begin{equation}
        R(\gamma)= \mathrm{sgn} \left[B(\gamma) \right]\sqrt{A^2(\gamma)+\frac{B^2(\gamma)}{4}}, \quad \alpha(\gamma)=\arctan \left(2 A(\gamma), B(\gamma)\right) .
    \end{equation}
    The minimum value of $\langle\gamma, \beta|H_P| \gamma, \beta\rangle$ is thus:
    \begin{equation}
    \underset{\gamma \in \mathbb{R}}{\min} \left(-\sqrt{A^2(\gamma)+\frac{B^2(\gamma)}{4}}-\frac{B(\gamma)}{2} \right),
    \end{equation}
    which is achieved by setting $\beta^*=\dfrac{\alpha(\gamma^*)+\pi}{4}$.
\end{proof}

\subsection{Ising Models with Fields} \label{qaoa_2_local_fields_opt_params_proof}

\begin{proof}[Proof of \Cref{qaoa_2_local_fields_opt_params_thm}]
Using the definitions of the coefficients $A(\gamma), B(\gamma)$, and $C(\gamma)$, the expectation value for an arbitrary 2-local Ising model with external fields can be rewritten as:
\begin{equation}
    \langle\gamma, \beta|H_P| \gamma, \beta\rangle = \sin (2 \beta) A(\gamma)+\sin (4 \beta) B(\gamma)+\sin ^2(2 \beta) C(\gamma). 
\end{equation}
To minimise this expression with respect to $\beta$ for a fixed $\gamma$, consider:
\begin{equation}
    \left\langle\beta|H_\gamma| \beta\right\rangle = A\sin2 \beta + B \sin4\beta + C \sin^2 2\beta.
\end{equation}
This function is bounded with endpoints $\left\langle 0\left|H_\gamma\right| 0\right\rangle=\left\langle\pi\left|H_\gamma\right| \pi\right\rangle=0$ and has a period of $\pi$. To find the global minimum in $(0, \pi)$, let $b=2 \beta$:
\begin{equation}
    \left\langle\beta|H_\gamma| \beta\right\rangle =  A\sin b + B \sin 2b + C \sin^2 b.
\end{equation}
Taking the derivative with respect to $\beta$:
\begin{align}
\frac{d \left\langle\beta|H_\gamma| \beta\right\rangle}{d \beta} &= 2 \left[ A \cos b + 2B \cos 2b + 2C \sin b \cos b \right] \\
&= 2 \left[ A \cos b + 2B (2 \cos^2 b - 1) + 2C \sin b \cos b \right] \\
&= 2 \left[ A \cos b + 4B \cos^2 b - 2B + 2C \sin b \cos b \right].
\end{align}
Setting $\dfrac{d \left\langle\beta|H_\gamma| \beta\right\rangle}{d \beta} = 0$ to find the minima:
\begin{align}
    0 &= A\cos b + 4B\cos^2 b - 2B + 2 C \sin b \cos b \\
    4C^2 \sin^2 b\cos^2 b &= (A\cos b + 4B\cos^2 b - 2B)^2\\
    4C^2(1-\cos^2 b )\cos^2 b &= (A\cos b + 4B\cos^2 b - 2B)^2.
\end{align}
Letting $x=\cos b$, we get:
\begin{equation}
4 C^2\left(1-x^2\right) x^2=\left(A x+4 B x^2-2 B\right)^2,
\end{equation}
which simplifies to the quartic equation:
\begin{equation}
    0 = p(x) = 4 B^2 - 4 A B x + (A^2 - 16 B^2 - 4 C^2) x^2 + 
 8 A B x^3 + (16 B^2 + 4 C^2) x^4.
 \label{eq:quartic_roots}
\end{equation}
Let $x_1, x_2, x_3, x_4$ be the roots of $p(x)$. The minimum of $\left\langle\beta\left|H_\gamma\right| \beta\right\rangle$ occurs at one of the $\beta$ values where $x=\cos (2 \beta)$ is a root of $p(x)$. Therefore:
\begin{equation}
        \beta_\gamma^*= \underset{\beta \in \mathcal{B}_\gamma}{\text{argmin} }\left\langle\beta|H_\gamma| \beta\right\rangle,
    \end{equation}
    where
    \begin{equation}
    \mathcal{B}_\gamma=\left\{\left.\beta= \pm \frac{1}{2} \arccos \left(x_i\right) \right\rvert\, i=1,2,3,4\right\}.
    \end{equation}
    Finally, the global optimal angles $\left(\gamma^*, \beta^*\right)$ are given by:
    \begin{equation}
    \left(\gamma^*, \beta^*\right)=\underset{\gamma \in \mathbb{R}}{\text{argmin} }\left\langle\gamma, \beta_\gamma^*|H_P| \gamma, \beta_\gamma^*\right\rangle.
    \end{equation}
\end{proof}

\section{Proof of \texorpdfstring{\Cref{edge_only_is_bad_cor}}{edge only is bad cor}} \label{edge_only_is_bad_proof}
We present the proof of \cref{edge_only_is_bad_cor}, demonstrating that transforming an Ising model with fields into one without them can actually increase the complexity of finding optimal parameters, particularly in bounded-degree graphs where the maximum degree is much smaller than the number of vertices.

\begin{proof}[Proof of \cref{edge_only_is_bad_cor}]
From \cref{qaoa_2_field_max_freq_thm}, the maximum angular frequency for an Ising model with external fields is given by: 
\begin{equation}
    \omega_{\max} \left[ \langle H_P \rangle_\gamma \right]\! = \! \max \left\{ \omega_{\max} \left[ \langle C_i \rangle_\gamma \right], \; \omega_{\max} \left[ \langle C_{uv} \rangle_\gamma \right] \! \Big| \forall i \in V, \; \forall \{u,v\} \in E \right\}.
\end{equation}
For a triangle-free graph with bounded maximum degree $D_{\max}$, it follows that: 
\begin{equation} 
\omega_{\max} \left[ \langle C_i \rangle_\gamma \right] = \omega_{\max} \left[ \langle C_{uv} \rangle_\gamma \right] = 2(D_{\max}+1). 
\end{equation}
Applying \cref{qaoa_min_samples_thm}, we derive the upper bound on the permissible spacing between consecutive samples: 
\begin{equation} 
\Delta_{\gamma} < \frac{\pi}{2(D_{\max}+1)}. 
\end{equation}
When converting an Ising model with external fields of size $n$ (i.e., $n$ spins or vertices) to an equivalent Ising model without external fields, the maximum degree of the resulting graph increases from $D_{\max}$ to $n$. According to \cref{qaoa_2_max_freq_col}, the maximum angular frequency for an Ising model without external fields can be computed using \cref{qaoa_2_max_freq_col_eq2}. Although the resulting graph may contain triangles, the maximum contribution to the angular frequency arises from the non-triangle terms. Therefore, the maximum angular frequency simplifies to: 
\begin{align} 
\omega_{\max} \left[ \langle C_{uv} \rangle_\gamma \right] &= 2 \left( |J_{uv}| + \max \left\{ \sum_{w \in e} |J_{wv}|, \sum_{w \in d} |J_{uw}| \right\} \right) \\
&= 2 \left( 1 + \sum_{w \in d} |J_{uw}| \right) \\
&= 2 \left( 1 + (n - 1) \right) \\
&= 2n. 
\end{align}
Applying \cref{qaoa_min_samples_thm} again, we obtain the bound on the maximum permissible spacing between consecutive samples for the transformed model: 
\begin{equation} 
\Delta_{\gamma}^{\prime} < \frac{\pi}{2n}. 
\end{equation}
Taking the ratio of $\Delta_{\gamma}$ to $\Delta_{\gamma}^{\prime}$, we have: 
\begin{equation} 
\frac{\Delta_{\gamma}}{\Delta_{\gamma}^{\prime}} < \frac{n}{D_{\max} + 1}. 
\end{equation}
Thus, when an Ising model with external fields, whose underlying graph is triangle-free with maximum degree $D_{\max}$, is converted to an equivalent model without external fields, the maximum permissible spacing between consecutive samples decreases by a factor of $\frac{n}{D_{\max} + 1}$.
\end{proof}

\section{Steckin's Lemma for the Subdivision Algorithm} \label{steck_sec_app}

In this section, we introduce Steckin's Lemma and demonstrate its application as a rejection criterion in the subdivision algorithm for estimating the maximum modulus of trigonometric polynomials and identifying their optima. We begin by presenting Steckin's Lemma for univariate real-valued trigonometric polynomials:
\begin{lemma}[Steckin's Lemma]\label{steck_uni_lemma}
    Suppose that the polynomial
    \begin{equation}
        f(t)=\sum_{|n| \leq N} \alpha_n e^{i n t} \quad\left(\alpha_n \in \mathbf{C}\right)
    \end{equation}
    is real-valued and that $t_0 \in [0, 2\pi]$ satisfies $f(t_0) = \|f\|_{\infty}$. Then 
    \begin{equation}
        f\left(t_0+s\right) \geq\|f\|_{\infty} \cos N s \quad(|s| \leq \pi / N).
    \end{equation}
\end{lemma}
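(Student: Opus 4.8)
The plan is to reduce Steckin's Lemma to a sharp derivative estimate and then to a short calculus argument. First I would normalise: translating $t \mapsto t - t_0$ we may assume $t_0 = 0$, and write $M := \|f\|_\infty = f(0)$. If $M = 0$ then $f \equiv 0$ and the claim is trivial, so assume $M > 0$. Dividing by $M$, the goal becomes
\begin{equation}
\frac{f(s)}{M} \ge \cos(Ns), \qquad |s| \le \pi/N .
\end{equation}
Since $|f| \le M$, the quantity $\Phi(s) := \arccos\!\big(f(s)/M\big) \in [0,\pi]$ is well defined and continuous, with $\Phi(0) = 0$. Because $\cos$ is strictly decreasing on $[0,\pi]$ and $N|s| \le \pi$ on the target interval, the desired inequality is equivalent to the bound $\Phi(s) \le N|s|$ for $|s| \le \pi/N$.

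\textbf{The derivative bound.} The heart of the argument is to show that $\Phi$ is Lipschitz with constant $N$. On any open interval where $|f| < M$, $\Phi$ is differentiable with
\begin{equation}
|\Phi'(s)| = \frac{|f'(s)|}{\sqrt{M^2 - f(s)^2}} .
\end{equation}
Here I would invoke the Bernstein--Szeg\H{o} inequality $f'(s)^2 + N^2 f(s)^2 \le N^2 M^2$, valid for every real trigonometric polynomial of degree $\le N$, which gives $|f'(s)| \le N\sqrt{M^2 - f(s)^2}$ and hence $|\Phi'(s)| \le N$. The finitely many contact points where $|f| = M$ (zeros of the trigonometric polynomial $M^2 - f^2$, unless $f \equiv \pm M$, a trivial case) do not obstruct this: $\Phi$ is continuous across them and is $C^1$ with $|\Phi'| \le N$ off them, so $\Phi$ is globally $N$-Lipschitz. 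Integrating from $0$ then yields $\Phi(s) = |\Phi(s) - \Phi(0)| \le N|s|$, which completes the reduction.

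\textbf{Establishing Bernstein--Szeg\H{o}.} It remains to prove the inequality used above, and this is the step I expect to carry the real weight. Fixing $s$, I would compare $f$ with the single-harmonic polynomial
\begin{equation}
S(\theta) = f(s)\cos\!\big(N(\theta-s)\big) + \frac{f'(s)}{N}\sin\!\big(N(\theta-s)\big),
\end{equation}
which has degree $N$, matches $f$ and $f'$ at $\theta = s$, and has sup-norm $A := \sqrt{f(s)^2 + f'(s)^2/N^2}$. Assuming for contradiction that $A > M$, the difference $D = S - f$ has a double zero at $s$; moreover at the $2N$ extrema of $S$ (where $S = \pm A$ alternately) we have $|f| \le M < A$, so $D$ strictly alternates in sign and thus has at least $2N$ sign-change zeros distinct from $s$. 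Adding the double zero at $s$ gives more than $2N$ zeros per period, contradicting the fact that a real trigonometric polynomial of degree $\le N$ that is not identically zero has at most $2N$ zeros (counted with multiplicity) in a period. Hence $A \le M$, which is exactly Bernstein--Szeg\H{o}.

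\textbf{Main obstacle.} The delicate part is the zero-counting that underpins Bernstein--Szeg\H{o}: one must treat multiplicities correctly and confirm that the double zero at $s$ is genuinely extra. It is, since $S'(s) = f'(s)$ means $s$ is not an extremum of $S$ when $f'(s) \ne 0$, while $f'(s) = 0$ makes $A = |f(s)| \le M$ immediate. A secondary technical point is justifying the $N$-Lipschitz continuity of $\Phi$ across the points where $|f| = M$, at which the formula for $\Phi'$ is a $0/0$ expression but the bound persists by continuity. Once both points are settled, the chain Steckin $\Leftarrow$ (arccos/Lipschitz) $\Leftarrow$ Bernstein--Szeg\H{o} $\Leftarrow$ (comparison sinusoid + the $2N$-zero bound) closes the proof.
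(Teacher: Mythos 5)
The paper itself supplies no proof of this lemma---it simply defers to Exercise~1.8 of Edwards---so there is nothing internal to compare against; judged on its own merits, your argument is the classical route to Steckin's inequality and is essentially sound. The chain you set up is standard and complete in outline: the $\arccos$/Lipschitz reduction turns the statement into the Szeg\H{o} (Bernstein--Szeg\H{o}) bound $f'(s)^2 + N^2 f(s)^2 \le N^2 \|f\|_\infty^2$, and that bound follows from the comparison sinusoid together with the fact that a nontrivial real trigonometric polynomial of degree $\le N$ has at most $2N$ zeros per period counted with multiplicity. Your handling of the contact points $|f| = M$ in the Lipschitz step is also correct: $\Phi$ is continuous everywhere, differentiable with $|\Phi'| \le N$ off a finite set (the case $f \equiv \pm M$ being trivial), hence globally $N$-Lipschitz, and $\Phi(s) \le N|s|$ converts back to $f(s) \ge M\cos(Ns)$ on $|s| \le \pi/N$ because $\cos$ is decreasing on $[0,\pi]$.

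One bookkeeping claim needs a one-line repair. You assert that $D = S - f$ has at least $2N$ sign-change zeros \emph{distinct from} $s$, but that is not automatic: you correctly rule out $s$ coinciding with an extremum of $S$ (since $f'(s) = 0$ would give $A = |f(s)| \le M$ directly), so $s$ lies strictly inside one of the $2N$ arcs between consecutive extrema---yet the sign-change zero of that particular arc may be $s$ itself, in which case your count of ``$2N$ zeros plus a double zero at $s$'' double-counts. The standard fix is to count multiplicities arc by arc: $s$ is a zero of $D$ of multiplicity at least $2$; if $D$ changes sign at $s$, its multiplicity there is odd and hence at least $3$, while if $D$ does not change sign at $s$, the sign change forced by the alternation at the arc's endpoints occurs at some other point of the arc. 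Either way the arc containing $s$ contributes at least $3$ zeros with multiplicity, and the remaining $2N-1$ arcs contribute at least one each, giving at least $2N+2 > 2N$ and the desired contradiction. With that patch your proof is complete; note also that a slightly more direct variant (very likely the one intended by Edwards' exercise) skips Szeg\H{o} altogether and applies the same zero-counting comparison directly to $f$ against $\|f\|_\infty \cos\bigl(N(t - t_0)\bigr)$, exploiting the double zero at $t_0$ coming from $f'(t_0) = 0$; your detour through Szeg\H{o} costs one extra lemma but yields the stronger pointwise gradient bound along the way.
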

The proof of \cref{steck_uni_lemma} is outlined in Exercise 1.8 of~\cite{Edwards1979}. Stecklin's Lemma offers valuable insight into the behaviour of real-valued trigonometric polynomials at their maximum points. Specifically, it states that if a polynomial $f(t)$ attains its maximum value at $t_0$, then within a neighbourhood around $t_0$ (where the distance $s$ satisfies $|s| \leq \pi / N$), the polynomial does not decrease below $\|f\|_{\infty} \cos (N s)$. This ensures that $f(t)$ retains a significant portion of its maximum value near $t_0$. Such a property is crucial in applications like signal processing and approximation theory, where understanding the local behaviour of trigonometric polynomials is essential for accurate analysis and optimisation.

\vspace{0.25cm}
\noindent Now, given an analytic polynomial $p$, we can define its maximum modulus on the unit disk as
\begin{equation}
\|p\|_{\infty}:=\sup \{|p(z)|: z \in \mathbb{C},|z| \leq 1\}
\end{equation}
Since $\|p\|_{\infty}$ is attained by $|p|$ on the boundary of the unit disc, our objective is to find bounds on
\begin{equation}
\sup \left\{\left|p\left(e^{i t}\right)\right|: t \in[0,2 \pi]\right\},
\end{equation}
and since our calculations are simplified if we work with the square of $|p|$, we write
\begin{equation}
    q(t):=\left|p\left(e^{i t}\right)\right|^2=p\left(e^{i t}\right) \overline{p\left(e^{i t}\right)}.
\end{equation}
Suppose $q$ has been evaluated at $t_k$. If $h<\pi / N$ and the global maximiser $t_0$ lies within the interval $\left[t_k-h, t_k+h\right]$, then
\begin{equation}
q\left(t_k\right) \geq\|q\|_{\infty} \cos N\left(t_k-t_0\right) \geq\|q\|_{\infty} \cos N h
\label{uni_steck_bound}
\end{equation}
as shown by \cref{steck_uni_lemma}. Consequently, if the global maximiser is known to lie within one of the intervals $I_1, \ldots, I_n$(each of size $h$), and $q$ has been evaluated at their midpoints $t_1, \ldots, t_n$, then there exists some $k$ such that $\|q\|_{\infty} \leq q\left(t_k\right) \sec N h$. Define $\tilde{q}:=\max \left\{q\left(t_i\right): i=1, \ldots, n\right\}$. Then, the following bounds hold:
\begin{equation}
\tilde{q} \leq\|q\|_{\infty} \leq \tilde{q} \sec N h.
\end{equation}
Moreover, if $q\left(t_i\right)<\tilde{q} \cos N h$, the global maximiser cannot lie within the interval $I_i$, allowing us to discard it. These observations lead to a straightforward algorithm for calculating $\|p\|_{\infty}$ and determining the location of the global optimum, as outlined in algorithm \ref{subdivision_algorithm}.

\noindent Chevrotiere~\cite{de2009finding} extended Green's~\cite{green1999calculating} work by generalising \cref{steck_uni_lemma} to multivariate polynomials and expanding the subdivision algorithm to estimate the maximum modulus of multivariate trigonometric polynomials on polydisks. The Multivariate Stecklin's Lemma is formalised as follows:
\begin{lemma}[Multivariate Steckin's Lemma]\label{steck_mult_lemma} Let $g(t)$ be a real-valued multivariate trigonometric polynomial defined by
\begin{equation}
    g(t)=\sum_{\substack{\left|n_j\right| \leq d_j \\ j=1, \ldots, k}} c_{n_1, \ldots, n_k} e^{i\left(n_1 t_1+\cdots+n_k t_k\right)},
\end{equation}
where $t=\left(t_1, \ldots, t_k\right) \in[0,2 \pi]^k$ and $c_{n_1, \ldots, n_k} \in \mathbb{C}$. Suppose there exists a point $t_0 \in[0,2 \pi]^k$ such that $g\left(t_0\right)=\|g\|_{\infty}$. Then, for any $s=\left(s_1, \ldots, s_k\right) \in \mathbb{R}^k$ satisfying $d_1\left|s_1\right|+\cdots+d_k\left|s_k\right| \leq \pi$, the polynomial $g$ satisfies
\begin{equation}
    g\left(t_0+s\right) \geq\|g\|_{\infty} \cos \left(d_1\left|s_1\right|+\cdots+d_k\left|s_k\right|\right).
\end{equation}
\end{lemma}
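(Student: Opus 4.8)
The plan is to reduce the multivariate statement to the univariate Steckin's Lemma (\Cref{steck_uni_lemma}) by restricting $g$ to the line through $t_0$ in the direction of the prescribed shift $s$. Concretely, I would introduce the single-variable function $\phi(\tau) = g(t_0 + \tau s)$ for $\tau \in \mathbb{R}$. Since the exponents $n_j$ are integers, $g$ is $2\pi$-periodic in each argument and extends to all of $\mathbb{R}^k$ with $\sup_{[0,2\pi]^k}|g| = \sup_{\mathbb{R}^k}|g|$, so $\phi$ is well defined for every $\tau$ and inherits real-valuedness from $g$.

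Next I would record the two structural facts about $\phi$. Expanding, $\phi(\tau) = \sum_{\mathbf n} c_{\mathbf n}\, e^{i \mathbf n \cdot t_0}\, e^{i (\mathbf n \cdot s)\tau}$, so $\phi$ is a finite sum of exponentials with frequencies $\omega_{\mathbf n} = \mathbf n \cdot s = \sum_{j} n_j s_j$. The bound $|n_j| \le d_j$ then yields the uniform estimate $|\omega_{\mathbf n}| \le \sum_j d_j |s_j| =: \sigma$, i.e. $\phi$ is bandlimited with maximum frequency at most $\sigma$. Moreover, since $\phi(0) = g(t_0) = \|g\|_\infty \ge |g(t_0 + \tau s)| = |\phi(\tau)|$ for every $\tau$, the point $\tau = 0$ is a global maximiser of $\phi$ and $\|\phi\|_\infty = \phi(0) = \|g\|_\infty$.

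With these in hand I would apply the univariate lemma to $\phi$ with bandwidth $\sigma$ at the evaluation point $\tau = 1$, so that $t_0 + 1\cdot s = t_0 + s$. The admissibility condition $|\tau| \le \pi/\sigma$ reads $\sigma \le \pi$, which is exactly the hypothesis $\sum_j d_j|s_j| \le \pi$. The conclusion then gives $g(t_0 + s) = \phi(1) \ge \|\phi\|_\infty \cos(\sigma \cdot 1) = \|g\|_\infty \cos\!\left(\sum_j d_j|s_j|\right)$, as required. (Should the actual maximum frequency be some $N' \le \sigma \le \pi$, monotonicity of $\cos$ on $[0,\pi]$ only strengthens the intermediate bound, so using $\sigma$ as an admissible degree bound is harmless.)

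The main obstacle is that the frequencies $\omega_{\mathbf n} = \mathbf n \cdot s$ are generically irrational, so $\phi$ is not a trigonometric polynomial of integer degree and \Cref{steck_uni_lemma} does not apply verbatim. I would resolve this by invoking the exponential-type form of Steckin's Lemma: the estimate $f(t_0 + u) \ge \|f\|_\infty \cos(N u)$ for $|u| \le \pi/N$ holds for any bounded real-valued $f$ whose spectrum lies in $[-N, N]$, with $N$ not required to be an integer, because the argument (for instance via the Bernstein inequality $\|f'\|_\infty \le N\|f\|_\infty$ for functions of exponential type $N$, followed by comparison with the extremal function $\cos(N u)$) uses only the spectral bound and never integrality. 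Applying this version with $N = \sigma$ to $\phi$ closes the argument.
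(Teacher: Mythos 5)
The paper does not actually prove \cref{steck_mult_lemma}; it defers entirely to Chevrotiere~\cite{de2009finding}, so your proposal has to be judged on its own merits rather than against an internal argument. Your reduction is the right one and is carried out correctly: restricting to the line $\phi(\tau)=g(t_0+\tau s)$ produces a finite exponential sum with real frequencies $\mathbf{n}\cdot s$ bounded in modulus by $\sigma=\sum_j d_j|s_j|$; periodicity of $g$ gives $\phi(0)=\|\phi\|_\infty=\|g\|_\infty$; and evaluating at $\tau=1$ under the hypothesis $\sigma\le\pi$ yields exactly the claimed inequality, provided one has a Steckin-type bound for such $\phi$. You also correctly flag the one real obstacle: $\phi$ generically has irrational frequencies, so \cref{steck_uni_lemma} does not apply verbatim.

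The gap is in how you close that obstacle. You invoke an ``exponential-type form'' of Steckin's lemma justified by the sup-norm Bernstein inequality $\|\phi'\|_\infty\le\sigma\|\phi\|_\infty$ ``followed by comparison with $\cos(\sigma u)$.'' As stated, this step fails: integrating the sup-norm bound away from the maximiser only gives $\phi(\tau)\ge\|\phi\|_\infty\left(1-\sigma|\tau|\right)$, and since $\cos x\ge 1-|x|$ on $[-\pi,\pi]$, this linear bound is strictly weaker than the cosine bound you need; no comparison argument can upgrade it. What the comparison argument actually requires is the \emph{pointwise} refinement of Bernstein's inequality for bounded real-valued functions of exponential type $\sigma$, namely $\phi'(\tau)^2+\sigma^2\phi(\tau)^2\le\sigma^2\|\phi\|_\infty^2$ (Duffin--Schaeffer, extending the van der Corput--Schaake inequality for trigonometric polynomials); with it, $h(\tau)=\arccos\left(\phi(\tau)/\|\phi\|_\infty\right)$ satisfies $h(0)=0$ and $|h'|\le\sigma$, hence $h(\tau)\le\sigma|\tau|$, which is precisely the desired inequality. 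Alternatively, you can avoid exponential-type theory altogether: approximate $s$ by vectors $s^{(m)}\to s$ with commensurable entries and $\sum_j d_j|s^{(m)}_j|\le\pi$, so that after rescaling the restricted function becomes an integer-frequency trigonometric polynomial to which \cref{steck_uni_lemma} applies, and pass to the limit using continuity of $g$ and of the cosine. With either repair your argument is complete; as written, the crucial non-integer-frequency step is not adequately justified.
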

The proof of \cref{steck_mult_lemma} is provided in Chevrotiere~\cite{de2009finding}. Chevrotiere demonstrates that by using \cref{steck_mult_lemma} as a rejection criterion, Green's~\cite{green1999calculating} subdivision algorithm can be effectively extended to arbitrary dimensions.

\noindent Let us now address the optimisation of the univariate functions defined in \cref{qaoa_2_local_opt_params_thm} and \cref{qaoa_2_local_fields_opt_params_thm}. Suppose we optimise over both parameters $(\gamma, \beta)$ using the expressions from \cref{qaoa_ozaeta_thm} and \cref{qaoa_ozaeta_col}. In this scenario, \cref{steck_mult_lemma} provides the following bound:
\begin{equation}
    \left\langle C(\gamma^* + s_1, \beta^* + s_2) \right\rangle^2 \geq \left\langle C(\gamma^*, \beta^*) \right\rangle^2 \cos \left(\nu^{\gamma}_{max}|s_1| + \nu^{\beta}_{max}|s_2| \right),
\end{equation}
where $\nu_{\max }^\gamma$ and $\nu_{\max }^\beta$ are the maximum frequencies of the $\gamma$ and $\beta$ terms, respectively, and $\left|s_1\right|$ and $\left|s_2\right|$ denote the distances from the global maximum satisfying $\nu_{\max }^\gamma\left|s_1\right|+\nu_{\max }^\beta\left|s_2\right|<\pi$.

However, if we restrict our search to $\bigl(\gamma, \beta^*\bigr)$ as in \cref{qaoa_2_local_opt_params_thm}, then $\lvert s_2\rvert=0$, and the bound simplifies to the same form as in \cref{uni_steck_bound}. In the context of \cref{qaoa_2_local_fields_opt_params_thm}, we only know a locally optimal $\beta_\gamma^*$ for each $\gamma$; if $\beta_\gamma^* = \beta^*$ for any $\gamma \in [0,\pi]$, then $\lvert s_2\rvert=0$ again, and the bound remains unchanged. If instead $\lvert s_2\rvert \neq 0$ (i.e., $\beta_\gamma^* \neq \beta^*$), then using the fact that $\nu_{\max}^\gamma \lvert s_1 \rvert + \nu_{\max}^\beta \lvert s_2 \rvert < \pi$, it follows that
\begin{equation}
    \cos\bigl(\nu_{\max}^\gamma\lvert s_1 \rvert + \nu_{\max}^\beta\lvert s_2 \rvert \bigr) 
< 
\cos\bigl(\nu_{\max}^\gamma\lvert s_1 \rvert\bigr),
\end{equation}
which leads to the rejection of that particular interval. Therefore, in optimising the univariate function of $\gamma$, it suffices to treat both expressions in \cref{qaoa_2_local_opt_params_thm} and \cref{qaoa_2_local_fields_opt_params_thm} as univariate trigonometric polynomials. Consequently, we can apply Green’s~\cite{green1999calculating} subdivision algorithm to efficiently estimate the optimal value $\gamma^*$.

\begin{theorem} \label{subdiv_linear_thm}
    Let $p:[0,2 \pi] \rightarrow \mathbb{R}$ be a real trigonometric polynomial of degree $N$ such that $\|p\|_{\infty} = 1$. The subdivision algorithm requires $\mathcal{O}(N/\sqrt{\varepsilon})$ function evaluations to approximate $\operatorname{max} \, \, |p(x)|^2$ to within an additive tolerance $\varepsilon>0$.
\end{theorem}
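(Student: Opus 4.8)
The plan is to treat the quantity $q(t) := |p(e^{it})|^2$ (equivalently $p(t)^2$ for the real polynomial $p$), which is a nonnegative real trigonometric polynomial of degree at most $2N$ with $\|q\|_\infty = \|p\|_\infty^2 = 1$, and to analyse the cost of running algorithm~\ref{subdivision_algorithm} on $q$. Correctness of the output is immediate from Steckin's lemma (\cref{steck_uni_lemma}) via the two-sided bound $\tilde q \le \|q\|_\infty \le \tilde q\,\sec(2N\Delta_\gamma)$ recorded in \cref{uni_steck_bound}: whenever the loop's stopping test $\tilde q\big(\sec(2N\Delta_\gamma) - 1\big) < \epsilon$ fires, we have $\|q\|_\infty - \tilde q < \epsilon$, so the returned value approximates $\max_x |p(x)|^2$ from below to within $\epsilon$. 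The theorem is therefore a statement purely about the number of midpoint evaluations performed before this test is satisfied.

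First I would pin down the terminal sampling spacing. Writing $\Delta_k = \Delta_\gamma^{(0)}/2^k$ for the spacing at iteration $k$, I use $\tilde q \le \|q\|_\infty = 1$ to observe that $\sec(2N\Delta_k) - 1 < \epsilon$ is a \emph{sufficient} condition for termination. Inverting it gives $2N\Delta_k < \arccos\!\big(1/(1+\epsilon)\big)$, and since $\arccos\!\big(1/(1+\epsilon)\big) = \sqrt{2\epsilon} + \mathcal{O}(\epsilon^{3/2})$, this sufficient condition holds as soon as $\Delta_k = \mathcal{O}(\sqrt{\epsilon}/N)$. A minimality argument then rules out over-refinement: if $k$ is the first index at which the sufficient condition holds, the loop must already have halted by iteration $k$, and since the spacing one step coarser still exceeds the threshold, the terminal spacing obeys $\Delta_K = \Theta(\sqrt{\epsilon}/N)$. (If $\tilde q$ happens to be small the loop stops even earlier, which only lowers the cost, so the worst case is the one just analysed.)

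Next I would count evaluations by a geometric-sum argument valid even when no interval is ever pruned. At iteration $k$ the algorithm evaluates $q$ once at the midpoint of each active interval, say $n_k$ evaluations; since every surviving interval is bisected into two children, $n_{k+1} \le 2 n_k$, hence $n_k \le 2^k n_0$ with $n_0 = \lceil 2\pi/\Delta_\gamma^{(0)}\rceil$. The total is then bounded by the geometric series $\sum_{k=0}^{K} n_k \le n_0(2^{K+1}-1) < 2\cdot 2^K n_0 = 2\cdot(2\pi/\Delta_K)$, i.e.\ the work is dominated, up to a constant factor, by the single finest level. Substituting $\Delta_K = \Theta(\sqrt{\epsilon}/N)$ yields the claimed bound $\mathcal{O}(N/\sqrt{\epsilon})$.

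The main obstacle is the terminal-spacing analysis rather than the counting: one must show the stopping test fires at a spacing of order $\sqrt{\epsilon}/N$ and, crucially, that the algorithm does not over-refine beyond it. This hinges on inverting $\sec$ near $1$ (the $\sqrt{2\epsilon}$ asymptotics) and on the worst-case substitution $\tilde q \le 1$, together with the minimality argument bounding $K$. The role of the geometric sum is conceptual as much as computational — it shows that even with pruning switched off the cost is merely a constant multiple of the finest-level grid $2\pi/\Delta_K$, so the terminal-spacing estimate converts directly into the evaluation count; Steckin-based pruning then improves the constant and the typical-case behaviour, but not the worst-case order.
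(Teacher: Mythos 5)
Your proposal is correct and follows essentially the same route as the paper's proof: both pin down the terminal spacing $\Delta \sim \sqrt{\varepsilon}/N$ by inverting the stopping test (via $\sec(x)-1\approx x^2/2$, equivalently your $\arccos\!\big(1/(1+\varepsilon)\big)=\sqrt{2\varepsilon}+\mathcal{O}(\varepsilon^{3/2})$, together with $\tilde q \le \|q\|_\infty = 1$), and both bound the work by the per-level doubling of intervals, with the finest level of size $\Theta(N/\sqrt{\varepsilon})$ dominating the total count. Your write-up is in fact slightly more careful than the paper's—replacing its approximation $\tilde q \approx 1$ with the rigorous sufficient condition $\tilde q \le 1$, adding an explicit no-over-refinement (minimality) argument, and tracking the degree of $q=p^2$ as $2N$—but these refinements affect only constants and rigor, not the approach.
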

\begin{proof}
    The algorithm begins by subdividing the interval $[0, 2\pi]$ into $M$ subintervals, each of length $\Delta_{\gamma} = \pi/M$, where $M > 2N$. Consequently, the initial number of subintervals satisfies $M = \mathcal{O}(N)$. Although the algorithm prunes many subintervals in each iteration---specifically those whose midpoint values are below a certain threshold---it is possible, in the worst case, to retain up to $\mathcal{O}(N)$ subintervals.

    With each subdivision step, the number of subintervals doubles. Therefore, after $i$ iterations, the number of subintervals can grow to $\mathcal{O}\left(N \cdot 2^i\right)$. Since the algorithm only evaluates the midpoint of each subinterval during subdivision, the total number of function evaluations after $i$ iterations remains bounded by $\mathcal{O}\left(N \cdot 2^i\right)$.
    
    The stopping criterion for Green's algorithm is that the quantity $\tilde{q} \left(\sec(N \Delta_{\gamma}) - 1\right)$ becomes sufficiently small. Here, $\tilde{q}$ represents the current estimate of the polynomial's maximum modulus, and $\Delta_{\gamma}$ is the current width of the subintervals. Given that $\|p\|_{\infty} = 1$, we approximate $\tilde{q} \approx 1$. For small $\Delta{\gamma}$, we can approximate:
    \begin{equation}
    \sec \left(N \Delta_\gamma\right)-1 \approx \frac{\left(N \Delta_\gamma\right)^2}{2}.
    \end{equation}
    To ensure that the algorithm's error is below a tolerance $\varepsilon$, we require:
    \begin{equation}
    \tilde{q}\left(\sec \left(N \Delta_\gamma\right)-1\right) \lesssim \varepsilon \quad \Rightarrow \quad\left(N \Delta_\gamma\right)^2 \lesssim \varepsilon \quad \Rightarrow \Delta_{\gamma} \lesssim \frac{\sqrt{\varepsilon}}{N}.
    \end{equation}
    Since each iteration halves $\Delta_{\gamma}$, starting from $\Delta_{\gamma}^{(0)} = \mathcal{O}\left(\frac{1}{N}\right)$, after $i$ halvings we have:
    \begin{equation}
        \Delta_{\gamma} \approx \frac{1}{N} \frac{1}{2^i}.
    \end{equation}
    To satisfy the error tolerance, we set:
    \begin{equation}
    \frac{1}{N \cdot 2^i} \lesssim \frac{\sqrt{\varepsilon}}{N} \quad \Rightarrow \quad 2^{-i} \lesssim \sqrt{\varepsilon} \quad \Rightarrow \quad 2^i \approx \frac{1}{\sqrt{\varepsilon}} .
    \end{equation}
    Substituting $2^i \approx \frac{1}{\sqrt{\varepsilon}}$ into the total number of function evaluations, we obtain $\mathcal{O}(N/\sqrt{\varepsilon})$.
\end{proof}

\section{Globally Optimal Parameters for Regular Graphs}

\subsection{Mathematical Preliminaries}
\subsubsection{Probability Density Functions}
In this subsection, we introduce the necessary definitions and notations related to probability density functions (PDFs) that are essential for our proof. We begin by defining the expectation operator for functions of random variables and extend these definitions to multiple variables and arbitrary index sets.

\vspace{0.25cm}
\noindent \underline{\textbf{Single-Variable Expectation}}\\
\newline
Let $f: \mathbb{R} \rightarrow \mathbb{R}$ be a probability density function (PDF). For a measurable function $g: \mathbb{R} \rightarrow \mathbb{R}$, the expectation of $g(x)$ with respect to $f$ is denoted and defined as
\begin{equation}
\underset{x \sim f}{\mathbb{E}}[g(x)]=\int_{\mathbb{R}} f(x) g(x) d x .  
\end{equation}
This integral represents the average value of $g(x)$ when $x$ is distributed according to the PDF $f$.

\vspace{0.25cm}
\noindent \underline{\textbf{Multi-Variable Expectation for $n$ Independent Variables}}\\
\newline
Consider $n$ independent random variables $x_1, x_2, \ldots, x_n$, each distributed according to the same PDF $f$. We denote the joint expectation over these $n$ variables as $\mathbb{E}_{x \sim f^n}$. Formally, for a function $g$ : $\mathbb{R}^n \rightarrow \mathbb{R}$, the expectation is defined by
\begin{align}
    \underset{x \sim f^n}{\mathbb{E}} \left[ g(x) \right] &= \underset{x_1 \sim f}{\mathbb{E}} \cdots \underset{x_n \sim f}{\mathbb{E}} \left[ g(x_1, \dots, x_m) \right] \\
    &= \int_{\mathbb{R}^n} f\left(x_1\right) f\left(x_2\right) \cdots f\left(x_n\right) g\left(x_1, \ldots, x_n\right) d x_1 d x_2 \cdots d x_n.
\end{align}
This expression computes the expectation of $g$ over the product measure $f^n$, reflecting the independence of each $x_i$.

\vspace{0.25cm}
\noindent \underline{\textbf{Expectation over an Arbitrary Index Set}}\\
\newline
Let $E$ be an arbitrary index set, and for each $e \in E$, let $x_e$ be a random variable distributed according to the PDF $f$. We define the expectation operator over the collection of variables indexed by $E$ as
\begin{equation}
        \underset{x \sim f^E}{\mathbb{E}}=\prod_{e \in E} \underset{x_e \sim f}{\mathbb{E}}.
\end{equation}
For a function $g: \mathbb{R}^E \rightarrow \mathbb{R}$, the expectation is given by
\begin{align}
        \underset{x \sim f^E}{\mathbb{E}} \left[ g(x) \right] &=  \prod_{e \in E} \underset{x_e \sim f}{\mathbb{E}} \left[g\left((e)_{e \in E}\right)\right] \\
        &= \mathbb{E}_{x \sim f^E}[g(x)]=\int_{\mathbb{R}^E} \prod_{e \in E} f\left(x_e\right) g\left(\left(x_e\right)_{e \in E}\right) \prod_{e \in E} d x_e.
\end{align}
This generalises the expectation to any index set $E$, allowing for flexibility in handling functions of numerous variables.

\subsubsection{Interchanging Summation and Integration in Expectations}
Throughout this section, we assume that the probability density functions $p$ are chosen such that the infinite sum arising from the Taylor series expansion and the integrals involved in taking expectations can be interchanged. This assumption is crucial for the validity of the subsequent manipulations.

\noindent Consider a function $f$ expressed as its Taylor series expansion:
\begin{equation}
    f(x)=\sum_{i=0}^{\infty} a_i x^i
\end{equation}
Under our assumption, we can compute the expectation of $f(x)$ with respect to the probability distribution $p$ by interchanging the summation and the integral. The steps are as follows:
\begin{align}
\underset{x \sim p}{\mathbb{E}} \left[f(x)\right] & =\underset{x \sim p}{\mathbb{E}} \left[\sum_{i=0}^{\infty} a_i x^i \right] \\
& =\int d x \, p(x) \sum_{i=0}^{\infty} a_i x^i \\
& =\underbrace{\int d x \sum_{i=0}^{\infty}}_{=\sum_{i=0}^{\infty} \int d x} p(x) a_i x^i \\
& = \sum_{i=0}^{\infty} a_i \int d x \, p(x) x^i \\
& =\sum_{i=0}^{\infty} a_i \underset{x \sim p}{\mathbb{E}} \left[x^i \right].
\end{align}

\subsection{Some Useful Lemmas}

\begin{lemma} \label{sin_taylor_lemma}
Let $x$ be a random variable distributed according to $p(x)$, and let $D$ be a large parameter. Then the expectation of $x \sin \left(\frac{a x}{\sqrt{D}}\right)$ can be approximated by:
\begin{equation}
    \begin{aligned}
        \mathbb{E}_{x \sim p} \left[ x \sin \left( \frac{a x}{\sqrt{D}} \right) \right] 
        &= \frac{a}{\sqrt{D}} \mathbb{E}\left[ h^2 \right] - \frac{a^3}{6 D^{3/2}} \mathbb{E}\left[ h^4 \right] + \mathcal{O}\left( D^{-\frac{5}{2}} \right), \\
        &= \frac{a}{\sqrt{D}} \mathbb{E}\left[ h^2 \right] + \mathcal{O}\left( D^{-\frac{3}{2}} \right).
    \end{aligned}
\end{equation}
\end{lemma}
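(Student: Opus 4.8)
The plan is to exploit the fact that the argument $ax/\sqrt{D}$ becomes small as $D\to\infty$, so that the sine can be replaced by its Maclaurin series and integrated against $p$ term by term. Concretely, I would start from the expansion
\[
\sin\!\left(\frac{ax}{\sqrt{D}}\right) = \frac{ax}{\sqrt{D}} - \frac{1}{6}\left(\frac{ax}{\sqrt{D}}\right)^{3} + \frac{1}{120}\left(\frac{ax}{\sqrt{D}}\right)^{5} - \cdots,
\]
multiply through by $x$ to obtain
\[
x\sin\!\left(\frac{ax}{\sqrt{D}}\right) = \frac{a}{\sqrt{D}}\,x^{2} - \frac{a^{3}}{6D^{3/2}}\,x^{4} + \mathcal{O}\!\left(D^{-5/2}\right)x^{6},
\]
and then take expectations. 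Using the interchange of summation and integration justified in the preceding subsection, each monomial $x^{k}$ maps to the moment $\mathbb{E}[x^{k}]$ (written $\mathbb{E}[h^{k}]$ to match the notation for the weight distribution $p$), which immediately yields the two-term formula $\frac{a}{\sqrt{D}}\mathbb{E}[h^{2}] - \frac{a^{3}}{6D^{3/2}}\mathbb{E}[h^{4}] + \mathcal{O}(D^{-5/2})$. The coarser single-term statement then follows by absorbing the $\mathcal{O}(D^{-3/2})$ contribution into the error.

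The only genuine work — and the step I expect to be the main obstacle — is controlling the tail of the series uniformly in $D$ so that the stated error orders are rigorous rather than formal. Rather than invoking term-by-term summation of the full series, I would instead use the exact truncated-Taylor remainder bounds $|\sin t - t| \le |t|^{3}/6$ and $|\sin t - t + t^{3}/6| \le |t|^{5}/120$, valid for all real $t$. Substituting $t = ax/\sqrt{D}$ and multiplying by $|x|$ gives the pointwise envelopes
\[
\left| x\sin\!\left(\frac{ax}{\sqrt{D}}\right) - \frac{a}{\sqrt{D}}x^{2} \right| \le \frac{a^{3}}{6D^{3/2}}\,|x|^{4}, \qquad
\left| x\sin\!\left(\frac{ax}{\sqrt{D}}\right) - \frac{a}{\sqrt{D}}x^{2} + \frac{a^{3}}{6D^{3/2}}x^{4} \right| \le \frac{a^{5}}{120\,D^{5/2}}\,|x|^{6}.
\]
Taking expectations of these inequalities reduces the remainder estimate to the finiteness of $\mathbb{E}[|h|^{4}]$ (for the $\mathcal{O}(D^{-3/2})$ bound) and $\mathbb{E}[|h|^{6}]$ (for the $\mathcal{O}(D^{-5/2})$ bound), so the error constants are fixed moments independent of $D$ and the claimed orders follow.

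I would therefore close by noting that the argument only requires the relevant higher moments of $p$ to be finite; these play the role that the finite-second-moment hypothesis of \cref{fst_local_opt_thm} plays at leading order, and they guarantee that the dominated envelopes above are integrable. No delicate cancellation or nonuniformity arises, since the bounds are pointwise in $x$ before integration; the entire subtlety lives in verifying that the moment assumptions are strong enough to make the truncation error genuinely lower order, which the explicit Lagrange remainder estimates settle directly.
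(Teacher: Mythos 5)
Your proof is correct, and it takes a genuinely different and arguably cleaner route than the paper's. The paper expands $\sin$ as its full Maclaurin series, invokes its standing assumption that summation and expectation may be interchanged (stated in its preliminaries on interchanging summation and integration), and then truncates the resulting series of moments; this formally requires control of the whole moment sequence of $p$, or at least that blanket interchange hypothesis. You instead truncate first and bound the remainder pointwise via the Lagrange-type estimates $\lvert \sin t - t\rvert \le \lvert t\rvert^{3}/6$ and $\lvert \sin t - t + t^{3}/6\rvert \le \lvert t\rvert^{5}/120$, valid for all real $t$, and only then take expectations. This replaces the interchange assumption with the explicit and weaker hypotheses $\mathbb{E}[\lvert h\rvert^{4}]<\infty$ (for the $\mathcal{O}(D^{-3/2})$ statement) and $\mathbb{E}[\lvert h\rvert^{6}]<\infty$ (for the $\mathcal{O}(D^{-5/2})$ statement), and it yields explicit, $D$-independent constants in the error terms. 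Two minor points: the envelope constants should read $\lvert a\rvert^{3}/6$ and $\lvert a\rvert^{5}/120$ rather than $a^{3}/6$ and $a^{5}/120$ if $a$ may be negative; and your moment hypotheses are genuinely needed rather than an artifact of your method---they are already implicit in the lemma's statement, since $\mathbb{E}[h^{4}]$ appears there---so your version simply makes explicit what the paper delegates to its standing assumption.
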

\begin{proof}
    \begin{align}
         \underset{x \sim p}{\mathbb{E}} \left[ x \sin \frac{ax}{\sqrt{D}}\right] &= \underset{x \sim p}{\mathbb{E}} \left[ x \sum_{\substack{k=1 \\ \text { odd }}}^{\infty} \frac{(-1)^{\frac{k-1}{2}}}{k!}\left(\frac{a x}{\sqrt{D}}\right)^k \right] \qquad \left(\text{using } \sin x=\sum_{\substack{k=1 \\ \text { odd }}}^{\infty} \frac{(-1)^{\frac{k-1}{2}}}{k!} x^k\right)  \\
        & =\underset{x \sim p}{\mathbb{E}} \left[ \sum_{\substack{k=1 \\ \text{odd}}}^{\infty} \frac{(-1)^{\frac{k-1}{2}}}{k!} \frac{a^k}{D^{\frac{k}{2}}} \, x^{k+1} \right] \\
        & = \sum_{\substack{k=1 \\ \text{odd}}}^{\infty} \underset{x \sim p}{\mathbb{E}} \left[ \frac{(-1)^{\frac{k-1}{2}}}{k!} \frac{a^k}{D^{\frac{k}{2}}} \, x^{k+1} \right] \\
        &= \sum_{\substack{k=1 \\ \text{odd}}}^{\infty} \frac{(-1)^{\frac{k-1}{2}}}{k!} \frac{a^k}{D^{\frac{k}{2}}} \, \underset{x \sim p}{\mathbb{E}} \left[x^{k+1} \right] \\
        &= \frac{a}{\sqrt{D}} \underset{x \sim p}{\mathbb{E}} \left[h^2\right] - \frac{1}{3!} \frac{a^3}{D^{\frac{3}{2}}} \underset{x \sim p}{\mathbb{E}} \left[h^4\right] + \mathcal{O}\left(\frac{1}{D^{\frac{5}{2}}} \right) \\
        &= \frac{a}{\sqrt{D}} \underset{x \sim p}{\mathbb{E}} \left[h^2\right]+ \mathcal{O}\left(\frac{1}{D^{\frac{3}{2}}}\right).
    \end{align}
\end{proof}

\begin{lemma} \label{taylor_expansion_power}
Let $a$ and $b$ be parameters independent of $D$, and let $D$ be a large parameter. Then the following approximations hold:
\begin{equation}
    \left[1+\frac{a}{D}+\frac{b}{D^2} + \mathcal{O}\left(\frac{1}{D^3}\right)\right]^D = e^{a} \left(1 + \left(b - \frac{a^2}{2}\right)\frac{1}{D} + \mathcal{O}\left(\frac{1}{D^2}\right)\right), \label{eq:power_D}
\end{equation}
and:
\begin{equation}
    \left[1+\frac{a}{D}+\frac{b}{D^2} + \mathcal{O}\left(\frac{1}{D^3}\right)\right]^{D+1} = e^{a} \left(1 + \frac{a + b - \frac{a^2}{2}}{D} + \mathcal{O}\left(\frac{1}{D^2}\right)\right). \label{eq:power_D_plus_1}
\end{equation}
\end{lemma}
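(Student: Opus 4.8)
The plan is to prove both identities by the standard route of taking a logarithm, expanding, and re-exponentiating, tracking error terms to order $1/D$ throughout. Write the base as $1 + u$ with
\begin{equation}
    u = \frac{a}{D} + \frac{b}{D^2} + \mathcal{O}\!\left(\frac{1}{D^3}\right),
\end{equation}
so that $(1+u)^D = \exp\!\big(D\ln(1+u)\big)$. First I would expand the logarithm via $\ln(1+u) = u - \tfrac{1}{2}u^2 + \tfrac{1}{3}u^3 - \cdots$. Since $u = \mathcal{O}(D^{-1})$, we have $u^2 = \tfrac{a^2}{D^2} + \mathcal{O}(D^{-3})$ and $u^3 = \mathcal{O}(D^{-3})$, so only the leading piece of $u^2$ survives at order $D^{-2}$. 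Collecting terms gives
\begin{equation}
    \ln(1+u) = \frac{a}{D} + \frac{b - \tfrac{1}{2}a^2}{D^2} + \mathcal{O}\!\left(\frac{1}{D^3}\right).
\end{equation}

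Multiplying by $D$ yields $D\ln(1+u) = a + \tfrac{1}{D}\big(b - \tfrac{1}{2}a^2\big) + \mathcal{O}(D^{-2})$. Exponentiating and factoring out $e^a$, I would then expand the remaining exponential using $e^x = 1 + x + \mathcal{O}(x^2)$ with $x = \tfrac{1}{D}(b - \tfrac{1}{2}a^2) + \mathcal{O}(D^{-2}) = \mathcal{O}(D^{-1})$, so the quadratic and higher terms fall into $\mathcal{O}(D^{-2})$. This gives
\begin{equation}
    (1+u)^D = e^{a}\left(1 + \left(b - \frac{a^2}{2}\right)\frac{1}{D} + \mathcal{O}\!\left(\frac{1}{D^2}\right)\right),
\end{equation}
which is \eqref{eq:power_D}.

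For \eqref{eq:power_D_plus_1} the cleanest route is to write $(1+u)^{D+1} = (1+u)^D \cdot (1+u)$ and multiply the result above by $1 + u = 1 + \tfrac{a}{D} + \mathcal{O}(D^{-2})$; the only new contribution at order $D^{-1}$ is the cross term $e^a \cdot \tfrac{a}{D}$, which adds $a$ to the bracketed coefficient and produces $a + b - \tfrac{1}{2}a^2$. Alternatively one can redo the exponent directly as $(D+1)\ln(1+u) = a + \tfrac{1}{D}\big(a + b - \tfrac{1}{2}a^2\big) + \mathcal{O}(D^{-2})$ and exponentiate. I do not anticipate a genuine obstacle here, as this is a routine asymptotic expansion; the one point demanding care is bookkeeping of orders—specifically verifying that the $-\tfrac{1}{2}u^2$ term of the logarithm is the sole source of the $-\tfrac{1}{2}a^2$ coefficient and that every discarded term (higher powers in the log series, the quadratic term in the exponential, and the $\tfrac{b}{D^2}$ contribution to $u$ inside $(1+u)$) is genuinely $\mathcal{O}(D^{-2})$ so that the stated error term is justified.
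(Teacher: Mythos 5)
Your proposal is correct and follows essentially the same route as the paper's own proof: expand $\ln(1+u)$ to second order, multiply by $D$, exponentiate and expand $e^x$ to first order, and obtain the $(D+1)$ case by multiplying the $D$-th power result by the extra factor $1 + \tfrac{a}{D} + \mathcal{O}(D^{-2})$. The order bookkeeping you flag as the one delicate point is handled identically in the paper, so there is nothing to add.
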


\begin{proof}
    Using the Taylor expansion:
    \begin{equation}
        \ln (1 + x)  = x - \frac{x^2}{2} + \mathcal{O}(x^3),
    \end{equation}
    we expand:
    \begin{align}
        \ln \left(1+\frac{a}{D}+\frac{b}{D^2}+ \mathcal{O}\left(\frac{1}{D^3}\right)\!\right)\! &= \frac{a}{D}+\frac{b}{D^2}+ \mathcal{O}\!\left(\frac{1}{D^3}\right)\! -\frac{1}{2}\left(\frac{a}{D}+\frac{b}{D^2}+\mathcal{O}\!\left(\frac{1}{D^3}\right)\!\right)^2 
        \hspace{-0.25cm}+\mathcal{O}\!\left(\frac{1}{D^3}\right)\\
        & =\frac{a}{D}+\frac{b}{D^2}-\frac{a^2}{2 D^2}+\mathcal{O}\left(\frac{1}{D^3}\right) \\
        & =\frac{a}{D}+\left(b-\frac{a^2}{2}\right) \frac{1}{D^2}+\mathcal{O}\left(\frac{1}{D^3}\right).
    \end{align}  
   Multiplying by $D$, we find:
    \begin{equation}
        D\ln \left(1+\frac{a}{D}+\frac{b}{D^2}+ \mathcal{O}\left(\frac{1}{D^3}\right)\right) = a+\left(b-\frac{a^2}{2}\right) \frac{1}{D}+\mathcal{O}\left(\frac{1}{D^2}\right).
    \end{equation}
    \begin{enumerate}
        \item For the first result, we compute:
        \begin{align}
            \left[1+\frac{a}{D}+\frac{b}{D^2} + \mathcal{O}\left(\frac{1}{D^3}\right)\right]^D  &=e^{D \ln \left(1+\frac{a}{D}+\frac{b}{D^2}+\mathcal{O}\left(\frac{1}{D^3}\right)\right)} \\
            & =e^{a+\left(b-\frac{a^2}{2}\right) \frac{1}{D}+\mathcal{O}\left(\frac{1}{D^2}\right)} \\
            & =e^a \cdot e^{\left(b-\frac{a^2}{2}\right) \frac{1}{D}+\mathcal{O}\left(\frac{1}{D^2}\right)} \\
            & =e^a\left(1+\left(b-\frac{a^2}{2}\right) \frac{1}{D}+\mathcal{O}\left(\frac{1} {D^2}\right)\right) .
        \end{align}
        where the Taylor expansion:
        \[
        e^x = 1 + x + \frac{x^2}{2} + \mathcal{O}(x^3)
        \]
        is applied in the last step.
        \item For the second result:
        \begin{align}
            \left[1+\frac{a}{D}+\frac{b}{D^2}+ \mathcal{O}\left(\frac{1}{D^3}\right)\right]^{D+1}  & =e^a\left(1+\left(b-\frac{a^2}{2}\right) \frac{1}{D}+\mathcal{O}\left(\frac{1}{D^2}\right)\right)\left[1+\frac{a}{D}+\mathcal{O}\left(\frac{1}{D^2}\right)\right] \\
            & =e^a\left(1+\left(b-\frac{a^2}{2}\right) \frac{1}{D}+\frac{a}{D}+\mathcal{O}\left(\frac{1}{D^2}\right)\right) \\
            & =e^a\left(1+\frac{a+b-\frac{a^2}{2}}{D}+\mathcal{O}\left(\frac{1}{D^2}\right)\right).
        \end{align}
    \end{enumerate}
\end{proof}

\begin{lemma} \label{cos_exp_lemma}
Let $x$ be a random variable distributed according to $p(x)$, and let $D$ be a large parameter. Then the following approximations hold:
    \begin{enumerate}
        \item The expectation of $\cos \left(\frac{a x}{\sqrt{D}}\right)$ is given by:
        \begin{equation}
            \underset{x \sim p}{\mathbb{E}} \left[ \cos \frac{a x}{\sqrt{D}} \right] =1-\frac{1}{2} \frac{a^2}{D} \underset{x \sim p}{\mathbb{E}} \left[x^2 \right]+\frac{1}{4!} \frac{a^4}{D^2} \underset{x \sim p}{\mathbb{E}} \left[x^4 \right]+ \mathcal{O}\left(\frac{1}{D^3}\right).
            \label{eq:cos_exp_lemma1}
        \end{equation}
        \item For the $D$-th power of expectation:
        \begin{equation}
            \begin{aligned}
            \left( \underset{x \sim p}{\mathbb{E}} \left[\cos \frac{a x}{\sqrt{D}}\right] \right)^D  &= e^{-\frac{1}{2}a^2\underset{x \sim p}{\mathbb{E}}\left[x^2\right]} \!\left[ 1 + \frac{1}{8} a^4 \left( \frac{1}{3} \underset{x \sim p}{\mathbb{E}} \left[x^4\right] \! - \! \left(\underset{x \sim p}{\mathbb{E}} \left[x^2\right] \right)^2 \right)\frac{1}{D} + \mathcal{O}\left(\frac{1}{D^2}\right) \right] \\
            &= e^{-\frac{1}{2}a^2 \underset{x \sim p}{\mathbb{E}} \left[x^2\right]} \left( 1 + \mathcal{O}\left(\frac{1}{D}\right)\right).
            \end{aligned}
            \label{eq:cos_exp_lemma2}
        \end{equation}
        \item For the $(D+1)$-th power of expectation:
        \begin{equation}
            \begin{aligned}
            \left( \underset{x \sim p}{\mathbb{E}} \left[\cos \frac{a x}{\sqrt{D}} \right]\right)^{D+1} &= e^{-\frac{1}{2}a^2\underset{x \sim p}{\mathbb{E}}\left[x^2\right]} \Bigg[ 1 +\! \Bigg(\!-\frac{1}{2} a^2 \underset{x \sim p}{\mathbb{E}} \left[x^2\right]\!+\frac{1}{4!} a^4 \underset{x \sim p}{\mathbb{E}} \left[x^4\right] \\
            &\quad -\frac{1}{8} a^4\left(\underset{x \sim p}{\mathbb{E}} \left[x^2\right]\right)^2\Bigg)\frac{1}{D} + \mathcal{O}\left(\frac{1}{D^2}\right) \Bigg] \\
            &= e^{-\frac{1}{2}a^2 \underset{x \sim p}{\mathbb{E}} \left[x^2\right]} \left( 1 + \mathcal{O}\left(\frac{1}{D}\right)\right).
        \end{aligned}
        \label{eq:cos_exp_lemma3}
        \end{equation}
    \end{enumerate}
\end{lemma}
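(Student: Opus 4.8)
The plan is to establish the three claims in sequence: derive part~1 by a direct Maclaurin expansion of the cosine, then feed the resulting expansion into \cref{taylor_expansion_power} to obtain parts~2 and~3 with essentially no new work.

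For part~1, I would expand $\cos(ax/\sqrt{D})$ using $\cos y = \sum_{k\,\mathrm{even}} \frac{(-1)^{k/2}}{k!}\,y^k$ with $y = ax/\sqrt{D}$, exactly paralleling the argument in \cref{sin_taylor_lemma}. The $k$-th term carries a factor $D^{-k/2}$, so the $k=0,2,4$ terms contribute at orders $D^0, D^{-1}, D^{-2}$ respectively, while the first omitted term ($k=6$) is $\mathcal{O}(D^{-3})$. Invoking the interchange of summation and expectation justified at the start of this section (and finiteness of the relevant moments), I take the expectation term by term to obtain $\mathbb{E}[\cos(ax/\sqrt{D})] = 1 - \tfrac{1}{2}\tfrac{a^2}{D}\mathbb{E}[x^2] + \tfrac{1}{4!}\tfrac{a^4}{D^2}\mathbb{E}[x^4] + \mathcal{O}(D^{-3})$, which is \eqref{eq:cos_exp_lemma1}.

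For parts~2 and~3, I would match this expansion to the canonical form of \cref{taylor_expansion_power} by setting $A = -\tfrac{1}{2}a^2\mathbb{E}[x^2]$ and $B = \tfrac{1}{4!}a^4\mathbb{E}[x^4]$, so that $\mathbb{E}[\cos(ax/\sqrt{D})] = 1 + A/D + B/D^2 + \mathcal{O}(D^{-3})$. (The only notational caution is that the symbol $a$ in \cref{taylor_expansion_power} plays the role of my $A$, not the frequency scale $a$ inside the cosine.) Then \eqref{eq:power_D} gives the $D$-th power as $e^A\bigl(1 + (B - \tfrac{A^2}{2})D^{-1} + \mathcal{O}(D^{-2})\bigr)$; substituting $e^A = e^{-\frac{1}{2}a^2\mathbb{E}[x^2]}$ and simplifying $B - \tfrac{A^2}{2} = \tfrac{1}{24}a^4\mathbb{E}[x^4] - \tfrac{1}{8}a^4(\mathbb{E}[x^2])^2 = \tfrac{1}{8}a^4\bigl(\tfrac{1}{3}\mathbb{E}[x^4] - (\mathbb{E}[x^2])^2\bigr)$ recovers \eqref{eq:cos_exp_lemma2}. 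Analogously, \eqref{eq:power_D_plus_1} gives the $(D+1)$-th power as $e^A\bigl(1 + (A + B - \tfrac{A^2}{2})D^{-1} + \mathcal{O}(D^{-2})\bigr)$, and $A + B - \tfrac{A^2}{2} = -\tfrac{1}{2}a^2\mathbb{E}[x^2] + \tfrac{1}{4!}a^4\mathbb{E}[x^4] - \tfrac{1}{8}a^4(\mathbb{E}[x^2])^2$ matches \eqref{eq:cos_exp_lemma3}.

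Since every step reduces to a bookkeeping of orders in $D$ followed by substitution into an already-proven lemma, I do not anticipate a substantive obstacle. The only points demanding genuine care are the justification of the term-by-term expectation (resting on the interchange assumption and the finiteness of the moments $\mathbb{E}[x^2], \mathbb{E}[x^4]$) and the clean truncation of error terms, ensuring that the $\mathcal{O}(D^{-3})$ remainder in part~1 correctly propagates to the $\mathcal{O}(D^{-2})$ remainder inside the bracket after exponentiation.
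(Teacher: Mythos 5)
Your proposal is correct and follows essentially the same route as the paper's own proof: a term-by-term Maclaurin expansion of the cosine (justified by the interchange assumption) for part~1, followed by substitution into \cref{taylor_expansion_power} with $s=-\tfrac{1}{2}a^2\mathbb{E}[x^2]$ and $t=\tfrac{1}{4!}a^4\mathbb{E}[x^4]$ (your $A$ and $B$) for parts~2 and~3, including the identical simplification $t-\tfrac{s^2}{2}=\tfrac{1}{8}a^4\bigl(\tfrac{1}{3}\mathbb{E}[x^4]-(\mathbb{E}[x^2])^2\bigr)$. No gaps; the only differences are notational.
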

\begin{proof}
    \begin{enumerate}
        \item For the first result, we compute:
    \begin{align}
        \underset{x \sim p}{\mathbb{E}} \left[ \cos \frac{a x}{\sqrt{D}} \right] & =\underset{x \sim p}{\mathbb{E}} \left[ \sum_{\substack{k=0 \\ 
        k \text { even }}}^{\infty} \frac{(-1)^{k / 2}}{k!}\left(\frac{a x}{\sqrt{D}}\right)^k \right] \quad \left( \text{using } \cos x=\sum_{\substack{k=0 \\ k \text { even }}}^{\infty} \frac{(-1)^{k / 2}}{k!} x^k \right) \\
        & = \sum_{\substack{k=0 \\
        k \text { even }}}^{\infty} \underset{x \sim p}{\mathbb{E}} \left[\frac{(-1)^{k / 2}}{k!}\left(\frac{a x}{\sqrt{D}}\right)^k \right] \\
        & =\sum_{\substack{k=0 \\
        k \text { even }}}^{\infty} \frac{(-1)^{k / 2}}{k!} \frac{a^k}{D^{k / 2}} \underset{x \sim p}{\mathbb{E}} \left[x^k\right] \\
        &= 1-\frac{1}{2} \frac{a^2}{D} \underset{x \sim p}{\mathbb{E}} \left[x^2\right]+\frac{1}{4!} \frac{a^4}{D^2} \underset{x \sim p}{\mathbb{E}} \left[x^4\right]+ \mathcal{O}\left(\frac{1}{D^3}\right).
    \end{align}
    \item Using the \cref{eq:cos_exp_lemma1}, we expand its $D$-th power:
    \begin{equation}
        \left( \underset{x \sim p}{\mathbb{E}} \left[ \cos \frac{a x}{\sqrt{D}} \right] \right)^D = \left[1-\frac{1}{2} \frac{a^2}{D} \underset{x \sim p}{\mathbb{E}} \left[x^2\right]+\frac{1}{4!} \frac{a^4}{D^2} \underset{x \sim p}{\mathbb{E}} \left[x^4\right]+ \mathcal{O}\left(\frac{1}{D^3}\right)\right]^D
    \end{equation}
    Let:
    \begin{align}
        s &=-\frac{1}{2} a^2 \underset{x \sim p}{\mathbb{E}} \left[x^2\right], \label{s_def}\\
        t &=\frac{1}{4!} a^4 \underset{x \sim p}{\mathbb{E}} \left[x^4\right]. \label{t_def}
    \end{align}
    Then:
    \begin{align}
    t-\frac{s^2}{2} & =\frac{1}{4!} a^4 \underset{x \sim p}{\mathbb{E}} \left[x^4\right]-\frac{1}{2}\left(-\frac{1}{2} a^2 \underset{x \sim p}{\mathbb{E}} \left[x^2\right]\right)^2 \\
    & =\frac{1}{4!} a^4 \underset{x \sim p}{\mathbb{E}} \left[x^4\right]-\frac{1}{8} a^4\left(\underset{x \sim p}{\mathbb{E}} \left[x^2\right]\right)^2 \\
    & =\frac{1}{8} a^4\left[\frac{1}{3} \underset{x \sim p}{\mathbb{E}} \left[x^4\right]-\left(\underset{x \sim p}{\mathbb{E}} \left[x^2\right]\right)^2\right].
    \end{align}
    Using \cref{eq:power_D} of lemma \ref{taylor_expansion_power}, we find:
    \begin{equation}
        \left( \underset{x \sim p}{\mathbb{E}} \left[\cos \frac{a x}{\sqrt{D}} \right] \right)^D  = e^{-\frac{1}{2}a^2 \underset{x \sim p}{\mathbb{E}} \left[x^2\right]} \! \left[1 + \frac{1}{8} a^4 \! \left(\frac{1}{3} \underset{x \sim p}{\mathbb{E}} \left[x^4\right] \! - \! \left(\underset{x \sim p}{\mathbb{E}} \left[x^2\right]\right)^2\right)\frac{1}{D} + \mathcal{O}\left(\frac{1}{D^2}\right) \right].
    \end{equation}
    \item Similarly, for the $(D+1)$-th power:
    \begin{equation}
        \left( \underset{x \sim p}{\mathbb{E}} \left[\cos \frac{a x}{\sqrt{D}} \right] \right)^{D+1} = \left[1-\frac{1}{2} \frac{a^2}{D} \underset{x \sim p}{\mathbb{E}} \left[x^2\right]+\frac{1}{4!} \frac{a^4}{D^2} \underset{x \sim p}{\mathbb{E}} \left[x^4\right]+\mathcal{O}\left(\frac{1}{D^3}\right)\right]^{D+1}.
    \end{equation}
    Using the definitions of $s$ and $t$ from \cref{s_def} and \cref{t_def}, we compute:
    \begin{equation}
        s+t-\frac{s^2}{2}=-\frac{1}{2} a^2 \underset{x \sim p}{\mathbb{E}} \left[x^2\right]+\frac{1}{8} a^4\left[\frac{1}{3} \underset{x \sim p}{\mathbb{E}} \left[x^4\right]-\left(\underset{x \sim p}{\mathbb{E}} \left[x^2\right]\right)^2\right].
    \end{equation}
    Using the \cref{eq:power_D_plus_1} of lemma \ref{taylor_expansion_power}, we get:
    \begin{equation}
            \begin{aligned}
            \left( \underset{x \sim p}{\mathbb{E}} \left[\cos \frac{a x}{\sqrt{D}} \right]\right)^{D+1} &= e^{-\frac{1}{2}a^2\underset{x \sim p}{\mathbb{E}}\left[x^2\right]} \Bigg[ \! 1 \! +\! \Bigg(\!-\frac{1}{2} a^2 \underset{x \sim p}{\mathbb{E}} \! \left[x^2\right]\!+\!\frac{1}{4!} a^4 \underset{x \sim p}{\mathbb{E}} \! \left[x^4\right] \\
            &\quad -\frac{1}{8} a^4\left(\underset{x \sim p}{\mathbb{E}} \left[x^2\right]\right)^2\Bigg)\frac{1}{D} + \mathcal{O}\left(\frac{1}{D^2}\right) \Bigg].
        \end{aligned}
        \end{equation}
    \end{enumerate}
\end{proof}

\begin{lemma}\label{cos_sum_diff_lemma}
Let $x$ and $x^{\prime}$ be independent random variables distributed according to $p(x)$, and let $D$ be a large parameter. Then the following approximation holds:
\begin{equation} \label{cos_sum_diff_eqn}
\underset{x, x^{\prime} \sim p}{\mathbb{E}} \left[ \cos \left(\left(x \pm x^{\prime}\right) \frac{a}{\sqrt{D}}\right) \right] =1-\frac{a^2}{D}\left[\underset{x \sim p}{\mathbb{E}} \left[x^2\right] \pm(\underset{x \sim p}{\mathbb{E}} \left[x\right])^2\right]+ \mathcal{O}\left(\frac{1}{D^2}\right).
\end{equation}

\end{lemma}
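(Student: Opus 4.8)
The plan is to expand the cosine as a Taylor series in its argument, interchange the infinite summation with the expectation using the interchangeability assumption established earlier in this section, and then collect terms up to order $D^{-1}$. Writing $y = x \pm x'$, the even-power expansion gives $\cos\!\left(\frac{ay}{\sqrt{D}}\right) = \sum_{k \text{ even}} \frac{(-1)^{k/2}}{k!}\frac{a^k}{D^{k/2}} y^k$, so after taking expectations the $k=0$ term contributes $1$, the $k=2$ term contributes $-\frac{a^2}{2D}\mathbb{E}[y^2]$, and every term with $k \geq 4$ carries a prefactor $D^{-k/2} = \mathcal{O}(D^{-2})$. This is the same Taylor-expand-and-interchange strategy already used for \cref{sin_taylor_lemma} and \cref{cos_exp_lemma}.

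The one substantive step is evaluating the second moment of $y = x \pm x'$ using the independence and identical distribution of $x$ and $x'$. First I would expand $\mathbb{E}[(x \pm x')^2] = \mathbb{E}[x^2] \pm 2\,\mathbb{E}[x x'] + \mathbb{E}[x'^2]$. Since $x$ and $x'$ are i.i.d., $\mathbb{E}[x'^2] = \mathbb{E}[x^2]$, and crucially the cross term factorises as $\mathbb{E}[x x'] = \mathbb{E}[x]\,\mathbb{E}[x'] = (\mathbb{E}[x])^2$. Substituting gives $\mathbb{E}[(x \pm x')^2] = 2\mathbb{E}[x^2] \pm 2(\mathbb{E}[x])^2$, so the $k=2$ term becomes $-\frac{a^2}{D}\big(\mathbb{E}[x^2] \pm (\mathbb{E}[x])^2\big)$, which is exactly the claimed leading correction.

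The remainder bound requires only that the relevant moments are finite. The $k=4$ term equals $\frac{1}{4!}\frac{a^4}{D^2}\,\mathbb{E}[(x \pm x')^4]$, and by the binomial expansion together with independence this expectation is a fixed finite combination of moments of $x$ up to order four, so the whole term is $\mathcal{O}(D^{-2})$; the identical argument controls all higher-order terms, yielding an overall $\mathcal{O}(D^{-2})$ remainder. The main obstacle, such as it is, is purely bookkeeping: ensuring the cross term is correctly attributed to $(\mathbb{E}[x])^2$ rather than to $\mathbb{E}[x^2]$, and tracking the $\pm$ sign consistently through the expansion so that the two cases (sum and difference) are handled simultaneously. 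Everything else is a direct application of the moment-expansion technique used in the preceding lemmas.
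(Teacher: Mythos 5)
Your proposal is correct and follows essentially the same route as the paper's proof: the paper simply invokes its earlier cosine-expansion result (\cref{cos_exp_lemma}) applied to the single random variable $x \pm x'$, whereas you re-derive that expansion inline, and then both arguments hinge on the identical key step $\mathbb{E}\left[(x \pm x')^2\right] = 2\,\mathbb{E}\left[x^2\right] \pm 2\left(\mathbb{E}\left[x\right]\right)^2$ via independence. The sign bookkeeping and the $\mathcal{O}(D^{-2})$ remainder bound match the paper's treatment exactly.
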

\begin{proof}
    Using \cref{eq:cos_exp_lemma1} of lemma \ref{cos_exp_lemma}, we have:
    \begin{equation}
        \underset{x, x^{\prime} \sim p}{\mathbb{E}} \left[\cos \left(\left(x \pm x^{\prime}\right) \frac{a}{\sqrt{D}}\right) \right]= 1-\frac{1}{2} \frac{a^2}{D} \underbrace{\underset{x, x^{\prime} \sim p}{\mathbb{E}} \left[\left(x \pm x^{\prime}\right)^2 \right]}_{\circled{1}}+ \mathcal{O}\left(\frac{1}{D^2}\right).
    \end{equation}
    Expanding $\circled{1}$, we get:
    \begin{align}
        \underset{x, x^{\prime} \sim p}{\mathbb{E}} \left[\left(x \pm x^{\prime}\right)^2 \right]&=\underset{x, x^{\prime} \sim p}{\mathbb{E}}\left[x^2+x^{\prime 2} \pm 2 x x^{\prime}\right] \\
        & =\underset{x \sim p}{\mathbb{E}} \left[x^2\right]+ \underset{x^{\prime} \sim p}{\mathbb{E}} \left[x^{\prime 2}\right] \pm 2\left(\underset{x \sim p}{\mathbb{E}} \left[x\right] \right)\left(\underset{x^{\prime} \sim p}{\mathbb{E}} \left[x^{\prime}\right]\right) \\
        & =2 \underset{x \sim p}{\mathbb{E}} \left[x^2\right] \pm 2 \left(\underset{x \sim p}{\mathbb{E}} \left[x\right] \right)^2.
    \end{align}
    Substituting $\circled{1}$ back, we obtain:
    \begin{align}
        \underset{x, x^{\prime} \sim p}{\mathbb{E}} \left[\cos \left(\left(x \pm x^{\prime}\right) \frac{a}{\sqrt{D}}\right) \right] &= 1-\frac{1}{2} \frac{a^2}{D}\left[2 \underset{x \sim p}{\mathbb{E}} \left[x^2\right] \pm 2 \left(\underset{x \sim p}{\mathbb{E}} \left[x\right]\right)^2\right]+ \mathcal{O}\left(\frac{1}{D^2}\right)\\
        &= 1-\frac{a^2}{D}\left[\underset{x \sim p}{\mathbb{E}} \left[x^2\right] \pm \left(\underset{x \sim p}{\mathbb{E}} \left[x\right] \right)^2\right]+ \mathcal{O}\left(\frac{1}{D^2}\right).
    \end{align}
\end{proof}

\begin{lemma} \label{taylor_exp_CDlamda_lemma} Let $a, b, c \in \mathbb{R}$, and let $\lambda \in[0,1]$. For a large parameter $D$, the following approximation holds: 
     \begin{equation} \label{taylor_exp_CDlamda_eqn}
     \left[1 + \frac{a}{D} + \frac{b}{D^2} + \mathcal{O}\left(\frac{1}{D^3}\right)\right]^{c D^\lambda} = 
     \begin{cases} 
     e^{a c} \left(1 + c \left(b - \frac{a^2}{2}\right) D^{-1} + \mathcal{O}\left(D^{-2}\right)\right) & \text{if } \lambda = 1, \\[0.5cm] 
     1 + a c D^{\lambda-1} + \mathcal{O}\left(D^{2\lambda-2}\right) & \text{if } 0 \leq \lambda < 1. \end{cases} 
     \end{equation}
\end{lemma}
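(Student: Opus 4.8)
The plan is to reduce everything to the single identity $X^{cD^\lambda} = \exp\!\big(cD^\lambda \ln X\big)$ with $X = 1 + a/D + b/D^2 + \mathcal{O}(D^{-3})$, and then expand the logarithm and the outer exponential to the order dictated by $\lambda$. First I would expand the logarithm exactly as in the proof of \cref{taylor_expansion_power}: setting $u = a/D + b/D^2 + \mathcal{O}(D^{-3})$ and using $\ln(1+u) = u - u^2/2 + \mathcal{O}(u^3)$ together with $u^2 = a^2/D^2 + \mathcal{O}(D^{-3})$, one obtains
\begin{equation}
\ln X = \frac{a}{D} + \Big(b - \frac{a^2}{2}\Big)\frac{1}{D^2} + \mathcal{O}(D^{-3}).
\end{equation}
Multiplying by $cD^\lambda$ then gives the exponent
\begin{equation}
w := cD^\lambda \ln X = ca\,D^{\lambda-1} + c\Big(b - \frac{a^2}{2}\Big) D^{\lambda-2} + \mathcal{O}(D^{\lambda-3}).
\end{equation}
The two branches of the lemma differ only in whether $w$ tends to a nonzero constant or to zero.

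For $\lambda = 1$ the exponent is $w = ca + c(b - a^2/2)D^{-1} + \mathcal{O}(D^{-2})$, so I would factor out $e^{ca}$ and apply $e^x = 1 + x + \mathcal{O}(x^2)$ to the remaining vanishing part, yielding $e^{ca}\big(1 + c(b-a^2/2)D^{-1} + \mathcal{O}(D^{-2})\big)$; this reproduces the first branch and is precisely the $c=1$ specialisation of \cref{taylor_expansion_power}. For $0 \le \lambda < 1$ the leading term $ca\,D^{\lambda-1}$ now vanishes as $D \to \infty$, so the entire exponent $w$ is small and I would instead expand $e^w = 1 + w + w^2/2 + \mathcal{O}(w^3)$ directly. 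The leading surviving term is $ca\,D^{\lambda-1}$, and the task becomes showing that every other contribution — namely $c(b-a^2/2)D^{\lambda-2}$ from $w$ itself, $\tfrac12(ca)^2 D^{2\lambda-2}$ from $w^2/2$, and the cubic-and-higher remainders — is $\mathcal{O}(D^{2\lambda-2})$.

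The one genuinely delicate point is this last order-bookkeeping step, so I would treat it explicitly. The key comparison is between the exponents $\lambda-2$ (from the quadratic log-correction) and $2\lambda-2$ (from $w^2$): since $2\lambda-2-(\lambda-2) = \lambda \ge 0$, we have $D^{\lambda-2} = \mathcal{O}(D^{2\lambda-2})$ for all $\lambda \in [0,1)$, so the $D^{\lambda-2}$ term is absorbed. The cross term in $w^2$ is of order $D^{(\lambda-1)+(\lambda-2)} = D^{2\lambda-3}$, the cubic $w^3$ is $\mathcal{O}(D^{3\lambda-3})$, and the remainder carried by $w$ is $\mathcal{O}(D^{\lambda-3})$; each of these exponents is strictly less than $2\lambda-2$ precisely because $\lambda < 1$, so all are dominated. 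Collecting, $e^w = 1 + ca\,D^{\lambda-1} + \mathcal{O}(D^{2\lambda-2})$, which is the second branch. I would also verify the boundary $\lambda = 0$ for consistency: there $cD^0 = c$ is constant, the leading term is $ca\,D^{-1}$, and the stated error $\mathcal{O}(D^{2\lambda-2}) = \mathcal{O}(D^{-2})$ matches the ordinary binomial expansion of $X^c$. Thus the main obstacle is not any hard estimate but the disciplined tracking of which of the competing powers $D^{\lambda-1}, D^{\lambda-2}, D^{2\lambda-2}, D^{2\lambda-3}, D^{3\lambda-3}$ dominates across the full range $\lambda \in [0,1)$, and in particular confirming that $2\lambda-2$ is the correct error exponent.
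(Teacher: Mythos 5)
Your proposal is correct and follows essentially the same route as the paper's proof: both expand $\ln\bigl(1+\tfrac{a}{D}+\tfrac{b}{D^2}+\mathcal{O}(D^{-3})\bigr)$ via lemma~\ref{taylor_expansion_power}, multiply by $cD^\lambda$, and split on whether the exponent tends to the constant $ac$ (the $\lambda=1$ branch, handled by factoring out $e^{ac}$) or vanishes (the $\lambda<1$ branch, handled by expanding $e^w$ directly). If anything, your order bookkeeping in the $\lambda<1$ case is slightly more careful than the paper's, which writes the quadratic term of $e^w$ as $c^2[\cdots]^2$ rather than $\tfrac{c^2}{2}[\cdots]^2$ and states the intermediate error as $\mathcal{O}(D^{\lambda-2})$ --- harmless slips, since, exactly as you verify, every competing power $D^{\lambda-2}, D^{2\lambda-3}, D^{3\lambda-3}, D^{\lambda-3}$ is absorbed into $\mathcal{O}(D^{2\lambda-2})$ for $\lambda\in[0,1)$.
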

\begin{proof}
    Recall from the proof of lemma \ref{taylor_expansion_power}, we know that:
    \begin{equation}
        \ln \left(1+\frac{a}{D}+\frac{b}{D^2}+ \mathcal{O}\left(\frac{1}{D^3}\right)\right) = \frac{a}{D}+\left(b-\frac{a^2}{2}\right) \frac{1}{D^2}+\mathcal{O}\left(\frac{1}{D^3}\right).
    \end{equation}
    Multiplying by $c D^\lambda$, we obtain:
    \begin{equation}
    c D^\lambda \ln \left(1+\frac{a}{D}+\frac{b}{D^2}+\mathcal{O}\left(\frac{1}{D^2}\right)\right)=c\left[a D^{\lambda-1}+\left(b-\frac{a^2}{2}\right) D^{\lambda-2}+\mathcal{O}\left(D^{\lambda-3}\right)\right].
    \end{equation}
    Expanding the LHS of \cref{taylor_exp_CDlamda_eqn}, we get:
    \begin{align}
        \left[1+\frac{a}{D}+\frac{b}{D^2}+ \mathcal{O}\left(\frac{1}{D^3}\right)\right]^{c D^\lambda} & =\left[1+\frac{a}{D}+\frac{b}{D^2}+\mathcal{O}\left(\frac{1}{D^3}\right)\right]^{c D^\lambda} \\
        & =e^{c D^\lambda \ln \left(1+\frac{a}{D}+\frac{b}{D^2}+\mathcal{O}\left(\frac{1}{D^3}\right)\right)} \\
        & =e^{c\left[a D^{\lambda-1}+\left(b-\frac{a^2}{2}\right) D^{\lambda-2}+\mathcal{O}\left(D^{\lambda-3}\right)\right]}.
    \end{align}
    We now handle two cases based on the value of $\lambda$.\\
    \newline
    \noindent \underline{\textbf{Case 1: $\lambda = 1$}}
    \begin{align}
        \left[1+\frac{a}{D}+\frac{b}{D^2}+ \mathcal{O}\left(\frac{1}{D^3}\right)\right]^{c D^\lambda} & =e^{c\left[a+\left(b-\frac{a^2}{2}\right) D^{-1}+\mathcal{O}\left(D^{-2}\right)\right]} \\
        & =e^{a c} e^{c\left(b-\frac{a^2}{2}\right) D^{-1}+\mathcal{O}\left(D^{-2}\right)} \\
        & =e^{a c}\left(1+c\left(b-\frac{a^2}{2}\right) D^{-1}+\mathcal{O}\left(D^{-2}\right)\right).
    \end{align}
    \noindent \underline{\textbf{Case 2: $\lambda < 1$}}
    \begin{align}
        \left[1+\frac{a}{D}+\frac{b}{D^2}+ \mathcal{O}\left(\frac{1}{D^3}\right)\right]^{c D^\lambda} \hspace{-0.5cm} &=e^{c\left[a D^{\lambda-1}+\left(b-\frac{a^2}{2}\right) D^{\lambda-2}+\mathcal{O}\left(D^{\lambda-3}\right)\right]} \\
        & =1+c\left[a D^{\lambda-1}+\left(b-\frac{a^2}{2}\right) D^{\lambda-2}+O\left(D^{\lambda-3}\right)\right]\\
        &\quad +c^2\left[a D^{\lambda-1}+\left(b-\frac{a^2}{2}\right) D^{\lambda-2}+O\left(D^{\lambda-3}\right)\right]^2 \! + \mathcal{O}\left( D^{3(\lambda-1)}\right) \\
        & =1+a c D^{\lambda-1}+a^2 c^2 D^{2 \lambda-2}+\mathcal{O}\left(D^{\lambda-2}\right) \\
        & =1+a c D^{\lambda-1}+\mathcal{O}\left(D^{2 \lambda-2}\right).
    \end{align}
\end{proof}

\begin{lemma} \label{cos_cd_lemma} Let $x$ be a random variable distributed according to $p(x)$, with $c \in \mathbb{R}, a$ independent of $D$, and $\lambda \in[0,1]$. For large $D$, the following approximation holds:
    \begin{equation}
    \left(\underset{x \sim p}{\mathbb{E}} \left[\cos \frac{a x}{\sqrt{D}}\right] \! \right)^{c D^\lambda} \hspace{-0.6cm}= \begin{cases}
    e^{-\frac{c}{2} a^2 \underset{x \sim p}{\mathbb{E}}  \left[x^2\right]} \hspace{-0.1cm} \left[ \! 1 \! + \! \frac{1}{8} a^4 c\left( \! \frac{1}{3} \underset{x \sim p}{\mathbb{E}} \! \left[x^4\right] \! - \! \left(\underset{x \sim p}{\mathbb{E}} \! \left[x^2\right]\right)^2\right) \! \frac{1}{D} \! + \! \mathcal{O} \! \left(\frac{1}{D^2}\right) \! \right]  &  \lambda=1, \\[0.5cm]
    1-\frac{c}{2} a^2 \underset{x \sim p}{\mathbb{E}}  \left[x^2\right] D^{\lambda-1}+\mathcal{O}\left(D^{2 \lambda-2}\right) & 0 \leqslant \lambda<1.\end{cases}
    \end{equation}
\end{lemma}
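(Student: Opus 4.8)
The plan is to obtain this lemma as a direct composition of the two results already in hand: \cref{cos_exp_lemma}, which controls the base $\mathbb{E}_{x\sim p}[\cos(ax/\sqrt{D})]$ to second order in $1/D$, and \cref{taylor_exp_CDlamda_lemma}, which describes how an expression of the form $[1 + a/D + b/D^2 + \mathcal{O}(1/D^3)]^{cD^\lambda}$ behaves in the two regimes $\lambda = 1$ and $0 \le \lambda < 1$. Because both ingredients are already established, no genuinely new asymptotic estimate is required; the entire task reduces to identifying coefficients and matching them against the claimed expressions.

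First I would invoke \cref{eq:cos_exp_lemma1} to put the base into the canonical form
\[
\mathbb{E}_{x\sim p}\!\left[\cos\tfrac{ax}{\sqrt{D}}\right] = 1 + \frac{\tilde a}{D} + \frac{\tilde b}{D^2} + \mathcal{O}\!\left(\tfrac{1}{D^3}\right),
\]
reading off $\tilde a = -\tfrac{1}{2}a^2\,\mathbb{E}[x^2]$ and $\tilde b = \tfrac{1}{4!}a^4\,\mathbb{E}[x^4]$. I would then feed these values of $\tilde a$ and $\tilde b$ into \cref{taylor_exp_CDlamda_lemma} with the exponent $cD^\lambda$ and split into the two cases supplied by that lemma. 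For $\lambda = 1$ the lemma produces the prefactor $e^{\tilde a c} = e^{-\frac{c}{2}a^2\mathbb{E}[x^2]}$, matching the target exponential, together with a $1/D$ correction of the form $c(\tilde b - \tilde a^2/2)$; substituting yields
\[
\tilde b - \tfrac{\tilde a^2}{2} = \tfrac{1}{24}a^4\mathbb{E}[x^4] - \tfrac{1}{8}a^4(\mathbb{E}[x^2])^2 = \tfrac{1}{8}a^4\bigl(\tfrac{1}{3}\mathbb{E}[x^4] - (\mathbb{E}[x^2])^2\bigr),
\]
which reproduces the bracketed correction exactly. For $0 \le \lambda < 1$ the second branch of \cref{taylor_exp_CDlamda_lemma} gives $1 + \tilde a c\, D^{\lambda-1} + \mathcal{O}(D^{2\lambda-2}) = 1 - \tfrac{c}{2}a^2\mathbb{E}[x^2]\,D^{\lambda-1} + \mathcal{O}(D^{2\lambda-2})$, which is precisely the claimed expression.

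The only step that demands any care — and the single place an error could creep in — is the algebraic simplification of the $1/D$ coefficient $\tilde b - \tilde a^2/2$ in the $\lambda = 1$ case, since it combines the fourth-moment term coming from the cosine expansion with the square of the second-moment term produced by the logarithmic expansion inside \cref{taylor_exp_CDlamda_lemma}. One must keep the numerical factors exact (notably tracking that $\tfrac{1}{24}/\tfrac{1}{8} = \tfrac{1}{3}$) so that the two fourth-moment contributions assemble into the factor $\tfrac{1}{3}\mathbb{E}[x^4] - (\mathbb{E}[x^2])^2$. Everything else is verbatim substitution, and once this coefficient is checked the result follows immediately in both cases.
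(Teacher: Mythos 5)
Your proposal is correct and follows essentially the same route as the paper's own proof: substitute the expansion from \cref{eq:cos_exp_lemma1} into \cref{taylor_exp_CDlamda_lemma} with exponent $cD^\lambda$ and read off the two cases. Your explicit verification that $\tilde b - \tilde a^2/2 = \tfrac{1}{8}a^4\bigl(\tfrac{1}{3}\mathbb{E}[x^4] - (\mathbb{E}[x^2])^2\bigr)$ is exactly the coefficient matching the paper performs (implicitly) in its $\lambda=1$ case, so nothing is missing.
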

\begin{proof}
    Raising \cref{eq:cos_exp_lemma1} of lemma \ref{cos_exp_lemma} to the power $cD^\lambda$, we have:
    \begin{equation}
            \left(\underset{x \sim p}{\mathbb{E}} \left[ \cos \frac{a x}{\sqrt{D}} \right]\right)^{cD^\lambda}= \left[1-\frac{1}{2} \frac{a^2}{D} \underset{x \sim p}{\mathbb{E}} \left[x^2 \right]+\frac{1}{4!} \frac{a^4}{D^2} \underset{x \sim p}{\mathbb{E}} \left[x^4 \right]+ \mathcal{O}\left(\frac{1}{D^3}\right)\right]^{cD^\lambda}.
    \end{equation}
    Using lemma \ref{taylor_exp_CDlamda_lemma}, we evaluate the expression for the two cases:

    \noindent \underline{\textbf{Case 1: $\lambda = 1$}}
    \begin{equation}
        \left(\underset{x \sim p}{\mathbb{E}} \left[\cos \frac{a x}{\sqrt{D}}\right]\right)^{cD^\lambda}= e^{-\frac{c}{2}a^2 \underset{x \sim p}{\mathbb{E}} \left[x^2\right]}\left[1 + \frac{1}{8} a^4c\left(\frac{1}{3} \underset{x \sim p}{\mathbb{E}} \left[x^4\right]-\left(\underset{x \sim p}{\mathbb{E}} \left[x^2\right]\right)^2\right)\frac{1}{D} + \mathcal{O}\left(\frac{1}{D^2}\right) \right].
    \end{equation}
    \noindent \underline{\textbf{Case 2: $0 \geq \lambda < 1$}}
    \begin{equation}
        \left(\underset{x \sim p}{\mathbb{E}} \left[\cos \frac{a x}{\sqrt{D}}\right]\right)^{cD^\lambda}= 1-\frac{c}{2} a^2 \underset{x \sim p}{\mathbb{E}} \left[x^2\right] D^{\lambda-1}+ \mathcal{O}\left(D^{2 \lambda-2}\right).
    \end{equation}
\end{proof}

\begin{lemma} \label{cos_sum_diff_exp_lemma} Let $x$ and $x^{\prime}$ be independent random variables distributed according to $p(x)$, with $c \in \mathbb{R}, a$ independent of $D$, and $\lambda \in[0,1]$. For large $D$, the following approximation holds:
    \begin{equation}
    \left[\underset{x, x^{\prime} \sim p}{\mathbb{E}} \! \left[ \cos \! \left( \! \left(x \pm x^{\prime}\right) \frac{a}{\sqrt{D}}\right) \! \right] \! \right]^{cD^{\lambda}} \hspace{-0.5cm}= \hspace{-0.1cm}
    \begin{cases}
    e^{-a^2 c\left[\underset{x \sim p}{\mathbb{E}} \left[x^2\right] \pm(\underset{x \sim p}{\mathbb{E}} \left[x\right])^2\right]}\!\left(1+\mathcal{O}\left(D^{-1}\right)\right) & \text{if }  \lambda=1, \\[0.5cm]
    1 \! - \! a^2 c \! \left[\underset{x \sim p}{\mathbb{E}} \left[x^2\right] \! \pm \! (\underset{x \sim p }{\mathbb{E}} \left[x\right])^2\right] \! D^{\lambda-1} \! +\mathcal{O} \! \left(D^{2 \lambda-2}\right) & \text{if } 0 \leqslant \lambda<1.
    \end{cases}
    \end{equation}
\end{lemma}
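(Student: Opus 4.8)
The plan is to obtain this result by directly composing the two preceding lemmas: \cref{cos_sum_diff_lemma}, which expands the base expectation to second order in $1/D$, and \cref{taylor_exp_CDlamda_lemma}, which raises a base of the form $1 + A/D + B/D^2 + \mathcal{O}(D^{-3})$ to a power $cD^\lambda$. The whole argument is a substitution followed by bookkeeping of error orders, so no new analytic input is required.

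First I would recall from \cref{cos_sum_diff_lemma} that
\[
\mathbb{E}_{x,x'\sim p}\left[\cos\left((x\pm x')\frac{a}{\sqrt{D}}\right)\right] = 1 - \frac{a^2\left[\mathbb{E}[x^2]\pm(\mathbb{E}[x])^2\right]}{D} + \mathcal{O}(D^{-2}).
\]
This already has the shape $1 + A/D + \mathcal{O}(D^{-2})$ required by \cref{taylor_exp_CDlamda_lemma}, with the identification $A = -a^2\left[\mathbb{E}[x^2]\pm(\mathbb{E}[x])^2\right]$ and the unspecified $1/D^2$ coefficient $B$ lumped into the $\mathcal{O}(D^{-2})$ remainder. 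I would then substitute this base into \cref{taylor_exp_CDlamda_lemma} and read off the two cases. For $\lambda=1$, that lemma yields $e^{Ac}\left(1 + c(B - A^2/2)/D + \mathcal{O}(D^{-2})\right)$; since the unknown $B$ enters only at order $1/D$, it is absorbed into the trailing factor, giving $e^{-a^2 c[\mathbb{E}[x^2]\pm(\mathbb{E}[x])^2]}\left(1+\mathcal{O}(D^{-1})\right)$. For $0\le\lambda<1$, the lemma gives $1 + AcD^{\lambda-1} + \mathcal{O}(D^{2\lambda-2})$, which is exactly the claimed $1 - a^2 c[\mathbb{E}[x^2]\pm(\mathbb{E}[x])^2]D^{\lambda-1} + \mathcal{O}(D^{2\lambda-2})$.

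The only point requiring care---and the closest thing to an obstacle---is verifying that the unspecified $B/D^2$ term of the base never spoils the stated error order. For $\lambda=1$ this is immediate, since $B$ is already suppressed to $\mathcal{O}(D^{-1})$ after raising to the power $cD$. For $\lambda<1$ the $B$ term contributes at order $D^{\lambda-2}$, and because $\lambda-2 \le 2\lambda-2$ whenever $\lambda\ge 0$, it is dominated by the stated $\mathcal{O}(D^{2\lambda-2})$ remainder. Hence the expansion of \cref{cos_sum_diff_lemma} to order $1/D$ suffices throughout, and no higher moment information is needed. I would close by remarking that the $\pm$ sign propagates untouched through every step, so both variants are established simultaneously.
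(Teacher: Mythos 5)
Your proposal is correct and takes essentially the same route as the paper: the paper's proof likewise consists of substituting the expansion from \cref{cos_sum_diff_lemma} into \cref{taylor_exp_CDlamda_lemma} and reading off the two cases. Your explicit check that the unspecified $1/D^2$ coefficient of the base cannot degrade the stated error orders (since it enters at $\mathcal{O}(D^{-1})$ for $\lambda=1$ and at $\mathcal{O}(D^{\lambda-2})\subseteq\mathcal{O}(D^{2\lambda-2})$ for $\lambda<1$) is a detail the paper leaves implicit, but it changes nothing substantive.
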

\begin{proof}
    Raising \cref{cos_sum_diff_eqn} of lemma \ref{cos_sum_diff_lemma} to the power $cD^\lambda$, we get:
    \begin{equation}
    \left[\underset{x, x^{\prime} \sim p}{\mathbb{E}} \left[ \cos \left(\left(x \pm x^{\prime}\right) \frac{a}{\sqrt{D}}\right) \right] \right]^{cD^{\lambda}}\hspace{-0.5cm} =\left[1-\frac{a^2}{D}\left[\underset{x \sim p}{\mathbb{E}} \left[ x^2 \right] \pm(\underset{x \sim p}{\mathbb{E}} \left[ x \right])^2\right]+ \mathcal{O}\left(\frac{1}{D^2}\right)\right]^{cD^{\lambda}}.
    \end{equation}
    Using lemma \ref{taylor_exp_CDlamda_lemma}, we obtain the stated result.
\end{proof}

\subsection{Proof of \texorpdfstring{\Cref{fst_local_opt_thm}}{fst local opt thm}} \label{fst_local_opt_proof}

Before proceeding with the proof of \cref{fst_local_opt_thm}, we redefine some notations introduced earlier in the manuscript for improved clarity and compactness. These adjustments facilitate a more streamlined presentation of the proof. Specifically, for an edge $\{u, v\} \in E$ in a graph $G = (V, E)$, we define:
\begin{itemize}\setlength\itemsep{0.1em}
    \item $\mathcal{N}_{v \setminus u} = \mathcal{N}(v) \setminus \{u\}$: the set of vertices connected to $v$ excluding $u$.
    \item $\mathcal{N}_{u \setminus v} = \mathcal{N}(u) \setminus \{v\}$: the set of vertices connected to $u$ excluding $v$.
    \item $\mathcal{N}_{v \doublesetminus u} = \{w \in \mathcal{N}(v) \mid w \neq u \text{ and } w \notin \mathcal{N}(u)\}$: the neighbours of $v$ that are not neighbours of $u$.
    \item $\mathcal{N}_{u \doublesetminus v} = \{w \in \mathcal{N}(u) \mid w \neq v \text{ and } w \notin \mathcal{N}(v)\}$: the neighbours of $u$ that are not neighbours of $v$.
    \item $\mathcal{N}_{uv} = \mathcal{N}_{u \setminus v} \cap \mathcal{N}_{v \setminus u} = \mathcal{N}(u) \cap \mathcal{N}(v)$: the set of vertices that form a triangle with the edge $\{u, v\}$.
\end{itemize}
Here, $\mathcal{N}(w) = \{x \in V \mid \{x, w\} \in E\}$ denotes the set of neighbours of a vertex $w$.

\begin{proof}
Recall that the QAOA$_1$ cost function for an Ising model defined on a graph $G=(V, E)$ with respect to parameters $\beta, \gamma \in \mathbb{R}$ is
\begin{equation}
\xi_G(\gamma, \beta)=\sum_{i \in V} C_i(\gamma, \beta; G)+\sum_{\{u, v\} \in E} C_{u v}^{(1)}(\gamma, \beta ; G)+ \sum_{\{u, v\} \in E}C_{u v}^{(2)}(\gamma, \beta ; G).
\end{equation}
where the functions $C_i, C_{u v}^{(1)}$, and $C_{u v}^{(2)}$ are defined as follows:
\begin{align}
    C_i(\gamma, \beta: G) &=h_i \sin (2 \beta) \sin \left(2 h_i \gamma\right) \prod_{k \in \mathcal{N}(i)} \cos \left(2 J_{i k} \gamma\right), \\
    C_{u v}^{(1)}(\gamma, \beta ; G) &=\frac{J_{u v}}{2} \sin (4 \beta) \sin \left(2 J_{u v} \gamma\right) \hspace{-0.1cm} \left\{\hspace{-0.1cm} \cos \left( 2 h_v \gamma\right) \hspace{-0.35cm} \prod_{w \in \mathcal{N}_{v \setminus u}} \hspace{-0.35cm} \cos \left( 2 J_{w v} \gamma\right) +\cos \left(2 h_u \gamma\right) \hspace{-0.35cm} \prod_{w \in \mathcal{N}_{u \setminus v}} \hspace{-0.35cm} \cos \left( 2 J_{u w} \gamma\right) \hspace{-0.1cm} \right\}, \\
C_{u v}^{(2)}(\gamma, \beta ; G) &=-\frac{J_{u v}}{2} \sin ^2 (2 \beta) \hspace{-0.3cm} \prod_{w \in N_{v \doublesetminus u}} \hspace{-0.3cm} \cos \left( 2 J_{w v} \gamma\right) \hspace{-0.25cm} \prod_{w \in N_{u \doublesetminus v}} \hspace{-0.25cm} \cos \left(2 J_{u w} \gamma\right) \\
& \hspace{-2.1cm} \times \left\{\cos \left(2 \left(h_u+h_v\right) \gamma \right) \hspace{-0.2cm} \prod_{f \in N_{u v}} \hspace{-0.2cm} \cos \left(2 \left(J_{u f}+J_{v f}\right) \gamma \right)  -\cos \left(2 \left(h_u-h_v\right) \gamma\right) \hspace{-0.2cm} \prod_{f \in N_{u v}} \hspace{-0.2cm} \cos \left(2 \left(J_{u f}-J_{v f}\right) \gamma\right)\right\}. \nonumber
\end{align}
Let $f: \mathbb{R} \rightarrow \mathbb{R}$ and $g: \mathbb{R} \rightarrow \mathbb{R}$ be probability density functions. Consider a graph $G=(V, E)$ where the edge weights $J_{u v}$ are independently drawn from $f$ and the vertex weights $h_i$ are independently drawn from $g$. The expected QAOA$_1$ cost function for $G$ with parameters $\beta$ and $\gamma$ is then
\begin{equation}
    \underset{\substack{J \sim f^E \\ h \sim g^V}}{\mathbb{E}} \left[\xi_G(\gamma, \beta) \right] = \sum_{i \in V} \underset{\substack{J \sim f^E \\ h \sim g^V}}{\mathbb{E}} \left[C_i(\gamma, \beta: G)\right] + \hspace{-0.25cm} \sum_{\{u, v\} \in E} \underset{\substack{J \sim f^E \\ h \sim g^V}}{\mathbb{E}} \left\{C_{u v}^{(1)}(\gamma, \beta ; G)+ \hspace{-0.25cm} \sum_{\{u, v\} \in E}C_{u v}^{(2)}(\gamma, \beta ; G)\right\}.
\end{equation}
To facilitate the analysis, we decompose the expectation into separate terms:
\begin{align}
    T_1 &= \underset{\substack{J \sim f^E \\ h \sim g^V}}{\mathbb{E}} \left[C_i(\gamma, \beta: G)\right], \\
    T_2 &= \underset{\substack{J \sim f^E \\ h \sim g^V}}{\mathbb{E}} \left[ C_{u v}^{(1)}(\gamma, \beta ; G) \right], \\
    T_3 &= \underset{\substack{J \sim f^E \\ h \sim g^V}}{\mathbb{E}} \left[ C_{u v}^{(2)}(\gamma, \beta ; G) \right].
\end{align}
We will compute each of these terms individually to understand their contributions to the overall expected cost function.

\vspace{0.25cm}
\noindent \underline{\textbf{Step 1: Computing the Term $T_1$}}\\
\newline
We begin by evaluating the expectation $T_1$ of the term $C_i(\gamma, \beta ; G)$ over the distributions of the edge weights $J \sim f^E$ and vertex weights $h \sim g^V$:
\begin{align}
    T_1 &= \underset{\substack{J \sim f^E \\ h \sim g^V}}{\mathbb{E}} \left[C_i(\gamma, \beta: G)\right]\\
    &= \underset{\substack{J \sim f^E \\ h \sim g^V}}{\mathbb{E}} \left[ h_i \sin (2\beta) \sin(2h_i\gamma) \prod_{k \in \mathcal{N}(i)} \cos (2J_{ik}\gamma) \right] \\
    &= \sin(4\beta) \underset{h \sim g}{\mathbb{E}} \left[h \sin(2h\gamma)\right] \prod_{k \in \mathcal{N}(i)} \underset{J \sim f}{\mathbb{E}} \left[\cos (2J\gamma)\right]\\
    &= \sin(4\beta) \underset{h \sim g}{\mathbb{E}} \left[h \sin(2h\gamma)\right] \left( \underset{J \sim f}{\mathbb{E}} \left[\cos (2J\gamma)\right] \right)^{|\mathcal{N}(i)|}.
\end{align}
Since the graph $G$ is $(D+1)$-regular, each vertex $i$ has $|\mathcal{N}(i)|=D+1$ neighbors. By setting $\gamma=\frac{\alpha}{\sqrt{D}}$, the expression for $T_1$ simplifies to:
\begin{equation}
    T_1 = \sin(4\beta) \underbrace{\underset{h \sim g}{\mathbb{E}} \left[h \sin(2h\gamma)\right]}_{\circled{1}} \underbrace{\left( \underset{J \sim f}{\mathbb{E}} \left[\cos (2J\gamma)\right] \right)^{D+1}}_{\circled{2}}.
\end{equation}
Next, we apply \cref{sin_taylor_lemma} and \cref{cos_exp_lemma} to approximate the terms $\circled{1}$ and $\circled{2}$ as follows:
\begin{align}
    \circled{1} &= \underset{h \sim g}{\mathbb{E}} \left[h \sin \frac{2h \alpha}{\sqrt{D}}\right]=\frac{2\alpha}{\sqrt{D}} \underset{h \sim g}{\mathbb{E}} \left[h^2\right]+ \mathcal{O}\left(\frac{1}{D^{3 / 2}}\right),\\
    \circled{2} &= \left( \underset{J \sim f}{\mathbb{E}} \left[\cos (2J\gamma)\right] \right)^{D+1}=e^{-2 \alpha^2 \underset{J \sim f}{\mathbb{E}} \left[J^2\right]}\left(1+\mathcal{O}\left(\frac{1}{D}\right)\right).
\end{align}
Substituting $\circled{1}$ and $\circled{2}$ back into the expression for $T_1$, we obtain:
\begin{align}
    T_1 &= \sin(4\beta) \left[ \frac{2\alpha}{\sqrt{D}} \underset{h \sim g}{\mathbb{E}} \left[h^2\right]+ \mathcal{O}\left(\frac{1}{D^{3 / 2}}\right) \right]e^{-2 \alpha^2 \underset{J \sim f}{\mathbb{E}} \left[J^2\right]}\left(1+\mathcal{O}\left(\frac{1}{D}\right)\right) \\
    &= \sin(4\beta) \underset{h \sim g}{\mathbb{E}} \left[h^2\right] \frac{2\alpha}{\sqrt{D}}e^{-2 \alpha^2 \underset{J \sim f}{\mathbb{E}} \left[J^2\right]}+\mathcal{O}\left(\frac{1}{D}\right).
\end{align}
\vspace{0.25cm}
\noindent \underline{\textbf{Step 2: Computing the Term $T_2$}}\\
\newline
Next, we evaluate the expectation $T_2$ of the term $C_{u v}^{(1)}(\gamma, \beta ; G)$ over the distributions of the edge weights $J \sim f^E$ and vertex weights $h \sim g^V$:
\begin{align}
    T_2 &= \underset{\substack{J \sim f^E \\ h \sim g^V}}{\mathbb{E}} \left[C_{u v}^{(1)}(\gamma, \beta ; G) \right]\\
    &= \underset{\substack{J \sim f^E \\ h \sim g^V}}{\mathbb{E}} \left[\frac{J_{u v}}{2} \sin (4 \beta) \sin \left(2 J_{u v} \gamma\right) \hspace{-0.1cm} \left\{\cos \left( 2 h_v \gamma\right) \hspace{-0.35cm} \prod_{w \in \mathcal{N}_{v \setminus u}} \hspace{-0.35cm} \cos \left( 2 J_{w v} \gamma\right) +\cos \left(2 h_u \gamma\right) \hspace{-0.35cm} \prod_{w \in \mathcal{N}_{u \setminus v}} \hspace{-0.35cm} \cos \left( 2 J_{u w} \gamma\right)\right\} \right] \\
    &= \frac{1}{2} \sin(4\beta) \underset{J \sim f}{\mathbb{E}}\left[ J \sin(2J \gamma) \right] \hspace{-0.1cm}  \Bigg\{\underset{h \sim g}{\mathbb{E}}\left[\cos \left( 2 h \gamma\right)\right] \hspace{-0.4cm} \prod_{w \in \mathcal{N}_{v \setminus u}} \hspace{-0.25cm} \underset{J \sim f}{\mathbb{E}}\left[\cos \left( 2 J \gamma\right)\right] \hspace{-0.1cm}  + \hspace{-0.1cm} \underset{h \sim g}{\mathbb{E}}\left[\cos \left( 2 h \gamma\right)\right] \hspace{-0.35cm} \prod_{w \in \mathcal{N}_{u \setminus v}} \hspace{-0.25cm} \underset{J \sim f}{\mathbb{E}}\left[\cos \left( 2 J \gamma\right)\right]\hspace{-0.1cm} \Bigg\} \\
    &= \frac{1}{2} \sin(4\beta) \underset{J \sim f}{\mathbb{E}}\left[ J \sin(2J \gamma) \right] \underset{h \sim g}{\mathbb{E}}\left[\cos \left( 2 h \gamma\right)\right] \hspace{-0.1cm} \left\{ \hspace{-0.15cm} \left(\underset{J \sim f}{\mathbb{E}}\left[\cos \left( 2 J \gamma\right)\right] \hspace{-0.1cm}\right)^{|\mathcal{N}_{v\setminus u}|} \hspace{-0.35cm} + \hspace{-0.1cm} \left(\underset{J \sim f}{\mathbb{E}}\left[\cos \left( 2 J \gamma\right)\right] \hspace{-0.1cm}\right)^{|\mathcal{N}_{u \setminus v}|} \hspace{-0.05cm} \right\}.
\end{align}
Given that the graph $G$ is $(D+1)$-regular, each edge $\{u, v\}$ has neighbourhoods satisfying:
\begin{equation}
    |\mathcal{N}_{u \setminus v}| = |\mathcal{N}_{v \setminus u}| = (D+1)-1 = D.
\end{equation}
Substituting this into the expression and setting $\gamma=\frac{\alpha}{\sqrt{D}}$, we further simplify $T_2$ as:
\begin{equation}
    T_2 = \sin(4\beta) \underbrace{\underset{J \sim f}{\mathbb{E}}\left[ J \sin(2J \frac{\alpha}{\sqrt{D}}) \right]}_{\circled{1}} \underbrace{\underset{h \sim g}{\mathbb{E}}\left[\cos \left( 2 h \frac{\alpha}{\sqrt{D}}\right)\right]}_{\circled{2}} \underbrace{\left(\underset{J \sim f}{\mathbb{E}}\left[\cos \left( 2 J \frac{\alpha}{\sqrt{D}}\right)\right]\right)^D}_{\circled{3}}.
\end{equation}
Applying \cref{sin_taylor_lemma} and \cref{cos_exp_lemma}, we approximate the terms $\circled{1}$, $\circled{2}$, and $\circled{3}$ as follows:
\begin{align}
    \circled{1} &= \underset{J \sim f}{\mathbb{E}}\left[ J \sin(2J \frac{\alpha}{\sqrt{D}}) \right] = \frac{2\alpha}{\sqrt{D}} \underset{J \sim f}{\mathbb{E}}\left[J^2\right] \mathcal{O}\left( \frac{1}{D^{\frac{3}{2}}} \right), \\
    \circled{2} &= \underset{h \sim g}{\mathbb{E}}\left[\cos \left( 2 h \frac{\alpha}{\sqrt{D}}\right)\right] = 1 + \mathcal{O}\left( \frac{1}{D} \right),\\
    \circled{3} &= \left(\underset{J \sim f}{\mathbb{E}}\left[\cos \left( 2 J \frac{\alpha}{\sqrt{D}}\right)\right]\right)^D = e^{-2\alpha^2 \underset{J \sim f}{\mathbb{E}}\left[ J^2\right]}\left(1 + \mathcal{O}\left( \frac{1}{D} \right) \right).
\end{align}
Substituting $\circled{1}$, $\circled{2}$, and $\circled{3}$ back into the expression for $T_2$, we obtain:
\begin{align}
    T_2 &= \sin(4\beta) \left[ \frac{2\alpha}{\sqrt{D}} \underset{J \sim f}{\mathbb{E}} \left[J^2\right]+ \mathcal{O}\left(\frac{1}{D^{3 / 2}}\right) \right]e^{-2 \alpha^2 \underset{J \sim f}{\mathbb{E}} \left[J^2\right]}\left(1+\mathcal{O}\left(\frac{1}{D}\right)\right) \\
    &= \sin(4\beta) \underset{J \sim f}{\mathbb{E}} \left[J^2\right] \frac{2\alpha}{\sqrt{D}}e^{-2 \alpha^2 \underset{J \sim f}{\mathbb{E}} \left[J^2\right]}+\mathcal{O}\left(\frac{1}{D}\right).
\end{align}

\vspace{0.25cm}
\noindent \underline{\textbf{Step 3: Computing the Term $T_3$}}\\
\newline
We now evaluate the expectation $T_3$ of the term $C_{u v}^{(2)}(\gamma, \beta ; G)$ over the distributions of the edge weights $J \sim f^E$ and vertex weights $h \sim g^V$ as follows:
{
  \allowdisplaybreaks
\begin{align}
    T_3 &= \underset{\substack{J \sim f^E \\ h \sim g^V}}{\mathbb{E}} \left[ C_{u v}^{(2)}(\gamma, \beta ; G) \right] \\
    &= \underset{\substack{J \sim f^E \\ h \sim g^V}}{\mathbb{E}} \left\{ -\frac{J_{u v}}{2} \sin ^2 (2 \beta) \prod_{w \in N_{v \doublesetminus u}} \cos \left( 2 J_{w v} \gamma\right) \prod_{w \in N_{u \doublesetminus v}} \cos \left(2 J_{u w} \gamma\right) \right. \\
    & \quad \times \left\{\cos \left(2 \left(h_u+h_v\right) \gamma \right) \hspace{-0.2cm} \prod_{f \in N_{u v}} \hspace{-0.2cm} \cos \left(2 \left(J_{u f}+J_{v f}\right) \gamma \right)  \left. -\cos \left(2 \left(h_u-h_v\right) \gamma\right) \hspace{-0.2cm} \prod_{f \in N_{u v}} \hspace{-0.2cm} \cos \left(2 \left(J_{u f}-J_{v f}\right) \gamma\right)\right\} \right\} \nonumber \\
    &= -\frac{1}{2} \sin^2(2\beta) \underset{J \sim f}{\mathbb{E}}\left[ J \right] \prod_{w \in N_{v \doublesetminus u}} \underset{J \sim f}{\mathbb{E}}\left[\cos \left( 2 J \gamma\right) \right] \prod_{w \in N_{u \doublesetminus v}} \underset{J \sim f}{\mathbb{E}}\left[\cos \left( 2 J \gamma\right) \right] \\
    & \quad \times \left\{ \underset{h,h^{\prime} \sim g}{\mathbb{E}} \left[ \cos \left(2(h+h^{\prime})\gamma \right) \right] \prod_{f \in N_{u v}} \underset{J,J^{\prime} \sim f}{\mathbb{E}} \left[ \cos \left(2(J+J^{\prime})\gamma \right) \right] \right. \nonumber \\
    & \quad \left. - \underset{h,h^{\prime} \sim g}{\mathbb{E}} \left[ \cos \left(2(h-h^{\prime})\gamma \right) \right] \prod_{f \in N_{u v}} \underset{J,J^{\prime} \sim f}{\mathbb{E}} \left[ \cos \left(2(J-J^{\prime})\gamma \right) \right] \right\} \nonumber \\
    &= -\frac{1}{2} \sin^2(2\beta) \underset{J \sim f}{\mathbb{E}}\left[ J \right] \left(\underset{J \sim f}{\mathbb{E}}\left[\cos \left( 2 J \gamma\right) \right]\right)^{|N_{v \doublesetminus u}|+|N_{u \doublesetminus v}|} \\
    & \quad \times \left\{ \underset{h,h^{\prime} \sim g}{\mathbb{E}} \left[ \cos \left(2(h+h^{\prime})\gamma \right) \right] \left( \underset{J,J^{\prime} \sim f}{\mathbb{E}} \left[ \cos \left(2(J+J^{\prime})\gamma \right) \right]\right)^{|\mathcal{N}_{uv}|} \right. \nonumber \\
    & \quad \left.  - \underset{h,h^{\prime} \sim g}{\mathbb{E}} \left[ \cos \left(2(h-h^{\prime})\gamma \right) \right] \left( \underset{J,J^{\prime} \sim f}{\mathbb{E}} \left[ \cos \left(2(J-J^{\prime})\gamma \right) \right]\right)^{|\mathcal{N}_{uv}|} \right\} . \nonumber
\end{align}}
Since $G$ is $(D+1)$-regular graph, we have:
\begin{align}
    |\mathcal{N}(u)|&=|\mathcal{N}(v)|=D+1\\
    \mathcal{N}(u) &= \mathcal{N}_{u \doublesetminus v} \cup \mathcal{N}_{uv} \cup \{v\},\\
    \mathcal{N}(v) &= \mathcal{N}_{v \doublesetminus u} \cup \mathcal{N}_{uv} \cup \{u\},
\end{align}
which implies:
\begin{align}
D+1&=|\mathcal{N}(u)|=\left|\mathcal{N}_{u \doublesetminus v}\right|+\left|\mathcal{N}_{u v}\right|+1, \\
D+1&=|\mathcal{N}(v)|=\left|\mathcal{N}_{v \doublesetminus u}\right|+\left|\mathcal{N}_{u v}\right|+1.
\end{align}
Hence, we get:
\begin{align}
& \left|\mathcal{N}_{u \doublesetminus v}\right|+\left|\mathcal{N}_{uv}\right|=\left|\mathcal{N}_{v \doublesetminus u}\right|+\left|\mathcal{N}_{uv}\right|=D, \\
\Rightarrow \quad & \left|\mathcal{N}_{u \doublesetminus v}\right|=\left|\mathcal{N}_{v \doublesetminus u}\right|,\left|\mathcal{N}_{uv}\right| \leq D.
\end{align}
Setting $\gamma = \frac{\alpha}{\sqrt{D}}$ and using $\left|\mathcal{N}_{u \doublesetminus v}\right|=\left|\mathcal{N}_{v \doublesetminus u}\right|$, $T_3$ can be written as follows:
\begin{equation}
    \begin{aligned}
    T_3 &= -\frac{1}{2} \sin^2(2\beta) \underset{J \sim f}{\mathbb{E}}\left[ J \right] \left(\underset{J \sim f}{\mathbb{E}}\left[\cos \left( 2 J \frac{\alpha}{\sqrt{D}}\right) \right]\right)^{2|N_{u \doublesetminus v}|} \\
    & \quad \times \Bigg\{ \underset{h,h^{\prime} \sim g}{\mathbb{E}} \left[ \cos \left(2(h+h^{\prime})\frac{\alpha}{\sqrt{D}} \right) \right] \left( \underset{J,J^{\prime} \sim f}{\mathbb{E}} \left[ \cos \left(2(J+J^{\prime})\frac{\alpha}{\sqrt{D}} \right) \right]\right)^{|\mathcal{N}_{uv}|}   \\
    & \quad -  \underset{h,h^{\prime} \sim g}{\mathbb{E}} \left[ \cos \left(2(h-h^{\prime})\frac{\alpha}{\sqrt{D}} \right) \right] \left( \underset{J,J^{\prime} \sim f}{\mathbb{E}} \left[ \cos \left(2(J-J^{\prime})\frac{\alpha}{\sqrt{D}} \right) \right]\right)^{|\mathcal{N}_{uv}|} \Bigg\}.
\end{aligned}
\end{equation}
For some $a,b \geq 0$ and $0 \leq \lambda, \mu \leq 1$, we let
\begin{align}
    |\mathcal{N}_{u \doublesetminus v}| &= aD^{\lambda}, \\
    |\mathcal{N}_{uv}| &= bD^{\mu}.
\end{align}
such that
\begin{equation}
\left|\mathcal{N}_{u \doublesetminus v}\right|+\left|\mathcal{N}_{uv}\right|=D \Rightarrow a D^\lambda+b D^\mu=D.
\end{equation}
Using the above, $T_3$ can be rewritten as follows:
\begin{equation}
    \begin{aligned}
    T_3 &= -\frac{1}{2} \sin^2(2\beta) \underset{J \sim f}{\mathbb{E}}\left[ J \right] \underbrace{\left(\underset{J \sim f}{\mathbb{E}}\left[\cos \left( 2 J \frac{\alpha}{\sqrt{D}}\right) \right]\right)^{2aD^{\lambda}}}_{\circled{1}} \\
    & \quad \times \Bigg\{ \underbrace{\underset{h,h^{\prime} \sim g}{\mathbb{E}} \left[ \cos \left(2(h+h^{\prime})\frac{\alpha}{\sqrt{D}} \right) \right]}_{\circled{2}} \underbrace{\left( \underset{J,J^{\prime} \sim f}{\mathbb{E}} \left[ \cos \left(2(J+J^{\prime})\frac{\alpha}{\sqrt{D}} \right) \right]\right)^{bD^{\mu}}}_{\circled{3}}   \\
    & \quad -  \underbrace{\underset{h,h^{\prime} \sim g}{\mathbb{E}} \left[ \cos \left(2(h-h^{\prime})\frac{\alpha}{\sqrt{D}} \right) \right]}_{\circled{4}} \underbrace{\left( \underset{J,J^{\prime} \sim f}{\mathbb{E}} \left[ \cos \left(2(J-J^{\prime})\frac{\alpha}{\sqrt{D}} \right) \right]\right)^{bD^{\mu}}}_{\circled{5}} \Bigg\}.
\end{aligned}
\end{equation}
We now compute the individual terms of $T_3$ below.\\

\begin{enumerate}
    \item \underline{Computation of $\circled{1}$}
        \begin{itemize}
        \item \textbf{Case 1: }$a = 0$ (i.e., $|\mathcal{N}_{u \doublesetminus v}| = 0$)
            \begin{equation}
                \circled{1} = \left(\underset{J \sim f}{\mathbb{E}}\left[\cos \left( 2 J \frac{\alpha}{\sqrt{D}}\right) \right]\right)^0 = 1.
            \end{equation}
        \item \textbf{Case 2:} $a > 0$ (i.e., $|\mathcal{N}_{u \doublesetminus v}| \neq 0$)

        \noindent Using \cref{cos_cd_lemma} with substitutions \( x \to J \), \( p \to f \), \( a \to 2\alpha \), and \( c \to 2a \), we obtain:
        \begin{equation}
            \circled{1} = \begin{cases}
                e^{-4a\alpha^2 \underset{J \sim f}{\mathbb{E}} \left[ J^2 \right]} \left(1 + \mathcal{O}\left( D^{-1}\right) \right) & \text{if } \lambda = 1,\\[0.25cm]
                1- 4a\alpha^2 \underset{J \sim f}{\mathbb{E}} \left[ J^2 \right] D^{\lambda -1} \mathcal{O}\left(D^{2\lambda - 2} \right) &  \text{if } 0 \geq \lambda < 1.
            \end{cases}
        \end{equation}
    \end{itemize}
    Hence, $\circled{1}$ can be written as follows:
    \begin{equation}
        \circled{1} = \begin{cases}
                1 &  \text{if } a = 0, \\[0.25cm]
                e^{-4a\alpha^2 \underset{J \sim f}{\mathbb{E}} \left[ J^2 \right]} \left(1 + \mathcal{O}\left( D^{-1}\right) \right) & \text{if } a > 0, \, \lambda = 1, \\[0.25cm]
                1- 4a\alpha^2 \underset{J \sim f}{\mathbb{E}} \left[ J^2 \right] D^{\lambda -1} \mathcal{O}\left(D^{2\lambda - 2} \right) & \text{if } a >0, \, 0 \geq \lambda < 1.
            \end{cases}
    \end{equation}
    \item \underline{Computation of $\circled{2}$ and $\circled{4}$}
    \begin{align}
            \circled{2,4} &= \underset{h,h^{\prime} \sim g}{\mathbb{E}} \left[ \cos \left(2(h \pm h^{\prime})\frac{\alpha}{\sqrt{D}} \right) \right]\\
            &= 1-\frac{4\alpha^2}{D}\left[\underset{h \sim g}{\mathbb{E}} \left[h^2\right] \pm(\underset{h \sim g}{\mathbb{E}} [h])^2\right]+\mathcal{O}\left(\frac{1}{D^2}\right) \\ 
            & =1+\mathcal{O}\left(\frac{1}{D}\right)
    \end{align}
    where the second line was obtained using lemma \ref{cos_sum_diff_lemma}.
    \item \underline{Computation of $\circled{3}$ and $\circled{5}$}
\begin{itemize}
        \item \underline{\textbf{Case 1:} $b = 0$ ($|\mathcal{N}_{uv}| = 0$)}
        \begin{equation}
            \circled{3,5} = \left( \underset{J,J^{\prime} \sim f}{\mathbb{E}} \left[ \cos \left(2(J \pm J^{\prime})\frac{\alpha}{\sqrt{D}} \right) \right]\right)^{0} = 1.
        \end{equation}
        \item \underline{\textbf{Case 2:} $b > 0$ ($|\mathcal{N}_{uv}| > 0$)}\\
    
        \noindent Using lemma \ref{cos_sum_diff_exp_lemma}, and setting $x \to J$, $p \to f$, $a \to 2\alpha$, $c \to b$, and $\lambda \to \mu$, we get:
        \begin{align}
            \circled{3,5} &= \left( \underset{J,J^{\prime} \sim f}{\mathbb{E}} \left[ \cos \left(2(J \pm J^{\prime})\frac{\alpha}{\sqrt{D}} \right) \right]\right)^{bD^{\mu}} \\
            &= \begin{cases}
                e^{-4\alpha^2b \left( \underset{J \sim f}{\mathbb{E}}\left[J^2 \right] \pm \left( \underset{J \sim f}{\mathbb{E}}\left[J \right]\right)^2 \right)} \left(1 + \mathcal{O}\left(D^{-1}\right) \right) & \text{if } \mu = 1,\\[0.25cm]
                1 - 4\alpha^2b \left( \underset{J \sim f}{\mathbb{E}}\left[J^2 \right] \pm \left( \underset{J \sim f}{\mathbb{E}}\left[J \right]\right)^2 \right)D^{\mu - 1} + \mathcal{O}\left( D^{2\mu - 2}\right) &\text{if } 0 \leq \mu < 1.
            \end{cases}
        \end{align}
    \end{itemize}
    Hence, $\circled{3,5}$ can be written as follows:
    \begin{equation}
        \circled{3,5} = \begin{cases}
                1 & \text{if } b = 0, \\[0.25cm]
                e^{-4\alpha^2b \left( \underset{J \sim f}{\mathbb{E}}\left[J^2 \right] \pm \left( \underset{J \sim f}{\mathbb{E}}\left[J \right]\right)^2 \right)} \left(1 + \mathcal{O}\left(D^{-1}\right) \right) &\text{if } b > 0, \, \mu = 1,\\[0.25cm]
                1 - 4\alpha^2b \left( \underset{J \sim f}{\mathbb{E}}\left[J^2 \right] \pm \left( \underset{J \sim f}{\mathbb{E}}\left[J \right]\right)^2 \right)D^{\mu - 1} + \mathcal{O}\left( D^{2\mu - 2}\right) & \text{if } b > 0, \, 0 \leq \mu < 1.
            \end{cases}
    \end{equation}

    \item \underline{Computation of $\circled{6} = \circled{2}\circled{3} - \circled{4}\circled{5}$}
    \begin{itemize}
        \item \underline{\textbf{Case 1:} $b = 0$ (i.e., $|\mathcal{N}_{uv}| = 0$)}
        \begin{align}
            \circled{6} &= \left(1-\frac{4\alpha^2}{D}\left[\underset{h \sim g}{\mathbb{E}} \left[h^2\right] +(\underset{h \sim g}{\mathbb{E}} [h])^2\right]+\mathcal{O}\left(\frac{1}{D^2}\right) \right) \cdot 1 \\
            & \quad - \left(1-\frac{4\alpha^2}{D}\left[\underset{h \sim g}{\mathbb{E}} \left[h^2\right] -(\underset{h \sim g}{\mathbb{E}} [h])^2\right]+\mathcal{O}\left(\frac{1}{D^2}\right) \right) \cdot 1 \nonumber \\
            &= -\frac{8 \alpha^2}{D} \left( \underset{h \sim g}{\mathbb{E}} [h] \right)^2 + \mathcal{O}\left( \frac{1}{D^2}\right).
        \end{align}
        \item \underline{\textbf{Case 2:} $b > 0$ (i.e., $|\mathcal{N}_{uv}| > 0$)}
        \begin{align}
            \circled{6} &= \underset{h,h^{\prime} \sim g}{\mathbb{E}} \left[ \cos \left(2(h+h^{\prime})\frac{\alpha}{\sqrt{D}} \right) \right] \left( \underset{J,J^{\prime} \sim f}{\mathbb{E}} \left[ \cos \left(2(J+J^{\prime})\frac{\alpha}{\sqrt{D}} \right) \right]\right)^{bD^{\mu}}   \\
            & \quad -  \underset{h,h^{\prime} \sim g}{\mathbb{E}} \left[ \cos \left(2(h-h^{\prime})\frac{\alpha}{\sqrt{D}} \right) \right] \left( \underset{J,J^{\prime} \sim f}{\mathbb{E}} \left[ \cos \left(2(J-J^{\prime})\frac{\alpha}{\sqrt{D}} \right) \right]\right)^{bD^{\mu}}\\
            &=  \left[1 + \mathcal{O}\left( D^{-1}\right) \right] \times \begin{cases}
                e^{-4\alpha^2b \left( \underset{J \sim f}{\mathbb{E}}\left[J^2 \right] + \left( \underset{J \sim f}{\mathbb{E}}\left[J \right]\right)^2 \right)} \left(1 + \mathcal{O}\left(D^{-1}\right) \right) & \text{if } \mu = 1\\
                1 - 4\alpha^2b \left( \underset{J \sim f}{\mathbb{E}}\left[J^2 \right] + \left( \underset{J \sim f}{\mathbb{E}}\left[J \right]\right)^2 \right)D^{\mu - 1} + \mathcal{O}\left( D^{2\mu - 2}\right) &\text{if } 0 \leq \mu < 1
            \end{cases}\\
            &- \left[1 + \mathcal{O}\left( D^{-1}\right) \right] \times \begin{cases}
                e^{-4\alpha^2b \left( \underset{J \sim f}{\mathbb{E}}\left[J^2 \right] - \left( \underset{J \sim f}{\mathbb{E}}\left[J \right]\right)^2 \right)} \left(1 + \mathcal{O}\left(D^{-1}\right) \right) &\text{if } \mu = 1\\
                1 - 4\alpha^2b \left( \underset{J \sim f}{\mathbb{E}}\left[J^2 \right] - \left( \underset{J \sim f}{\mathbb{E}}\left[J \right]\right)^2 \right)D^{\mu - 1} + \mathcal{O}\left( D^{2\mu - 2}\right) &\text{if } 0 \leq \mu < 1.
            \end{cases}
        \end{align}
        When $\mu = 1$, $\circled{6}$ can be further simplified as follows:
        \begin{align}
            \circled{6} &= e^{-4\alpha^2b \left( \underset{J \sim f}{\mathbb{E}}\left[J^2 \right] + \left( \underset{J \sim f}{\mathbb{E}}\left[J \right]\right)^2 \right)} \hspace{-0.2cm} \left(1 + \mathcal{O}\left(D^{-1}\right) \right)^2 \hspace{-0.2cm} - e^{-4\alpha^2b \left( \underset{J \sim f}{\mathbb{E}}\left[J^2 \right] - \left( \underset{J \sim f}{\mathbb{E}}\left[J \right]\right)^2 \right)} \hspace{-0.2cm} \left(1 + \mathcal{O}\left(D^{-1}\right) \right)^2\\
            &= e^{-4\alpha^2b \left( \underset{J \sim f}{\mathbb{E}}\left[J^2 \right] + \left( \underset{J \sim f}{\mathbb{E}}\left[J \right]\right)^2 \right)} - e^{-4\alpha^2b \left( \underset{J \sim f}{\mathbb{E}}\left[J^2 \right] - \left( \underset{J \sim f}{\mathbb{E}}\left[J \right]\right)^2 \right)} + \mathcal{O}\left(D^{-1}\right) \\
            &= e^{-4\alpha^2b \underset{J \sim f}{\mathbb{E}}\left[ J^2 \right]}  \left[ e^{-4\alpha^2b \left(\underset{J \sim f}{\mathbb{E}}\left[ J \right]\right)^2} - e^{4\alpha^2b \left(\underset{J \sim f}{\mathbb{E}}\left[ J \right]\right)^2} \right] + \mathcal{O}\left(D^{-1}\right) \\
            &= -2 e^{-4\alpha^2b \underset{J \sim f}{\mathbb{E}}\left[ J^2 \right]}\sinh\left( 4\alpha^2b \left(\underset{J \sim f}{\mathbb{E}}\left[ J \right]\right)^2 \right) + \mathcal{O}\left(D^{-1}\right),
        \end{align}
        where in the last line, we have used the identity $\sinh x = \frac{1}{2}\left(e^x - e^{-x} \right)$.

        \noindent When $0 \leq \mu < 1$, $\circled{6}$ can be simplified as follows:
        \begin{align}
            \circled{6} &= \left[1 + \mathcal{O}\left( D^{-1}\right) \right] \left[ 1 - 4\alpha^2b \left( \underset{J \sim f}{\mathbb{E}}\left[J^2 \right] + \left( \underset{J \sim f}{\mathbb{E}}\left[J \right]\right)^2 \right)D^{\mu - 1} + \mathcal{O}\left( D^{2\mu - 2}\right) \right] \\
            & \quad - \left[1 + \mathcal{O}\left( D^{-1}\right) \right] \left[ 1 - 4\alpha^2b \left( \underset{J \sim f}{\mathbb{E}}\left[J^2 \right] - \left( \underset{J \sim f}{\mathbb{E}}\left[J \right]\right)^2 \right)D^{\mu - 1} + \mathcal{O}\left( D^{2\mu - 2}\right) \right] \\
            &= 1 - 4\alpha^2b \left( \underset{J \sim f}{\mathbb{E}}\left[J^2 \right] + \left( \underset{J \sim f}{\mathbb{E}}\left[J \right]\right)^2 \right)D^{\mu - 1} + \mathcal{O}\left( D^{-1}\right) + \mathcal{O}\left( D^{2\mu - 2}\right) \\
            & \quad - 1 - 4\alpha^2b \left( \underset{J \sim f}{\mathbb{E}}\left[J^2 \right] - \left( \underset{J \sim f}{\mathbb{E}}\left[J \right]\right)^2 \right)D^{\mu - 1} + \mathcal{O}\left( D^{-1}\right) + \mathcal{O}\left( D^{2\mu - 2}\right) \\
            &= -8 \alpha^2b \left(\underset{J \sim f}{\mathbb{E}}\left[J \right]\right)^2 + \mathcal{O}\left( D^{\max(-1, 2\mu-2)}\right).
        \end{align}
    \end{itemize}
    In summary, $\circled{6}$ can be expressed as follows:
    \begin{equation}
        \circled{6} = \begin{cases}
            -\frac{8 \alpha^2}{D} \left( \underset{h \sim g}{\mathbb{E}} [h] \right)^2 + \mathcal{O}\left( \frac{1}{D^2}\right) &\text{if } b = 0,\\
            -2 e^{-4\alpha^2b \underset{J \sim f}{\mathbb{E}}\left[ J^2 \right]}\sinh\left( 4\alpha^2b \left(\underset{J \sim f}{\mathbb{E}}\left[ J \right]\right)^2 \right) + \mathcal{O}\left(D^{-1}\right) &\text{if } b > 0, \, \mu = 1,\\
            -8 \alpha^2b \left(\underset{J \sim f}{\mathbb{E}}\left[J \right]\right)^2 + \mathcal{O}\left( D^{\max(-1, 2\mu-2)}\right) &\text{if } b > 0, \, 0 \leq \mu < 1.
        \end{cases}
    \end{equation}
\end{enumerate}

Having computed all the terms in $T_3$, we now begin to evaluate the possible expressions for $T_3$ for different possible values of $a$, $b$, $\lambda$, and $\mu$.
\begin{equation}
    \begin{aligned}
    T_3 &= -\frac{1}{2} \sin^2(2\beta) \underset{J \sim f}{\mathbb{E}}[J] \times \begin{cases}
            1 &\text{if } a = 0,\\[0.25cm]
            e^{-4a\alpha^2 \underset{J \sim f}{\mathbb{E}} \left[ J^2 \right]} \left(1 + \mathcal{O}\left( D^{-1}\right) \right) &\text{if } a > 0, \, \lambda = 1,\\[0.25cm]
            1- 4a\alpha^2 \underset{J \sim f}{\mathbb{E}} \left[ J^2 \right] D^{\lambda -1} \mathcal{O}\left(D^{2\lambda - 2} \right) &\text{if } a >0, \, 0 \geq \lambda < 1,
        \end{cases}\\
        & \quad \times \begin{cases}
        -\frac{8 \alpha^2}{D} \left( \underset{h \sim g}{\mathbb{E}} [h] \right)^2 + \mathcal{O}\left( \frac{1}{D^2}\right) &\text{if } b = 0,\\
        -2 e^{-4\alpha^2b \underset{J \sim f}{\mathbb{E}}\left[ J^2 \right]}\sinh\left( 4\alpha^2b \left(\underset{J \sim f}{\mathbb{E}}\left[ J \right]\right)^2 \right) + \mathcal{O}\left(D^{-1}\right) &\text{if } b > 0, \, \mu = 1,\\
        -8 \alpha^2b \left(\underset{J \sim f}{\mathbb{E}}\left[J \right]\right)^2 + \mathcal{O}\left( D^{\max(-1, 2\mu-2)}\right) &\text{if } b > 0, \, 0 \leq \mu < 1.
    \end{cases}
\end{aligned}
\end{equation}

\noindent Recall that $aD^{\lambda} + b D^{\mu} = D$. We can divide into four cases, the possible values of $a$, $b$, $\lambda$, and $\mu$.
\begin{enumerate}
    \item[(C1)] $a=0 \quad \Rightarrow \quad D=b D^\mu \quad \Rightarrow \quad b=\mu=1$
    \item[(C2)] $b=0 \quad \Rightarrow \quad D=a D^\lambda \quad \Rightarrow \quad a=\lambda=1$
    \item[(C3)] $\lambda<1 \quad \Rightarrow \quad \mu=1$ (otherwise $aD^{\lambda}+bD^{\mu} < D$ for some $D$)
    \item[(C4)] $\mu = 1 \quad \Rightarrow \quad \lambda=1$ (otherwise $aD^{\lambda}+bD^{\mu} < D$ for some $D$)
\end{enumerate}
The above gives rise to the following cases:
\begin{table}[H]
\centering
% Increase row spacing slightly
\renewcommand{\arraystretch}{1.2}
% Increase column spacing slightly
\setlength{\tabcolsep}{8pt}

\begin{tabular}{lccc}
\toprule
\diagbox[width=8em]{$b$}{$a$} &
  $a=0$ &
  $a>0,\:\lambda=1$ &
  $a>0,\:0 \leq \lambda < 1$ \\ 
\midrule
\midrule
$b = 0$ &
  \begin{tabular}[c]{@{}c@{}}%
    Contradicts (C1)\\
    $\because a=0 \Rightarrow b=1$%
  \end{tabular} &
  \begin{tabular}[c]{@{}c@{}}%
    Case 1\\
    $b=0,\;a=\lambda=1$%
  \end{tabular} &
  \begin{tabular}[c]{@{}c@{}}%
    Contradicts (C2)\\
    $\because b=0 \Rightarrow \lambda=1$%
  \end{tabular} \\
\midrule
$b>0,\;\mu=1$ &
  \begin{tabular}[c]{@{}c@{}}%
    Case 2\\
    $a=0,\;b=\mu=1$%
  \end{tabular} &
  \begin{tabular}[c]{@{}c@{}}%
    Case 3\\
    $a,b>0,\;a+b=1,\;\lambda=\mu=1$%
  \end{tabular} &
  \begin{tabular}[c]{@{}c@{}}%
    Case 4\\
    $a,b>0,\;\mu=1,\;0 \leq \lambda < 1$%
  \end{tabular} \\
\midrule
$b>0,\;0 \leq \mu < 1$ &
  \begin{tabular}[c]{@{}c@{}}%
    Contradicts (C1)\\
    $\because a=0 \Rightarrow \mu=1$%
  \end{tabular} &
  \begin{tabular}[c]{@{}c@{}}%
    Case 5\\
    $a,b>0,\;\lambda=1,\;0 \leq \mu < 1$%
  \end{tabular} &
  \begin{tabular}[c]{@{}c@{}}%
    Contradicts (C3)\\
    $\because \lambda<1 \Rightarrow \mu=1$%
  \end{tabular} \\
\bottomrule
\end{tabular}
\end{table}
As indicated above, four of the above cases result in contradictions. We will consider the remaining 5 cases separately.

\begin{itemize}
    \item \underline{\textbf{Case 1:} $b = 0, \quad a = \lambda = 1$ ($|\mathcal{N}_{u \doublesetminus v}|  = D$, $|\mathcal{N}_{uv}| = 0$, i.e. no triangles)}
    \begin{align}
        T_3 &= - \frac{1}{2} \sin^2(2 \beta) \underset{J \sim f}{\mathbb{E}}\left[J\right] e^{-4\alpha^2 \underset{J \sim f}{\mathbb{E}}\left[J^2\right]} \left( 1 + \mathcal{O}\left(D^{-1}\right)\right)\left[-\frac{8\alpha^2}{D} \left(\underset{h \sim g}{\mathbb{E}}\left[h\right]\right)^2 + \mathcal{O}\left(D^{-2}\right) \right] \\
        &= 4\alpha^2 \sin^2(2 \beta) \underset{J \sim f}{\mathbb{E}}\left[J\right] e^{-4\alpha^2 \underset{J \sim f}{\mathbb{E}}\left[J^2\right]}\left(\underset{h \sim g}{\mathbb{E}}\left[h\right]\right)^2 \frac{1}{D} + \mathcal{O}\left(D^{-2}\right).
    \end{align}
    \item \underline{\textbf{Case 2:} $a = 0, \quad b = \mu = 1$ ($|\mathcal{N}_{u \doublesetminus v}|  = 0$, $|\mathcal{N}_{uv}| = D$, i.e. Complete Graph)}
    \begin{align}
        T_3 &= - \frac{1}{2} \sin^2(2 \beta) \underset{J \sim f}{\mathbb{E}}\left[J\right] \left[ -2 e^{-4\alpha^2 \underset{J \sim f}{\mathbb{E}}\left[J^2\right]} \sinh\left(4 \alpha^2 \left(\underset{J \sim f}{\mathbb{E}}\left[J\right]\right)^2  \right) + \mathcal{O}\left(D^{-1}\right)\right] \\
        &= \sin^2(2 \beta) \underset{J \sim f}{\mathbb{E}} \left[J\right] e^{-4\alpha^2 \underset{J \sim f}{\mathbb{E}}\left[J^2\right]} \sinh\left(4 \alpha^2 \left(\underset{J \sim f}{\mathbb{E}}\left[J\right]\right)^2  \right) + \mathcal{O}\left(D^{-1}\right).
    \end{align}
    \item \underline{\textbf{Case 3:} $a, b > 0$, $\lambda = \mu = 1$, $a+b = 1$}
    \begin{align}
        T_3 &= - \frac{1}{2} \sin^2(2 \beta) \underset{J \sim f}{\mathbb{E}}\left[J\right] e^{-4a\alpha^2 \underset{J \sim f}{\mathbb{E}}\left[J^2\right]} \left( 1 + \mathcal{O}\left(D^{-1}\right)\right) \\
        & \quad \times \left[ -2 e^{-4b\alpha^2 \underset{J \sim f}{\mathbb{E}}\left[J^2\right]} \sinh\left(4b \alpha^2 \left(\underset{J \sim f}{\mathbb{E}}\left[J\right]\right)^2  \right) + \mathcal{O}\left(D^{-1}\right)\right] \nonumber \\
        &= \sin^2(2 \beta) \underset{J \sim f}{\mathbb{E}}\left[J\right] e^{-4a\alpha^2 \underset{J \sim f}{\mathbb{E}}\left[J^2\right]} e^{-4b\alpha^2 \underset{J \sim f}{\mathbb{E}}\left[J^2\right]}\sinh\left(4b \alpha^2 \left(\underset{J \sim f}{\mathbb{E}}\left[J\right]\right)^2  \right) + \mathcal{O}\left(D^{-1}\right) \\
        &= \sin^2(2 \beta) \underset{J \sim f}{\mathbb{E}}\left[J\right] e^{-4(a+b)\alpha^2 \underset{J \sim f}{\mathbb{E}}\left[J^2\right]} \sinh\left(4b \alpha^2 \left(\underset{J \sim f}{\mathbb{E}}\left[J\right]\right)^2  \right) + \mathcal{O}\left(D^{-1}\right) \\
        &= \sin^2(2 \beta) \underset{J \sim f}{\mathbb{E}}\left[J\right] e^{-4\alpha^2 \underset{J \sim f}{\mathbb{E}}\left[J^2\right]} \sinh\left(4b \alpha^2 \left(\underset{J \sim f}{\mathbb{E}}\left[J\right]\right)^2  \right) + \mathcal{O}\left(D^{-1}\right).
    \end{align}
    \item \underline{\textbf{Case 4:} $a, b > 0, \quad \mu = 1$, $0 \leq \lambda < 1$}
    \begin{align}
        T_3 &= - \frac{1}{2} \sin^2(2 \beta) \underset{J \sim f}{\mathbb{E}}\left[J\right] \hspace{-0.1cm} \left( \hspace{-0.05cm} 1 - 4a\alpha^2 \hspace{-0.1cm} \underset{J \sim f}{\mathbb{E}} \hspace{-0.1cm}\left[J^2\right] \hspace{-0.1cm} D^{\lambda-1} \hspace{-0.1cm} +   \mathcal{O} \hspace{-0.1cm} \left(D^{2\lambda-2}\right)\hspace{-0.15cm} \right) \\
        & \quad \times \left[ -2 e^{-4b\alpha^2 \underset{J \sim f}{\mathbb{E}}\left[J^2\right]} \hspace{-0.1cm} \sinh \hspace{-0.1cm} \left( \hspace{-0.1cm}4b \alpha^2 \hspace{-0.1cm} \left(\underset{J \sim f}{\mathbb{E}}\left[J\right] \hspace{-0.1cm} \right)^2  \right) \hspace{-0.1cm} + \hspace{-0.1cm} \mathcal{O} \hspace{-0.1cm} \left(D^{-1}\right)\hspace{-0.1cm} \right] \nonumber \\
        &= \sin^2(2 \beta) \underset{J \sim f}{\mathbb{E}}\left[J\right]e^{-4b\alpha^2 \underset{J \sim f}{\mathbb{E}}\left[J^2\right]} \sinh\left(4b \alpha^2 \left(\underset{J \sim f}{\mathbb{E}}\left[J\right]\right)^2  \right)\\
        &\quad- 4a\alpha^2 \sin^2(2 \beta) \underset{J \sim f}{\mathbb{E}}\left[J\right] \underset{J \sim f}{\mathbb{E}}\left[J^2\right]e^{-4b\alpha^2 \underset{J \sim f}{\mathbb{E}}\left[J^2\right]} \sinh\left(4b \alpha^2 \left(\underset{J \sim f}{\mathbb{E}}\left[J\right]\right)^2  \right)D^{\lambda-1} \nonumber \\
        & \quad + \mathcal{O}\left(D^{\max(2\lambda-2, -1)}\right). \nonumber
    \end{align}
    \item \underline{\textbf{Case 5:} $a, b > 0$, $\lambda = 1$, $0 \leq \mu < 1$}
    \begin{align}
        T_3 &= - \frac{1}{2} \sin^2(2 \beta) \underset{J \sim f}{\mathbb{E}}\left[J\right] e^{-4a\alpha^2 \underset{J \sim f}{\mathbb{E}}\left[J^2\right]} \left( 1 + \mathcal{O}\left(D^{-1}\right)\right) \\
        & \quad \times \left[-8b \alpha^2 \left( \underset{J \sim f}{\mathbb{E}}\left[J\right] \right)^2 D^{\mu - 1} + \mathcal{O}\left(D^{\max(-1, 2\mu - 2)} \right)\right] \nonumber \\
        &= 4b \alpha^2 \sin^2(2\beta) \left( \underset{J \sim f}{\mathbb{E}}\left[J\right] \right)^3 e^{-4a\alpha^2 \underset{J \sim f}{\mathbb{E}}\left[J^2\right]} D^{\mu - 1} + \mathcal{O}\left(D^{\max(-1, 2\mu - 2)} \right).
    \end{align}    
\end{itemize}
\vspace{0.25cm}
\noindent \underline{\textbf{Step 4: Computing the Sum of Terms $T_2$ and $T_3$}}\\
\newline
Recall that $T_2$ is computed as follows:
\begin{equation}
    T_2 = \sin(4\beta) \underset{J \sim f}{\mathbb{E}} \left[J^2\right] \frac{2\alpha}{\sqrt{D}}e^{-2 \alpha^2 \underset{J \sim f}{\mathbb{E}} \left[J^2\right]}+\mathcal{O}\left(D^{-1}\right).
\end{equation}
\begin{itemize}
    \item \underline{\textbf{Case 1:} $b = 0, \quad a = \lambda = 1$}
    \begin{align}
        T_2 + T_3 &= \sin(4\beta) \underset{J \sim f}{\mathbb{E}} \left[J^2\right] \frac{2\alpha}{\sqrt{D}}e^{-2 \alpha^2 \underset{J \sim f}{\mathbb{E}} \left[J^2\right]}+\mathcal{O}\left(D^{-1}\right)\\
        & \quad + 4\alpha^2 \sin^2(2 \beta) \underset{J \sim f}{\mathbb{E}}\left[J\right] e^{-4\alpha^2 \underset{J \sim f}{\mathbb{E}}\left[J^2\right]}\left(\underset{h \sim g}{\mathbb{E}}\left[h\right]\right)^2 \frac{1}{D} + \mathcal{O}\left(D^{-2}\right) \nonumber \\
        &= \sin(4\beta) \underset{J \sim f}{\mathbb{E}} \left[J^2\right] \frac{2\alpha}{\sqrt{D}}e^{-2 \alpha^2 \underset{J \sim f}{\mathbb{E}} \left[J^2\right]}+\mathcal{O}\left(D^{-1}\right).
    \end{align}
    \item \underline{\textbf{Case 2:} $a = 0, \quad b = \mu = 1$}
    \begin{equation}
        \begin{aligned}
        T_2 + T_3 &= \sin(4\beta) \underset{J \sim f}{\mathbb{E}} \left[J^2\right] \frac{2\alpha}{\sqrt{D}}e^{-2 \alpha^2 \underset{J \sim f}{\mathbb{E}} \left[J^2\right]}\\
        & \quad + \sin^2(2 \beta) \underset{J \sim f}{\mathbb{E}}\left[J\right] e^{-4\alpha^2 \underset{J \sim f}{\mathbb{E}}\left[J^2\right]} \sinh\left(4 \alpha^2 \left(\underset{J \sim f}{\mathbb{E}}\left[J\right]\right)^2  \right) + \mathcal{O}\left(D^{-1}\right).
    \end{aligned}
    \end{equation}
    \item \underline{\textbf{Case 3:} $a, b > 0, \quad \lambda = \mu = 1$, i.e. $a+b = 1$}
    \begin{equation}
        \begin{aligned}
        T_2 + T_3 &= \sin(4\beta) \underset{J \sim f}{\mathbb{E}} \left[J^2\right] \frac{2\alpha}{\sqrt{D}}e^{-2 \alpha^2 \underset{J \sim f}{\mathbb{E}} \left[J^2\right]}\\
        & \quad + \sin^2(2 \beta) \underset{J \sim f}{\mathbb{E}}\left[J\right] e^{-4\alpha^2 \underset{J \sim f}{\mathbb{E}}\left[J^2\right]} \sinh\left(4b \alpha^2 \left(\underset{J \sim f}{\mathbb{E}}\left[J\right]\right)^2  \right) + \mathcal{O}\left(D^{-1}\right).
    \end{aligned}
    \end{equation}
    \item \underline{\textbf{Case 4:} $a, b > 0, \quad \mu = 1, \quad 0 \leq \lambda < 1$}
    \begin{equation}
        \begin{aligned}
        T_2 + T_3 &= \sin(4\beta) \underset{J \sim f}{\mathbb{E}} \left[J^2\right] \frac{2\alpha}{\sqrt{D}}e^{-2 \alpha^2 \underset{J \sim f}{\mathbb{E}} \left[J^2\right]}+ \sin^2(2 \beta) \underset{J \sim f}{\mathbb{E}}\left[J\right]e^{-4b\alpha^2 \underset{J \sim f}{\mathbb{E}}\left[J^2\right]} \sinh\left(4b \alpha^2 \left(\underset{J \sim f}{\mathbb{E}}\left[J\right]\right)^2  \right)\\
        &\quad- 4a\alpha^2 \sin^2(2 \beta) \underset{J \sim f}{\mathbb{E}}\left[J\right] \underset{J \sim f}{\mathbb{E}}\left[J^2\right]e^{-4b\alpha^2 \underset{J \sim f}{\mathbb{E}}\left[J^2\right]} \sinh\left(4b \alpha^2 \left(\underset{J \sim f}{\mathbb{E}}\left[J\right]\right)^2  \right)D^{\lambda-1} \\
        & \quad + \mathcal{O}\left(D^{\max(2\lambda-2, -1)}\right).
    \end{aligned}
    \end{equation}
    \underline{\textbf{Case 4a:} $\lambda<\frac{1}{2}$}\\
    \newline
    \noindent Note that $\lambda<\frac{1}{2} \Leftrightarrow -\frac{1}{2} > \lambda -1$. Hence, $T_2 + T_3$ can be further simplified as follows:
    \begin{equation}
        \begin{aligned}
        T_2 + T_3 &= \sin(4\beta) \underset{J \sim f}{\mathbb{E}} \left[J^2\right] \frac{2\alpha}{\sqrt{D}}e^{-2 \alpha^2 \underset{J \sim f}{\mathbb{E}} \left[J^2\right]}\\
        & \quad + \sin^2(2 \beta) \underset{J \sim f}{\mathbb{E}}\left[J\right]e^{-4b\alpha^2 \underset{J \sim f}{\mathbb{E}}\left[J^2\right]} \sinh\left(4b \alpha^2 \left(\underset{J \sim f}{\mathbb{E}}\left[J\right]\right)^2  \right) + \mathcal{O}\left(D^{\lambda-1}\right).
    \end{aligned}
    \end{equation}
    \underline{\textbf{Case 4b:} $\lambda=\frac{1}{2}$}\\
    \newline
    \noindent Note that $\lambda=\frac{1}{2} \Leftrightarrow -\frac{1}{2} = \lambda -1$. Hence, $T_2 + T_3$ can be further simplified as follows:
    \begin{equation}
        \begin{aligned}
        T_2 + T_3 &= \sin^2(2 \beta) \underset{J \sim f}{\mathbb{E}}\left[J\right]e^{-4b\alpha^2 \underset{J \sim f}{\mathbb{E}}\left[J^2\right]} \sinh\left(4b \alpha^2 \left(\underset{J \sim f}{\mathbb{E}}\left[J\right]\right)^2  \right) \\
        &\quad + \underset{J \sim f}{\mathbb{E}} \left[J^2\right] \Bigg\{ 2\alpha \sin(4\beta) e^{-2 \alpha^2 \underset{J \sim f}{\mathbb{E}} \left[J^2\right]}\\
        &\quad - 4a\alpha^2 \sin^2(2 \beta) \underset{J \sim f}{\mathbb{E}}\left[J\right]e^{-4b\alpha^2 \underset{J \sim f}{\mathbb{E}}\left[J^2\right]} \sinh\left(4b \alpha^2 \left(\underset{J \sim f}{\mathbb{E}}\left[J\right]\right)^2  \right) \Bigg\} \frac{1}{\sqrt{D}}\\
        &\quad + \mathcal{O}\left(D^{\max(2\lambda-2, -1)}\right).
    \end{aligned}
    \end{equation}
    \underline{\textbf{Case 4c:} $\lambda>\frac{1}{2}$}\\
    \newline
    \noindent Note that $\lambda>\frac{1}{2} \Leftrightarrow -\frac{1}{2} < \lambda -1$. Hence, $T_2 + T_3$ can be further simplified as follows:
    \begin{equation}
        \begin{aligned}
        T_2 + T_3 &= \sin^2(2 \beta) \underset{J \sim f}{\mathbb{E}}\left[J\right]e^{-4b\alpha^2 \underset{J \sim f}{\mathbb{E}}\left[J^2\right]} \sinh\left(4b \alpha^2 \left(\underset{J \sim f}{\mathbb{E}}\left[J\right]\right)^2  \right)\\
        &\quad- 4a\alpha^2 \sin^2(2 \beta) \underset{J \sim f}{\mathbb{E}}\left[J\right] \underset{J \sim f}{\mathbb{E}}\left[J^2\right]e^{-4b\alpha^2 \underset{J \sim f}{\mathbb{E}}\left[J^2\right]} \sinh\left(4b \alpha^2 \left(\underset{J \sim f}{\mathbb{E}}\left[J\right]\right)^2  \right)D^{\lambda-1} \\
        &\quad + \mathcal{O}\left(\frac{1}{\sqrt{D}}\right).
    \end{aligned}
    \end{equation}
    \item \underline{\textbf{Case 5:} $a, b > 0, \quad \lambda = 1, \quad 0 \leq \mu < 1$}
    \begin{equation}
        \begin{aligned}
        T_2 + T_3 &= \sin(4\beta) \underset{J \sim f}{\mathbb{E}} \left[J^2\right] \frac{2\alpha}{\sqrt{D}}e^{-2 \alpha^2 \underset{J \sim f}{\mathbb{E}} \left[J^2\right]} + \mathcal{O}\left(D^{-1} \right)\\
        & \quad + 4b \alpha^2 \sin^2(2\beta) \left( \underset{J \sim f}{\mathbb{E}}\left[J\right] \right)^3 e^{-4a\alpha^2 \underset{J \sim f}{\mathbb{E}}\left[J^2\right]} D^{\mu - 1} + \mathcal{O}\left(D^{\max(-1, 2\mu - 2)} \right).
    \end{aligned}
    \end{equation}
    \underline{\textbf{Case 5a:} $\mu > \frac{1}{2}$}\\
    \newline
    \noindent Note that $\mu > \frac{1}{2} \Rightarrow \mu - 1> -\frac{1}{2}$. Now, $-\frac{1}{2} > 2\mu-2 \Leftrightarrow \mu < \frac{3}{4}$. Hence, $T_2 + T_3$ can be further simplified as follows:
    \begin{equation}
        T_2 + T_3 = 4b \alpha^2 \sin^2(2\beta) \left( \underset{J \sim f}{\mathbb{E}}\left[J\right] \right)^3 e^{-4a\alpha^2 \underset{J \sim f}{\mathbb{E}}\left[J^2\right]} D^{\mu - 1} + \mathcal{O}\left(D^{\max(-\frac{1}{2}, 2\mu - 2)} \right).
    \end{equation}
    \underline{\textbf{Case 5b:} $\mu = \frac{1}{2}$}
    \begin{equation}
        \begin{aligned}
        T_2 + T_3 &= \Bigg( 2\alpha\sin(4\beta) \underset{J \sim f}{\mathbb{E}} \left[J^2\right] e^{-2 \alpha^2 \underset{J \sim f}{\mathbb{E}} \left[J^2\right]}  + 4b \alpha^2 \sin^2(2\beta) \left( \underset{J \sim f}{\mathbb{E}}\left[J\right] \right)^3 e^{-4a\alpha^2 \underset{J \sim f}{\mathbb{E}}\left[J^2\right]}\Bigg)\frac{1}{\sqrt{D}}  + \mathcal{O}\left(D^{-1} \right).
    \end{aligned}
    \end{equation}
    \underline{\textbf{Case 5c:} $\mu < \frac{1}{2}$}\\
    \newline
    \noindent Note that $\mu < \frac{1}{2} \Rightarrow -1 \leq \mu - 1 < -\frac{1}{2}$. Hence, $T_2 + T_3$ can be further simplified as follows:
    \begin{equation}
    T_2 + T_3 = \sin(4\beta) \underset{J \sim f}{\mathbb{E}} \left[J^2\right] \frac{2\alpha}{\sqrt{D}}e^{-2 \alpha^2 \underset{J \sim f}{\mathbb{E}} \left[J^2\right]}+\mathcal{O}\left(D^{\mu-1}\right).
\end{equation}
\end{itemize}

\vspace{0.25cm}
\noindent \underline{\textbf{Step 5: Computing the Scaled Expected QAOA$_1$ Cost Function}}\\
\newline
Recall that the expected QAOA$_1$ cost function for an Ising model on a graph $G=(V, E)$ with parameters $\beta, \gamma \in \mathbb{R}$ is given by:
\begin{align}
    \underset{\substack{J \sim f^E \\ h \sim g^V}}{\mathbb{E}} \left[\xi_G(\gamma, \beta) \right] &= \sum_{i \in V} \underset{\substack{J \sim f^E \\ h \sim g^V}}{\mathbb{E}} \left[C_i(\gamma, \beta; G)\right] + \hspace{-0.4cm}\sum_{\{u, v\} \in E} \underset{\substack{J \sim f^E \\ h \sim g^V}}{\mathbb{E}} \left\{C_{u v}^{(1)}(\gamma, \beta ; G)+ \hspace{-0.45cm} \sum_{\{u, v\} \in E}C_{u v}^{(2)}(\gamma, \beta ; G)\right\} \\
    &= \sum_{i \in V} T_1 +\sum_{\{u, v\} \in E} \left( T_2 + T_3 \right) .
\end{align}
For a $(D+1)$-regular graph, the expectation terms $T_1, T_2$, and $T_3$ are identical for all vertices $i$ and edges $\{u, v\}$. This allows us to simplify the expression as follows:
\begin{equation}
    \underset{\substack{J \sim f^E \\ h \sim g^V}}{\mathbb{E}} \left[\xi_G(\gamma, \beta) \right] = T_1 |V| + (T_2+T_3)|E|,
\end{equation}
where $n=|V|$ is the number of vertices and $|E|$ is the number of edges in the graph $G$. Since the graph is $(D+1)$-regular, the number of edges satisfies $|E|=\frac{n}{2}(D+1)$. Substituting this into the equation, we obtain:
\begin{equation}
    \underset{\substack{J \sim f^E \\ h \sim g^V}}{\mathbb{E}} \left[\xi_G(\gamma, \beta) \right] = nT_1  + \frac{n}{2}(D+1)(T_2+T_3).
\end{equation}
To facilitate analysis, we introduce a scaled version of the cost function by dividing the expected cost by the number of edges $|E|$:
\begin{equation}
    \frac{1}{|E|}\underset{\substack{J \sim f^E \\ h \sim g^V}}{\mathbb{E}} \left[\xi_G(\gamma, \beta) \right] = \frac{|V|}{|E|} T_1 + (T_2+T_3).
\end{equation}
Calculating the ratio $|V|/|E|$ for a $(D+1)$-regular graph, we have:
\begin{equation}
    \frac{|V|}{|E|} = \frac{n}{\frac{n}{2}(D+1)} = \frac{2}{D+1}.
\end{equation}
Next, we evaluate the term $(|V|/|E|)T_1$ to get the following:
\begin{align}
    \frac{|V|}{|E|}T_1 &= \frac{2}{D+1}\left[ \sin(4\beta) \underset{h \sim g}{\mathbb{E}} \left[h^2\right] \frac{2\alpha}{\sqrt{D}}e^{-2 \alpha^2 \underset{J \sim f}{\mathbb{E}} \left[J^2\right]}+\mathcal{O}\left(\frac{1}{D}\right)\right] \\
    &= \mathcal{O}\left( D^{-\frac{3}{2}}\right).
\end{align}
Finally, substituting this result into the scaled cost function yields:
\begin{equation}
    \frac{1}{|E|}\underset{\substack{J \sim f^E \\ h \sim g^V}}{\mathbb{E}} \left[\xi_G(\gamma, \beta) \right] = T_2 + T_3 + \mathcal{O}\left( D^{-\frac{3}{2}}\right).
\end{equation}
This scaling demonstrates that as the degree $D$ of the graph increases, the contribution from $T_1$ becomes negligible, and the expected scaled cost function is primarily determined by the terms $T_2$ and $T_3$, with corrections of order $D^{-\frac{3}{2}}$. Therefore, the expressions derived in step 4 accurately characterise the scaled expected QAOA$_1$ cost function.

\vspace{0.25cm}
\noindent \underline{\textbf{Step 6: Optimal $\gamma^*$ Values for Leading-Order Terms}}\\
\newline
Before proceeding, we note that the subscripts for the expectation value are omitted, as the expressions in the remainder of this proof involve only the terms where $J \sim f$.

\vspace{0.25cm}
\noindent Not all expressions derived for the different cases yield closed-form solutions for the optimal $\alpha^*$ value. To address this, we focus on the leading-order terms, which simplify the expressions and allow for closed-form solutions for $\alpha^*$. The leading-order expressions for the scaled QAOA$_1$ objective are given by:
\begin{equation}
    \frac{1}{|E|}\mathbb{E} \left[\xi_G(\alpha, \beta) \right] = 
        \begin{cases}
            \mathcal{C}_1(\alpha, \beta) + \mathcal{O}\left(D^{-1}\right), & b = 0, \, a = \lambda = 1, \\
            \mathcal{C}_2(\alpha, \beta, 1, 1) + \mathcal{O}\left(1\right), & a = 0, \, b = \mu = 1, \\
            \mathcal{C}_2(\alpha, \beta, 1, b) + \mathcal{O}\left(1\right), & a, b > 0, \, \lambda = \mu = 1, \\
            \mathcal{C}_2(\alpha, \beta, b, b) + \mathcal{O}\left(1\right), & a, b > 0, \, \mu = 1, \, \lambda < 1,\\
            \mathcal{C}_3(\alpha, \beta) + \mathcal{O}\left(D^{\mu-1}\right), & a, b > 0, \, \lambda = 1, \, \mu > \frac{1}{2}, \\
            \mathcal{C}_1(\alpha, \beta) + \mathcal{O}\left(D^{\mu-1}\right), & a, b > 0, \, \lambda = 1, \, \mu < \frac{1}{2},
        \end{cases} 
\end{equation}
where the functions $\mathcal{C}_1$, $\mathcal{C}_2$, and $\mathcal{C}_3$ are defined as:
\begin{align}
    \mathcal{C}_1(\alpha, \beta) &= 2\alpha\sin(4\beta) \mathbb{E} \left[J^2\right] e^{-2 \alpha^2 \mathbb{E} \left[J^2\right]}\frac{1}{\sqrt{D}}, \\
    \mathcal{C}_2(\alpha, \beta, \theta_1, \theta_2) &= \sin^2(2 \beta) \mathbb{E}\left[J\right] e^{-4 \theta_1 \alpha^2 \mathbb{E}\left[J^2\right]} \sinh\left(4 \theta_2 \alpha^2 \mathbb{E}\left[J\right]^2  \right),\\
    \mathcal{C}_3(\alpha, \beta) &= 4b\alpha^2 \sin^2(2\beta) \mathbb{E}[J]^3 e^{-4a\alpha^2 \mathbb{E}[J^2]} D^{\mu-1}.
\end{align}
For the case $(a, b > 0, \, \lambda = 1, \, \mu = \frac{1}{2})$, the leading-order terms coincide with the original expression, but they do not admit a closed-form solution for $\alpha^*$. In contrast, for the cases $(b = 0, \, a = \lambda = 1)$, $(a, b > 0, \, \lambda = 1, \, \mu > \frac{1}{2})$, and $(a, b > 0, \, \lambda = 1, \, \mu < \frac{1}{2})$, the leading-order terms also coincide with the original expressions, but they do allow for closed-form solutions for $\alpha^*$.

\vspace{0.25cm}
\noindent We now proceed to compute the optimal closed-form values of the three functions $\mathcal{C}_1$, $\mathcal{C}_2$, and $\mathcal{C}_3$.

\vspace{0.25cm}
\noindent \underline{Optimising $\mathcal{C}_1$}\\
\newline
To determine the optimal $\alpha$ that minimises $\mathcal{C}_1$, we fix $\beta = 3\pi/8$, simplifying $\mathcal{C}_1$ to:
\begin{equation}
    \mathcal{C}_1(\alpha) = -2\alpha \mathbb{E} \left[J^2\right] e^{-2 \alpha^2 \mathbb{E} \left[J^2\right]}\frac{1}{\sqrt{D}}.
\end{equation}
The optimal $\alpha$ is found by differentiating $\mathcal{C}_1$ with respect to $\alpha$ and setting the derivative to zero:
\begin{align}
    \frac{d \, \mathcal{C}_1(\alpha)}{d \alpha} &= -2\mathbb{E} \left[J^2\right] \left( e^{-2 \alpha^2 \mathbb{E} \left[J^2\right]} - 4 \alpha^2 \mathbb{E} \left[J^2\right] e^{-2 \alpha^2 \mathbb{E} \left[J^2\right]} \right) \frac{1}{\sqrt{D}}\\
    &=-2\mathbb{E} \left[J^2\right] \left( 1 - 4 \alpha^2 \mathbb{E} \left[J^2\right]  \right) e^{-2 \alpha^2 \mathbb{E} \left[J^2\right]} \frac{1}{\sqrt{D}}.
\end{align}
Setting the derivative to zero yields:
\begin{equation}
1-4 \alpha^2 \mathbb{E}\left[J^2\right]=0 \quad \Rightarrow \quad \alpha^2=\frac{1}{4 \mathbb{E}\left[J^2\right]} \quad \Rightarrow \quad \alpha= \pm \frac{1}{2 \sqrt{\mathbb{E}\left[J^2\right]}}.
\end{equation}
Since $\alpha \geq 0$, the optimal value is:
\begin{equation}
    \alpha^* =  \frac{1}{2 \sqrt{\mathbb{E} \left[J^2\right]}}.
    \label{global_opt_eqn1}
\end{equation}

\vspace{0.25cm}
\noindent \underline{Optimising $\mathcal{C}_2$}\\
\newline
For $\mathcal{C}_2$, we set $\beta = \pi/4$, reducing the expression to:
\begin{equation}
    \mathcal{C}_2(\alpha, \theta_1, \theta_2) = \mathbb{E}\left[J\right] e^{-4 \theta_1 \alpha^2 \mathbb{E}\left[J^2\right]} \sinh\left(4 \theta_2 \alpha^2 \mathbb{E}\left[J\right]^2  \right). 
\end{equation}
To facilitate differentiation, we take the natural logarithm:
\begin{equation}
    \ln \left[\mathcal{C}_2(\alpha, \theta_1, \theta_2)\right] = \ln \left[\mathbb{E}\left[J\right]\right] -4 \theta_1 \alpha^2 \mathbb{E}\left[J^2\right] + \ln \left[ \sinh\left(4 \theta_2 \alpha^2 \mathbb{E}\left[J\right]^2  \right)\right]. 
\end{equation}
Differentiating with respect to $\alpha$ gives:
\begin{align}
    \frac{d \, \ln \left[\mathcal{C}_2(\alpha, \theta_1, \theta_2)\right]}{d\alpha} &= -8 \theta_1 \alpha \mathbb{E}\left[J^2\right] + 8 \theta_2 \alpha \mathbb{E}\left[J\right]^2 \coth\left(4 \theta_2 \alpha^2 \mathbb{E}\left[J\right]^2  \right)  .
\end{align}
Setting this derivative to zero results in:
\begin{equation}
    -8 \theta_1 \alpha \mathbb{E}\left[J^2\right] + 8 \theta_2 \alpha \mathbb{E}\left[J\right]^2 \coth\left(4 \theta_2 \alpha^2 \mathbb{E}\left[J\right]^2  \right)  = 0.
\end{equation}
It is evident that $\alpha = 0$ is a critical point. To identify other critical points, we divide through by $8 \alpha$, assuming $\alpha \neq 0$, and solve for $\alpha^2$:
\begin{align}
     - \theta_1 \mathbb{E}\left[J^2\right] +  \theta_2 \mathbb{E}\left[J\right]^2 \coth\left(4 \theta_2 \alpha^2 \mathbb{E}\left[J\right]^2  \right)  &= 0\\
     \coth\left(4 \theta_2 \alpha^2 \mathbb{E}\left[J\right]^2  \right) - \dfrac{\theta_1 \mathbb{E}\left[J^2\right]}{\theta_2 \mathbb{E}\left[J\right]^2} &= 0 \\
     4 \theta_2 \alpha^2 \mathbb{E}\left[J\right]^2  &= \coth^{-1} \left(\dfrac{\theta_1 \mathbb{E}\left[J^2\right]}{\theta_2 \mathbb{E}\left[J\right]^2}\right) \\
     4 \theta_2 \alpha^2 \mathbb{E}\left[J\right]^2  &= \frac{1}{2}\ln \left( \frac{\frac{\theta_1 \mathbb{E}\left[J^2\right]}{\theta_2 \mathbb{E}\left[J\right]^2} + 1}{\frac{\theta_1 \mathbb{E}\left[J^2\right]}{\theta_2 \mathbb{E}\left[J\right]^2} - 1} \right) \\
     4 \theta_2 \alpha^2 \mathbb{E}\left[J\right]^2  &= \frac{1}{2}\ln \left( \frac{\theta_1 \mathbb{E}\left[J^2\right] + \theta_2 \mathbb{E}\left[J\right]^2}{\theta_1 \mathbb{E}\left[J^2\right] - \theta_2 \mathbb{E}\left[J\right]^2} \right)\\
     \alpha^2 &= \frac{1}{8\theta_2 \mathbb{E}\left[J\right]^2} \ln \left( \frac{\theta_1 \mathbb{E}\left[J^2\right] + \theta_2 \mathbb{E}\left[J\right]^2}{\theta_1 \mathbb{E}\left[J^2\right] - \theta_2 \mathbb{E}\left[J\right]^2} \right).
\end{align}
As $\alpha \geq 0$, the optimal value for $\alpha > 0$ is:
\begin{equation}
    \alpha^* =  \sqrt{\frac{1}{8\theta_2 \mathbb{E}\left[J\right]^2} \ln \left( \frac{\theta_1 \mathbb{E}\left[J^2\right] + \theta_2 \mathbb{E}\left[J\right]^2}{\theta_1 \mathbb{E}\left[J^2\right] - \theta_2 \mathbb{E}\left[J\right]^2} \right)}.
    \label{global_opt_eqn2}
\end{equation}

\vspace{0.25cm}
\noindent \underline{Optimising $\mathcal{C}_3$}\\
\newline
For $\mathcal{C}_3$, we set $\beta = \pi/4$, reducing the expression to:
\begin{equation}
    \mathcal{C}_3(\alpha) = 4b\alpha^2 \mathbb{E}[J]^3 e^{-4a\alpha^2 \mathbb{E}[J^2]} D^{\mu-1}.
\end{equation}
Differentiating with respect to $\alpha$ gives:
\begin{align}
    \frac{d \, \mathcal{C}_3(\alpha)}{d\alpha} &= 4b \mathbb{E}[J]^3 \left[ 2\alpha - 4a\alpha^3 \mathbb{E}[J^2] \right] e^{-4a\alpha^2 \mathbb{E}[J^2]} D^{\mu-1}.
\end{align}
Setting the derivative to zero yields:
\begin{equation}
    2\alpha - 4a\alpha^3 \mathbb{E}[J^2] = 0.
\end{equation}
It is evident that $\alpha = 0$ is a critical point. To identify other critical points, we divide through by $2 \alpha$, assuming $\alpha \neq 0$, and solve for $\alpha$:
\begin{align}
    1 - 2a\alpha^2 \mathbb{E}[J^2] &= 0 \\
    \alpha &= \pm \frac{1}{\sqrt{2a\mathbb{E}[J^2]}}.
\end{align}
As $\alpha \geq 0$, the optimal value for $\alpha > 0$ is:
\begin{equation}
    \alpha = \frac{1}{\sqrt{2a\mathbb{E}[J^2]}}.
    \label{global_opt_eqn3}
\end{equation}

\vspace{0.25cm}
\noindent \underline{Optimal $\alpha^*$ Values}\\
\newline
The closed-form expressions for the optimal $\alpha^* \in \mathbb{R}^+$ values for the different cases are given by:
\begin{equation}
    \alpha^* = 
        \begin{cases}
            \frac{1}{2\sqrt{\mathbb{E}\left[J^2\right]}} , & b = 0, \, a = \lambda = 1, \\
            \sqrt{\frac{1}{8  \mathbb{E}\left[J\right]^2} \ln  \left( \frac{\mathbb{E}\left[J^2\right] + \mathbb{E}\left[J\right]^2}{\mathbb{E}\left[J^2\right] - \mathbb{E}\left[J\right]^2} \right)} , & a = 0, \, b = \mu = 1, \\
            \sqrt{\frac{1}{8 b  \mathbb{E}\left[J\right]^2} \ln  \left( \frac{ \mathbb{E}\left[J^2\right] + b \mathbb{E}\left[J\right]^2}{ \mathbb{E}\left[J^2\right] - b \mathbb{E}\left[J\right]^2} \right)} , & a, b > 0, \, \lambda = \mu = 1, \\
            \sqrt{\frac{1}{8b  \mathbb{E}\left[J\right]^2} \ln  \left( \frac{ \mathbb{E}\left[J^2\right] +  \mathbb{E}\left[J\right]^2}{ \mathbb{E}\left[J^2\right] -  \mathbb{E}\left[J\right]^2} \right)} , & a, b > 0, \, \mu = 1, \, \lambda < 1,\\
            \frac{1}{\sqrt{2a\mathbb{E}\left[J^2\right]}} , & a, b > 0, \, \lambda = 1, \, \mu > \frac{1}{2}, \\
            \frac{1}{2\sqrt{\mathbb{E}\left[J^2\right]}},  & a, b > 0, \, \lambda = 1, \, \mu < \frac{1}{2}.
        \end{cases} 
\end{equation}
Using the relation $\gamma = \alpha / \sqrt{D}$, the closed-form expressions for the optimal $\gamma^* \in \mathbb{R}^+$ values can be readily obtained.

\vspace{0.5cm}
\noindent \underline{\textbf{Step 7: Globally Optimal $\alpha^*$ Coincides with First Local Optimum}}\\
\newline
In this final part of the proof, we demonstrate that the derived globally optimal value $\alpha^*$ coincides with the first local optimum of the functions $\mathcal{C}_1$, $\mathcal{C}_2$, and $\mathcal{C}_3$.

\vspace{0.25cm}
\noindent \underline{Demonstrating Global Optimum Coincides with First Local Optimum for $\mathcal{C}_1$}\\
\newline
We want to show that the globally optimal parameter \cref{global_opt_eqn1} for $\mathcal{C}_1$ coincides with the first local optimum. Recall that the derivative of $\mathcal{C}_1$ is given by:
\begin{equation}
    \frac{d \, \mathcal{C}_1(\alpha)}{d \alpha}  = -2\mathbb{E} \left[J^2\right] \left( 1 - 4 \alpha^2 \mathbb{E} \left[J^2\right]  \right) e^{-2 \alpha^2 \mathbb{E} \left[J^2\right]} \frac{1}{\sqrt{D}}.
\end{equation}
We begin by analysing the behaviour of $\mathcal{C}_1^{\prime}(\alpha)$ at $\alpha = 0$. Substituting $\alpha = 0$ into the derivative, we find that
\begin{equation}
    \mathcal{C}_1^{\prime}(0)  = -2\mathbb{E} \left[J^2\right] \frac{1}{\sqrt{D}} < 0,
\end{equation}
which implies that $\mathcal{C}_1(\alpha)$ is decreasing at $\alpha = 0$. Next, we consider the interval $\alpha \in\left(0, \alpha^*\right)$. In this range, the term $1-4 \alpha^2 \mathbb{E}_{\mathrm{J}}\left[J^2\right]$ remains positive because $\alpha^2<\left(\alpha^*\right)^2$. Consequently, the derivative satisfies:
\begin{equation}
    -2\mathbb{E} \left[J^2\right] \left( 1 - 4 \alpha^2 \mathbb{E} \left[J^2\right]  \right) e^{-2 \alpha^2 \mathbb{E} \left[J^2\right]} \frac{1}{\sqrt{D}} < 0.
\end{equation}
This implies that $\mathcal{C}_1(\alpha)$  continues to decrease throughout the interval $\left(0, \alpha^*\right)$. At $\alpha=\alpha^*$, the term $1-4 \alpha^2 \mathbb{E}_{\mathbf{J}}\left[J^2\right]$ becomes zero and so $\mathcal{C}_1^{\prime}\left(\alpha^*\right)$ evaluates to $0$ thus confirming that $\alpha^*$ is a stationary point.

Finally, we examine the behaviour of $\mathcal{C}_1^{\prime}(\alpha)$ for $\alpha>\alpha^*$. In this region, the term $1-4 \alpha^2 \mathbb{E}_{\mathbf{J}}\left[J^2\right]$ becomes negative because $\alpha^2>\left(\alpha^*\right)^2$. As a result, the derivative satisfies:
\begin{equation}
-2\mathbb{E} \left[J^2\right] \left( 1 - 4 \alpha^2 \mathbb{E} \left[J^2\right]  \right) e^{-2 \alpha^2 \mathbb{E} \left[J^2\right]} \frac{1}{\sqrt{D}} > 0.
\end{equation}
This indicates that $\mathcal{C}_1(\alpha)$ is increasing for $\alpha>\alpha^*$.

From this analysis, we observe that $\mathcal{C}_1(\alpha)$ decreases for $\alpha \in\left[0, \alpha^*\right)$, reaches a stationary point at $\alpha=\alpha^*$, and increases for $\alpha>\alpha^*$. This behaviour establishes that $\alpha^*$ is the first local minimum of $\mathcal{C}_1(\alpha)$ for $\alpha \geq 0$. 

\vspace{0.25cm}
\noindent \underline{Demonstrating Global Optimum Coincides with First Local Optimum for $\mathcal{C}_2$}\\
\newline
We want to show that the globally optimal parameter \cref{global_opt_eqn2} for $\mathcal{C}_2$ coincides with the first local optimum. Recall that the derivative of $\mathcal{C}_2$ with respect to $\alpha$ is:
\begin{equation}
\mathcal{C}_2^{\prime}(\alpha, \theta_1, \theta_2) = 8 \alpha \mathbb{E}[J] e^{-4 \theta_1 \alpha^2 \mathbb{E}\left[J^2\right]}\left[\theta_2 \mathbb{E}[J]^2 \cosh \left(4 \theta_2 \alpha^2 \mathbb{E}[J]^2\right) -\theta_1 \mathbb{E}\left[J^2\right] \sinh \left(4 \theta_2 \alpha^2 \mathbb{E}[J]^2\right)\right],
\end{equation}
where $(\theta_1, \theta_2)$ can be $(1,1)$, $(1,b)$, or $(b,b)$ with $0 \leq b \leq 1$. Additionally, since the variance is non-negative, we have $\mathbb{E}[J^2] \geq \mathbb{E}[J]^2$.

\vspace{0.25cm}
\noindent The case $\mathbb{E}[J] > 0$ is trivial because the global optimum that minimises $\mathcal{C}_2$ is $\alpha^* = 0$ which clearly is the first local optimum. Thus, we focus on $\mathbb{E}[J] < 0$. Consider $\alpha \in (0, \alpha^*)$. Since $\alpha^2 < (\alpha^*)^2$, we can approximate the hyperbolic functions for small arguments:
\begin{equation}
\cosh (x) \approx 1+\frac{x^2}{2}, \quad \sinh (x) \approx x+\frac{x^3}{6}.
\end{equation}
Applying these approximations, the derivative becomes:
\begin{align}
     \mathcal{C}_2^{\prime} &\approx 8 \alpha \mathbb{E}[J] e^{-4 \theta_1 \alpha^2 \mathbb{E}\left[J^2\right]}\left[\theta_2 \mathbb{E}[J]^2\left(1+\frac{\left(4 \theta_2 \alpha^2 \mathbb{E}[J]^2\right)^2}{2}\right)-\theta_1 \mathbb{E}\left[J^2\right]\left(4 \theta_2 \alpha^2 \mathbb{E}[J]^2+\frac{\left(4 \theta_2 \alpha^2 \mathbb{E}[J]^2\right)^3}{6}\right)\right]\\
    &\approx 8 \alpha \mathbb{E}[J] e^{-4 \theta_1 \alpha^2 \mathbb{E}\left[J^2\right]}\left[\theta_2 \mathbb{E}[J]^2-4 \theta_1 \theta_2 \alpha^2 \mathbb{E}\left[J^2\right] \mathbb{E}[J]^2\right]\\
    &= 8 \alpha \theta_2 \mathbb{E}[J]^3 e^{-4 \theta_1 \alpha^2 \mathbb{E}\left[J^2\right]}\left[1 - 4\theta_2 \alpha^2 \mathbb{E}\left[J^2\right] \right].
\end{align}
Since $\alpha^2 < (\alpha^*)^2$, the term $1 - 4 \theta_2 \alpha^2 \mathbb{E}[J^2]$ remains positive. Given that $\mathbb{E}[J] < 0$ and $\theta_2 \geq 0$, the derivative $\mathcal{C}_2^{\prime} < 0$. This implies that $\mathcal{C}_2$ is decreasing on the interval $(0, \alpha^*)$.

\vspace{0.25cm}
\noindent For $\alpha > \alpha^*$, assuming $\alpha$ is still small enough for the above approximations to hold, the term $1 - 4 \theta_2 \alpha^2 \mathbb{E}[J^2]$ becomes negative because $\alpha^2 > (\alpha^*)^2$. Consequently, $\mathcal{C}_2^{\prime} > 0$, indicating that $\mathcal{C}_2$ is increasing for $\alpha > \alpha^*$.

\vspace{0.25cm}
\noindent For sufficiently large $\alpha \gg \alpha^*$, we approximate the hyperbolic functions as:
\begin{equation}
\cosh (x) \approx \sinh (x) \approx \frac{e^x}{2}.
\end{equation}
Substituting these into the derivative gives:
\begin{equation}
    \mathcal{C}_2^{\prime} \approx 4 \alpha \mathbb{E}[J] e^{-4 \theta_1 \alpha^2 \mathbb{E}\left[J^2\right]}e^{4 \theta_2 \alpha^2 \mathbb{E}[J]^2}\left(\theta_2 \mathbb{E}[J]^2-\theta_1 \mathbb{E}\left[J^2\right]\right).
\end{equation}
Given that $\theta_2 \leq \theta_1$ and $\mathbb{E}[J^2] \geq \mathbb{E}[J]^2$, we analyse the expression $\theta_2 \mathbb{E}[J]^2 - \theta_1 \mathbb{E}[J^2]$:
\begin{itemize}
    \item If $\theta_1 \mathbb{E}[J^2] > \theta_2 \mathbb{E}[J]^2$, then $\theta_2 \mathbb{E}[J]^2 - \theta_1 \mathbb{E}[J^2] < 0$. 
    \item If $\theta_1 \mathbb{E}[J^2] = \theta_2 \mathbb{E}[J]^2$, then $\theta_2 \mathbb{E}[J]^2 - \theta_1 \mathbb{E}[J^2] = 0$.
\end{itemize}
Thus, for $\alpha \gg \alpha^*$, the derivative $\mathcal{C}_2^{\prime} > 0$, confirming that $\mathcal{C}_2$ continues to increase.

\vspace{0.25cm}
\noindent The analysis shows that $\mathcal{C}_2(\alpha)$ is decreasing for $\alpha \in (0, \alpha^*)$, attains a stationary point at $\alpha = \alpha^*$, and increases for $\alpha > \alpha^*$. This behaviour confirms that $\alpha^*$ is the first local minimum of $\mathcal{C}_2(\alpha)$ for $\alpha \geq 0$. Therefore, the globally optimal parameter \cref{global_opt_eqn2} coincides with this first local optimum.

\vspace{0.25cm}
\noindent \underline{Demonstrating Global Optimum Coincides with First Local Optimum for $\mathcal{C}_3$}\\
\newline
Since the proof that the globally optimal value $\alpha^*$, defined in \cref{global_opt_eqn3}, coincides with the first local optimum closely parallels the argument presented for $\mathcal{C}_1$, we omit it here for the sake of brevity.
\end{proof}

\vfill

\end{document}